\DeclareFontFamily{U}{mathx}{\hyphenchar\font45}
\DeclareFontShape{U}{mathx}{m}{n}{
      <5> <6> <7> <8> <9> <10>
      <10.95> <12> <14.4> <17.28> <20.74> <24.88>
      mathx10
      }{}
\DeclareSymbolFont{mathx}{U}{mathx}{m}{n}
\definecolor{DarkRed}{rgb}{0.5,0.1,0.1}
\definecolor{DarkBlue}{rgb}{0.1,0.1,0.5}
\definecolor{ForestGreen}{rgb}{0.1333,0.5451,0.1333}
\definecolor{Red}{rgb}{0.9,0,0}
\crefname{property}{property}{properties}
\crefname{fact}{fact}{facts}
\crefname{claim}{claim}{claims}
\crefname{observation}{observation}{observations}
\crefname{sidefigure}{sidefigure}{sidefigures}
\crefname{equation}{eq.}{eqs.}
\def\BState{\State\hskip-\ALG@thistlm}
 \DeclareMathOperator*{\argmax}{arg\,max}
\renewcommand{\leq}{\leqslant}
\renewcommand{\geq}{\geqslant}
\newcommand{\Ot}{\ensuremath{\widetilde{O}}}
\newcommand{\eps}{\ensuremath{\varepsilon}}
\newcommand{\Paren}[1]{\Big(#1\Big)}
\newcommand{\bracket}[1]{\mleft[#1\mright]}
\newcommand{\paren}[1]{\ensuremath{\left(#1\right)}\xspace}
\newcommand{\card}[1]{\mleft\vert{#1}\mright\vert}
\DeclarePairedDelimiter{\cbrac}{\{}{\}}
\newcommand{\IN}{\ensuremath{\mathbb{N}}}
\newcommand{\ceil}[1]{{\mleft\lceil{#1}\mright\rceil}}
\newcommand{\prob}[1]{\Pr\paren{#1}}
\newcommand{\expect}[1]{\Exp\bracket{#1}}
\newcommand{\set}[1]{\ensuremath{\mleft\{ #1 \mright\}}}
\newcommand{\poly}{\mbox{\rm poly}}
\DeclareMathOperator*{\Exp}{\ensuremath{{\mathbb{E}}}}
\DeclareMathOperator*{\Prob}{\ensuremath{\textnormal{Pr}}}
\renewcommand{\Pr}{\Prob}
\newenvironment{tbox}{\begin{tcolorbox}[
		enlarge top by=5pt,
		enlarge bottom by=5pt,
		 breakable,
		 boxsep=0pt,
                  left=4pt,
                  right=4pt,
                  top=10pt,
                  arc=0pt,
                  boxrule=1pt,toprule=1pt,
                  colback=white
                  ]
	}
{\end{tcolorbox}}
\newcommand{\II}{\ensuremath{\mathbb{I}}}
\newcommand{\mireal}[1][]{
  \ifx\relax#1\relax%
    \II(\mione \,; \mitwo)%
  \else%
    \II(\mione \,; \mitwo\mid #1)%
  \fi
}
\newcommand{\MM}{\ensuremath{{M}}}
\renewcommand{\SS}{\ensuremath{\mathcal{S}}}
\newcommand{\Gbase}{\ensuremath{\mathcal{G}_{\textnormal{\texttt{Base}}}}\xspace}
\newcommand{\Ebase}{\ensuremath{\mathcal{E}_{\textnormal{\texttt{Base}}}}\xspace}
\newcommand{\Gsample}{\ensuremath{\mathcal{G}_{\textnormal{\texttt{Sample}}}}\xspace}
\newcommand{\Esample}{\ensuremath{\mathcal{E}_{\textnormal{\texttt{Sample}}}}\xspace}
\newcommand{\F}{\mathbb{F}}
\newcommand{\inner}[2]{\langle #1, #2 \rangle}
\newcommand{\cL}{\ensuremath{\mathcal{L}}}
\newcommand{\cR}{\ensuremath{\mathcal{R}}}
\newcommand{\OR}{\textnormal{\textsc{or}}}
\newcommand{\ORone}{\textnormal{\textsc{or-one}}}
\newcommand{\AndORone}{\textnormal{\textsc{and-or-one}}}
\newcommand{\tribes}{\textnormal{\textsc{tribes}}}
\newcommand{\disjoint}{\textnormal{\textsc{disj}}}
\newcommand{\clique}{\textnormal{\textsc{clique}}}
\title{Brooks' Theorem in Graph Streams:  A Single-Pass
  Semi-Streaming Algorithm for \texorpdfstring{$\Delta$}{Delta}-Coloring}
\begin{document}
\maketitle


\begin{abstract}
	Every graph with maximum degree $\Delta$ can be colored with $(\Delta+1)$ colors using a simple greedy algorithm. 
	Remarkably, recent work has shown that one can find such a coloring even in the semi-streaming model: there exists a randomized algorithm
	that with high probability finds a $(\Delta+1)$-coloring of the input graph in only $O(n\cdot\poly\!\log{n})$ \emph{space} assuming a single pass over 
	the edges of the graph in any arbitrary order. But, in reality, one almost never needs $(\Delta+1)$ colors to properly color a graph. Indeed, the celebrated \textbf{Brooks' theorem} states that every (connected) graph beside cliques and odd cycles can 
	be  colored with $\Delta$ colors. Can we find a $\Delta$-coloring in the semi-streaming model as well? 
	
	\medskip
	
	We settle this key question in the affirmative by designing a randomized semi-streaming algorithm that given any graph, with high probability, either correctly declares that the graph is not $\Delta$-colorable 
	or outputs a $\Delta$-coloring of the graph. 
	
	\medskip
	
	The proof of this result starts with a detour. We first (provably) identify the extent to which the previous approaches for streaming coloring fail for $\Delta$-coloring: for instance, all these prior approaches can 
	handle streams with repeated edges and they can run in $o(n^2)$ time, whereas
  prove that neither of these tasks is possible for $\Delta$-coloring. These impossibility results however 
	pinpoint exactly what is missing from prior approaches when it comes to $\Delta$-coloring. 
	
	\medskip
	
	We  build on these insights to design a semi-streaming algorithm that uses $(i)$ a novel sparse-recovery approach based on sparse-dense decompositions 
	to (partially) recover the ``problematic'' subgraphs of the input---the ones that form the basis of our impossibility results---and $(ii)$ a new coloring approach
	for these subgraphs that allows for recoloring of other vertices in a controlled way without relying on local explorations or finding ``augmenting paths'' that 
	are generally impossible for semi-streaming algorithms. We believe both these techniques can be of independent interest. 
	
\end{abstract}



\section{Introduction}\label{sec:intro}

Graph coloring problems are ubiquitous in graph theory and computer science. Given a graph $G=(V,E)$, a 
proper $c$-coloring of $G$ is any assignment of colors from the palette $\set{1,\ldots,c}$ to the vertices so that no edge receives 
the same color on both its endpoints. Recent years have witnessed a flurry of results for graph coloring in the graph streaming model~\cite{RadhakrishnanS11,BeraG18,AssadiCK19a,BehnezhadDHKS19,BeraCG19,AlonA20,BhattacharyaBMU21,AssadiCS22,ChakrabartiGS22,HalldorssonKNT22}. In this model, the edges of the input graph arrive one by one in an arbitrarily ordered stream and 
the algorithm needs to process these edges sequentially using a limited space, much smaller than the input size. Of particular interest are \emph{semi-streaming} algorithms, introduced by~\cite{FeigenbaumKMSZ05}, 
that use only $\Ot(n) := O(n \cdot \poly\!\log{n})$ space\footnote{Throughout, we use $\Ot(f) := O(f \cdot \poly\log{f})$ to suppress $\poly\!\log{(f)}$ factors.} on $n$-vertex graphs which is proportional to the output size. 
We focus on this model in this paper. 

One of the simplest forms of graph coloring problems is $(\Delta+1)$-coloring of graphs with maximum degree $\Delta$. 
Not only does every graph admits a $(\Delta+1)$-coloring, one can in fact find one quite easily via a greedy algorithm: iterate over the vertices and color each one from any of 
$(\Delta+1)$ colors that has not appeared in any of its at most $\Delta$ colored neighbors. Yet, despite its utter simplicity, this algorithm does not easily lend itself to a semi-streaming algorithm 
as the arbitrary arrival of edges prohibits us from coloring vertices one at a time. 

Nonetheless, a breakthrough of~\cite{AssadiCK19a} showed that $(\Delta+1)$ 
coloring is still possible in the semi-streaming model, albeit via a randomized algorithm that employs a ``non-greedy'' approach. In particular,~\cite{AssadiCK19a} proved the following \emph{palette sparsification theorem}: if we sample 
$O(\log{n})$ colors from $\set{1,\ldots,\Delta+1}$ for each vertex independently, then with high probability, the entire graph can be  colored by coloring each vertex from its own sampled colors. 
This result immediately leads to a semi-streaming algorithm for $(\Delta+1)$-coloring: sample these colors for each vertex and store any edge in the stream that can potentially become monochromatic 
under any coloring of vertices from their sampled list. A simple probabilistic analysis bounds the number of stored edges by $O(n\log^2{n})$ with high probability, and the palette sparsification theorem guarantees that one 
can find a $(\Delta+1)$-coloring of the graph at the end of the stream. 

Going back to  existential results, it is easy to see that there are graphs that do need $\Delta+1$ colors for proper coloring, for instance $(\Delta+1)$-cliques or odd cycles (where $\Delta=2$). The celebrated \textbf{Brooks' theorem}~\cite{Brooks41} states that these two are the only examples: any (connected) graph besides cliques and odd cycles can be colored with $\Delta$ colors (see also~\cite{MelnikovV69} and~\cite{Lovasz75}  for other classical proofs
 of this result by Melnikov and Vizing, and by Lov\'asz, respectively). Unlike existence of $(\Delta+1)$-colorings which is rather a triviality, Brooks' theorem turned out to be a fundamental result in graph coloring~\cite{MolloyR13,StiebitzT15} with numerous proofs discovered for it over the years; see, e.g., \cite{StiebitzT15,CranstonR15,Rabern14,Rabern22} and references therein. The algorithmic aspects of Brooks' theorem 
 have also been studied extensively in classical algorithms~\cite{Lovasz75,Skulrattanakulchai02,BaetzW14}, PRAM algorithms~\cite{Karloff89,PanconesiS95,KarchmerN88,HajnalS90}, or LOCAL algorithms~\cite{PanconesiS92,BrandtFHKLRSU16,GhaffariHKM18}. 
 
 Given the key role Brooks' theorem plays in graph coloring literature on one hand, and the recent advances on streaming coloring algorithms on the other hand, it is thus quite natural to ask: 
 \begin{quote}
 	\emph{Does there exist a \textbf{``semi-streaming Brooks' theorem''}, namely, a semi-streaming algorithm that colors any given graph, besides cliques and odd cycles, with $\Delta$ colors?} 
 \end{quote}
This is precisely the question addressed in this paper. We emphasize that our interest in this question is not in ``shaving off'' a single color from $(\Delta+1)$-coloring to $\Delta$-coloring in practice, but rather as a source of insights and ideas (as is the case, say, in graph theory or classical algorithms where $(\Delta+1)$-coloring is just a triviality). In fact, $\Delta$-coloring appears to be just beyond the reach of our current techniques. For instance, previous streaming coloring algorithms in~\cite{AssadiCK19a,BeraCG19,AlonA20} can all be obtained via palette sparsification (see~\cite{AlonA20} for details). Yet, it was already observed in~\cite{AssadiCK19a} that 
palette sparsification cannot handle $\Delta$-coloring (we elaborate on this later). More generally, while $(\Delta+1)$-coloring  has a strong ``greedy nature'', all existential/algorithmic proofs of $\Delta$-coloring are based on ``exploring” the graph for certain structures, say cut vertices or certain spanning trees~\cite{Lovasz75}, Kempe Chains~\cite{MelnikovV69}, Rubin's Block Lemma~\cite{ErdosRT79,AlonT92}, or ``augmenting paths''~\cite{PanconesiS95} to name a few (we refer
the interested reader to~\cite{StiebitzT15} for an excellent overview of various proofs of Brooks' theorem). These (local) exploration
tasks however tend to be generally impossible in the semi-streaming model.%
\footnote{For instance, while computing all neighbors of a given vertex is trivial via a semi-streaming algorithm (by storing edges of the vertex), 
it is even impossible to discover the $2$-hop neighborhood of a given vertex~\cite{FeigenbaumKMSZ08}.}

\subsection{Our Contributions}\label{sec:results} 

We start with studying the \emph{limitations} of the current approaches in streaming graph coloring for solving $\Delta$-coloring. To do so, 
we focus on two common characteristics of all prior algorithms in~\cite{AssadiCK19a,BeraCG19,AlonA20}: they all also naturally lead to $(i)$ \emph{sublinear-time} algorithms for the corresponding coloring problems that run in $(n^{3/2+o(1)})$ time, 
and $(ii)$ semi-streaming algorithms that can  handle \emph{repeated-edge} streams wherein the same edge may appear more than once. We prove that obtaining either type of algorithms is provably \emph{impossible} for $\Delta$-coloring: 
\begin{quote}
  \begin{itemize}
  \item \textbf{Sublinear-time algorithms (\Cref{sec:sub-time}):} Any
    algorithm that, given access to adjacency lists and adjacency
    matrix of a graph with known maximum degree $\Delta$, can output a
    $\Delta$-coloring with large constant probability requires
    $\Omega(n\Delta)$ queries to input and time.
	
  \item \textbf{Repeated-edge streams (\Cref{sec:or-streams}):} Any
    algorithm that, given the edges of a graph with known maximum
    degree $\Delta$ in a \emph{repeated-edge} stream, can output a
    $\Delta$-coloring with large constant probability requires
    $\Omega(n\Delta)$ space.
	
  \end{itemize}
\end{quote}

These impossibility results already demonstrate how different $\Delta$-coloring is compared to prior graph coloring problems studied in the semi-streaming model. But, as we shall elaborate later, these results play a much more important role 
for us: they pinpoint what is missing from prior approaches when it comes to the $\Delta$-coloring problem and act as an excellent guide for addressing
our  motivating question. This brings us to the {main contribution} of our work. 

\begin{theorem}[Semi-Streaming Brooks' Theorem]\label{res:main}
	There exists a randomized semi-streaming algorithm that given any connected graph $G=(V,E)$ with maximum degree $\Delta$, which is not a clique nor an odd-cycle, 
	with high probability, outputs a $\Delta$-coloring of $G$. 
\end{theorem}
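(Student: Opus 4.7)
The plan follows the two-part strategy suggested by the abstract, and its shape is dictated by the impossibility results of \Cref{sec:sub-time} and \Cref{sec:or-streams}: the failure of palette sparsification for $\Delta$-coloring and the failure of edge-multiplicity-agnostic approaches together tell us that we must exploit that each edge arrives exactly once, and that we are allowed $\Omega(n\Delta)$ post-processing time. This hints at a sparse-dense decomposition combined with \emph{sparse recovery} of the structure of the dense parts, which is the direction I would take.

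The first step is to compute, in a single pass, a sparse-dense decomposition of $G$ that partitions $V$ into a sparse set $V_{\mathrm{sp}}$ and a collection of $O(n/\Delta)$ vertex-disjoint almost-cliques $K_1,\dots,K_t$, each of size roughly $\Delta$ with only $o(\Delta^2)$ internal non-edges, and with every $v\in V_{\mathrm{sp}}$ having $\Omega(\Delta^2)$ non-edges in $N(v)$. Such a decomposition can be built in $\Ot(n)$ space from subsampled-neighborhood sketches maintained during the stream. On top of this, for each sparse vertex $v$ I would sample a random $O(\log n)$-color palette from $\{1,\dots,\Delta\}$ and store every stream edge whose endpoints have overlapping palettes; since sparse vertices carry the required slack, an adaptation of the palette-sparsification theorem of \cite{AssadiCK19a} to $\Delta$ colors should show that $\Ot(n)$ conflict edges suffice to color $V_{\mathrm{sp}}$.

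The second, and genuinely new, step handles the almost-cliques. Each $K_i$ has a \emph{sparse complement} inside, so I would attach an $\ell_0$-sparse-recovery sketch per tentative almost-clique during the stream to recover all internal non-edges of $K_i$ at the end, together with a similar sketch for its (few) boundary edges. Storing all actual edges inside the $K_i$'s would match the $\Omega(n\Delta)$ lower bound, which is precisely why recovering only the complement is essential. Given this information, for every $K_i$ that is not already a $(\Delta+1)$-clique a direct Brooks-type argument applied locally to $K_i$ produces a $\Delta$-coloring. For a $K_i$ that \emph{is} a full $(\Delta+1)$-clique, connectivity of $G$ and the hypothesis that $G$ is not itself a clique guarantee a boundary edge, and the new coloring subroutine selects an internal $\Delta$-coloring of $K_i$ that pre-frees a color at a chosen boundary vertex, committing to this choice without tracing an augmenting path into the rest of the graph. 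The sparse vertices are then colored via the stored conflict edges, consistently with the already chosen clique colors.

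The main obstacle I expect is this last step: all classical proofs of Brooks' theorem (Lov\'asz's tree argument, Melnikov--Vizing's Kempe chains, Rubin's block lemma, Panconesi--Srinivasan's augmenting paths) proceed by some form of global recoloring along a path in the graph, whereas the introduction already notes that even the $2$-hop neighborhood of a single vertex cannot be recovered in semi-streaming. Replacing these path-based recolorings by a one-shot choice that can be read off from the $\Ot(n)$-bit sketch, and simultaneously making the clique coloring mesh with the palette-sparsification coloring of $V_{\mathrm{sp}}$ without a second pass, is the technical heart of the argument and the main thing to get right.
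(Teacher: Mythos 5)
Your high-level shape (sparse-dense decomposition, palette sparsification for sparse vertices, sparse recovery for the dense complement) is the right starting point, and it roughly coincides with the paper's Parts Two and Three. But several steps as you have written them would fail, and the places where they fail are exactly where the paper has to do nontrivial work.

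First, you misidentify the hard cases. Since $G$ is connected, has maximum degree $\Delta$, and is not a clique or odd cycle, Brooks' theorem already rules out any $(\Delta+1)$-clique as a subgraph; so there are no ``full $(\Delta+1)$-cliques'' among the $K_i$ to handle. The actual troublesome almost-cliques are \emph{near}-cliques: a $(\Delta{+}1)$-clique minus one edge $(u,v)$ (whose only $\Delta$-coloring forces $\myC(u)=\myC(v)$), and, worse, a $\Delta$-clique whose few outside neighbors have all been assigned the same color, in which case no $\Delta$-coloring exists that extends the current partial coloring — you must \emph{recolor} an already-colored outside vertex. Your phrase ``pre-frees a color at a chosen boundary vertex, committing to this choice without tracing an augmenting path'' gestures at this but does not say how to do it once the boundary vertex is already colored and you have thrown away its stream. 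This is the paper's Lesson~3 and Phase~6: they keep an extra stack of fresh palettes $L_{6,i}(\cdot)$ precisely so a single outside vertex can be recolored $O(\beta)$ times without running out of usable randomness, and they use a ``friendly-helper'' subgraph (two specific vertices inside $K$ plus one outside, with their full neighborhoods recovered) to carry out the recoloring safely. Nothing in your proposal substitutes for this.

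Second, your sparse-recovery budget does not close. Recovering \emph{all} internal non-edges of every $K_i$ costs $\Theta(\sum_i \bar{e}(K_i) \log n)$ bits and can be $\omega(n\,\mathrm{polylog}\,n)$; the paper's own example is a graph consisting only of $(\Delta{+}1)$-cliques minus a $\sqrt\Delta$-clique, where your plan needs $\Omega(n\sqrt\Delta)$ space. The paper instead recovers only $O(1)$ vertex-neighborhoods per unholey almost-clique (the ``helper structures'' of \Cref{def:helper-critical} and \Cref{def:helper-friendly}), using a sparse-recovery sketch of $\chi(N(v)) - \chi(K)$, whose support has size $O(\eps\Delta)$ only because of the decomposition; holey almost-cliques are shown to be colorable without any recovery at all. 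Third, your plan to ``color $V_{\mathrm{sp}}$ first, then color the almost-cliques consistently'' hits the chicken-and-egg problem the paper resolves by interleaving: coloring sparse vertices uses very large lists $L_3(\cdot)$, and if that happens before the lonely small almost-cliques are handled, the outer coloring of those cliques is no longer ``random enough'' (this is the point of \Cref{clm:2-deg-R} and \Cref{rem:why-interleave}). You would need to discover this ordering constraint and the ``out-of-palette'' coloring of the helper vertices; without them, the proof does not go through.
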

Consequently, despite the fact that prior approaches inherently fail for $\Delta$-coloring in fundamental ways and that $\Delta$-coloring is provably 
intractable in closely related models, we can still obtain a semi-streaming Brooks' theorem and settle our motivating question in the affirmative. 
It is also worth mentioning that randomness in~\Cref{res:main} is crucial: a very recent result of~\cite{AssadiCS22} shows that deterministic semi-streaming algorithms 
cannot even find an $\exp\paren{\Delta^{o(1)}}$-coloring. Our~\Cref{res:main} thus fully settles the complexity of the $\Delta$-coloring problem
in the semi-streaming model. 

\Cref{res:main} can be stated more generally as an algorithm that either decides whether the input graph is $\Delta$-colorable or not, and if yes, outputs the coloring. This is because checking whether 
a graph is $\Delta$-colorable can be done by simply storing a spanning forest of the input (see, e.g.,~\cite{FeigenbaumKMSZ05}) and maintaining the degrees of vertices; this allows us to check whether any of the connected 
components in the graph is a $(\Delta+1)$-clique or an odd-cycle. If not, applying~\Cref{res:main} to each connected component of the graph (in parallel in a single pass) gives us the desired $\Delta$-coloring (the algorithm does not 
even require the prior knowledge of $\Delta$ using a standard trick observed in~\cite{AssadiCK19a}; see \Cref{rem:no-know-delta}). However, we find the statement of~\Cref{res:main} to best capture the most interesting part of the result
and thus opted to present it in this form. 

 \paragraph{Our Techniques.} We shall go over our techniques in detail in the streamlined overview of our approach in~\Cref{sec:tech}. For now, we only mention the three main technical ingredients of our work: 
   \begin{enumerate}[label=$\roman*).$]
   \item A thorough understanding of the powers and limitations of the
     palette sparsification approach of~\cite{AssadiCK19a} for
     $\Delta$-coloring via a rough characterization of which
     (sub)graphs it still applies to;
   \item An algorithm for \emph{implicitly} identifying and storing
     ``problematic'' subgraphs of the input graph---the ones that
     cannot be handled by palette sparsification approach of previous
     step---via a novel sparse recovery approach that relies on
     \emph{algorithmic} use of sparse-dense decompositions (see
     \Cref{sec:decomposition}) in place of their \emph{analytic} use
     in prior streaming algorithms~\cite{AssadiCK19a,AlonA20};
   \item A new coloring procedure that combines simple graph theoretic
     ideas with probabilistic analysis of palette sparsification using
     a notion of \emph{helper structures}; these are simple subgraphs
     of the input that can be recovered via our semi-streaming
     algorithms from the previous part and does not rely on local
     exploration steps of prior proofs of Brooks' theorem mentioned
     earlier.
   \end{enumerate}

\paragraph{Other Sublinear Algorithms Models.} Prior semi-streaming algorithms for graph coloring  also naturally lead to a series of algorithmic results for the respective problems 
in other models. Our first impossibility result already rules out this possibility for $\Delta$-coloring when it comes to \emph{sublinear-time} algorithms. Nevertheless, our approach in~\Cref{res:main} is still 
quite flexible and thus allows for extension of this algorithm to many other models. In particular, the algorithm is implemented via a \emph{linear sketch} (see~\cite{McGregor14}), which immediately 
implies the following two results as well: 
\begin{itemize}
	\item \textbf{Dynamic streams:} There exists a (single-pass) randomized semi-streaming algorithm for $\Delta$-coloring on the streams that contain insertion and deletion of edges. 
	\item \textbf{Massively parallel computation (MPC):} There exist a one round randomized MPC algorithm for $\Delta$-coloring on machines of memory $\Ot(n)$ with only $\Ot(n)$ extra global memory. 
\end{itemize}
As this is not the focus of the paper, we omit the definition and details of the models and instead refer the interested to~\cite{AhnGM12,McGregor14} and~\cite{KarloffSV10,BeameKS17} for each model, respectively. 

\subsection{Related Work}\label{sec:related}  

Similar to the classical setting, it is known that approximating the minimum number of colors for proper coloring, namely, the \emph{chromatic number}, is  intractable in the semi-streaming model~\cite{HalldorssonSSW12,AbboudCKP19,CormodeDK19}. 
Thus, recent work has focused instead on ``combinatorially optimal'' bounds---termed by~\cite{HalldorssonHLS16}---for streaming coloring problems.
On this front, we already discussed the $(\Delta+1)$-coloring result of~\cite{AssadiCK19a}. 
Independently and concurrently,~\cite{BeraG18} obtained a semi-streaming algorithm for $O(\Delta)$ colorings. These results were followed by semi-streaming algorithms for other 
 coloring problems such as degeneracy coloring~\cite{BeraCG19}, coloring locally sparse graphs and $(\deg+1)$-coloring~\cite{AlonA20}, $(\deg+1)$-{list} coloring~\cite{HalldorssonKNT22}, adversarially robust coloring~\cite{ChakrabartiGS22}, 
edge-coloring (in W-streams)~\cite{BehnezhadDHKS19}, deterministic lower bounds and (multi-pass) algorithms~\cite{AssadiCS22}, and coloring verification problems~\cite{BhattacharyaBMU21}. 
Moreover,~\cite{AlonA20} studied
various aspects of palette sparsification technique of~\cite{AssadiCK19a} and showed that other semi-streaming coloring algorithms such as~\cite{BeraG18,BeraCG19} can also be obtained via this technique. 

Many of these work on streaming algorithms for graph coloring also extend to other models such as sublinear-time and massively parallel computation (MPC) algorithms. For instance, 
for $(\Delta+1)$-coloring, there are randomized sublinear-time algorithms in $\Ot(n^{3/2})$ time~\cite{AssadiCK19a} or deterministic MPC algorithms with $O(1)$ rounds and $O(n)$ per-machine memory~\cite{CzumajDP20} (see also~\cite{CzumajDP21b}). 
Moreover, subsequent to the conference publication of this paper in~\cite{AssadiKM22}, some of the ideas in our work was also used in~\cite{FischerHM23} in designing distributed LOCAL algorithms for $\Delta$-coloring.  

Numerous beautiful algorithmic results are known for $\Delta$-coloring problem in various other models such as classical algorithms~\cite{Lovasz75,Skulrattanakulchai02,BaetzW14}, PRAM algorithms~\cite{Karloff89,PanconesiS95,KarchmerN88,HajnalS90}, or LOCAL algorithms~\cite{PanconesiS92,BrandtFHKLRSU16,GhaffariHKM18,BalliuBKO22}. For instance, a remarkable ``distributed Brooks' theorem'' of~\cite{PanconesiS95} 
proves that any partial $\Delta$-coloring of all but one vertex of the graph, can be turned into a proper $\Delta$-coloring of the entire graph by recoloring a single ``augmenting path'' of  $O(\log_{\Delta}{\!n})$ length. 
Finally, it is worth mentioning that Brooks' theorem 
is part of a more general phenomenon in graph theory: as the maximum clique size in $G$ moves further away from $\Delta+1$, so does its chromatic number. For instance,~\cite{Reed99b} proves that 
for sufficiently large $\Delta$, if a graph does not contain a $\Delta$-clique, then it is in fact always $(\Delta-1)$-colorable; see, e.g.,~\cite{BorodinK77,Reed98,Reed99b,KostochkaRS12,MolloyR13,MolloyR14} and references therein for various other 
examples. 


\section{Technical Overview}\label{sec:tech} 

We now give a streamlined overview of our approach. While~\Cref{res:main} is by far the main contribution of our work, we find it illuminating 
to first  talk about our impossibility results for $\Delta$-coloring as they, despite their  simplicity, played a crucial role for us in obtaining~\Cref{res:main} and we believe they can shed more light into
different components of our final algorithm. 

\subsection{A Detour: Impossibility Results, Barriers, and Lessons Along the Way}\label{sec:tech-lower}

\textbf{Palette sparsification.} Let us start by  reviewing the palette sparsification theorem of~\cite{AssadiCK19a}: if we sample $O(\log{n})$ colors from $\set{1,\ldots,\Delta+1}$ for each vertex independently, then with high probability, 
we can still color the graph by coloring each vertex from its sampled palette. The proof of this result in~\cite{AssadiCK19a} uses a variant of \emph{sparse-dense decomposition}~\cite{Reed98} that partitions the graph into \emph{``(locally) 
sparse''} vertices and a collection of \emph{``almost-clique''} subgraphs that can be turned into $(\Delta+1)$-cliques by changing $o(1)$ fraction of their vertices and edges (\Cref{fig:decomposition}). The sparse vertices are then colored \emph{one at a time} 
from their sampled lists using a standard greedy coloring argument originally introduced in~\cite{MolloyR97}.  The main part of the proof is to handle almost-cliques by going over them one by one and coloring each one \emph{entirely}, using the sampled lists 
of \emph{all} its vertices at the same time, even assuming the outside vertices are colored \emph{adversarially}. 

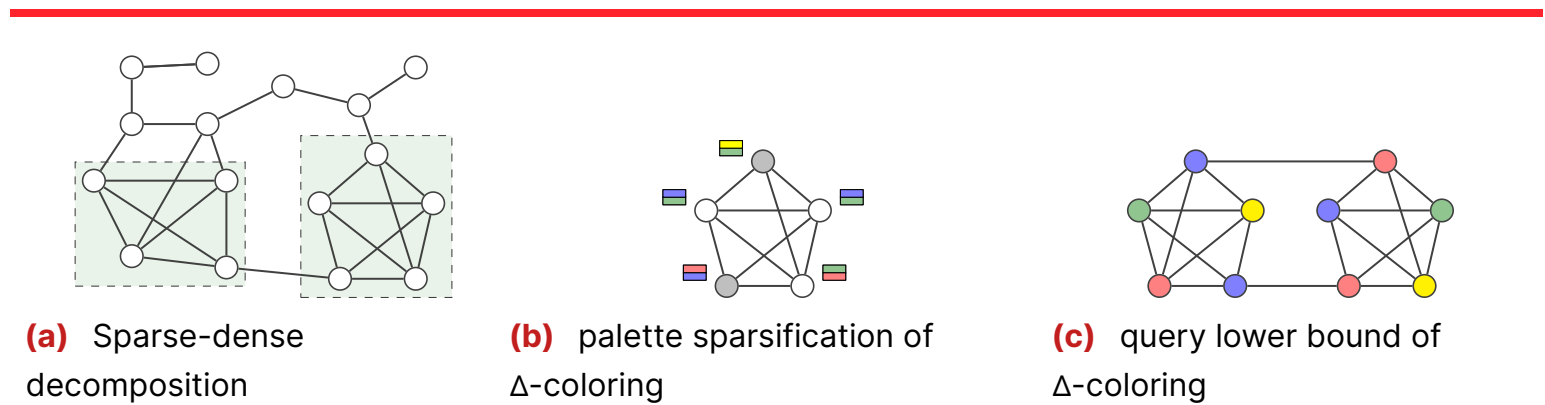
\begin{figure}[h!]
\centering
\subcaptionbox{\footnotesize Sparse-dense decomposition \label{fig:decomposition}}%
  [.31\linewidth]{\definecolor {processblue}{cmyk}{0.96,0,0,0}
\tikzset{myptr/.style={decoration={markings,mark=at position 1 with %
    {\arrow[scale=3,>=stealth]{>}}},postaction={decorate}}}

\newcommand{\colorgrid}{}

\newcommand{\nodenum}[1]{}

\begin {tikzpicture}[ auto ,node distance =1cm and 1cm , on grid , semithick, inner sep=3pt]

\node[circle, black!75, fill=white, draw]  (V1){\nodenum{1}};

\node[circle, black!75, fill=white, draw]    (V2)[above right=0.05cm and 1cm of V1]{\nodenum{ 2}};

\node[circle, black!75, fill=white, draw]    (V3)[below=0.75cm of V1]{\nodenum{ 3}};

\node[circle, black!75, fill=white, draw]    (V4)[right=of V3]{\nodenum{ 4}};

\node[circle, black!75,   fill=white, draw]  (V5)[below left=0.75cm and 0.5cm of V3]{\nodenum{ 5}};

\node[circle, black!75,  fill=white, draw]  (V6)[right=1.75cm of V5]{\nodenum{ 6}};

\node[circle, black!75,  fill=white, draw]  (V7)[below right=1cm and 0.5cm of V5]{\nodenum{ 7}};

\node[circle, black!75,  fill=white, draw]  (V8)[below right=0.15cm and 1.25cm of V7]{\nodenum{ 8}};

\node[circle, black!75, fill=white, draw]   (V11)[above right =0.5cm and 1cm  of V4]{\nodenum{ 11}};

\node[circle, black!75, fill=white, draw]   (V10)[below right=0.25cm and 1cm of V11]{\nodenum{ 10}};

\node[circle, black!75, fill=white, draw]   (V13)[above right=0.5cm and 0.75cm of V10]{\nodenum{ 13}};

\node[circle, black!75,  fill=white, draw]  (V14)[below left=2.3cm and 0.25cm of V10]{\nodenum{ 14}};

\node[circle, black!75,  fill=white, draw]  (V15)[below right=0cm and 1cm of V14]{\nodenum{ 15}};

\node[circle, black!75,  fill=white, draw]  (V16)[above left=1cm and 0.27cm of V14]{\nodenum{ 16}};

\node[circle, black!75,  fill=white, draw]  (V17)[above right=0.65cm and 0.75cm of V16]{\nodenum{ 17}};

\node[circle, black!75,  fill=white, draw]  (V22)[below right=0.65cm and 0.75cm of V17]{\nodenum{ 22}};


\begin{scope}[on background layer]
\node[black!75, draw, fill=ForestGreen!10, inner sep=2.5pt, fit=(V5) (V6) (V7) (V8), dashed]{};
\node[black!75, draw, fill=ForestGreen!10, inner sep=2.5pt, fit=(V14) (V15) (V16) (V17) (V22), dashed]{};
\end{scope}

\draw[color=black!75, line width=0.75pt](V1) to (V2);
\draw[color=black!75, line width=0.75pt](V4) to (V7);
\draw[color=black!75, line width=0.75pt](V3) to (V4);
\draw[color=black!75, line width=0.75pt](V1) to (V2);
\draw[color=black!75, line width=0.75pt](V1) to (V3);
\draw[color=black!75, line width=0.75pt](V3) to (V5);
\draw[color=black!75, line width=0.75pt](V4) to (V11);
\draw[color=black!75, line width=0.75pt](V4) to (V6);
\draw[color=black!75, line width=0.75pt](V5) to (V7);
\draw[color=black!75, line width=0.75pt](V7) to (V6);
\draw[color=black!75, line width=0.75pt](V7) to (V8);
\draw[color=black!75, line width=0.75pt](V6) to (V8);
\draw[color=black!75, line width=0.75pt](V6) to (V5);
\draw[color=black!75, line width=0.75pt](V8) to (V5);

\draw[color=black!75, line width=0.75pt](V10) to (V11);
\draw[color=black!75, line width=0.75pt](V14) to (V8);
\draw[color=black!75, line width=0.75pt](V10) to (V17);
\draw[color=black!75, line width=0.75pt](V10) to (V13);

\draw[color=black!75, line width=0.75pt](V15) to (V22);
\draw[color=black!75, line width=0.75pt](V22) to (V14);
\draw[color=black!75, line width=0.75pt](V16) to (V14);
\draw[color=black!75, line width=0.75pt](V16) to (V22);
\draw[color=black!75, line width=0.75pt](V16) to (V17);
\draw[color=black!75, line width=0.75pt](V17) to (V15);
\draw[color=black!75, line width=0.75pt](V16) to (V15);
\draw[color=black!75, line width=0.75pt](V22) to (V17);
\draw[color=black!75, line width=0.75pt](V14) to (V15);

\end{tikzpicture}}
  \subcaptionbox{\footnotesize palette sparsification of $\Delta$-coloring \label{fig:ps-fail}}%
  [.33\linewidth]{\definecolor {processblue}{cmyk}{0.96,0,0,0}
\tikzset{myptr/.style={decoration={markings,mark=at position 1 with %
    {\arrow[scale=3,>=stealth]{>}}},postaction={decorate}}}

\newcommand{\nodenum}[1]{}

\newcommand{\colorgridbr}{\tikz{
\draw[xstep=0.3cm,ystep=0.1cm] (0,0)  grid (0.3,0.2); 
\filldraw[fill=blue!50] (0,0) rectangle (0.3,0.1);
\filldraw[fill=red!50] (0,0.1) rectangle (0.3,0.2);
}}

\newcommand{\colorgridrf}{\tikz{
\draw[xstep=0.3cm,ystep=0.1cm] (0,0)  grid (0.3,0.2); 
\filldraw[fill=red!50] (0,0) rectangle (0.3,0.1);
\filldraw[fill=ForestGreen!50] (0,0.1) rectangle (0.3,0.2);
}}

\newcommand{\colorgridbg}{\tikz{
\draw[xstep=0.3cm,ystep=0.1cm] (0,0)  grid (0.3,0.2); 
\filldraw[fill=blue!75] (0,0) rectangle (0.3,0.1);
\filldraw[fill=Gray] (0,0.1) rectangle (0.3,0.2);
}}

\newcommand{\colorgridfy}{\tikz{
\draw[xstep=0.3cm,ystep=0.1cm] (0,0)  grid (0.3,0.2); 
\filldraw[fill=ForestGreen!50] (0,0) rectangle (0.3,0.1);
\filldraw[fill=Yellow] (0,0.1) rectangle (0.3,0.2);
}}

\newcommand{\colorgridry}{\tikz{
\draw[xstep=0.3cm,ystep=0.1cm] (0,0)  grid (0.3,0.2); 
\filldraw[fill=red!50] (0,0) rectangle (0.3,0.1);
\filldraw[fill=Yellow] (0,0.1) rectangle (0.3,0.2);
}}

\newcommand{\colorgridrg}{\tikz{
\draw[xstep=0.3cm,ystep=0.1cm] (0,0)  grid (0.3,0.2); 
\filldraw[fill=red!75] (0,0) rectangle (0.3,0.1);
\filldraw[fill=Gray] (0,0.1) rectangle (0.3,0.2);
}}

\newcommand{\colorgridrp}{\tikz{
\draw[xstep=0.3cm,ystep=0.1cm] (0,0)  grid (0.3,0.2); 
\filldraw[fill=red!50] (0,0) rectangle (0.3,0.1);
\filldraw[fill=ForestGreen!50] (0,0.1) rectangle (0.3,0.2);
}}

\newcommand{\colorgridfp}{\tikz{
\draw[xstep=0.3cm,ystep=0.1cm] (0,0)  grid (0.3,0.2); 
\filldraw[fill=ForestGreen!50] (0,0) rectangle (0.3,0.1);
\filldraw[fill=blue!50] (0,0.1) rectangle (0.3,0.2);
}}
\begin {tikzpicture}[ auto ,node distance =1cm and 1cm , on grid , semithick, inner sep=3pt]

\node[circle, black!75,  fill=black!25, draw]  (V14){};
\node (C14)[above left=5pt and 12pt of V14]{\colorgridbr};

\node[circle, black!75,  fill=white, draw]  (V15)[below right=0cm and 1cm of V14]{\nodenum{ 15}};
\node (C15)[above right=5pt and 12pt of V15]{\colorgridrp};

\node[circle, black!75,  fill=white, draw]  (V16)[above left=1cm and 0.27cm of V14]{\nodenum{ 16}};
\node (C16)[above left=5pt and 12pt of V16]{\colorgridfp};

\node[circle, black!75,  fill=black!25, draw]  (V17)[above right=0.65cm and 0.75cm of V16]{\nodenum{ 17}};
\node (C17)[above left=5pt and 12pt of V17]{\colorgridfy};

\node[circle, black!75,  fill=white, draw]  (V22)[below right=0.65cm and 0.75cm of V17]{\nodenum{ 22}};
\node (C22)[above right=5pt and 12pt of V22]{\colorgridfp};


\draw[color=black!75, line width=0.75pt](V15) to (V22);
\draw[color=black!75, line width=0.75pt](V22) to (V14);
\draw[color=black!75, line width=0.75pt](V16) to (V14);
\draw[color=black!75, line width=0.75pt](V16) to (V22);
\draw[color=black!75, line width=0.75pt](V16) to (V17);
\draw[color=black!75, line width=0.75pt](V17) to (V15);
\draw[color=black!75, line width=0.75pt](V16) to (V15);
\draw[color=black!75, line width=0.75pt](V22) to (V17);
\draw[color=black!75, line width=0.75pt](V14) to (V15);


\end{tikzpicture}} \hspace{0.25cm} 
   \subcaptionbox{\footnotesize query lower bound of $\Delta$-coloring \label{fig:query}}%
  [.31\linewidth]{\definecolor {processblue}{cmyk}{0.96,0,0,0}
\tikzset{myptr/.style={decoration={markings,mark=at position 1 with %
    {\arrow[scale=3,>=stealth]{>}}},postaction={decorate}}}

\newcommand{\nodenum}[1]{}

\begin {tikzpicture}[ auto ,node distance =1cm and 1cm , on grid , semithick, inner sep=3pt]

\node[circle, black!75,  fill=red!50, draw]  (V14){};

\node[circle, black!75,  fill=yellow, draw]  (V15)[below right=0cm and 1cm of V14]{\nodenum{ 15}};

\node[circle, black!75,  fill=blue!50, draw]  (V16)[above left=1cm and 0.27cm of V14]{\nodenum{ 16}};

\node[circle, black!75,  fill=red!50, draw]  (V17)[above right=0.65cm and 0.75cm of V16]{\nodenum{ 17}};

\node[circle, black!75,  fill=ForestGreen!50, draw]  (V22)[below right=0.65cm and 0.75cm of V17]{\nodenum{ 22}};

\node[circle, black!75,  fill=red!50, draw]  (Z14)[left=2.5cm of V14]{};

\node[circle, black!75,  fill=blue!50, draw]  (Z15)[below right=0cm and 1cm of Z14]{\nodenum{ 15}};

\node[circle, black!75,  fill=ForestGreen!50, draw]  (Z16)[above left=1cm and 0.27cm of Z14]{\nodenum{ 16}};

\node[circle, black!75,  fill=blue!50, draw]  (Z17)[above right=0.65cm and 0.75cm of Z16]{\nodenum{ 17}};

\node[circle, black!75,  fill=yellow, draw]  (Z22)[below right=0.65cm and 0.75cm of Z17]{\nodenum{ 22}};


\draw[color=black!75, line width=0.75pt](V15) to (V22);
\draw[color=black!75, line width=0.75pt](V22) to (V14);
\draw[color=black!75, line width=0.75pt](V16) to (V14);
\draw[color=black!75, line width=0.75pt](V16) to (V22);
\draw[color=black!75, line width=0.75pt](V16) to (V17);
\draw[color=black!75, line width=0.75pt](V17) to (V15);
\draw[color=black!75, line width=0.75pt](V16) to (V15);
\draw[color=black!75, line width=0.75pt](V22) to (V17);
\draw[color=black!75, line width=0.75pt](V14) to (V15);


\draw[color=black!75, line width=0.75pt](Z15) to (Z22);
\draw[color=black!75, line width=0.75pt](Z22) to (Z14);
\draw[color=black!75, line width=0.75pt](Z16) to (Z14);
\draw[color=black!75, line width=0.75pt](Z17) to (Z14);
\draw[color=black!75, line width=0.75pt](Z16) to (Z22);
\draw[color=black!75, line width=0.75pt](Z16) to (Z17);
\draw[color=black!75, line width=0.75pt](Z16) to (Z15);
\draw[color=black!75, line width=0.75pt](Z22) to (Z17);
\draw[color=black!75, line width=0.75pt](Z14) to (Z15);

\draw[color=black!75, line width=0.75pt](V14) to (Z15);
\draw[color=black!75, line width=0.75pt](V17) to (Z17);

\end{tikzpicture}}
\caption{ A graph with maximum degree $\Delta=4$ and its sparse-dense decomposition in \textbf{(a)} (each box denotes an almost-clique and remaining vertices are sparse). 
Part \textbf{(b)} is an illustration of why palette sparsification fails for $\Delta$-coloring: the only way to $\Delta$-color this graph is to color the marked vertices the same, which cannot be done with these sampled lists.
Part \textbf{(c)} shows a similar construction can be used to prove a query lower bound for $\Delta$-coloring. (The actual instance is obtained from $\Theta(n/\Delta)$ copies of such pairs.)} 
\label{fig:color}
\end{figure}

As expected, the hard part in extending palette sparsification theorem of~\cite{AssadiCK19a} to $\Delta$-coloring involves the argument for almost-cliques. Indeed, this is not just a matter of analysis; 
as already observed by~\cite{AssadiCK19a}, this theorem fails for $\Delta$-coloring (\Cref{fig:ps-fail}): consider a $(\Delta+1)$-clique minus
a single edge $(u,v)$; the only way we can find a $\Delta$-coloring of this graph is if we color both $u$ and $v$ the same, which requires their sampled lists to intersect; by the birthday paradox this only happens 
when size of each list is $\Omega(\sqrt{\Delta})$ which in turn implies that the  algorithm has to store $\Omega(n\Delta)$ edges from the stream---this is effectively the same as storing the input itself!

\paragraph{Sublinear time (query) algorithms.} Consider a graph $G$ which is a collection of $\Theta(n/\Delta)$ \emph{pairs} of $(\Delta+1)$-cliques. For each pair, randomly pick two vertices $(u_1,v_1)$ and $(u_2,v_2)$ from its first and second clique, respectively. Remove the edges $(u_1,v_1)$ and $(u_2,v_2)$ and instead include the edges $(u_1,v_2)$ and $(u_2,v_1)$ in the graph. See~\Cref{fig:query}  for an illustration. 
It is easy to see that the only way to $\Delta$-color this graph is to find the ``switched'' edges in each copy and color their endpoints the same inside each (now) almost-clique. Yet, it is an easy exercise 
to use the linear lower bound on the query complexity of $\OR$ function~\cite{BuhrmanW02} to prove that this requires making $\Omega(\Delta^2)$ queries to the adjacency lists or matrix of the graph for each pair, and thus $\Omega(n\Delta)$ queries overall. 
This lower bound now leaves us with the following lesson. 

\begin{lesson}\label{lesson1}
	Any semi-streaming algorithm for $\Delta$-coloring should explicitly \emph{look at} all but a tiny fraction of edges of the graph presented in the stream. 
\end{lesson}

\Cref{lesson1} may sound trivial at first. After all, the semi-streaming model allows all algorithms to look at all edges of the graph. Yet, note that 
numerous semi-streaming algorithms, say, sampling algorithms, 
including all prior streaming coloring algorithms in~\cite{AssadiCK19a,BeraCG19,AlonA20,HalldorssonKNT22},  
do not use this power---\Cref{lesson1} implies that these algorithms cannot solve $\Delta$-coloring.

\paragraph{Semi-streaming algorithms on repeated-edge streams.} What if we take~\Cref{lesson1} to the extreme and give the algorithm
each edge (potentially) multiple times in the stream---this should surely helps us even more, no?
It turns out however that this is not really the case. 

\begin{sidefigure}
\centering
\definecolor {processblue}{cmyk}{0.96,0,0,0}
\tikzset{myptr/.style={decoration={markings,mark=at position 1 with %
    {\arrow[scale=3,>=stealth]{>}}},postaction={decorate}}}

\newcommand{\nodenum}[1]{}

\begin {tikzpicture}[ auto ,node distance =1cm and 1cm , on grid , semithick, inner sep=3pt]

\node[circle, black!75,  fill=white, draw]  (V14){};

\node[circle, black!75,  fill=white, draw]  (V15)[below right=0cm and 1cm of V14]{\nodenum{ 15}};

\node[circle, black!75,  fill=white, draw]  (V16)[above left=1cm and 0.27cm of V14]{\nodenum{ 16}};

\node[circle, black!75,  fill=white, draw]  (V17)[above right=0.65cm and 0.75cm of V16]{\nodenum{ 17}};

\node[circle, black!75,  fill=white, draw]  (V22)[below right=0.65cm and 0.75cm of V17]{\nodenum{ 22}};

\node  (U)[above left=0.8cm and 0.75cm of V14]{\huge $+$};

\node[circle, black!75,  fill=white, draw]  (Z14)[left=2.5cm of V14]{};

\node[circle, black!75,  fill=white, draw]  (Z15)[below right=0cm and 1cm of Z14]{\nodenum{ 15}};

\node[circle, black!75,  fill=white, draw]  (Z16)[above left=1cm and 0.27cm of Z14]{\nodenum{ 16}};

\node[circle, black!75,  fill=white, draw]  (Z17)[above right=0.65cm and 0.75cm of Z16]{\nodenum{ 17}};

\node[circle, black!75,  fill=white, draw]  (Z22)[below right=0.65cm and 0.75cm of Z17]{\nodenum{ 22}};

\node  (N)[right=2.75cm of U]{\huge $=$};

\node[circle, black!75,  fill=black!25, draw]  (W14)[ right=3cm of V14]{};

\node[circle, black!75,  fill=white, draw]  (W15)[below right=0cm and 1cm of W14]{\nodenum{ 15}};

\node[circle, black!75,  fill=white, draw]  (W16)[above left=1cm and 0.27cm of W14]{\nodenum{ 16}};

\node[circle, black!75,  fill=black!25, draw]  (W17)[above right=0.65cm and 0.75cm of W16]{\nodenum{ 17}};

\node[circle, black!75,  fill=white, draw]  (W22)[below right=0.65cm and 0.75cm of W17]{\nodenum{ 22}};


\draw[color=red!75, line width=1.5pt]
(V22) to (V14)
(V16) to (V14)
(V16) to (V17)
(V17) to (V15)
(V22) to (V17)
(V14) to (V15);


\draw[color=blue!75, line width=1.5pt]
(Z15) to (Z22)
(Z16) to (Z22)
(Z16) to (Z17)
(Z16) to (Z15)
(Z22) to (Z17)
(Z14) to (Z15);

\draw[color=black!75, line width=1pt]
(W22) to (W14)
(W16) to (W14)
(W16) to (W17)
(W17) to (W15)
(W22) to (W17)
(W14) to (W15)
(W15) to (W22)
(W16) to (W22)
(W16) to (W15);


\end{tikzpicture}
\caption{An illustration of the hard instances for the repeated-edge stream lower bound---the actual instance is obtained from $\Theta(n/\Delta)$ copies of these graphs. The only possible $\Delta$-coloring is to color both endpoints of marked
vertices the same.}
\label{fig:or-stream}
\end{sidefigure}

Suppose that we have a graph $G$ on a collection of $\Theta(n/\Delta)$ disjoint sets of vertices of size $\Delta+1$ each. For each set of $\Delta+1$ vertices $U$, consider a stream of edges that in the first part, provides a subset $E_1$ of 
edges over $U$ and in the second part, provides another subset $E_2$---the repeated-edge stream allows these subsets to be overlapping and we shall choose them so that $E_1 \cup E_2$ leaves 
precisely one pair of vertices $(u,v)$ among all pairs in $U$ without an edge. 
As before, the only way to $\Delta$-color this graph is to color vertices $u$, $v$ in each of the $\Theta(n/\Delta)$ pieces the same. 
But, given that the edges between $E_1$ and $E_2$ may overlap, one can prove that finding all these pairs requires $\Omega(n\Delta)$ space. This is by  
a reduction from communication complexity lower bounds of the \emph{Tribes} function~\cite{JayramKS03} (a slightly less well-known cousin of the famous set disjointness problem). 
This brings us to the next lesson. 

\begin{lesson}\label{lesson2}
	Any semi-streaming algorithm for $\Delta$-coloring should crucially use the fact that each edge of the graph arrives \emph{exactly once} in the stream. 
\end{lesson}

Again, while the semi-streaming model only allows for presenting each edge  once in the stream, many algorithms are entirely oblivious to this feature. This includes all previous semi-streaming 
coloring algorithms in~\cite{AssadiCK19a,BeraG18,BeraCG19,AlonA20,HalldorssonKNT22}, as well as various other ones for spanning trees~\cite{FeigenbaumKMSZ05}, sparsifiers~\cite{McGregor14}, spanners~\cite{FeigenbaumKMSZ05,FeigenbaumKMSZ08} and maximal matchings~\cite{FeigenbaumKMSZ05}. 
 \Cref{lesson2} says that any potential $\Delta$-coloring algorithm
cannot be of this type. 

\paragraph{A natural algorithm or a barrier result?} Finally, let us conclude this part by considering a natural semi-streaming algorithm for $\Delta$-coloring: Sample $O(n\log{n}/\Delta)$ vertices $S$ uniformly at random, and partition
the input graph into two subgraphs $G_S$ consisting of all edges incident on $S$, and $G_{-S}$ consisting of all remaining edges. We can  easily detect, for each arriving edge in the stream, which subgraph it belongs to. Moreover, it is easy to see that $G_S$ contains $O(n\log{n})$ edges and $G_{-S}$, with high probability, has maximum degree at most $\Delta-1$. We can thus store $G_S$ explicitly via a semi-streaming algorithm and run the algorithm of~\cite{AssadiCK19a} on $G_{-S}$ 
to color it with $(\Delta(G_{-S})+1) = \Delta$ colors. So, we have a $\Delta$-coloring of $G_{-S}$ and all edges of $G_S$. 

Surely, now that we know \emph{all}  of~$G_S$, we should be able to extend the $\Delta$-coloring of $G_{-S}$ to~$G_S$, no? 
The answer however turns out to be \emph{no}: unlike the case of $(\Delta+1)$-coloring, not every partial coloring of a graph can be extended directly to a proper $\Delta$-coloring of the entire graph.  
But perhaps this is only an abstract worry and we should just find the right way of analyzing this algorithm? The answer is yet again \emph{no}: the algorithm we just proposed in fact neglects both~\Cref{lesson1} and~\Cref{lesson2} and thus is doomed to fail completely.%
\footnote{An added bonus of those impossibility results is to allow for quickly checking viability of potential algorithms.}
But this also leaves us with the following lesson. 

\begin{lesson}\label{lesson3}
	Any semi-streaming algorithm for $\Delta$-coloring that colors the graph by extending a partial coloring, subgraph by subgraph, should either provide a \emph{stronger guarantee} than solely an arbitrary $\Delta$-coloring for each subgraph, 
	or allow for \emph{recoloring} of an already colored subgraph.  
\end{lesson}

While perhaps less concrete than our two previous lessons,~\Cref{lesson3} has a profound impact in the design of our semi-streaming algorithm that also colors the graph one subgraph at a time; 
in particular, our algorithm is going to adhere to \emph{both} restrictions imposed by~\Cref{lesson3} simultaneously. 

\subsection{The High-Level Overview of Our Algorithm}\label{sec:tech-alg} 

After this long detour, we are now ready to go over our algorithm in~\Cref{res:main}. As stated earlier, the three main ingredients of our algorithm are: $(i)$ a variant of palette sparsification for $\Delta$-coloring that can color all but some
{problematic} almost-cliques in the input (such as~\Cref{fig:ps-fail}), $(ii)$ a sparse recovery approach for (partially) recovering these problematic subgraphs, and $(iii)$ a new coloring procedure that allows for extending the partial coloring of part $(i)$ 
to the remaining subgraphs partially recovered in part $(ii)$ to obtain a proper $\Delta$-coloring of the entire graph. We will go over each part separately in the following.

\subsubsection*{Part One: Powers and Limitations of Palette Sparsification for $\Delta$-Coloring} 

While we already discussed in~\Cref{sec:tech-lower} that palette sparsification fails for $\Delta$-coloring, we are still going to employ its ideas crucially in our work. 
The goal of this step is to identify to what extent this approach fails for $\Delta$-coloring. 
We  develop a \textbf{classification of almost-cliques} (\Cref{sec:classification}) based on the following three criteria: 

\begin{itemize}
	\item \textbf{Size}---number of vertices: \emph{small} for $< \Delta+1$, \emph{critical} for $\Delta+1$, and \emph{large} for $> \Delta+1$ vertices. 
	\item \textbf{Inner density}---number of non-edges inside the almost-clique: \emph{holey} for $\Omega(\Delta)$ non-edges (or ``holes'' in the almost-clique), and \emph{unholey} otherwise. 
	\item \textbf{Outer connections}---a measure of how ``tightly'' the almost-clique is connected to outside; we postpone the technical details of this part to the actual definition in~\Cref{sec:classification}. 
\end{itemize}

Among these, size and inner density are perhaps usual suspects. For instance, we already saw in~\Cref{sec:tech-lower} that palette sparsification entirely fails for almost-cliques of size $\Delta+1$ with exactly one non-edge---in our classification, 
these correspond to critical unholey almost-cliques. The third criterion is more technical and is motivated by~\Cref{lesson3};
as we are still going to color the graph one almost-clique at a time, we would like 
to be able to reason about the partial coloring of outside vertices and possibility of its extension to the almost-clique.
This is particularly relevant for small almost-cliques which can be actually a true clique inside and hence would definitely be in trouble if the same exact set of colors 
is ``blocked'' for all their vertices from outside. See~\Cref{fig:outer}. 

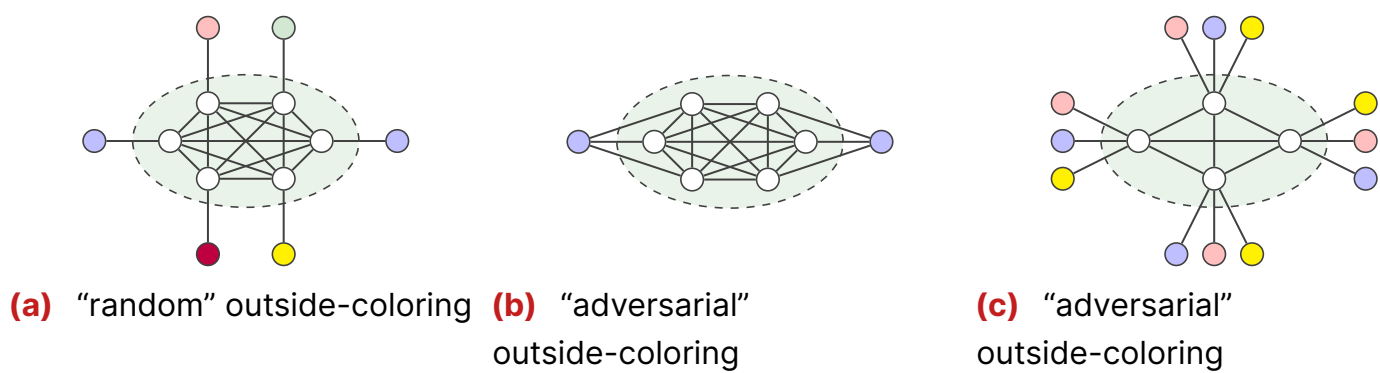
\begin{figure}[h!]
\centering
\subcaptionbox{\footnotesize ``random'' outside-coloring \label{fig:out1}}%
  [.31\linewidth]{\definecolor {processblue}{cmyk}{0.96,0,0,0}
\tikzset{myptr/.style={decoration={markings,mark=at position 1 with %
    {\arrow[scale=3,>=stealth]{>}}},postaction={decorate}}}

\newcommand{\colorgrid}{}

\newcommand{\nodenum}[1]{}

\begin {tikzpicture}[ auto ,node distance =1cm and 1cm , on grid , semithick, inner sep=3pt]



\node[ellipse, black!75, fill=ForestGreen!10, draw, dashed, minimum width=85pt, minimum height=50pt] (K) at (2,1) {};

\node[circle, black!75, fill=white, draw] (V1) at (1,1) {};
\node[circle, black!75, fill=blue!25, draw] (Z1) at (0,1) {};
\node[circle, black!75, fill=white, draw] (V2) at (1.5,1.5) {};
\node[circle, black!75, fill=red!25, draw] (Z2) at (1.5,2.5) {};
\node[circle, black!75, fill=white, draw] (V3) at (2.5,1.5) {};
\node[circle, black!75, fill=ForestGreen!20, draw] (Z3) at (2.5,2.5) {};
\node[circle, black!75, fill=white, draw] (V4) at (3,1) {};
\node[circle, black!75, fill=blue!25, draw] (Z4) at (4,1) {};
\node[circle, black!75, fill=white, draw] (V5) at (2.5,0.5) {};
\node[circle, black!75, fill=yellow, draw] (Z5) at (2.5,-0.5) {};
\node[circle, black!75, fill=white, draw] (V6) at (1.5,0.5) {};
\node[circle, black!75, fill=purple, draw] (Z6) at (1.5,-0.5) {};

\draw[color=black!75, line width=0.75pt]
(V1) to (Z1)
(V2) to (Z2)
(V3) to (Z3)
(V4) to (Z4)
(V5) to (Z5)
(V6) to (Z6); 

\foreach \i in {1,...,5}
{
  \pgfmathtruncatemacro{\ip}{(\i + 1)};
\foreach \j in {\ip,...,6}
{
\draw[color=black!75, line width=0.75pt]
(V\i) to (V\j);	
}
}

\end{tikzpicture}}
  \subcaptionbox{\footnotesize ``adversarial'' outside-coloring \label{fig:out2}}%
  [.31\linewidth]{\definecolor {processblue}{cmyk}{0.96,0,0,0}
\tikzset{myptr/.style={decoration={markings,mark=at position 1 with %
    {\arrow[scale=3,>=stealth]{>}}},postaction={decorate}}}

\newcommand{\colorgrid}{}

\newcommand{\nodenum}[1]{}

\begin {tikzpicture}[ auto ,node distance =1cm and 1cm , on grid , semithick, inner sep=3pt]



\node[ellipse, black!75, fill=ForestGreen!10, draw, dashed, minimum width=85pt, minimum height=50pt] (K) at (2,1) {};

\node[circle, black!75, fill=white, draw] (V1) at (1,1) {};
\node[circle, black!75, fill=blue!25, draw] (Z1) at (0,1) {};
\node[circle, black!75, fill=white, draw] (V2) at (1.5,1.5) {};
\node[circle, black!75, fill=white] (Z2) at (1.5,2.5) {};
\node[circle, black!75, fill=white, draw] (V3) at (2.5,1.5) {};
\node[circle, black!75, fill=white] (Z3) at (2.5,2.5) {};
\node[circle, black!75, fill=white, draw] (V4) at (3,1) {};
\node[circle, black!75, fill=blue!25, draw] (Z4) at (4,1) {};
\node[circle, black!75, fill=white, draw] (V5) at (2.5,0.5) {};
\node[circle, black!75, fill=white] (Z5) at (2.5,-0.5) {};
\node[circle, black!75, fill=white, draw] (V6) at (1.5,0.5) {};
\node[circle, black!75, fill=white] (Z6) at (1.5,-0.5) {};

\draw[color=black!75, line width=0.75pt]
(V1) to (Z1)
(V2) to (Z1)
(V6) to (Z1)
(V4) to (Z4)
(V5) to (Z4)
(V3) to (Z4); 

\foreach \i in {1,...,5}
{
  \pgfmathtruncatemacro{\ip}{(\i + 1)};
\foreach \j in {\ip,...,6}
{
\draw[color=black!75, line width=0.75pt]
(V\i) to (V\j);	
}
}

\end{tikzpicture}}
  \subcaptionbox{\footnotesize ``adversarial'' outside-coloring \label{fig:out3}}%
  [.31\linewidth]{\definecolor {processblue}{cmyk}{0.96,0,0,0}
\tikzset{myptr/.style={decoration={markings,mark=at position 1 with %
    {\arrow[scale=3,>=stealth]{>}}},postaction={decorate}}}

\newcommand{\colorgrid}{}

\newcommand{\nodenum}[1]{}

\begin {tikzpicture}[ auto ,node distance =1cm and 1cm , on grid , semithick, inner sep=3pt]



\node[ellipse, black!75, fill=ForestGreen!10, draw, dashed, minimum width=85pt, minimum height=50pt] (K) at (2,1) {};

\node[circle, black!75, fill=white, draw] (V1) at (1,1) {};
\node[circle, black!75, fill=blue!25, draw] (Z1) at (0,1) {};
\node[circle, black!75, fill=yellow, draw] (Z2) at (0,0.5) {};
\node[circle, black!75, fill=red!25, draw] (Z3) at (0,1.5) {};
\node[circle, black!75, fill=white, draw] (V2) at (2,1.5) {};
\node[circle, black!75, fill=red!25, draw] (Z4) at (1.5,2.5) {};
\node[circle, black!75, fill=blue!25, draw] (Z5) at (2,2.5) {};
\node[circle, black!75, fill=yellow, draw] (Z6) at (2.5,2.5) {};
\node[circle, black!75, fill=white, draw] (V3) at (3,1) {};
\node[circle, black!75, fill=yellow, draw] (Z7) at (4,1.5) {};
\node[circle, black!75, fill=red!25, draw] (Z8) at (4,1) {};
\node[circle, black!75, fill=blue!25, draw] (Z9) at (4,0.5) {};
\node[circle, black!75, fill=white, draw] (V4) at (2,0.5) {};
\node[circle, black!75, fill=red!25, draw] (Z10) at (2,-0.5) {};
\node[circle, black!75, fill=yellow, draw] (Z11) at (2.5,-0.5) {};
\node[circle, black!75, fill=blue!25, draw] (Z12) at (1.5,-0.5) {};

\draw[color=black!75, line width=0.75pt]
(V1) to (Z1)
(V1) to (Z2)
(V1) to (Z3)
(V2) to (Z4)
(V2) to (Z5)
(V2) to (Z6)
(V3) to (Z7)
(V3) to (Z8)
(V3) to (Z9)
(V4) to (Z10)
(V4) to (Z11)
(V4) to (Z12); 
\foreach \i in {1,...,3}
{
  \pgfmathtruncatemacro{\ip}{(\i + 1)};
\foreach \j in {\ip,...,4}
{
\draw[color=black!75, line width=0.75pt]
(V\i) to (V\j);	
}
}

\end{tikzpicture}}
\caption{An illustration of three possible types of outer connections on a graph with maximum degree $\Delta=6$. The almost-clique in part \textbf{(a)} has a ``right'' type of 
outside connection and is going to receive a more ``random'' coloring on its neighbors, 
compared to the almost-clique in part \textbf{(b)} with ``few'' outside neighbors and part \textbf{(c)} with ``too many'' ones. In particular, the latter almost-cliques now cannot be $\Delta$-colored without changing the color of outside vertices as the same colors are blocked for all vertices of the inner (actual) cliques.  
}
\label{fig:outer}
\end{figure}

We then consider palette sparsification (on steroids!) wherein each vertex samples $\poly\!\log{(n)}$ colors from $\set{1,\ldots,\Delta}$ and  \emph{characterize} which families of almost-cliques in our classification can still be colored using only the sampled colors. We show that all holey almost-cliques (regardless of their size or outer connections) can be still colored from their sampled colors using a similar argument as in~\cite{AssadiCK19a}. More interestingly, we show that 
even unholey small almost-cliques that have the ``right'' type of outside connections can be colored at this step. 
Our analysis in this part deviates significantly from~\cite{AssadiCK19a} and in particular crucially establishes certain \emph{randomness} properties on the coloring of vertices \emph{outside} of an almost-clique when trying to color the almost-clique
itself (recall~\Cref{lesson3}). Thus, what remains are unholey critical almost-cliques (regardless of their outer connections) and unholey small almost-cliques with ``problematic'' outside connections. 
We delegate coloring of these almost-cliques to the next steps of the algorithm. 

Let us now briefly discuss the effect of outer connections  in coloring a small almost-clique. As stated earlier, the main problem with small almost-cliques occurs when exactly the same  set of colors is used 
to color all outside vertices, thus blocking these colors entirely for the almost-clique. While this event is basically unavoidable for almost-cliques with only a couple of outside neighbors (\Cref{fig:out2}), 
it becomes less and less likely as the number of outside neighbors increases (\Cref{fig:out1}). After all, for a color to be used on all these vertices, 
it should be sampled by every single one of them in the first place. We will however run into problem again in cases when we have ``too many'' outside neighbors for every single vertex of the almost-clique (\Cref{fig:out3}). 
The almost-cliques of~\Cref{fig:out3} are particularly problematic as we have no knowledge of the neighborhood of their outside vertices (for~\Cref{fig:out2}, we at least know that each of them have many neighbors in the almost-clique, which is used
crucially by our latter algorithms). Thus, we should basically avoid ending up in a situation that we have to color such almost-cliques after having colored their outside neighbors. 

Fortunately, the almost-cliques of~\Cref{fig:out3} can only happen for small enough almost-cliques; this in turn makes the neighborhood of these almost-cliques sufficiently sparse. Thus, 
we can instead handle them similar to sparse vertices by \emph{increasing} the size of sampled palettes for vertices. There is however a subtle issue with this approach.
Increasing the size of sampled palettes means that even a fewer number of outside vertices can make a problem for us, hence requiring us to send even more almost-cliques to sparse vertices to handle, leading to a chicken-and-egg problem.
A key idea in this part is a way to break this dependency cycle by careful sequencing the order of processing of vertices in a way that ensures sufficient ``randomness'' exist in the coloring of neighborhood of 
all small almost-cliques, except for the ones with very few outside neighbors (the type in \Cref{fig:out2}). We postpone the discussion of this ``dependency-breaking'' step
to~\Cref{sec:coloring} and~\Cref{rem:why-interleave}. 

All in all, this step  effectively establishes that palette sparsification achieves a ``weak'' streaming Brooks' theorem by $\Delta$-coloring graphs that do not contain certain \emph{forbidden} subgraphs 
such as $(\Delta+1)$-cliques minus few edges or $\Delta$-cliques that have few neighbors outside (for Brooks' theorem itself, the only forbidden subgraph is a $(\Delta+1)$-clique).

\subsubsection*{Part Two: Sparse Recovery for Remaining Almost-Cliques, and Helper Structures} 

Our next step is a way of \emph{finding edges of} the almost-cliques left uncolored by the previous step so that we can color them using a different approach. Let us bring up an obvious point here: these left out 
almost-cliques are precisely the same family of instances that were at the core of our impossibility results in~\Cref{sec:tech-lower} (and their natural relaxations). Consequently, to handle them, 
our algorithm needs to take into account the recipe put forward by~\Cref{lesson1} and~\Cref{lesson2}: it should look at \emph{all} edges of the stream \emph{exactly once}. 
This rather uniquely points us toward a canonical technique in the streaming model: \textbf{sparse recovery} (via linear sketching). 

\begin{figure}[h!]
\centering
\subcaptionbox{\footnotesize neighbors-vector of $v=3$ \label{fig:sr1}}%
  [.31\linewidth]{\definecolor {processblue}{cmyk}{0.96,0,0,0}
\tikzset{myptr/.style={decoration={markings,mark=at position 1 with %
    {\arrow[scale=3,>=stealth]{>}}},postaction={decorate}}}

\newcommand{\colorgrid}{}

\newcommand{\nodenum}[1]{{\fontsize{1pt}{1pt}\selectfont #1}}

\begin {tikzpicture}[ auto ,node distance =1cm and 1cm , on grid , semithick, inner sep=3pt]

\node[circle, black!75, fill=white!20, draw]  (V8){\nodenum{1}};

\node[circle, black!75, fill=white!20, draw]   (V10)[above right=2.1cm and 1.5cm of V8]{\nodenum{2}};

\node[circle, black!75,  fill=black!25, draw]  (V14)[below left=2.3cm and 0.25cm of V10]{\nodenum{3}};

\node (n) [below right=0.75cm and 0.3cm of V14]{{\scriptsize $\vec{N}(3) := [\textcolor{blue}{1},0,0,\textcolor{blue}{1},\textcolor{blue}{1},0,\textcolor{blue}{1}]$}};

\node[circle, black!75,  fill=white, draw]  (V15)[below right=0cm and 1cm of V14]{\nodenum{4}};

\node[circle, black!75,  fill=white, draw]  (V16)[above left=1cm and 0.27cm of V14]{\nodenum{5}};

\node (nn) [above left=1.75cm and 0.3cm of V14]{{\scriptsize $K$}};

\node[circle, black!75,  fill=white!25, draw]  (V17)[above right=0.65cm and 0.75cm of V16]{\nodenum{6}};

\node[circle, black!75,  fill=white, draw]  (V22)[below right=0.65cm and 0.75cm of V17]{\nodenum{7}};

\begin{scope}[on background layer]
\node[black!75, draw, fill=ForestGreen!10, inner sep=2.5pt, fit=(V14) (V15) (V16) (V17) (V22), dashed]{};
\end{scope}

\draw[color=blue!75, line width=1pt](V14) to (V8);
\draw[color=black!75, line width=0.75pt](V10) to (V17);

\draw[color=black!75, line width=0.75pt](V15) to (V22);
\draw[color=blue!75, line width=1pt](V22) to (V14);
\draw[color=blue!75, line width=1pt](V16) to (V14);
\draw[color=black!75, line width=0.75pt](V16) to (V22);
\draw[color=black!75, line width=0.75pt](V16) to (V17);
\draw[color=black!75, line width=0.75pt](V17) to (V15);
\draw[color=black!75, line width=0.75pt](V16) to (V15);
\draw[color=black!75, line width=0.75pt](V22) to (V17);
\draw[color=blue!75, line width=1pt](V14) to (V15);


\end{tikzpicture}}
  \subcaptionbox{\footnotesize characteristic-vector of $K$  \label{fig:sr2}}%
  [.31\linewidth]{\definecolor {processblue}{cmyk}{0.96,0,0,0}
\tikzset{myptr/.style={decoration={markings,mark=at position 1 with %
    {\arrow[scale=3,>=stealth]{>}}},postaction={decorate}}}

\newcommand{\colorgrid}{}

\newcommand{\nodenum}[1]{{\fontsize{1pt}{1pt}\selectfont #1}}

\begin {tikzpicture}[ auto ,node distance =1cm and 1cm , on grid , semithick, inner sep=3pt]

\node[circle, black!75, fill=white!25, draw]  (V8){\nodenum{1}};

\node[circle, black!75, fill=white!25, draw]   (V10)[above right=2.1cm and 1.5cm of V8]{\nodenum{2}};

\node[circle, black!75,  fill=black!25, draw]  (V14)[below left=2.3cm and 0.25cm of V10]{\nodenum{3}};

\node (n) [below right=0.75cm and 0.5cm of V14]{{\scriptsize ${\chi(K)} := [0,0,1,1,1,1,1]$}};

\node[circle, black!75,  fill=black!25, draw]  (V15)[below right=0cm and 1cm of V14]{\nodenum{4}};

\node[circle, black!75,  fill=black!25, draw]  (V16)[above left=1cm and 0.27cm of V14]{\nodenum{5}};
\node[circle, black!75,  fill=black!25, draw]  (V17)[above right=0.65cm and 0.75cm of V16]{\nodenum{6}};

\node (nn) [above left=1.75cm and 0.3cm of V14]{{\scriptsize $K$}};

\node[circle, black!75,  fill=black!25, draw]  (V22)[below right=0.65cm and 0.75cm of V17]{\nodenum{7}};

\begin{scope}[on background layer]
\node[black!75, draw, fill=ForestGreen!10, inner sep=2.5pt, fit=(V14) (V15) (V16) (V17) (V22), dashed]{};
\end{scope}




\end{tikzpicture}}
  \subcaptionbox{\footnotesize  recovery-vector for neighbors of 3 \label{fig:sr3}}%
  [.32\linewidth]{\definecolor {processblue}{cmyk}{0.96,0,0,0}
\tikzset{myptr/.style={decoration={markings,mark=at position 1 with %
    {\arrow[scale=3,>=stealth]{>}}},postaction={decorate}}}

\newcommand{\colorgrid}{}

\newcommand{\nodenum}[1]{{\fontsize{1pt}{1pt}\selectfont #1}}

\begin {tikzpicture}[ auto ,node distance =1cm and 1cm , on grid , semithick, inner sep=3pt]

\node[circle, black!75, fill=white!25, draw]  (V8){\nodenum{1}};

\node[circle, black!75, fill=white!25, draw]   (V10)[above right=2.1cm and 1.5cm of V8]{\nodenum{2}};

\node[circle, black!75,  fill=black!25, draw]  (V14)[below left=2.3cm and 0.25cm of V10]{\nodenum{3}};

\node (n) [below right=0.75cm and 0.5cm of V14]{{\scriptsize $\vec{N}(3)-{\chi(K)} := [\textcolor{blue}{1},0,0,0,0,\textcolor{red}{-1},0]$}};

\node[circle, black!75,  fill=white!25, draw]  (V15)[below right=0cm and 1cm of V14]{\nodenum{4}};

\node[circle, black!75,  fill=white!25, draw]  (V16)[above left=1cm and 0.27cm of V14]{\nodenum{5}};
\node[circle, black!75,  fill=white!25, draw]  (V17)[above right=0.65cm and 0.75cm of V16]{\nodenum{6}};

\node (nn) [above left=1.75cm and 0.3cm of V14]{{\scriptsize $K$}};

\node[circle, black!75,  fill=white!25, draw]  (V22)[below right=0.65cm and 0.75cm of V17]{\nodenum{7}};

\begin{scope}[on background layer]
\node[black!75, draw, fill=ForestGreen!10, inner sep=2.5pt, fit=(V14) (V15) (V16) (V17) (V22), dashed]{};
\end{scope}

\draw[color=blue!75, line width=1pt](V14) to (V8);
\draw[color=red, line width=1pt, dashed] (V14) to (V17);

\end{tikzpicture}}
\caption{ 
An illustration of sparse-recovery on the neighborhood of each vertex, plus an algorithm that finds the identity of vertices in each almost-clique, allows for recovering all edges  ``highly-dense'' almost-cliques. Our actual algorithm is considerably more involved as it needs to \emph{partially} recover ``not-too-dense'' almost-cliques  also. 
}
\label{fig:sr}
\end{figure}

Consider an almost-clique which is a $(\Delta+1)$-clique minus an edge (\Cref{fig:ps-fail}). On the surface, recovering all edges of this almost-clique is problematic as these subgraphs are actually quite \emph{dense}; for instance, if the 
graph consists of only copies of such almost-cliques, we will need $\Omega(n\Delta)$ space to store all of them. But what saves us at this stage is  the fact that these subgraphs are actually {too dense}! 
Informally speaking, this reduces their ``entropy'' dramatically {conditioned} on our knowledge of the sparse-dense decomposition. Thus, we can 
recover them \emph{implicitly} using a novel {sparse recovery approach} that uses sparse-dense decompositions {algorithmically} and not only analytically.%
\footnote{For the main results in~\cite{AssadiCK19a} 
one only needs to know the \emph{existence} of the decomposition and does not need to compute it. That being said,~\cite{AssadiCK19a} also gave algorithms for finding the decomposition from the stream (which is needed
to run their algorithms in polynomial time)---we use an extension of their algorithm by~\cite{AssadiC22} in this paper.}

Although in the examples we discussed, we can  hope to recover the entire almost-clique in question implicitly, this will not be the case for all almost-cliques left 
uncolored by the first part, e.g., for a $(\Delta+1)$-clique minus a $\sqrt{\Delta}$-size inner clique (applying this method to a graph consisting of only such almost-cliques requires $\Omega(n\sqrt{\Delta})$ space).  
 As a result, our semi-streaming algorithm  settles for recovering certain \textbf{helper structures} from these almost-cliques instead. These are subgraphs of the input that are sufficiently simple to be recoverable via a combination 
of sparse recovery, sampling, and some basic graph theory arguments. At the same time, they are structured enough to give us enough flexibility for the final coloring step.  Given the technicality of their definitions (\Cref{def:helper-friendly,def:helper-critical} and~\Cref{fig:helper}), we postpone further details  to~\Cref{sec:classification}. 

We  point out that sparse recovery and linear sketching have been a staple of graph streaming algorithms since the seminal work of~\cite{AhnGM12}. But, these tools have been almost exclusively used to handle 
edge deletions in \emph{dynamic} streams. Their applications for us, on \emph{insertion-only} streams, as a way of (implicitly) sparsifying a graph using outside information (i.e., the sparse-dense decomposition), 
is quite different and can form a tool of independent interest.  

\subsubsection*{Part Three: The Final Coloring Procedure} 

The final step of our approach is then to color these remaining almost-cliques, given the extra information we recovered for them in the previous step. For intuition, let us consider two inherently different 
types of almost-cliques left uncolored by the approach in the first part.  

\begin{figure}[h!]
\centering
  \subcaptionbox{\footnotesize a $(\Delta+1)$-clique minus an edge \label{fig:easy}}%
  [.42\linewidth]{\definecolor {processblue}{cmyk}{0.96,0,0,0}
\tikzset{myptr/.style={decoration={markings,mark=at position 1 with %
    {\arrow[scale=3,>=stealth]{>}}},postaction={decorate}}}

\newcommand{\colorgrid}{}

\newcommand{\nodenum}[1]{}

\begin {tikzpicture}[ auto ,node distance =1cm and 1cm , on grid , semithick, inner sep=3pt]

\node[circle, black!75, fill=blue!20, draw]  (V8){\nodenum{ 8}};

\node[circle, black!75, fill=blue!20, draw]   (V10)[above right=2.1cm and 1.5cm of V8]{\nodenum{ 10}};

\node[circle, black!75,  fill=red!25, draw]  (V14)[below left=2.3cm and 0.25cm of V10]{\nodenum{ 14}};

\node[circle, black!75,  fill=white, draw]  (V15)[below right=0cm and 1cm of V14]{\nodenum{ 15}};

\node[circle, black!75,  fill=white, draw]  (V16)[above left=1cm and 0.27cm of V14]{\nodenum{ 16}};
\node[circle, black!75,  fill=red!25, draw]  (V17)[above right=0.65cm and 0.75cm of V16]{\nodenum{ 17}};

\node[circle, black!75,  fill=white, draw]  (V22)[below right=0.65cm and 0.75cm of V17]{\nodenum{ 22}};

\begin{scope}[on background layer]
\node[black!75, draw, fill=ForestGreen!10, inner sep=2.5pt, fit=(V14) (V15) (V16) (V17) (V22), dashed]{};
\end{scope}

\draw[color=black!75, line width=0.75pt](V14) to (V8);
\draw[color=black!75, line width=0.75pt](V10) to (V17);

\draw[color=black!75, line width=0.75pt](V15) to (V22);
\draw[color=black!75, line width=0.75pt](V22) to (V14);
\draw[color=black!75, line width=0.75pt](V16) to (V14);
\draw[color=black!75, line width=0.75pt](V16) to (V22);
\draw[color=black!75, line width=0.75pt](V16) to (V17);
\draw[color=black!75, line width=0.75pt](V17) to (V15);
\draw[color=black!75, line width=0.75pt](V16) to (V15);
\draw[color=black!75, line width=0.75pt](V22) to (V17);
\draw[color=black!75, line width=0.75pt](V14) to (V15);


\end{tikzpicture}}
  \subcaptionbox{\footnotesize  a $\Delta$-clique with few neighbors  \label{fig:hard}}%
  [.42\linewidth]{\definecolor {processblue}{cmyk}{0.96,0,0,0}
\tikzset{myptr/.style={decoration={markings,mark=at position 1 with %
    {\arrow[scale=3,>=stealth]{>}}},postaction={decorate}}}

\newcommand{\colorgrid}{}

\newcommand{\nodenum}[1]{}

\begin {tikzpicture}[ auto ,node distance =1cm and 1cm , on grid , semithick, inner sep=3pt]

\node[circle, black!75, fill=blue!20, draw]    (V3){\nodenum{ 3}};
\node (C3)[above left=5pt and 12pt of V3]{\colorgrid};

\node[circle, black!75, fill=blue!20, draw]    (V4)[right=of V3]{\nodenum{ 4}};
\node (C4)[above right=14pt and 8pt of V4]{\colorgrid};

\node[circle, black!75,   fill=white, draw]  (V5)[below left=0.75cm and 0.5cm of V3]{\nodenum{ 5}};
\node (C5)[above left=5pt and 12pt of V5]{\colorgrid};

\node[circle, black!75,  fill=white, draw]  (V6)[right=1.75cm of V5]{\nodenum{ 6}};
\node (C6)[above right=5pt and 12pt of V6]{\colorgrid};

\node[circle, black!75,  fill=white, draw]  (V7)[below right=1cm and 0.5cm of V5]{\nodenum{ 7}};
\node (C7)[below left=5pt and 12pt of V7]{\colorgrid};

\node[circle, black!75,  fill=white, draw]  (V8)[below right=0.15cm and 1.25cm of V7]{\nodenum{ 8}};
\node (C8)[below right=5pt and 12pt of V8]{\colorgrid};



\begin{scope}[on background layer]
\node[black!75, draw, fill=ForestGreen!10, inner sep=2.5pt, fit=(V5) (V6) (V7) (V8), dashed]{};
\end{scope}

\draw[color=black!75, line width=0.75pt](V4) to (V7);
\draw[color=black!75, line width=0.75pt](V4) to (V6);
\draw[color=black!75, line width=0.75pt](V3) to (V5);
\draw[color=black!75, line width=0.75pt](V5) to (V7);
\draw[color=black!75, line width=0.75pt](V7) to (V6);
\draw[color=black!75, line width=0.75pt](V7) to (V8);
\draw[color=black!75, line width=0.75pt](V6) to (V8);
\draw[color=black!75, line width=0.75pt](V6) to (V5);
\draw[color=black!75, line width=0.75pt](V8) to (V5);

\draw[color=black!75, line width=0.75pt, bend left=2cm](V4) to (V8);

\end{tikzpicture}}
\caption{Two problematic almost-cliques in a graph with maximum degree $\Delta=4$. Both almost-cliques are \emph{hard} for palette sparsification. The only way part \textbf{(a)} can be $\Delta$-colored
is if the vertices incident on the non-edge have intersecting lists. Part \textbf{(b)} is \emph{not} $\Delta$-colorable if the outside (marked) vertices are all colored the same. 
}
\end{figure}
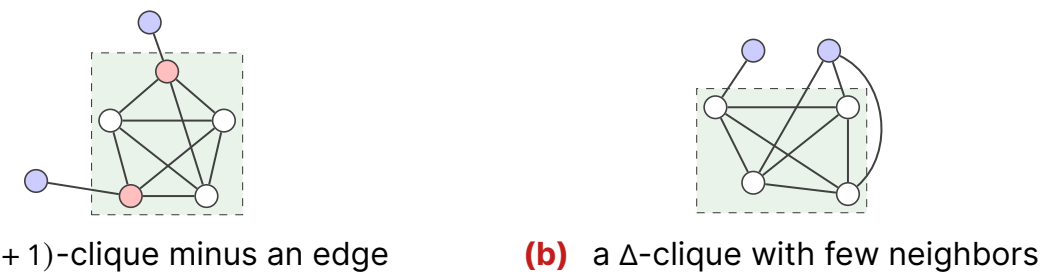

\paragraph{$(\Delta+1)$-clique minus an edge $(u,v)$ (\Cref{fig:easy}):} Suppose we know all edges of such an almost-clique. 
We can color both $u$ and $v$ the same using a color that does not appear in their outside neighbors (which is possible because vertices of almost-cliques have few edges out); 
the standard greedy algorithm now actually manages to $\Delta$-color this almost-clique (recall that we assumed the knowledge of \emph{all} edges of the almost-clique for now). 
The argument is simply the following: Pick a common neighbor $z$ of $u$ and $v$, which will exist in 
an almost-clique, and greedily color vertices from a color not used in their neighborhood, waiting for $z$ to be colored last; at this point, since two neighbors of $z$ have the same color, there is still a choice for $z$ to be colored with in the  algorithm.

\paragraph{$\Delta$-clique with few neighbors (\Cref{fig:hard}):} These are more problematic cases if we have ended up coloring all their outside neighbors the same. Even if we know all edges of this almost-clique, 
there is no way we can color a $\Delta$-clique with $(\Delta-1)$ colors (the same one color is ``blocked'' for all vertices). So, the next ingredient of our algorithm is a \textbf{recoloring step} that allows 
for handling these almost-cliques (again, recall~\Cref{lesson3}); we show that our strengthened palette sparsification  is  flexible enough that, given the edges of a new almost-clique, 
allows for altering some of its \emph{past} decisions on outside vertices of this almost-clique. 
This step involves reasoning about a probabilistic process (possibility of having a ``good'' color to change for an outside vertex) \emph{after} already viewing
the outcome of the process (having ended up with a ``blocked'' color). This requires a careful analysis which is handled by partitioning the randomness of the process into multiple \emph{phases} and using our classification of 
almost-cliques to limit the amount of ``fresh randomness'' we need for this step across these  phases. We discuss this in more details in~\Cref{sec:coloring}. 

The discussion above oversimplified many details. Most importantly, we actually do not have such a ``clean'' picture as above for the remaining almost-cliques we need to color. Instead,  
the  algorithm needs to handle almost-cliques that are not fully recoverable by sparse recovery (as they are not sufficiently dense), 
using their helper structures described earlier.%
\footnote{For a $(\Delta+1)$-clique minus an edge $(u,v)$, the helper structure is  the subgraph consisting of vertices $u$ and $v$, 
and all edges incident on at least one of them. The helper structure of the other example is more tricky; see~\Cref{fig:helper}.}
This coloring is thus done via a combination of the greedy arguments of the above type on the helper 
structures, combined with palette sparsification ideas for the remaining vertices of the almost-clique. This in turns requires using some \textbf{out of (sampled) palette coloring} of vertices, which is in conflict 
with what the palette sparsification does and  needs to be handled carefully; see~\Cref{sec:unholey-friendly,sec:unholey-critical}.


\newcommand{\Vsparse}{\ensuremath{V_{\textnormal{sparse}}}}
\newcommand{\NS}[1]{\ensuremath{N_{\textnormal{sample}}(#1)}}
\newcommand{\sample}{\ensuremath{\textnormal{\textsf{SAMPLE}}}\xspace}

\newcommand{\myC}{\ensuremath{C}}

\newcommand{\CN}[2]{\ensuremath{\mathsf{ColNei}_{#2}(#1)}}
\newcommand{\cn}[2]{\ensuremath{\mathsf{coldeg}_{#2}(#1)}}

\newcommand{\Avail}[2]{\ensuremath{\mathsf{Avail}_{#2}(#1)}}
\newcommand{\avail}[2]{\ensuremath{\mathsf{avail}_{#2}(#1)}}

\newcommand{\Uncol}[2]{\ensuremath{\mathsf{Uncol}_{#2}(#1)}}
\newcommand{\uncol}[2]{\ensuremath{\mathsf{uncol}_{#2}(#1)}}

\newcommand{\RemCol}[2]{\ensuremath{\mathsf{RemCol}_{#2}(#1)}}
\newcommand{\remcol}[2]{\ensuremath{\mathsf{remcol}_{#2}(#1)}}

\newcommand{\PhiV}{\ensuremath{\Phi}^{\textnormal{V}}}
\newcommand{\PhiR}{\ensuremath{\Phi}^{\textnormal{R}}}

\newcommand{\OneShot}{\textnormal{\texttt{one-shot-coloring}}\xspace}
\newcommand{\AC}[2]{\ensuremath{ \mathcal{A}_{#1}(#2)}}
\newcommand{\UN}[2]{\ensuremath{ \mathcal{U}_{#1}(#2)}}
\newcommand{\drem}[2]{\ensuremath{ d_{#1}(#2) }}
\newcommand{\Xprime}{\ensuremath{X^{'}}}
\newcommand{\Tprime}{\ensuremath{T^{'}}}
\newcommand{\Kset}{\ensuremath{ \mathcal{K} }}

\renewcommand{\MM}{\mathcal{M}}

\section{Preliminaries}\label{sec:prelim}

\paragraph{Notation.}  For an integer $t \geq 1$, we define
$[t] := \set{1,2,\ldots,t}$.
For a graph $G=(V,E)$ and a vertex $v \in V$, we use $N_G(v)$ to denote the
neighbors of $v$ in $G$, and $N(v)$ when the graph is clear from the context.
Further, we define degree of $v$ by $\deg_G(v) := \card{ N_G(v) }$ (and similarly $\deg(v)$).
We refer to a pair $(u,v)$ of vertices in $V$ as a \textbf{non-edge} when there
is no edge between $u$ and $v$ in $G$. Similarly, we sometimes say that $u$ is a \textbf{non-neighbor} of $v$ and vice versa.

For a graph $G=(V,E)$ and integer $q \geq 1$, we refer to any function $\myC: V \rightarrow [q]$ as a \textbf{$q$-coloring} and call it a \emph{proper} coloring iff there is no edge $(u,v)$ in $G$ with $\myC(u)=\myC(v)$. 
We further refer to a function $\myC: V \rightarrow [q] \cup \set{\perp}$ as a \textbf{partial $q$-coloring} and call the vertices $v \in V$ with $\myC(v) = \perp$ as \textbf{uncolored} vertices by $\myC$. The edges $(u,v)$ in $G$ 
with $\myC(u) = \myC(v) = \perp$ are \emph{not} considered monochromatic, and thus we consider a partial $q$-coloring {proper} iff there is no edge $(u,v)$ in $G$ with $\myC(u) = \myC(v) \neq \perp$. 
Finally, for any proper partial $q$-coloring, and any vertex $v \in V$, we define: 
\begin{itemize}
	\item $\CN{v}{\myC} := \set{u \in N(v) \mid \myC(u) \neq \perp}$: the neighbors of $v$ that are assigned a color by $\myC$; we further define $\cn{v}{\myC} := \card{\CN{v}{\myC}}$. 
	\item $\Avail{v}{\myC} := \set{c \in [q] \mid \myC(u) \neq c \text{~for all~} u \in N(v)}$: the colors in $[q]$ that have not been assigned to any neighbor of $v$ by $\myC$, i.e., are \emph{available} to $v$; we  define
	 $\avail{v}{\myC} := \card{\Avail{v}{\myC}}$. 
\end{itemize}
Finally, we say that a coloring $\myC_1$ is an \textbf{extension} of coloring $\myC_2$ iff for every $v \in V$ with $\myC_2(v) \neq \perp$, $\myC_1(v) = \myC_2(v)$; in other words, only uncolored vertices in $\myC_2$ may receive a different color in $\myC_1$. 

For a bipartite graph $H=(L,R,E)$, a matching $M$ is any collection of \emph{vertex-disjoint} edges. We say that a matching $M$ is an \textbf{$L$-perfect matching} iff it matches all vertices in $L$. We use the following presentation of 
the well-known Hall's marriage theorem~\cite{Hall35}. 

\begin{fact}[cf.~\cite{Hall35}]\label{fact:halls-theorem}
	Suppose $H=(L,R,E)$ is a bipartite graph such that for any set $A \subseteq L$, we have $\card{N(A)} \geq \card{A}$; then $H$ has an $L$-perfect matching. 
\end{fact}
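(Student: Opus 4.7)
The plan is to prove Hall's marriage theorem by strong induction on $|L|$. The base case $|L|=1$ is immediate: the single vertex has at least one neighbor by hypothesis, so we can match it. For the inductive step, I would split into two cases based on whether Hall's condition is slack on every nonempty proper subset of $L$.

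In the \emph{slack} case, suppose every nonempty proper subset $A \subsetneq L$ satisfies $|N(A)| \geq |A|+1$. Pick an arbitrary vertex $v \in L$ and match it to any neighbor $w \in R$ (which exists since $|N(\{v\})| \geq 1$). After deleting both $v$ and $w$, the residual bipartite graph $H' := H - \{v,w\}$ still satisfies Hall's condition: for any $A' \subseteq L \setminus \{v\}$, we have $|N_{H'}(A')| \geq |N_H(A')| - 1 \geq |A'|$ because removing $w$ decreases each neighborhood by at most one. The inductive hypothesis applied to $H'$ then yields an $(L \setminus \{v\})$-perfect matching, which extends via the edge $(v,w)$ to an $L$-perfect matching.

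In the \emph{tight} case, suppose some nonempty proper subset $A \subsetneq L$ achieves $|N(A)| = |A|$. By induction applied to the bipartite subgraph $H_1 := H[A \cup N(A)]$ (whose Hall condition is inherited from that of $H$), we obtain an $A$-perfect matching $M_1$ that saturates all of $N(A)$. Consider next $H_2 := H[(L \setminus A) \cup (R \setminus N(A))]$. For any $B \subseteq L \setminus A$, applying Hall's condition to $A \cup B$ in $H$ and using the disjoint-union decomposition $N_H(A \cup B) = N(A) \sqcup N_{H_2}(B)$ yields
\[
|N_{H_2}(B)| = |N_H(A \cup B)| - |N(A)| \geq (|A|+|B|) - |A| = |B|.
\]
Thus Hall's condition holds for $H_2$, and the inductive hypothesis gives an $(L \setminus A)$-perfect matching $M_2$. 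Then $M_1 \cup M_2$ is the desired $L$-perfect matching.

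The main obstacle is verifying Hall's condition for the residual graph $H_2$ in the tight case: one must check carefully that $N_H(A \cup B) = N(A) \sqcup N_{H_2}(B)$ is really a disjoint union, which follows because every neighbor of $B$ outside $N(A)$ lies in $R \setminus N(A)$ and is therefore a neighbor in $H_2$, while vertices in $N(A)$ were removed when passing to $H_2$. Beyond this, the argument is a routine case analysis driven by whether Hall's condition is slack or tight somewhere.
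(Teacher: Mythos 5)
Your proof is correct; it is the classical Halmos--Vaughan induction argument (split on whether Hall's condition is everywhere slack or tight somewhere), and I find no gaps in the case analysis. The paper itself does not prove \Cref{fact:halls-theorem}---it simply cites it as a well-known fact from~\cite{Hall35} and invokes it as a black box in~\Cref{clm:2-matching} and~\Cref{sec:palette-graph}---so there is no in-paper proof to compare your argument against.
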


\medskip

\paragraph{Concentration results.}
We use the following standard form of Chernoff bound in our proofs.
 \begin{proposition}[Chernoff bound; c.f.~\cite{DubhashiP09}]\label{prop:chernoff}
 	Suppose $X_1,\ldots,X_m$ are $m$ independent random variables with range $[0,b]$ each. Let $X := \sum_{i=1}^m X_i$ and $\mu_L \leq \expect{X} \leq \mu_H$. Then, for any $\delta > 0$,
 	\[
 	\Pr\paren{X >  (1+\delta) \cdot \mu_H} \leq \exp\paren{-\frac{\delta^2 \cdot \mu_H}{(3+\delta) \cdot b}} \quad \textnormal{and} \quad \Pr\paren{X <  (1-\delta) \cdot \mu_L} \leq \exp\paren{-\frac{\delta^2 \cdot \mu_L}{(2+\delta) \cdot b}}.
 	\]
 \end{proposition}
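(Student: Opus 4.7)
The plan is to apply the standard moment generating function (Cram\'er--Chernoff) approach, adapted to variables bounded in $[0,b]$ rather than $\set{0,1}$. First I would normalize by setting $Y_i := X_i/b \in [0,1]$, so that for any $\lambda > 0$, Markov's inequality applied to $e^{\lambda X/b}$ yields
\[
\Pr\paren{X > (1+\delta)\mu_H} \;\leq\; e^{-\lambda(1+\delta)\mu_H/b} \cdot \expect{e^{\lambda X/b}}.
\]
By independence, $\expect{e^{\lambda X/b}} = \prod_{i=1}^{m} \expect{e^{\lambda Y_i}}$, and since $Y_i \in [0,1]$, the convexity inequality $e^{\lambda y} \leq 1 + y(e^\lambda-1)$ combined with $1+z \leq e^z$ gives $\expect{e^{\lambda Y_i}} \leq \exp\!\paren{\expect{Y_i}(e^\lambda-1)}$. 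Taking the product across $i$ and using $\expect{X} \leq \mu_H$ produces $\expect{e^{\lambda X/b}} \leq \exp\!\paren{\mu_H(e^\lambda-1)/b}$.

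Substituting this back and choosing $\lambda = \log(1+\delta)$ collapses the exponent to $\tfrac{\mu_H}{b}\paren{\delta - (1+\delta)\log(1+\delta)}$, so the upper-tail bound reduces to the elementary scalar inequality
\[
(1+\delta)\log(1+\delta) - \delta \;\geq\; \frac{\delta^2}{3+\delta} \qquad \text{for all } \delta > 0,
\]
which I would verify by checking that both sides vanish to second order at $\delta = 0$ and that the derivative of the difference is non-negative on $(0,\infty)$. The lower-tail statement is handled symmetrically by applying Markov to $e^{-\lambda X/b}$ with $\lambda > 0$: the analogous computation yields the exponent $\tfrac{\mu_L}{b}\paren{-\delta - (1-\delta)\log(1-\delta)}$ for $\delta \in (0,1)$ (the claim is vacuous for $\delta \geq 1$ since $X \geq 0$), after which the companion inequality $(1-\delta)\log(1-\delta) + \delta \geq \delta^2/(2+\delta)$ closes the argument. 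The only mildly technical step is verifying these two one-variable inequalities; everything probabilistic is entirely standard, which is why the proposition is simply cited from~\cite{DubhashiP09}.
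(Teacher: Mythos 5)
Your proposal is correct. The paper does not give its own proof of this proposition---it is cited directly from Dubhashi--Panconesi---and your derivation is the standard Cram\'er--Chernoff moment-generating-function argument that underlies that reference. Each step checks out: the normalization $Y_i := X_i/b \in [0,1]$, the convexity bound $\expect{e^{\lambda Y_i}} \leq \exp\paren{\expect{Y_i}(e^{\lambda}-1)}$, the monotonicity step needed to replace $\expect{X}$ by $\mu_H$ (valid because $e^{\lambda}-1>0$ for $\lambda>0$) and symmetrically by $\mu_L$ in the lower tail (valid because $e^{-\lambda}-1<0$), and the optimal choices $\lambda=\log(1+\delta)$ and $\lambda=-\log(1-\delta)$. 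The two resulting scalar inequalities indeed hold: $f(\delta) := (1+\delta)\log(1+\delta)-\delta-\tfrac{\delta^2}{3+\delta}$ satisfies $f(0)=f'(0)=0$ and $f''(\delta)=\tfrac{1}{1+\delta}-\tfrac{18}{(3+\delta)^3}\geq 0$ since $(3+\delta)^3 \geq 18(1+\delta)$ for $\delta\geq 0$; similarly $g(\delta):=(1-\delta)\log(1-\delta)+\delta-\tfrac{\delta^2}{2+\delta}$ has $g(0)=g'(0)=g''(0)=0$ and $g''(\delta)=\tfrac{1}{1-\delta}-\tfrac{8}{(2+\delta)^3}\geq 0$ on $(0,1)$. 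Your remark that the lower tail is vacuous for $\delta\geq 1$ is the right way to dispose of that case. One could even note that $\tfrac{\delta^2}{3+\delta}$ is weaker than the sharper constant $\tfrac{3\delta^2/2}{3+\delta}$ one usually sees, so the inequality is comfortably true rather than tight; but nothing in your argument needs this.
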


\smallskip

Throughout, we say that an event happens \emph{``with high probability''}, or \emph{``w.h.p.''} for short, to mean that it happens with probability at least $1-1/\poly{(n)}$ for some large polynomial (the degree can be arbitrarily large
without changing the asymptotic performance of the algorithms). Moreover, we pick this degree to be large enough to allow us
to do a union bound over the polynomially many events considered and we do not explicitly mention this union bound each time (but in certain places that we need to do a union bound over exponentially many events, we will be more explicit).

 \subsection{A Sparse-Dense Decomposition}\label{sec:decomposition}

We use a simple corollary of known streaming sparse-dense decompositions in~\cite{AssadiCK19a,AssadiC22}, which have their origin in the classical work in graph theory~\cite{Reed98,MolloyR13} and have subsequently been used
extensively in distributed algorithms as well~\cite{HarrisSS16,ChangLP18,Parter18,HalldorssonKMT21}.

The decomposition is based on partitioning the vertices into \emph{``(locally) sparse''} vertices with many non-edges among their neighbors and \emph{``almost-clique''} vertices that are part of a subgraph which is close to being a clique. We formally define these as follows.

\begin{definition}\label{def:sparse}
	For a graph $G=(V,E)$ and parameter $\eps > 0$, a vertex $v \in V$ is \textnormal{\textbf{$\eps$-sparse}} iff there are at least $\eps^2 \cdot \Delta^2/2$ \underline{non-edges} between the neighbors of $v$.
\end{definition}

\begin{definition}\label{def:almost-clique}
	For a graph $G=(V,E)$ and parameter $\eps > 0$, a subset of vertices $K \subseteq V$ is an \textnormal{\textbf{$\eps$-almost-clique}} iff $K$ has the following properties:
		\begin{enumerate}[label=$\roman*)$.]
	\item\label{dec:size} Size of $K$ satisfies $(1-5\eps)\cdot \Delta \leq |K| \leq (1+5\eps)\cdot \Delta$.
	\item\label{dec:non-neighbors}  Every vertex $v\in K$ has $\leq 10\eps\Delta$ \emph{non-neighbors} inside $K$;
\item\label{dec:neighbors} Every vertex $v\in K$ has $\leq 10\eps\Delta$ \emph{neighbors} outside $K$;
\item\label{dec:outside} Every vertex $u \notin K$ has $\geq 10\eps\Delta$ \emph{non-neighbors} inside $K$.
\end{enumerate}
\end{definition}

We note that property~\ref{dec:outside} of \Cref{def:almost-clique} does not typically appear in the definition of almost-cliques in prior work but it is crucial for our proofs. However, 
this property follows immediately from the proof of~\cite{AssadiC22}. 
We have the following sparse-dense decomposition.
\begin{proposition}[Sparse-Dense Decomposition; cf.~\cite{MolloyR13,AssadiCK19a,AssadiC22}]\label{prop:decomposition}
	There is a constant $\eps_0 > 0$ such that the following holds. For any $0 < \eps < \eps_0$, vertices of any  graph $G=(V,E)$ can be partitioned into
	\textnormal{\textbf{sparse vertices}} $\Vsparse$ that are $\eps$-sparse (\Cref{def:sparse}) and \textnormal{\textbf{dense vertices}} partitioned into a collection of disjoint $\eps$-almost-cliques $K_1,\ldots,K_k$ (\Cref{def:almost-clique}).

Moreover, there is an absolute constant $\gamma > 0$ and an algorithm that given access to only the following information about $G$, with high probability, computes this decomposition of $G$:
\begin{itemize}
	\item \textnormal{\textbf{Random edge samples:}} A collection of sets $\NS{v}$ of $(\gamma \cdot \eps^{-2} \cdot \log{n})$
	neighbors of every vertex $v \in V$ chosen independently and uniformly at random (with repetition);
	\item \textnormal{\textbf{Random vertex samples:}} A set $\sample$~of vertices wherein each $v \in V$ is included independently with probability $(\gamma \cdot \log{n}/\Delta)$,
	together with all the neighborhood $N(v)$ of each sampled vertex $v \in \sample$.
\end{itemize}
\end{proposition}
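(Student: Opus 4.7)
The plan is to follow the classical analytic proof of Reed and Molloy--Reed for the existential part, and to assemble the algorithmic ingredients of~\cite{AssadiCK19a,AssadiC22} for the constructive part.

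For the existential claim, I would introduce a \emph{friendship} relation between adjacent vertices: call $u,v$ $\eps$-friends when $(u,v) \in E$ and $|N(u) \mathbin{\triangle} N(v)| \leq c\,\eps\Delta$ for a suitable absolute constant $c$. The first step is a counting argument: if $v$ is not $\eps$-sparse then $N(v)$ contains fewer than $\eps^2\Delta^2/2$ non-edges, and any neighbor $u$ of $v$ that fails to be an $\eps$-friend must account for $\Omega(\eps\Delta)$ such non-edges, so at least $(1 - O(\eps))\Delta$ of the neighbors of $v$ are $\eps$-friends. The second step is an approximate-transitivity lemma showing that, for $\eps < \eps_0$ small enough, the friendship relation clusters the dense (non-sparse) vertices into groups $K_1,\ldots,K_k$ in which all pairs are mutual $\eps$-friends; each $K_i$ is then declared an almost-clique. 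Properties (i)--(iii) of \Cref{def:almost-clique}---size, bounded internal non-neighbors, and bounded external neighbors---follow by translating the symmetric-difference bound into the required degree counts. Property (iv)---that every outside vertex has many non-neighbors inside $K_i$---is the less standard one; if an outside $u$ violated it, the same counting would force $u$ to be an $\eps$-friend of most members of $K_i$, and thus $u$ would have been absorbed into $K_i$ at the clustering step, a contradiction. This is precisely the sharpening that the decomposition of~\cite{AssadiC22} provides.

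For the algorithmic part the two samples play complementary roles. The set $\NS{v}$ is used to classify sparse versus dense: sampling $\gamma \eps^{-2} \log n$ neighbors and counting non-edges among sampled pairs estimates the local non-edge density up to additive error $O(\eps^2)$ via \Cref{prop:chernoff}, correctly classifying every vertex with probability $1 - 1/\poly(n)$ after a union bound. Pair-queries among sampled neighbors are answered because every edge passes through the stream. The sampled set $\sample$ together with the full neighborhoods $N(v)$ for $v \in \sample$ is used to materialize the almost-cliques: since $|K_i| = \Theta(\Delta)$ and each vertex joins $\sample$ independently with probability $\Theta(\log n/\Delta)$, every almost-clique contains $\Theta(\log n)$ samples w.h.p. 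For each dense sample $s$, the almost-clique $K_s$ containing $s$ is read off as the dense vertices in $N(s) \cup \{s\}$ that pass a friendship test with $s$; the test is evaluated from $N(s)$ (known exactly) and the symmetric-difference estimates obtained from the $\NS{\cdot}$ samples.

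The main technical obstacle I expect is consistency of the output partition: different samples lying in the same almost-clique must produce the same cluster, and dense vertices on the common boundary of two would-be almost-cliques must be assigned unambiguously to one of them. This reduces to showing that the approximate-transitivity threshold of the friendship relation is sharp enough that boundary ambiguities cannot arise for $\eps < \eps_0$, which is precisely the role of the constant $\eps_0$ in the statement. A secondary, milder obstacle is a simultaneous-correctness guarantee across all $n$ vertices and all $O(n^2)$ friendship queries; this is handled by picking $\gamma$ large enough to absorb a union bound. All of these pieces are already established in~\cite{AssadiCK19a,AssadiC22}, so the proof ultimately invokes those constructions, adding only the observation that property (iv) of \Cref{def:almost-clique} follows from the clustering argument outlined above.
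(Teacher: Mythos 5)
The paper offers no proof of this proposition: it is stated with a ``cf.'' citation to~\cite{MolloyR13,AssadiCK19a,AssadiC22}, and the only explicit remark is that the non-standard property~\ref{dec:outside} of \Cref{def:almost-clique} ``follows immediately from the proof of~\cite{AssadiC22}.'' Your proposal ultimately does the same---it delegates to those references---while also supplying a plausible and correct sketch of the friendship-relation / clustering argument they use and of how the two sample types are used algorithmically, so it is consistent with the paper's treatment.
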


We note that a dense vertex is \emph{not} necessarily not $\eps$-sparse.
Or in other words, the set $\Vsparse$ may not include all the $\eps$-sparse
vertices of $G$, as some $\eps$-sparse vertices may still be be included in some $\eps$-almost-clique.

\subsection{Sparse Recovery}\label{sec:sr}

We also use the following standard variant of {sparse recovery} in our proofs.
We note that the specific recovery matrix below, the Vandermonde matrix, is not
necessary for our proofs (i.e., can be replaced with any other standard
construction) and is only mentioned explicitly for concreteness.

 \begin{proposition}[cf.~\cite{DasV13}]\label{prop:sr}
 	Let $n,k \geq 1$ be arbitrary integers and $p \geq n$ be a prime number.
	Consider the $(2k \times n)$-dimensional Vandermonde matrix over $\mathbb{F}_p$:
	\[
		\PhiV :=
		\begin{bmatrix}
		1 & 1 & 1 & \cdots & 1 \\
		1 & 2 & 3 & \cdots & n \\
		1 & 2^2 & 3^2 & \cdots & n^2 \\
		\vdots & \vdots & \vdots & \ddots & \vdots \\
		1 & 2^{2k-1} & 3^{2k-1} & \cdots & n^{2k-1}
		\end{bmatrix}, \quad \textnormal{or for all $i \in [2k]$ and $j \in [n]$:} \quad \PhiV_{i,j} = j^{i-1} \!\!\!\!\mod p.
	\]
	Then, for any $k$-sparse vector $x \in \mathbb{F}_p^n$, one can uniquely recover $x$ from $\PhiV \cdot x$ in polynomial time.
 \end{proposition}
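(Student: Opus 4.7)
The plan is to follow the classical method of Prony (equivalently, Berlekamp--Massey decoding of Reed--Solomon codes), splitting the argument into uniqueness and an explicit polynomial-time recovery procedure.

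For uniqueness, suppose $\PhiV x = \PhiV y$ for two $k$-sparse vectors $x, y \in \mathbb{F}_p^n$. Then $z := x - y$ is at most $2k$-sparse and satisfies $\PhiV z = 0$. Restricting $\PhiV$ to the (at most $2k$) columns indexed by $\mathrm{supp}(z)$ yields a square Vandermonde submatrix whose evaluation points are pairwise distinct elements of $[n] \subseteq \mathbb{F}_p$ (using $p \ge n$); its determinant is the nonzero product of differences, so $z = 0$ and hence $x = y$.

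For the algorithm, write $S := \mathrm{supp}(x)$, $k' := |S| \le k$, and $b := \PhiV x$, so that $b_i = \sum_{j \in S} x_j \cdot j^{i-1}$ for each $i \in [2k]$. Introduce the \emph{locator polynomial}
\[
L(z) := \prod_{j \in S}(z - j) = \sum_{i=0}^{k'} \lambda_i\, z^i, \qquad \lambda_{k'} = 1.
\]
Since $L(j) = 0$ for every $j \in S$, the sequence $b$ satisfies the linear recurrence
\[
\sum_{i=0}^{k'} \lambda_i\, b_{m+i+1} \;=\; \sum_{j \in S} x_j\, j^{m}\, L(j) \;=\; 0 \qquad \text{for all } m \ge 0.
\]
Applying this recurrence for $m = 0, 1, \ldots, k'-1$ (which only needs the $2k'$ entries $b_1, \ldots, b_{2k'}$) produces a Hankel linear system in the unknowns $\lambda_0, \ldots, \lambda_{k'-1}$. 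Since $k'$ is not known a priori, we try $k' = 1, 2, \ldots, k$ and take the smallest for which a consistent solution exists (equivalently, run Berlekamp--Massey on $b$).

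The crux of the argument is the invertibility of the $k' \times k'$ Hankel matrix $H$ defined by $H_{m,i} := b_{m+i+1}$. The plan is to read off the factorization $H = V_S \cdot \mathrm{diag}(x_S) \cdot V_S^{\top}$, where $V_S$ is the $k' \times k'$ Vandermonde matrix with rows indexed by $S$ and columns indexed by powers $0, \ldots, k'-1$; since $V_S$ is nonsingular (distinct nodes in $\mathbb{F}_p$) and $\mathrm{diag}(x_S)$ is nonsingular (every $x_j$ with $j \in S$ is nonzero), so is $H$. Once $L$ has been recovered, we obtain $S$ by evaluating $L$ at each element of $[n]$, and finally recover $\{x_j\}_{j \in S}$ by solving the $k' \times k'$ Vandermonde system $V_S^{\top} x_S = (b_1, \ldots, b_{k'})^{\top}$. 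Each step is polynomial time in $n$ and $\log p$. The main technical obstacle is precisely the Hankel invertibility, which is handled by the factorization above; root-finding in $\mathbb{F}_p$ is trivialized here because the support is guaranteed to lie in $[n]$, so a linear scan suffices.
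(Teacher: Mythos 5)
Your proof is correct and takes essentially the same approach as the paper: the uniqueness argument via independence of any $2k$ columns of a Vandermonde matrix is identical, and your Prony/Berlekamp--Massey algorithm is exactly the ``syndrome decoding from coding theory'' that the paper cites in passing for the polynomial-time claim, here spelled out in detail. Two cosmetic quibbles: with your stated convention for $V_S$ (rows indexed by $S$, columns by powers), the Hankel factorization should read $H = V_S^{\top}\,\mathrm{diag}(x_S)\,V_S$ rather than $V_S\,\mathrm{diag}(x_S)\,V_S^{\top}$; and ``take the smallest $k'$ for which a consistent solution exists'' must be read as consistency of the full over-determined system over all $2k-k'$ shifts of the recurrence (i.e.\ genuine Berlekamp--Massey), since for $k' < |S|$ the lone $k'\times k'$ Hankel block is still nonsingular and hence trivially solvable.
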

 
The proof that a $k$-sparse vector $x$ can be recovered from $\PhiV \cdot x$ is
simply the following: for $\PhiV \cdot x = \PhiV \cdot y$ for two $k$-sparse
vectors $x \neq y$ (i.e. recovery is impossible), 
 we need $\PhiV \cdot (x-y) = 0$. But then this means that the (at most) $2k$ columns in the matrix $\PhiV$ corresponding to the support of $x-y$ have a non-trivial kernel; the latter is a contradiction as any 
 $2k$ columns of $\PhiV$ are independent (polynomial time recovery also can be obtained via syndrome decoding from coding theory). 
 
 In order to safely use sparse recovery (in case when we mistakenly run it on a non-sparse vector), we also need the following standard result that allows us to test whether the output of the recovery is indeed correct or not. 
 This result is also standard and is proven for completeness. 
 \begin{proposition}\label{prop:eq-test}
 	Let $n,t \geq 1$ be arbitrary integers and $p \geq n$ be a prime number. Consider the $(t \times n)$-dimensional random matrix $\PhiR$ over $\mathbb{F}_p$ chosen uniformly from all matrices in $\mathbb{F}_p^{t \times n}$. 
	Then, for any two different vectors $x \neq y \in \mathbb{F}_p^n$, we have, 
	\[
		\Pr_{\PhiR}\paren{\PhiR \cdot x = \PhiR \cdot y} = p^{-t}. 
	\]
 \end{proposition}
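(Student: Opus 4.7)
The plan is to reduce the event $\PhiR \cdot x = \PhiR \cdot y$ to $\PhiR \cdot z = 0$ where $z := x - y$, and then exploit independence across the $t$ rows of $\PhiR$. Since $x \neq y$, the vector $z$ is a nonzero element of $\mathbb{F}_p^n$, so it has at least one coordinate $j^* \in [n]$ with $z_{j^*} \neq 0$. Because $p$ is prime, $\mathbb{F}_p$ is a field and $z_{j^*}$ is invertible.

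Next I would analyze a single row. Let $\phi \in \mathbb{F}_p^n$ be a row of $\PhiR$, chosen uniformly at random. I want to show $\Pr[\langle \phi, z \rangle = 0] = 1/p$. To do this, I would condition on the values of all coordinates $\phi_j$ for $j \neq j^*$; call their contribution $S := \sum_{j \neq j^*} \phi_j \, z_j$. The event $\langle \phi, z \rangle = 0$ becomes $\phi_{j^*} \cdot z_{j^*} = -S$, i.e., $\phi_{j^*} = -S \cdot z_{j^*}^{-1}$, which is a single fixed element of $\mathbb{F}_p$. Since $\phi_{j^*}$ is uniform on $\mathbb{F}_p$ and independent of the other coordinates, this occurs with probability exactly $1/p$. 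Averaging over the conditioning gives $\Pr[\langle \phi, z \rangle = 0] = 1/p$ unconditionally.

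Finally, I would combine the $t$ rows. Since $\PhiR$ is chosen uniformly from $\mathbb{F}_p^{t \times n}$, its rows $\phi^{(1)}, \ldots, \phi^{(t)}$ are mutually independent and each uniform on $\mathbb{F}_p^n$. The event $\PhiR \cdot z = 0$ is the intersection of the $t$ independent events $\langle \phi^{(i)}, z \rangle = 0$, each of probability $1/p$, so the overall probability is $p^{-t}$, yielding the claim.

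There is no real obstacle here; this is a routine one-step argument using invertibility of nonzero elements in $\mathbb{F}_p$ and independence of rows. The only subtlety worth flagging in the write-up is the use of primality (to ensure $z_{j^*}$ is invertible); if $p$ were merely an integer, coordinates of $z$ that are zero divisors could make the per-row probability exceed $1/p$.
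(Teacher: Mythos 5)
Your proof is correct and takes essentially the same approach as the paper: reduce to $\PhiR\cdot z=0$ with $z:=x-y$, show each row kills $z$ with probability exactly $1/p$ by isolating a coordinate where $z$ is nonzero, and multiply across independent rows. The only cosmetic difference is that you phrase the per-row step as conditioning on the other coordinates, whereas the paper phrases it as $r_i\cdot z_i$ being uniform and independent of the rest; these are the same argument.
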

 \begin{proof}
 	Let $z := x-y$ and note that there exists at least one index $i \in [n]$ with $z_i \neq 0$. We need to calculate the probability of $\PhiR \cdot z = 0$. Let $r$ be any row of matrix $\PhiR$. We have, 
	\[
		\Pr\Paren{\inner{r}{z}=0} = \Pr\paren{r_i \cdot z_i = \sum_{j\neq i} r_j \cdot z_j} = p^{-1},
	\]
	because $r_i \cdot z_i$ is going to be any element of the field $\mathbb{F}_p$ with equal probability (as $z_i$ is non-zero) and choice of $r_i$ is independent of $\set{r_j \mid j \neq i}$. As all the $t$ rows of $\PhiR$ 
	are independent, we get the final bound immediately. 
 \end{proof}
 
 We can now combine~\Cref{prop:sr} and~\Cref{prop:eq-test} to have a ``safe'' recovery w.h.p. as follows: For the (unknown) vector $x \in \mathbb{F}_p^n$, we compute $\PhiV \cdot x$ and $\PhiR \cdot x$ in parallel. We first use $\PhiV \cdot x$ 
in~\Cref{prop:sr} to recover a vector $y \in \mathbb{F}_p^n$; then, since we know $\PhiR$, we can also compute $\PhiR \cdot y$ and use~\Cref{prop:eq-test} to check whether $\PhiR \cdot y = \PhiR \cdot x$: if yes, we output $y$ and otherwise output `fail'. It is easy to see that if $x$ is indeed $k$-sparse, this scheme always recovers $x$ correctly, and in any other case, w.h.p., it does not recover a wrong vector (but may output `fail').

\subsection{Palette Graphs, Matching View of Coloring, and Random Graph Theory}\label{sec:palette-graph}

We also borrow a key technique from~\cite{AssadiCK19a} for coloring almost-cliques in the decomposition. In the following, we shall follow the presentation of~\cite{AssadiCK19a} as specified in the notes by~\cite{PSnotes} which is conceptually identical but notation-wise slightly different from the original presentation.  

We note that this subsection might be rather too technical and not intuitive enough at this stage and can be skipped by the reader on the first read of the paper---we will get to these topics 
only starting from~\Cref{sec:lonely} of our algorithm once we start with the final coloring step of our algorithm, and by that time we have set the stage more for these definitions.

\begin{definition}\label{def:palette-graph}
	Let $G = (V, E)$ be a graph, $q \geq 1$ be an integer, and $\myC$ be a proper partial $q$-coloring of $G$. 
	Let $K$ be any almost-clique in $G$. 
	We define the \textnormal{\textbf{base palette graph}} of $K$ and $\myC$ as the following bipartite graph $\Gbase := (\cL,\cR,\Ebase)$: 
	\begin{itemize}
		\item \emph{Vertex-set:} $\cL$ consists of all vertices in $K$ that are uncolored by $\myC$ and $\cR$ consists of all the colors in $[q]$ that are \underline{not} assigned to any vertex in $K$. To avoid ambiguity, we use \emph{nodes} to refer to elements of $\cL$ and $\cR$ (as opposed to vertices), and call nodes in $\cL$ as \emph{vertex-nodes}, and nodes in $\cR$ as \emph{color-nodes}.
		\item \emph{Edge-set:} there is an edge between any pair of vertex-node $v \in \cL$ and color-node $c \in \cR$ iff $c \in \Avail{v}{\myC}$, i.e., $c$ does not appear in the neighborhood of $v$ in the partial coloring $\myC$. 
	\end{itemize}
\Cref{fig:palette-graph} gives an illustration of this definition. 
\end{definition}

The base palette graph gives us a different graph theoretic way of looking at graph coloring. Suppose we start with a proper partial $q$-coloring $\myC$ of $G$ and manage to find an \textbf{$\cL$-perfect matching} $\MM$ in $\Gbase$;%
\footnote{Clearly, this will not  be always possible, for instance when size of $K$ is  larger than $q$---in general, one needs 
some preprocessing steps before being able to apply this idea.}
this will allow us to find an extension of $\myC$ that colors all vertices of $K$: simply color 
each vertex $v \in K$ with the color $c$ which corresponds to the matched pair of $\MM(v)$ (in $\Gbase$). This ``matching view'' of the coloring problem turns out to be quite helpful when analyzing almost-cliques in~\cite{AssadiCK19a}, and we shall 
use and considerably generalize this idea in this paper as well.  

There is however an obvious obstacle in using $\Gbase$ when coloring the graph: we may not have access to all of $\Gbase$ (when using a semi-streaming algorithm due to space limitations). 
This motivates the next definition. 

\begin{definition}\label{def:sampled-palette-graph}
	Let $G = (V, E)$ be a graph, $q \geq 1$ be an integer, and $\myC$ be a proper partial $q$-coloring of $G$. 
	Let $K$ be any almost-clique in $G$ and $\SS:= \set{S(v) \subseteq [q] \mid v \in K}$ be a collection of sampled colors.%
\footnote{Think of $S(v)$ as being a small set of colors sampled for each vertex. For instance, in the context of 
	palette sparsification theorem of~\cite{AssadiCK19a}, $S(v)$ is the set of $O(\log{n})$ colors sampled for each vertex.}
	We define the \textnormal{\textbf{sampled palette graph}} of $K$, $\myC$, and $\SS$, denoted by $\Gsample = (\cL,\cR,\Esample)$ as the spanning subgraph of the base palette graph $\Gbase$ obtained by letting $\Esample$ to be the edges $(v,c)$ from 
	$\Ebase$ such that $c \in S(v)$. 
\end{definition}


\begin{figure}[t!]
\centering
  \subcaptionbox{\footnotesize an almost-clique \label{fig:cliquey}}%
  [.28\linewidth]{\definecolor {processblue}{cmyk}{0.96,0,0,0}
\tikzset{myptr/.style={decoration={markings,mark=at position 1 with %
    {\arrow[scale=3,>=stealth]{>}}},postaction={decorate}}}

\newcommand{\nodenum}[1]{{\fontsize{1pt}{1pt}\selectfont #1}}

\newcommand{\colorgridbr}{\tikz{
\draw[xstep=0.3cm,ystep=0.1cm] (0,0)  grid (0.3,0.2); 
\filldraw[fill=blue!50] (0,0) rectangle (0.3,0.1);
\filldraw[fill=red!50] (0,0.1) rectangle (0.3,0.2);
}}

\newcommand{\colorgridrf}{\tikz{
\draw[xstep=0.3cm,ystep=0.1cm] (0,0)  grid (0.3,0.2); 
\filldraw[fill=red!50] (0,0) rectangle (0.3,0.1);
\filldraw[fill=ForestGreen!50] (0,0.1) rectangle (0.3,0.2);
}}

\newcommand{\colorgridbg}{\tikz{
\draw[xstep=0.3cm,ystep=0.1cm] (0,0)  grid (0.3,0.2); 
\filldraw[fill=blue!75] (0,0) rectangle (0.3,0.1);
\filldraw[fill=Gray] (0,0.1) rectangle (0.3,0.2);
}}

\newcommand{\colorgridfy}{\tikz{
\draw[xstep=0.3cm,ystep=0.1cm] (0,0)  grid (0.3,0.2); 
\filldraw[fill=blue!50] (0,0) rectangle (0.3,0.1);
\filldraw[fill=Yellow] (0,0.1) rectangle (0.3,0.2);
}}

\newcommand{\colorgridry}{\tikz{
\draw[xstep=0.3cm,ystep=0.1cm] (0,0)  grid (0.3,0.2); 
\filldraw[fill=red!50] (0,0) rectangle (0.3,0.1);
\filldraw[fill=ForestGreen!50] (0,0.1) rectangle (0.3,0.2);
}}

\newcommand{\colorgridrg}{\tikz{
\draw[xstep=0.3cm,ystep=0.1cm] (0,0)  grid (0.3,0.2); 
\filldraw[fill=red!75] (0,0) rectangle (0.3,0.1);
\filldraw[fill=Gray] (0,0.1) rectangle (0.3,0.2);
}}

\newcommand{\colorgridrp}{\tikz{
\draw[xstep=0.3cm,ystep=0.1cm] (0,0)  grid (0.3,0.2); 
\filldraw[fill=red!50] (0,0) rectangle (0.3,0.1);
\filldraw[fill=ForestGreen!50] (0,0.1) rectangle (0.3,0.2);
}}

\newcommand{\colorgridfp}{\tikz{
\draw[xstep=0.3cm,ystep=0.1cm] (0,0)  grid (0.3,0.2); 
\filldraw[fill=ForestGreen!50] (0,0) rectangle (0.3,0.1);
\filldraw[fill=blue!50] (0,0.1) rectangle (0.3,0.2);
}}
\begin {tikzpicture}[ auto ,node distance =1cm and 1cm , on grid , semithick, inner sep=3pt]

\node[circle, black!75,  fill=blue!50, draw]  (V14){\nodenum{1}};
\node (C14)[above left=5pt and 12pt of V14]{\colorgridbr};

\node[circle, black!75,  fill=white!50, draw]  (V15)[below right=0cm and 1cm of V14]{\nodenum{2}};
\node (C15)[above right=5pt and 12pt of V15]{\colorgridrp};

\node[circle, black!75,  fill=white, draw]  (V16)[above left=1cm and 0.27cm of V14]{\nodenum{3}};
\node (C16)[above left=5pt and 12pt of V16]{\colorgridry};

\node[circle, black!75,  fill=blue!50, draw]  (V17)[above right=0.65cm and 0.75cm of V16]{\nodenum{4}};
\node (C17)[above left=5pt and 12pt of V17]{\colorgridfp};

\node[circle, black!75,  fill=white, draw]  (V22)[below right=0.65cm and 0.75cm of V17]{\nodenum{5}};
\node (C22)[above right=5pt and 12pt of V22]{\colorgridfy};

\node[circle, black!75,  fill=red!50, draw]  (Z1)[above right=0.25cm and 1.75cm of V17]{\nodenum{\textcolor{red!50}{1}}};
\node[circle, black!75,  fill=ForestGreen!50, draw]  (Z2)[below left=-0.25cm and 1.25cm of V14]{\nodenum{\textcolor{ForestGreen!50}{1}}};

\begin{scope}[on background layer]
\node[black!75, draw, fill=ForestGreen!10, inner sep=2.5pt, fit=(V14) (V15) (V16) (V17) (V22), dashed]{};
\end{scope}

\draw[color=black!75, line width=0.75pt](V15) to (V22);
\draw[color=black!75, line width=0.75pt](V22) to (V14);
\draw[color=black!75, line width=0.75pt](V16) to (V14);
\draw[color=black!75, line width=0.75pt](V16) to (V22);
\draw[color=black!75, line width=0.75pt](V16) to (V17);
\draw[color=black!75, line width=0.75pt](V17) to (V15);
\draw[color=black!75, line width=0.75pt](V16) to (V15);
\draw[color=black!75, line width=0.75pt](V22) to (V17);
\draw[color=black!75, line width=0.75pt](V14) to (V15);

\draw[color=black!75, line width=0.75pt](Z2) to (V16);
\draw[color=black!75, line width=0.75pt](Z1) to (V22);

\end{tikzpicture}}
  \subcaptionbox{\footnotesize  base palette graph (dashed lines are missing edges) \label{fig:base-p}}%
  [.31\linewidth]{\definecolor {processblue}{cmyk}{0.96,0,0,0}
\tikzset{myptr/.style={decoration={markings,mark=at position 1 with %
    {\arrow[scale=3,>=stealth]{>}}},postaction={decorate}}}

\newcommand{\colorgrid}{}

\newcommand{\nodenum}[1]{{\fontsize{2pt}{2pt}\selectfont #1}}

\begin {tikzpicture}[ auto ,node distance =1cm and 1cm , on grid , semithick]

\node[circle, black, fill=white, draw]  (a1){\nodenum{2}};
\node[circle, black, fill=white, draw]  (a2)[below=0.75cm of a1]{\nodenum{3}};
\node[circle, black, fill=white, draw]  (a3)[below=0.75cm of a2]{\nodenum{5}};

\node[rectangle,minimum height=7.5pt, minimum width=20pt, black, fill=Yellow, draw, line width=0.25pt] (P1) [above right=0cm and 2.5cm of a1]{};
\node[rectangle,minimum height=7.5pt, minimum width=20pt, black, fill=red!50, draw, line width=0.25pt] (P2) [below=0.75cm of P1]{};
\node[rectangle,minimum height=7.5pt, minimum width=20pt, black, fill=ForestGreen!50, draw, line width=0.25pt] (P3) [below=0.75cm of P2]{};

\begin{scope}[on background layer]
\draw[color=black, line width=0.75pt]
(a1.center) to (P1.center)
(a1.center) to (P2.west)
(a1.center) to (P3.west)
(a2.center) to (P1.west)
(a2.center) to (P2.west)
(a3.center) to (P1.west)
(a3.center) to (P3.west);
\end{scope}

\begin{scope}[on background layer]
\draw[color=red, line width=1pt, dashed]
(a2.center) to (P3.west)
(a3.center) to (P2.west);
\end{scope}

\end{tikzpicture}} \hspace{0.25cm} 
  \subcaptionbox{\footnotesize  sampled palette graph (dotted lines are not-sampled edges) \label{fig:sampled-p}}%
  [.31\linewidth]{\definecolor {processblue}{cmyk}{0.96,0,0,0}
\tikzset{myptr/.style={decoration={markings,mark=at position 1 with %
    {\arrow[scale=3,>=stealth]{>}}},postaction={decorate}}}

\newcommand{\colorgrid}{}

\newcommand{\nodenum}[1]{{\fontsize{2pt}{2pt}\selectfont #1}}

\begin {tikzpicture}[ auto ,node distance =1cm and 1cm , on grid , semithick]

\node[circle, black, fill=white, draw]  (a1){\nodenum{2}};
\node[circle, black, fill=white, draw]  (a2)[below=0.75cm of a1]{\nodenum{3}};
\node[circle, black, fill=white, draw]  (a3)[below=0.75cm of a2]{\nodenum{5}};

\node[rectangle,minimum height=7.5pt, minimum width=20pt, black, fill=Yellow, draw, line width=0.25pt] (P1) [above right=0cm and 2.5cm of a1]{};
\node[rectangle,minimum height=7.5pt, minimum width=20pt, black, fill=red!50, draw, line width=0.25pt] (P2) [below=0.75cm of P1]{};
\node[rectangle,minimum height=7.5pt, minimum width=20pt, black, fill=ForestGreen!50, draw, line width=0.25pt] (P3) [below=0.75cm of P2]{};

\begin{scope}[on background layer]
\draw[color=black, line width=0.75pt]
(a1.center) to (P2.west)
(a1.center) to (P3.west)
(a2.center) to (P2.west)
(a3.center) to (P1.west);
\end{scope}

\begin{scope}[on background layer]
\draw[color=red, line width=1pt, dashed]
(a2.center) to (P3.west)
(a3.center) to (P2.west);
\end{scope}

\begin{scope}[on background layer]
\draw[color=blue, line width=1pt, dotted]
(a1.center) to (P1.center)
(a2.center) to (P1.west)
(a3.center) to (P3.west);
\end{scope}

\end{tikzpicture}}
\caption{An illustration of base palette graphs (\Cref{def:palette-graph}) and sampled palette graphs (\Cref{def:sampled-palette-graph}).}
\label{fig:palette-graph}
\end{figure}

Let us again fix a small almost-clique $K $ and the partial coloring~$\myC$. Consider the sampled palette graph $\Gsample$ of $K$, $\myC$, and $\SS := \set{S(v) \mid v \in K}$ for $S(v)$ of size, say, $\poly\!\log{(n)}$, chosen randomly from $[q]$. 
This is now a much sparser subgraph that is easier to maintain via a semi-streaming algorithm.  Similar to before, if we find an {$\cL$-perfect matching} in $\Gsample$ we will be done. The challenge now, however, is to argue that not only $\Gbase$, 
but even $\Gsample$ (that has much fewer edges) contains such a matching (with high probability over the choice of sampled lists). 

This challenge can be addressed  using ``random graph theory type'' arguments: we first establish several key properties of 
$\Gbase$ itself that ensure that it has an $\cL$-perfect matching; then, we consider $\Gsample$ which is a random subgraph of $\Gbase$ and use simple tools in the analysis of random graphs to prove
that $\Gsample$ also w.h.p.\ has an $\cL$-perfect matching.%
\footnote{Note that if $\Gbase$ is indeed a bipartite clique, then $\Gsample$ would become a standard random graph. However, in general, $\Gbase$ can be 
``sufficiently far'' from a bipartite clique, which requires a careful analysis of $\Gsample$ beyond known results in random graph theory. See~\Cref{lem:rgt} (and its proof in~\cite{AssadiCK19a}) for an example.}
The following lemma mentions one example of such a random graph theory type 
argument that played a key role in~\cite{AssadiCK19a} (we shall use this lemma and some news ones that we establish throughout our proofs in this paper).

\begin{lemma}[\cite{AssadiCK19a}]\label{lem:rgt}
	Let $H = (L,R,E)$ be any bipartite graph with the following properties: 
	\begin{enumerate}[label=$(\roman*)$]
		\item $m := \card{L}$ and $m \leq \card{R} \leq 2m$; 
		\item The minimum degree of vertices in $L$ is at least $(2/3) \cdot m$, i.e., $\min_{v \in L} \deg_{H}(v) \geq (2/3) \cdot m$; 
		\item For every set $A \subseteq L$ of size $\card{A} \geq m/2$, we have $\sum_{v \in A} \deg_{H}(v) \geq (\card{A} \cdot m) - m/4$.
	\end{enumerate}
	For any $\delta \in (0,1)$, a subgraph of $H$ obtained by sampling each edge independently with probability at least $(\frac{20}{m} \cdot (\log{m}+\log{(1/\delta)}))$ contains an $L$-perfect matching with probability at least $1-\delta$. 
\end{lemma}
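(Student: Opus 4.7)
The plan is to verify Hall's condition on the sampled subgraph $H'$ using \Cref{fact:halls-theorem}: I need to show that with probability at least $1-\delta$, every nonempty $A \subseteq L$ satisfies $|N_{H'}(A)| \geq |A|$. Writing $a := |A|$ and $p$ for the edge sampling probability, I will union bound the failure probability, partitioning the cases by the value of $a$.

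If Hall's condition fails at some $A$, then there is a witness $B \subseteq R$ with $|B| = a - 1$ and $N_{H'}(A) \subseteq B$; equivalently, no sampled edge goes from $A$ to $T := R \setminus B$. This event has probability exactly $(1-p)^{e_H(A,T)}$. In the \emph{small} case $a \leq m/2$ I apply the minimum-degree condition (ii): each $v \in A$ has at least $2m/3 - (a-1) \geq m/6$ neighbors in $T$, so $e_H(A, T) \geq am/6$. Union bounding over the at most $\binom{m}{a}\binom{|R|}{a-1} \leq (2m^2)^a$ pairs $(A,B)$ and summing over $a \leq m/2$ yields a geometric series bounded by $\delta/2$ under the hypothesized rate $p \geq 20(\log m + \log(1/\delta))/m$.

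For the \emph{large} case $a > m/2$ the witness-union approach breaks down, since $e_H(A, T)$ may be only $\Theta(m)$ while $\binom{|R|}{a-1} = 2^{\Omega(m)}$. I will instead concentrate $X := |N_{H'}(A)|$ directly for each fixed $A$. Writing $d_c := |N_H(c) \cap A|$, the indicators $\set{c \in N_{H'}(A)}$ for $c \in R$ are mutually independent (they depend on disjoint edge sets) with success probability $1 - (1-p)^{d_c}$. Condition (iii) gives $\sum_c d_c \geq am - m/4$ with $d_c \leq a$, which already forces $|N_H(A)| \geq m$. A convexity argument on $x \mapsto (1-p)^x$ shows that the configuration of $\set{d_c}$ minimizing $\mathbb{E}[X]$ subject to the constraints places $d_c = a$ on $m$ vertices of $R$ and $d_c = 0$ elsewhere, yielding $\mathbb{E}[X] \geq m(1 - (1-p)^a) \geq m - o(1)$ under the chosen $p$. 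The event $\set{X < a}$ then requires at least $m - a + 1$ of the $m$ ``connected'' vertices of $R$ to have all $a$ of their edges to $A$ unsampled, an event of probability at most $\binom{m}{m-a+1} (1-p)^{a(m-a+1)}$. Combined with $\binom{m}{a}$ choices of $A$ this contributes another $\delta/2$.

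The hard part is the large-$a$ case. The combinatorial blow-up of potential Hall-witnesses $B$ rules out any naive union bound on $(A,B)$ pairs, and the condition (ii) edge-count alone is too weak. The resolution is to exploit the independence structure across $R$ and bypass the witness entirely; condition (iii) is used twice here --- to guarantee that $|N_H(A)|$ already nearly matches $|A|$ in the unsampled graph, and to guarantee that most $c \in N_H(A)$ have $d_c$ comparable to $a$, so that the binomial tail is sharp enough to dominate $\binom{m}{a}$. A small additional subtlety is that the bound $\mathbb{E}[X] \geq m - o(1)$ can fall just below $a$ when $a = m$, which is why one must argue via the binomial tail (number of ``missed'' connected vertices) rather than via a crude deviation from the mean.
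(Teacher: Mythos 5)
The paper does not actually prove \Cref{lem:rgt} --- it is cited verbatim from~\cite{AssadiCK19a} with only an informal remark that the \emph{unsampled} graph $H$ satisfies Hall's condition. So there is no internal proof to compare against. Evaluating your blind attempt on its own merits: the small-$a$ half is sound (the minimum-degree condition gives $e_H(A,T)\geq am/6$ and the geometric series over witness pairs converges under the stated $p$), but the large-$a$ half has a genuine gap.

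The issue is the step where you pass from minimizing $\mathbb{E}[X]$ to a binomial-tail bound of the form $\binom{m}{m-a+1}(1-p)^{a(m-a+1)}$. The convexity argument does show that, subject to $\sum_c d_c \geq am-m/4$ and $d_c\leq a$, the configuration minimizing $\mathbb{E}[X]=\sum_c\bigl(1-(1-p)^{d_c}\bigr)$ concentrates $d_c = a$ on $\approx m$ color-nodes and $d_c = 0$ elsewhere. But minimizing $\mathbb{E}[X]$ is not the same as maximizing $\Pr[X<a]$, and the actual degree multiset $\{d_c\}$ of $H$ is whatever it is --- you cannot substitute the extremal configuration for it. Concretely, conditions (i)--(iii) allow instances where some color-node $c^\star$ among the $m$ highest-degree ones has $d_{c^\star}=O(1)$: e.g.\ with $|R|=2m$ and $a=m$ one can place $d_c=m$ on $m-1$ color-nodes and spread a mass of roughly $3m/4$ over the remaining $m+1$ color-nodes at degree $1$ or $2$ each, and this can be arranged consistently with (ii) and (iii). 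For such an $H$, the probability that $c^\star$ has all its edges to $A$ unsampled is $(1-p)^{O(1)}\approx 1$, not $(1-p)^{a}$, so your exponent $a(m-a+1)$ is simply false. Worse, once a low-degree color-node ``fails,'' the event $X<a$ is rescued precisely by the $\Omega(m)$ low-degree color-nodes \emph{outside} your chosen set of $m$ ``connected'' vertices --- exactly the contribution your argument discards. In the example above, $\Pr[X<m]$ is indeed tiny, but it is tiny only because any one of those $\Theta(m)$ degree-$1$ color-nodes getting a sampled edge already yields $X\geq m$; a bound that looks only at the top $m$ and assumes they all have degree $a$ cannot see this.

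So the missing idea is a way to handle the heterogeneous $R$-side degree profile: one needs either to partition $A$-analysis further by $|N_H(A)|$ (when $|N_H(A)|$ is close to $m$, all its members are forced to have large $d_c$, and when $|N_H(A)|$ is much larger than $m$, a concentration argument that actually uses all of $N_H(A)$, not only a top-$m$ slice, can take over), or to find a different witness structure. You cannot borrow the paper's Case 2 of \Cref{clm:2-matching} directly either, because that argument leans on an explicit $R$-side minimum-degree bound (\Cref{clm:2-deg-R}) which the hypotheses of \Cref{lem:rgt} do not supply; condition (iii) constrains only aggregated $L$-side degree sums and leaves individual $d_c$ essentially free.
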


As discussed earlier, the way one applies~\Cref{lem:rgt} is by setting $H$ to be the base palette graph of a given almost-clique, in which case, sampled palette graph has the same distribution
as specified in~\Cref{lem:rgt} and thus w.h.p. will have the desired matching. Also, while at this stage the properties of $H$ in this lemma may sound rather arbitrary, as we shall see later in our proofs, 
they appear naturally as properties of base graphs of (certain) almost-cliques. Finally, we note that even though the proof of this lemma is rather technical and so we do not repeat it here, it is not hard to verify that at least the graph $H$ itself has an $L$-perfect matching  using Hall's theorem (\Cref{fact:halls-theorem}), given the conditions imposed on it in~\Cref{lem:rgt}.

\newcommand{\PSalg}{\textnormal{\texttt{palette-sampling}}\xspace}
\newcommand{\SRalg}{\textnormal{\texttt{sparse-recovery}}\xspace}
\newcommand{\DECalg}{\textnormal{\texttt{find-decomposition}}\xspace}

\newcommand{\IndS}[1]{\ensuremath{I_{\textnormal{sample}}(#1)}}
\newcommand{\rmax}{\ensuremath{r_{\max}}}
\newcommand{\nondeg}[2]{\ensuremath{ \overline{\deg}_{#1}}({#2})}
\newcommand{\nonN}[2]{\ensuremath{ \overline{N}_{#1}}({#2})}

\newcommand{\Hplus}{\ensuremath{ H^{+} }}

\newcommand{\Kfriendly}{\ensuremath{\mathcal{K}_{\textnormal{friendly}}}}
\newcommand{\Klonely}{\ensuremath{\mathcal{K}_{\textnormal{lonely}}}}
\newcommand{\Kcritical}{\ensuremath{\mathcal{K}_{\textnormal{critical}}}}

\newcommand{\HR}{\ensuremath{H^+}}

\section{The Semi-Streaming Algorithm}\label{sec:alg}

In this section we describe the algorithm that collects necessary information
for the $\Delta$-coloring from the stream. This will then be used
in our main coloring procedure in~\Cref{sec:coloring} to prove~\Cref{res:main}.
The algorithm consists of the following three main parts:
\begin{itemize}
	\item The $\PSalg$ algorithm (\Cref{alg:PS}): An algorithm, quite similar to palette sparsification approach, that samples
        $\poly\!\log{(n)}$ potential colors for each vertex and store all possibly monochromatic edges during
        the stream.
	\item The $\DECalg$ algorithm (\Cref{alg:DEC}): An algorithm that recovers
        a sparse-dense decomposition of the input graph as specified
        in~\Cref{prop:decomposition} plus some extra useful information about
        the decomposition.
	\item The $\SRalg$ algorithm (\Cref{alg:SR}): An algorithm that uses
        sparse recovery techniques to extract further ``helper structures'' about
        the almost-cliques in the decomposition of the previous step.
\end{itemize}

We elaborate on each of these algorithms and their guarantees in the following
subsections. But we shall emphasize that they all run \emph{in parallel} in a
single pass over the stream. To continue, we start with setting up some parameters and key definitions.

\subsection{Parameters, Classification of Almost-Cliques, and Helper Structures}\label{sec:classification}

We use the following parameters for the design of our algorithms.
\begin{align}
	&\alpha = 10^{3}: \quad &&\text{a large constant used to
    simplify various concentration inequalities} \notag \\
	&\beta = 100 \cdot \log{n}: &&\text{used to bound the size of certain
    palettes in $\PSalg$} \notag \\
	&\eps = \frac{10^{-8}}{\log{n}}: &&\text{used as the parameter of
    sparse-dense decomposition of~\Cref{prop:decomposition}}.
    \label{eq:parameters}
\end{align}

We also assume%
\footnote{This is used in the proof of \Cref{lem:os-gap}.} that $\Delta = \Omega(\log^5{n})$
as otherwise we can simply store the graph entirely and solve the problem
offline, using any classical algorithm for Brooks' theorem.

Recall the notion of almost-cliques in~\Cref{def:almost-clique} used in our
sparse-dense decomposition.
In the coloring phase of the algorithm, we make further distinctions between almost-cliques based on their sizes as
defined below---the coloring algorithm will treat these classes separately and
our algorithms in this section provide further information about these
different classes.

\paragraph{Classification of almost-cliques.} We start with the following simple definition that partitions almost-cliques based on their size. 

\begin{definition}\label{def:scl-almost-clique}
	Let $K$ be an almost-clique in the sparse-dense decomposition. We say that $K$ is \textnormal{\textbf{small}} iff it has at most $\Delta$
	vertices, \textnormal{\textbf{critical}} iff it has exactly
            $\Delta+1$ vertices, and  \textnormal{\textbf{large}} otherwise.
\end{definition}
We have the following basic observation based on this definition. 
\begin{observation}\label{obs:critical-almost-clique}
	$(i)$ Any critical almost-clique contains at least one non-edge, and $(ii)$ any large almost-clique contains at least  $(\Delta+2)/2$ non-edges. 
\end{observation}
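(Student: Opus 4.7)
The plan is to prove each part directly from the definition of an almost-clique together with the max-degree bound, with part (i) additionally invoking the global hypotheses of Theorem 1 (that $G$ is connected and not a clique).

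For part (i), I will argue by contradiction. Suppose a critical almost-clique $K$, with $\card{K}=\Delta+1$, contains no non-edge; then $K$ induces a complete graph on $\Delta+1$ vertices. Each $v\in K$ would therefore have $\deg_K(v)=\Delta$ neighbors in $K$, which saturates the global degree bound $\deg_G(v)\leq\Delta$. Consequently no edge of $G$ can leave $K$, so $K$ is a connected component of $G$. Since $G$ is assumed connected, this forces $V=K$, i.e., $G$ is itself a $(\Delta+1)$-clique, contradicting the hypothesis of \Cref{res:main} that $G$ is not a clique.

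For part (ii), I will use a simple double-counting argument that does not need any global assumption. Fix any $v\in K$ and let $d_K(v):=\card{N(v)\cap K}$. Since $\deg_G(v)\leq\Delta$, we have $d_K(v)\leq\Delta$, and hence the number of non-neighbors of $v$ inside $K$ is at least $\card{K}-1-\Delta$. Using $\card{K}\geq\Delta+2$ (the defining inequality for large almost-cliques), this quantity is at least $1$. Summing over all $v\in K$ and dividing by $2$ (since each non-edge is counted from both endpoints), the total number of non-edges inside $K$ is at least
\[
\frac{\card{K}\bigl(\card{K}-1-\Delta\bigr)}{2}\;\geq\;\frac{\card{K}}{2}\;\geq\;\frac{\Delta+2}{2},
\]
as desired.

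The only non-routine point is part (i), where one must notice that the \emph{local} properties of~\Cref{def:almost-clique} alone do not rule out $K$ being a $(\Delta+1)$-clique (a $(\Delta+1)$-clique trivially satisfies all four conditions when $V\setminus K=\emptyset$, and conditions \ref{dec:neighbors}--\ref{dec:outside} degenerate when $K$ is isolated from the rest of $G$). Thus the global hypothesis of Theorem 1 must be invoked, which is the conceptual, rather than computational, obstacle. Part (ii) is entirely a one-line degree count and requires no further ingredient.
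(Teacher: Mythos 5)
Your proof is correct and, for part (ii), is exactly the paper's one-line argument written out in full: every vertex of a large $K$ has at least one non-neighbor inside $K$ (since $\deg_G(v)\leq\Delta<\card{K}-1$), and each non-edge is counted from both endpoints. For part (i) you reach the same conclusion via a slightly different invocation of the Theorem~\ref{res:main} hypotheses: the paper argues directly that a $(\Delta+1)$-clique subgraph would make $G$ non-$\Delta$-colorable, whereas you note that such a $K$ would be degree-saturated and hence an entire connected component, so that $G=K$ would contradict ``$G$ is connected and not a clique.'' The two are equivalent under the hypotheses; yours is marginally more elementary (it does not appeal to $\Delta$-colorability) and makes explicit the isolation step that the paper's terse phrasing glosses over. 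Your closing remark that the local conditions of \Cref{def:almost-clique} alone do not rule out a $(\Delta+1)$-clique, so the global hypothesis is genuinely needed, is a good sanity check.
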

\begin{proof}
	Property $(i)$ holds because there are no $(\Delta+1)$ cliques in our input as otherwise the graph will not be $\Delta$-colorable, and $(ii)$ holds because maximum degree of vertices is $\Delta$
and thus every vertex has at least one non-edge in a large almost-clique.
\end{proof}

A property that fundamentally affects how we color an almost-clique is the
number of non-edges inside it. Intuitively, if an almost-clique is very
``clique-like'', that is, it has very few non-edges inside, then it is
more difficult to color. This motivates the following definition. 

\begin{definition}\label{def:holey-almost-clique}
	Let $K$ be an almost-clique in the sparse-dense decomposition. We say that~$K$ is \textnormal{\textbf{holey}} iff it has at least
            $10^7 \cdot \eps\Delta$ non-edges (or ``holes'') inside it. Otherwise, $K$ is \textnormal{\textbf{unholey}}. 
\end{definition}

Another key property that governs our ability to color an almost-clique is how
it is connected to the outside and in particular, what we can expect from
a coloring of its neighbors outside: can we see those colors as being
``random'' or are they ``adversarial''? In the latter case, can we recolor some
to make them ``less adversarial''?
This motivates the following two definitions. 

\begin{definition}\label{def:friend-stranger}
	Let $K$ be an almost-clique in the decomposition and $v$ be any vertex
    outside~$K$ that is neighbor to $K$. We say that:
	\begin{itemize}
		\item $v$ is a \textnormal{\textbf{friend}} of $K$ iff there are at
            least $2\Delta / \beta$ edges from $v$ to $K$, i.e.,
            $\card{N(v) \cap K} \geq 2\Delta/\beta$;
		\item $v$ is a \textnormal{\textbf{stranger}} to $K$ iff there are less
            than $\Delta / \beta$ edges from $v$ to $K$, i.e.,
            $\card{N(v) \cap K} < \Delta/\beta$;
	\end{itemize}
\end{definition}

We emphasize that there is a gap in the criteria between friend and stranger
vertices, and hence it is possible that a vertex is neither a friend of nor
a stranger to an almost-clique. Based on the notion of friend and stranger
vertices, we can further classify almost-cliques into these classes. 

\begin{definition}
    \label{def:fls-cliques}
	Let $K$ be an almost-clique in the decomposition. We say that $K$ is:
    \begin{itemize}
        \item \textnormal{\textbf{friendly}} iff $K$ has at least one neighbor
            outside $K$ that is a friend of $K$.
        \item \textnormal{\textbf{lonely}} iff all neighbors of $K$ outside
            $K$ are strangers.
        \item \textnormal{\textbf{social}} otherwise; that is, $K$
            has at least one neighbor outside $K$ that is \emph{not} a
            stranger, but at the same time, it has no friends.
    \end{itemize}
\end{definition}

We approach coloring friendly and lonely almost-cliques quite differently, but
both approaches can handle social almost-cliques.
This will be crucial as we can distinguish between friendly and lonely
almost-cliques but our tester may classify social almost-cliques in either of
these groups. 

\paragraph{Helper structures.} As stated earlier, the problematic almost-cliques to $\Delta$-color are unholey ones. 
Some of these unholey almost-cliques can be handled by a ``global'' argument that reason 
about the coloring of their neighbors outside. But for the rest, we may need to consider recoloring some of their neighbors outside and/or using some extra information about the graph. 
In the following, we define two ``helper structures'' that provide this extra information for our coloring approach. Our algorithms in this section then show how we can find these subgraphs in the stream. 

The first structure we have handles unholey almost-cliques which are critical. 

 \begin{definition}\label{def:helper-critical}
	Let $K$ be an unholey almost-clique which is critical. We define a \textnormal{\textbf{critical-helper structure}} for $K$ as a tuple $(u,v,N(v))$ with the following properties: 
	\begin{enumerate}[label=$(\roman*)$]
		\item $u,v$ are vertices of the graph and are both in $K$; 
		\item $u$ and $v$ are non-neighbor to each other;
		\item $N(v)$ is the neighborhood of $v$ in the graph. 
	\end{enumerate}
\end{definition}

At a very high level, if we have a critical-helper structure of $K$ at hand, we can color~$u$ and~$v$ the same (the crucial knowledge of $N(v)$ allows us to do this) which ``buys'' us an extra color which will be 
sufficient for us to color the entire almost-clique also.

The second structure we have handles unholey almost-cliques which are friendly or even social. 

\begin{definition}\label{def:helper-friendly}
	Let $K$ be an unholey almost-clique which is either friendly or social. We define a \textnormal{\textbf{friendly-helper structure}} for $K$ as a tuple $(u,v,w,N(v),N(w))$ with the following properties: 
	\begin{enumerate}[label=$(\roman*)$]
		\item $u,v,w$ are vertices of the graph such that $u \notin K$ and is \underline{not} a stranger to $K$ and $v,w \in K$;
		\item $u$ is neighbor to $v$ and non-neighbor to $w$, and $v$ and $w$ are themselves neighbors; 
		\item $N(v)$ and $N(w)$ are the neighborhoods of $v$ and $w$ in the graph, respectively. 
	\end{enumerate}
\end{definition}

Again, at a  high level, if we have a friendly-helper structure of $K$ at hand, we will be able to (re)color $u$ and $w$ the same, which ``buys'' us an extra color for $v$ and gives us the required flexibility for coloring the entire almost-clique (the knowledge
 of $N(v)$ and $N(w)$ crucially allows us to choose the colors for these vertices without creating a conflict with  their neighbors in the graph).

\begin{figure}[t!]
    \centering
    \subcaptionbox{A critical helper: The vertices $u$ and $v$ can receive the
    same color. \label{fig:helper-critical}}[.45\linewidth]{\pgfdeclarelayer{edges}       
\pgfdeclarelayer{ellipses}    
\pgfsetlayers{edges,ellipses,main}

\begin{tikzpicture}
    [vertex/.style={circle, draw=blue!50, fill=blue!20, thick, inner sep=0pt,
     minimum size=2mm},
     red/.style={vertex, draw=red!50, fill=red!20},
     ghost/.style={inner sep=0pt, minimum size=0mm},
     scale=0.7]
    \node [vertex] (u) [label=above left:{\footnotesize$u$}] at (4, 4.5) {};
    \node [vertex] (v) [label=above left:{\footnotesize$v$}] at (4, 3) {};

    \draw (u) edge [dashed] (v);

    \node [ghost] (K) [label=right:{\footnotesize $K$}] at (4, 5) {};

    \draw (4, 3) ellipse (1.5cm and 3cm); 
    \begin{pgfonlayer}{ellipses}
        \draw [fill=gray!20] (3, 1.5) ellipse (2cm and .5cm); 
    \end{pgfonlayer}


    \node [ghost] (v1) [] at (1, 1.5) {};
    \node [ghost] (v2) [] at (5, 1.5) {};

    \begin{pgfonlayer}{edges}
        \draw (v) edge (v1);
        \draw (v) edge (v2);
    \end{pgfonlayer}

    \node [ghost] (Nv) [] at (2, 1.5) {\footnotesize$N(v)$};

\end{tikzpicture}}
    \hspace{0.4cm}
        \subcaptionbox{A friendly helper: The vertices $u$ and $w$ can receive the same color.
    \label{fig:helper-friendly}}%
     [.45\linewidth]{\pgfdeclarelayer{edges}       
\pgfdeclarelayer{ellipses}    
\pgfsetlayers{edges,ellipses,main}

\begin{tikzpicture}
    [vertex/.style={circle, draw=blue!50, fill=blue!20, thick, inner sep=0pt,
     minimum size=2mm},
     red/.style={vertex, draw=red!50, fill=red!20},
     ghost/.style={inner sep=0pt, minimum size=0mm},
     scale=0.7]
    \node [red] (v) [label=above:{\footnotesize $ v$}] at (4, 4.5) {};
    \node [vertex] (w) [label=below:$w$] at (4, 3) {};

    \node [vertex] (u) [label=below:{\footnotesize $u$}] at (1.5, 4) {};

    \draw (u) edge [dashed] (w);
    \draw (u) edge [out=0, in=180] (v);

    \node [ghost] (K) [label=right:{\footnotesize $K$}] at (4, 1.5) {};

    \begin{pgfonlayer}{ellipses}
        \draw (4, 3) ellipse (1.5cm and 3cm); 
        \draw [fill=gray!20] (1.5, 1.75) ellipse (0.5cm and 1cm); 
        \draw [fill=gray!20] (1.5, 4) ellipse (0.5cm and 1cm); 
    \end{pgfonlayer}

    \node [ghost] (w1) [] at (1.5, 2.75) {};
    \node [ghost] (w2) [] at (1.5, .75) {};

    \node [ghost] (v1) [] at (1.5, 5) {};
    \node [ghost] (v2) [] at (1.5, 3) {};

    \node [ghost] (u1) [] at (4, 5.5) {};
    \node [ghost] (u2) [] at (4, 3.5) {};

    \draw (u) edge [out = 0, in = 180] (u1);
    \draw (u) edge [out = 0, in = 180] (u2);

    \begin{pgfonlayer}{edges}
        \draw [fill=white] (4, 4.5) ellipse (.5cm and 1cm); 
        \draw (w) edge (w1);
        \draw (w) edge (w2);
        \draw (v) edge (v1);
        \draw (v) edge (v2);
    \end{pgfonlayer}


    \node [ghost] (Nv) at (-0.25, 4) {\footnotesize $N(v) \setminus K$};


    \node [ghost] (Nw) at (-0.25, 1.75) {\footnotesize $N(w) \setminus K$};

    \coordinate[right=.4cm of u1] (u11);
    \coordinate[right=.4cm of u2] (u22);
    \draw [decorate,decoration={brace,amplitude=10pt}]
    (u11) -- (u22) node [black, midway, xshift=7pt] (Nu) {};

    \node [ghost] (Nulabel) [right=1cm of Nu]
    {\footnotesize $\card{N(u) \cap K} \geq \frac{\Delta}{\beta}$};

    \draw [->] (Nu) edge (Nulabel);

    \draw (v) edge (w);

\end{tikzpicture}}
    \caption{The two types of helper structures in~\Cref{def:helper-critical} and~\Cref{def:helper-friendly}.}
    \label{fig:helper}
\end{figure}

\subsection{Palette Sampling}\label{sec:ps}

One key component of our algorithm is a color sampling procedure in the same spirit as the palette sparsification theorem of~\cite{AssadiCK19a}.

\begin{algorithm}
	\KwIn{Graph $G=(V,E)$ with known vertices $V$ and streaming edges $E$.}
	\begin{enumerate}[label=$(\roman*)$]
      \item
        For every vertex $v \in V$, sample the following lists of colors:
		\begin{itemize}
            \item $L_1(v)$: Sample a single color chosen
                uniformly at random from $[\Delta]$.
            \item $L_2(v)$: Sample each color in $\bracket{\Delta}$
                independently with probability $\frac{\beta}{\Delta}$.
            \item $L_3(v)$: Sample each color in $\bracket{\Delta}$
                independently with probability
                $\frac{100 \cdot \alpha\cdot \log n}{\eps^2 \cdot \Delta}$.
			\item $L_4(v) := (L^*_4(v)\text{~and~} {L_{4, i}(v)\textnormal{: $i \in \bracket{\beta}$}})$:
                Independently, sample each color in $\bracket{\Delta}$ with probability $\frac{\beta}{\Delta}$ in $L_4^*(v)$ 
                and with $q~:=~\frac{1}{100\sqrt{\eps}\Delta}$ in $L_{4,i}(v)$ for $i \in [\beta]$. 
            \item $L_5(v)$: Sample each color in $\bracket{\Delta}$
                independently with probability $\frac{\beta}{\Delta}$.
            \item $L_6(v) :=
                ({L_{6, i}(v)\textnormal{: $i \in \bracket{2 \beta}$}})$:
                Sample each color in $\bracket{\Delta}$ independently with
                probability $\beta^2 / \Delta$.
		\end{itemize}
              \item
                Let $L(v) := \cup_{j \in [6]} L_j(v)$. Store any
			 edge $({u, v})$ if $L(u) \cap L(v) \neq \emptyset$
			and let $H$ be the subgraph of $G$ on these stored edges,
			referred to as the \textbf{conflict graph}. 
                      \end{enumerate}
                      	\caption{The $\PSalg$ algorithm.}\label{alg:PS}
\end{algorithm}

The main difference of this algorithm with that of~\cite{AssadiCK19a} is that the number of sampled colors per vertex is larger here ($\poly\!\log{(n)}$ as opposed to $O(\log{n})$) and that we explicitly 
partition these samples into multiple lists instead of just one. 

At this point, the choices of the lists $L_1(v) \ldots L_6(v)$ may seem
arbitrary:%
\footnote{And indeed redundant; technically speaking, we could have just  sampled a single list of colors of proper size and postponed the partitioning to the analysis (as in fact done in~\cite{AssadiCK19a})---however, we find it more transparent to consider these lists explicitly separate from each other due to various dependency issues that this explicitly avoids.}
the (very) rough idea is that we will use different lists at
different stages of the coloring, and the sizes are chosen to allow each
stage to go through without causing too much dependency for the next stage.
In the rest of this subsection, we will bound the space complexity of $\PSalg$ (\Cref{alg:PS}).

Recall that in the conflict graph $G$, we only keep an edge $({u, v})$ if
$L(u) \cap L(v) \neq \emptyset$---which makes sense, since if we restrict ourselves to coloring vertices with a color from their lists $L(\cdot)$, then these are the only edges that can be
monochromatic. We have the following (standard) lemma.

\begin{lemma}\label{lem:ps-space-edges}
    With high probability,  conflict graph $H$ in $\PSalg$ has
    $O(n \cdot \log^6 n)$ edges.
\end{lemma}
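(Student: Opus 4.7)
The plan is to prove the lemma by bounding the degree of each vertex in the conflict graph $H$ by $O(\log^6 n)$ with high probability, then applying the handshake identity $|E(H)| = \tfrac{1}{2}\sum_v d_H(v)$ and a union bound over $n$ vertices.

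First I would estimate, for any fixed color $c \in [\Delta]$ and vertex $v \in V$, the probability $\Pr[c \in L(v)]$ by a union bound over the six lists. Plugging in the parameters from \Cref{eq:parameters}, the dominant terms are the contribution from $L_3(v)$, which is $\tfrac{100\alpha \log n}{\eps^2 \Delta} = O(\log^3 n /\Delta)$ since $1/\eps^2 = \Theta(\log^2 n)$, and the contribution from $L_6(v)$, which by a further union bound over its $2\beta$ sub-lists is at most $\tfrac{2\beta^3}{\Delta} = O(\log^3 n/\Delta)$. All remaining lists contribute at most $O(\log^2 n/\Delta)$, so in total $\Pr[c \in L(v)] = O(\log^3 n/\Delta)$.

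Next, for any pair $(u,v)$, the lists $L(u)$ and $L(v)$ are independent, so
\[
\Pr\bigl[L(u) \cap L(v) \neq \emptyset\bigr] \;\leq\; \sum_{c \in [\Delta]} \Pr[c \in L(u)] \cdot \Pr[c \in L(v)] \;\leq\; \Delta \cdot O\!\left(\frac{\log^6 n}{\Delta^2}\right) \;=\; O\!\left(\frac{\log^6 n}{\Delta}\right).
\]
Summing over at most $\Delta$ neighbors of any vertex $v$ gives $\Ex[d_H(v)] = O(\log^6 n)$, and summing over all $v$ yields $\Ex[|E(H)|] = O(n \log^6 n)$; the task is to upgrade this to a high-probability statement.

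For concentration I would condition on $L(v)$. Crucially, given $L(v)$, the indicators $\mathbf{1}\bigl[L(u) \cap L(v) \neq \emptyset\bigr]$ for distinct neighbors $u$ of $v$ are functions of disjoint collections of random bits (the sampled lists $L(u)$), hence \emph{independent}. A preliminary Chernoff bound applied to each $|L_i(v)|$ separately shows that $|L(v)| = O(\log^3 n)$ with high probability; on this event, $\Ex[d_H(v) \mid L(v)] \leq \sum_{u \in N(v)} |L(v)| \cdot \max_c \Pr[c \in L(u)] = O(\log^6 n)$, and a second Chernoff bound (\Cref{prop:chernoff}) on the sum of independent Bernoullis then gives $d_H(v) = O(\log^6 n)$ with high probability. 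A union bound over all $n$ vertices and the handshake identity conclude the proof. The only mildly delicate step is handling the shared dependence of edge indicators on $L(v)$, which is precisely why we condition on $L(v)$ before invoking Chernoff.
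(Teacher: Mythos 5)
Your proposal is correct and follows essentially the same route as the paper's proof: both bound $\Pr[c\in L(v)]$ by $O(\log^3 n/\Delta)$, condition on $L(v)$ so that the edge indicators to distinct neighbors become independent, apply a Chernoff bound to conclude $\deg_H(v)=O(\log^6 n)$ with high probability, and finish with a union bound over vertices. The only cosmetic difference is that you upper bound $\Pr[L(u)\cap L(v)\neq\emptyset]$ via $\sum_c \Pr[c\in L(u)]\Pr[c\in L(v)]$ whereas the paper first fixes a small $L(u)$ and union bounds over $z\in L(u)$ — these are the same calculation in two orders.
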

\begin{proof}
    First, note that for any vertex $v \in V$, $L_1(v)$ has a single color.
    We want to establish that the sizes of the other lists
    $\card{L_2(v)}, \ldots , \card{L_6(v)}$ are bounded by $O(\log^3 n)$ with
    high probability.

    For $L_2(v)$: We have that the expected size of $L_2(v)$ is
    $\frac{\beta}{\Delta} \cdot \Delta = \beta$
    by linearity of expectation. Since each color is sampled into $L_2(v)$
    independently, we have via an application of Chernoff bound
    (\Cref{prop:chernoff}, with $\delta = 1$) that:
    \[
        \prob{ \card{L_2(v)} > 2\beta} \leq
        \exp\paren{- \frac{1^2 \cdot \beta}{3 + 1} } \leq n^{-25}.
    \]
    By union bound, we have that with high probability, $\card{L_2(v)}$ is
    bounded by $2\beta = O(\log n)$ for all $v \in V$.
    We can apply the same argument on all lists with expected size
    $\Omega(\log n)$ to show that their sizes are within a constant of their
    respective expected values with high probability. In particular,
    $L_3(v)$, $L_4^*(v)$, $L_5(v)$, all have expected sizes that are
    $\Omega(\log n)$ and also $O(\log^3 n)$ (see \Cref{eq:parameters}), and
    hence have size $O(\log^3 n)$ for each one with high probability.
    Further, since for each $i$, $L_{6, i}(v)$ has expected size $\beta^2$,
    $\card{L_6(v)}$ is $O(\log^3 n)$ as well, with high probability.

    This leaves us with the lists $L_{4, i}(v)$: Note that their expected size
    is $\frac{1}{100\sqrt{\eps}} = 100\sqrt{\log n}$. Since each color is
    sampled into $L_{4, i}(v)$ independently, we can use a Chernoff bound with
    $\delta = 10\sqrt{\log n}$ to get:
    \[
        \prob{ \card{L_{4, i}(v)} > 1000\log n} \leq
        \exp\paren{- \frac{100 \log n \cdot 100\sqrt{\log n}}{3 + 10\sqrt{\log n}}}
        \leq
        n^{-100}.
    \]
    And hence w.h.p., $\card{L_4(v)}$ is $O(\log^2 n)$.

    At this point, we have that there exist an absolute constant $\gamma > 0$
    such that with high probability,
    $\card{L(v)} < \gamma \cdot \log^3{n}$ for all $v \in V$.
    We condition on this event for the rest of this proof.

    Our strategy is to bound $\deg_H(u)$ for each $u \in V$.
    Recall that an edge $\cbrac{u, v}$ is in $H$ if $L(v)$ samples a color
    from $L(u)$. An arbitrary color $c \in \bracket{\Delta}$ is in
    $L(v)$ with probability at most~$\gamma\cdot\log^3(n) / \Delta$ (since each
    of the lists $L_i(v)$ is sampled uniformly).
    Then for any choice of $L(u)$ (of size at most~$\gamma\cdot \log^3 n$), we
    have by the union bound:
    \begin{align*}
        \prob{ L(u) \cap L(v) \neq \emptyset }
        = \prob{ \displaystyle\bigcup_{z \in L(u)} z \in L(v) }
        \leq \sum_{z \in L(u)} \prob{ z \in L(v) }
        = \frac{\gamma^2 \log^6(n)}{\Delta},
    \end{align*}
    where the randomness is over the choice of $L(v)$.

    Now we can bound the degree of an arbitrary vertex $u$ in the subgraph $H$
    formed by the edges stored in \Cref{alg:PS}. First, fix the list $L(u)$---we already conditioned on the event that it is small, but now we will
    ``give up'' the remaining randomness in the choice of $L(u)$, and proceed
    by assuming an arbitrary choice.
    Let $X_{u, v}$ be the
    indicator random variable that is $1$ iff the edge $\cbrac{u, v}$ is in
    $H$. Then the expected degree of $u$ is~$(\gamma^2\cdot \log^6 n)$ by the
    previous argument and linearity of expectation.
    Finally, we observe that for $v \neq w$, $X_{u, v}$ and $X_{u, w}$ are
    independent, and via another application of Chernoff bound
    (\Cref{prop:chernoff}, with $\delta = 1$), we have:
    \[
        \prob{ \deg_H(u) \geq 2\gamma^2 \log^6 n} \leq
        \exp\paren{-\frac {\gamma^2 \log^6 n}{4}}. 
    \]
    Thus by  union bound, each $u \in V$ has degree $O( \log^6 n )$ in $H$ with
    high probability. This immediately implies the bound on the number of edges
    stored by \Cref{alg:PS}.
\end{proof}

We can now bound the space used by $\PSalg$. We have done all of the
heavy-lifting already, by bounding the number of edges stored in $H$. The only new
observation in the following lemma is that edges and colors can be stored
in $O(\log n)$ bits.

\begin{lemma}\label{lem:ps-space-bits}
    With high probability, \PSalg uses $O(n \cdot \log^7 n)$ bits of space.
\end{lemma}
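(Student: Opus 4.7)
The plan is to combine the edge-count bound from \Cref{lem:ps-space-edges} with a per-vertex accounting of the color lists, and observe that each object to be stored fits in $O(\log n)$ bits. I would condition throughout on the high-probability events established inside the proof of \Cref{lem:ps-space-edges}, namely that $\card{L(v)} = O(\log^3 n)$ for every $v \in V$ and that the conflict graph $H$ has at most $O(n \cdot \log^6 n)$ edges.

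First I would bound the space for the sampled palettes. For every vertex $v \in V$ we maintain the lists $L_1(v), \ldots, L_6(v)$ (and the sub-lists $L^*_4(v), L_{4,i}(v), L_{6,i}(v)$). Under the conditioning above, their union $L(v)$ has $O(\log^3 n)$ colors, and each color is an element of $[\Delta] \subseteq [n]$, hence encodable in $O(\log n)$ bits. Summing over the $n$ vertices gives $O(n \cdot \log^4 n)$ bits, which is lower order. (The partition into the six sub-lists costs only a constant-factor overhead for tags, still $O(\log n)$ bits per color.)

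Next I would bound the space for the conflict graph $H$. Each stored edge $\{u,v\}$ is a pair of vertex identifiers, each of which takes $O(\log n)$ bits. By \Cref{lem:ps-space-edges}, $H$ has $O(n \cdot \log^6 n)$ edges with high probability, so storing all of them takes $O(n \cdot \log^7 n)$ bits. Adding the two contributions and taking a union bound over the high-probability events used inside \Cref{lem:ps-space-edges} (list-size concentration and per-vertex degree concentration in $H$) yields the claimed $O(n \cdot \log^7 n)$ bound with high probability.

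The step is essentially an accounting exercise, so I do not expect a genuine obstacle; the only thing to be careful about is that the algorithm is an online streaming procedure, so I should note that we do not need to store the stream itself, only the current list of ``surviving'' edges and the palettes, both of which are bounded as above at every point in time (since $H$ only grows monotonically during the stream and its final size bound dominates any intermediate size). Random bits used to define the $L_j(v)$ can be assumed supplied by the model (or, if we wish to be explicit, generated by a $\polylog(n)$-wise independent hash family, costing only an additional $O(n \cdot \polylog n)$ bits, still within the stated bound).
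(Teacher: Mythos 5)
Your proposal is correct and follows essentially the same route as the paper: both invoke the list-size and edge-count bounds established in \Cref{lem:ps-space-edges}, then multiply by $O(\log n)$ bits per color/edge to obtain $O(n\log^4 n)$ for the palettes and $O(n\log^7 n)$ for the conflict graph. The extra remarks about monotone growth of $H$ and the source of randomness are fine but not needed for the bound.
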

\begin{proof}
    We showed in the proof of \Cref{lem:ps-space-edges} that with high
    probability, $\card { L(v) } = O( \log^3 n)$ for all $v \in V$. Since
    each color is from $\bracket{\Delta}$, it can be represented by
    $\ceil{\log \Delta}$ bits, and hence storing the lists $L(v)$ for all $v$
    uses $O(n \log^4 n)$ bits.

    Further, each edge can be represented by $2\cdot\ceil{\log n}$ bits, so by
    invoking \Cref{lem:ps-space-edges}, we have that we need
    $O(n \cdot \log^7 n)$ bits to store the conflict graph $H$.
\end{proof}

\subsection{Finding the Decomposition}\label{sec:find-decompose}

We also work with the sparse-dense decomposition, but unlike~\cite{AssadiCK19a}, not only as an analytical tool but 
in fact algorithmically (as will become evident from the next subsection). We now describe an algorithm for finding the sparse-dense
decomposition of \Cref{prop:decomposition}. In particular, we only need to
provide the random edge and vertex samples required by the proposition. 
But, in addition to the samples required for the decomposition, we will also collect
independent random edge samples to allow us to distinguish friend vertices
from strangers (\Cref{def:friend-stranger}).

\begin{algorithm}
	\KwIn{Graph $G=(V,E)$ with known vertices $V$ and streaming edges
    $E$.}
	\begin{enumerate}[label=$(\roman*)$]
		\item Let $\gamma$ be the constant from the statement of
			\Cref{prop:decomposition}.
		\item \emph{Vertex samples:} For each vertex $v$, sample $v$
			into the set $\sample$ with probability $(\gamma \cdot \log n / \Delta)$
			independently. During the stream, for each vertex $v$ in $\sample$,
			store all edges incident on $v$.
		\item \emph{Edge samples:} For each vertex $v$, use reservoir
			sampling on edges of $v$ to pick a sample $\NS{v}$
			of size $\paren{\gamma \cdot \eps^{-2} \cdot \log n}$ from its
			neighborhood.
		\item \emph{Neighbor samples:} For each vertex $v$, store
			each neighbor of $v$ in $\IndS{v}$ with probability
			$\beta^2 / \Delta$.
                      \end{enumerate}
                      \caption{The $\DECalg$ algorithm.}\label{alg:DEC}
\end{algorithm}

Let us start by bounding the space used by \DECalg and then present the main
properties of the algorithm for our purpose. 

\begin{lemma}\label{lem:dec-space-bits}
    With high probability, \DECalg uses $O(n\log^4{n})$ bits of space.
\end{lemma}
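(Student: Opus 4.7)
My plan is to bound the space used by each of the three sampling primitives in $\DECalg$ separately and then observe that the edge samples dominate, yielding the claimed $O(n\log^4 n)$ bound.

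First I would bound the vertex samples. Each vertex $v$ is placed in $\sample$ independently with probability $\gamma\log n/\Delta$, so $\expect{\card{\sample}} = \gamma n\log n/\Delta$. A Chernoff bound (\Cref{prop:chernoff}) with $\delta=1$ shows that $\card{\sample} = O(n\log n/\Delta)$ w.h.p. (the relevant mean is $\Omega(\log n)$ since otherwise $\Delta = \Omega(n/\log n)$ and a separate trivial bound applies). Since each vertex of $V$ has degree at most $\Delta$, the total number of edges stored at sampled vertices is $O(n\log n)$; at $O(\log n)$ bits per edge this contributes $O(n\log^2 n)$ bits.

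Next I would handle the edge samples, which are the dominant contribution. Reservoir sampling stores at most $\gamma\eps^{-2}\log n$ neighbors per vertex deterministically. Plugging in $\eps = 10^{-8}/\log n$ from~\eqref{eq:parameters} gives $\eps^{-2} = \Theta(\log^2 n)$, so $\card{\NS{v}} = O(\log^3 n)$ for every $v$. Summing over all $n$ vertices yields $O(n\log^3 n)$ stored neighbor identifiers, each of size $O(\log n)$ bits, for a total of $O(n\log^4 n)$ bits.

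Finally I would bound the neighbor samples. Each edge incident on $v$ is included in $\IndS{v}$ independently with probability $\beta^2/\Delta = O(\log^2 n)/\Delta$, so $\expect{\card{\IndS{v}}} \le \beta^2 = O(\log^2 n)$. Applying \Cref{prop:chernoff} and a union bound over $v \in V$ gives $\card{\IndS{v}} = O(\log^2 n)$ for all $v$ w.h.p., contributing $O(n\log^3 n)$ bits overall. Summing the three contributions, the edge samples dominate and yield $O(n\log^4 n)$ bits.

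There is no real obstacle here: the only mildly delicate point is verifying that the mean of each Chernoff application is large enough to give the desired polynomial high-probability tail bound, which is handled by the standing assumption $\Delta = \Omega(\log^5 n)$ and by the choice of $\beta = 100\log n$. Everything else is a routine accounting of list sizes times $O(\log n)$ bits per identifier.
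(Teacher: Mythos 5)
Your proposal is correct and follows essentially the same accounting as the paper: bound $\card{\sample}$ by Chernoff and charge $\Delta\cdot O(\log n)$ bits per sampled vertex, note $\NS{v}$ has deterministic size $\gamma\eps^{-2}\log n = O(\log^3 n)$ giving the dominant $O(n\log^4 n)$ term, and bound $\IndS{v}$ by Chernoff to $O(\log^2 n)$. The only minor unnecessary detour is your caveat about the mean of $\card{\sample}$; since $\Delta\le n$, the expectation $\gamma n\log n/\Delta$ is always at least $\gamma\log n$, so the standard Chernoff application goes through without a case split.
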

\begin{proof}
    The set $\sample$ has size $\gamma \cdot n \log n / \Delta$ in expectation, and since
    each vertex $v$ is in $\sample$ independently, the size is at most (say)
    $5\gamma \cdot n \log n / \Delta$ with high probability by Chernoff bound
    (\Cref{prop:chernoff}).
    For each $v \in \sample$, we use upto $\Delta \cdot \ceil{\log n}$ bits
    of space to store all of its edges, and hence in total we use
    $O( n \log^2 n)$ bits to store $\sample$ and the neighborhood of vertices in it.

	The sets $\NS{v}$ have fixed sizes $(\gamma \cdot \eps^{-2} \cdot \log n)$
	each.  Storing a neighbor takes $\ceil{\log n}$ bits, and hence storing
	all the sets $\NS{v}$ takes $O( n \log^4 n)$ bits of space.

    The set $\IndS{v}$ is of size at most $2 \beta^2$ with high probability
    (again, the proof is the same as in \Cref{lem:ps-space-edges}).
    Hence storing $\IndS{v}$ for all $v$ uses $O( n \log^3 n)$ bits of space.
\end{proof}

We now establish the main properties we need from
$\DECalg$. 

\begin{lemma}\label{lem:decomposition}
    We can compute a sparse-dense decomposition
    (\Cref{prop:decomposition}) of the input graph $G = (V, E)$ with high
    probability using the samples collected by $\DECalg$.
\end{lemma}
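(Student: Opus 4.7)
The plan is to observe that $\DECalg$ is designed to collect precisely the two kinds of samples that \Cref{prop:decomposition} takes as input, so the lemma reduces (modulo a small technicality about the sampling distribution) to a direct invocation of that proposition. Concretely, I would first match the outputs of \DECalg to the hypotheses of \Cref{prop:decomposition}: step $(ii)$ of the algorithm produces exactly the random vertex sample $\sample$ required by the proposition (each $v$ is included independently with probability $\gamma \cdot \log n / \Delta$) and, crucially, stores the entire neighborhood $N(v)$ of each $v \in \sample$ during the stream, which is exactly what the proposition asks for. Step $(iii)$ produces the random neighbor samples $\NS{v}$ of the required size $\gamma \cdot \eps^{-2} \cdot \log n$. (The set $\IndS{v}$ from step $(iv)$ is used elsewhere in the algorithm, for friend/stranger testing; it plays no role in the decomposition itself and can be ignored here.)

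The one point that requires a brief justification is that $\NS{v}$ has the right distribution, namely uniform independent samples (with repetition) from $N(v)$. Reservoir sampling on the stream of edges incident to $v$ is a standard technique that maintains a uniformly random subset of $N(v)$ of the prescribed size at every prefix of the stream. To get the ``with repetition'' version demanded by \Cref{prop:decomposition}, one can simply run $\gamma \cdot \eps^{-2} \cdot \log n$ independent single-element reservoir samplers in parallel (each maintaining a single uniformly random element of $N(v)$); this uses the same asymptotic space and produces exactly the required distribution. For vertices $v$ with $\deg(v) < \gamma \cdot \eps^{-2} \cdot \log n$ the entire neighborhood $N(v)$ is smaller than the target sample size, but since those neighborhoods are stored in full they trivially suffice for any statistic the decomposition algorithm of \Cref{prop:decomposition} needs.

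Given these samples, we then invoke \Cref{prop:decomposition} as a black box: with high probability, the sparse-dense decomposition $V = \Vsparse \sqcup K_1 \sqcup \cdots \sqcup K_k$ (with the $K_i$ being $\eps$-almost-cliques in the sense of \Cref{def:almost-clique}) can be computed from $\sample$, the neighborhoods $N(v)$ for $v \in \sample$, and the edge samples $\NS{v}$ for every $v \in V$. A final union bound over the failure probability of the reservoir samplers and of the decomposition procedure itself keeps the overall success probability at $1 - 1/\poly(n)$. There is no serious obstacle here: the lemma is essentially a bookkeeping statement that the data collected by $\DECalg$ is (distributionally) the input required by \Cref{prop:decomposition}, and the only nontrivial point is the standard reduction from reservoir sampling without replacement to independent uniform samples with replacement, handled via parallel single-element reservoirs.
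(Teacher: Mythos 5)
Your proof takes essentially the same approach as the paper's (which simply observes that \DECalg collects exactly the samples \Cref{prop:decomposition} requires and invokes the proposition as a black box). You additionally and correctly flag the with/without-replacement mismatch in the reservoir sampling and resolve it via parallel single-element reservoirs, a technicality the paper glosses over silently.
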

\begin{proof}
    The proof is immediate---we collect $\sample$ and  $\NS{v}$  as
    needed by \Cref{prop:decomposition}, so we can use the algorithm in the proposition to compute the decomposition.
\end{proof}

We will also show that the independent random edge
samples $\IndS{v}$ are enough for a tester that can distinguish friends from
strangers for any almost-clique.

\begin{lemma}\label{lem:friend-stranger-tester}
	There exists an algorithm that given an almost-clique $K$, a vertex
    $v \notin K$, and the random neighbor samples $\IndS{v}$, with high probability can distinguish:
	\begin{itemize}
		\item $v$ has at most $\Delta / \beta$ neighbors in $K$, that is,
			$v$ is a stranger to $K$;
		\item $v$ has at least $2 \Delta / \beta$ neighbors in $K$, that is,
			$v$ is a friend of $K$.
	\end{itemize}
    The randomness in this lemma is only over the sample $\IndS{v}$.
\end{lemma}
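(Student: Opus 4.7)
The plan is to design the tester in the most natural way: given $K$, $v$, and the sample $\IndS{v}$, compute the empirical count $X := \card{\IndS{v} \cap K}$ and compare it against the midpoint threshold $\tau := 1.5 \beta$. Output ``stranger'' if $X \leq \tau$ and ``friend'' if $X > \tau$. All the information needed is locally available: we know which vertices belong to $K$ (from the decomposition of \Cref{lem:decomposition}) and $\IndS{v}$ is stored explicitly.

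Next, the plan is to analyze correctness via a standard Chernoff bound. Since each neighbor of $v$ is included in $\IndS{v}$ independently with probability $\beta^2/\Delta$, the random variable $X$ is a sum of independent $0/1$ indicators, one per neighbor of $v$ in $K$. Letting $d := \card{N(v) \cap K}$, we have $\expect{X} = d \cdot \beta^2/\Delta$. In the stranger case ($d \leq \Delta/\beta$) we get $\expect{X} \leq \beta$, and in the friend case ($d \geq 2\Delta/\beta$) we get $\expect{X} \geq 2\beta$. Applying \Cref{prop:chernoff} with $\delta = 1/2$ on each side, together with the fact that the summands lie in $[0,1]$ and $\beta = 100 \log n$, each tail event is bounded by $\exp(-\Omega(\beta)) \leq n^{-\Omega(1)}$ with a large constant in the exponent. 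Thus in both regimes $X$ concentrates on the correct side of $\tau$ with high probability.

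Finally, I would note that the guarantee is stated only for the two extreme regimes (at most $\Delta/\beta$ or at least $2\Delta/\beta$), so the ``gap'' in \Cref{def:friend-stranger} is precisely what gives the Chernoff analysis room to work; a vertex whose degree into $K$ falls in the intermediate range $(\Delta/\beta, 2\Delta/\beta)$ may be classified either way, but this is consistent with the statement of the lemma and with the subsequent discussion that social almost-cliques can be handled by either branch of the coloring algorithm. I do not foresee a genuine obstacle here: the only mild subtlety is confirming that $\beta = 100 \log n$ is large enough to drive the failure probability below $n^{-c}$ for any desired constant $c$ via a union bound over all $(v,K)$ pairs that the outer algorithm will query, which follows directly from the polynomially-many-events convention stated in \Cref{sec:prelim}.
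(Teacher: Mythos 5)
Your proposal is essentially identical to the paper's proof: both compute $X = \card{\IndS{v}\cap K}$, threshold at $\tfrac{3}{2}\beta$, and apply the Chernoff bound (\Cref{prop:chernoff}) with $\delta = 1/2$ to show that $X$ lands on the correct side of the threshold with high probability in each of the two promised regimes. The only cosmetic difference is that the paper writes $X$ as a sum of per-neighbor indicators before bounding, which is just the same computation spelled out; your observation about the deliberate gap in \Cref{def:friend-stranger} matches the paper's framing as well.
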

	\begin{proof}
		Fix a vertex $v \notin K$. Let $u$ be any neighbor of $v$ in $K$. Then $u \in \IndS{v}$ with
		probability $\beta^2 / \Delta$. Let $X_u \in \set{0,1}$ be the indicator random variable which is $1$ iff $u \in \IndS{v}$. 
		Thus, $X := \sum_{u \in N(v) \cap K} X_u$ counts the size of intersection of $\IndS{v}$ with $K$. Firstly, 
		we have 
		\[
			\expect{X} = \sum_{u \in N(v) \cap K} \frac{\beta^2}{\Delta} = \card{N(v) \cap K} \cdot \frac{\beta^2}{\Delta}. 
		\]
		Thus, by~\Cref{def:friend-stranger}, $\expect{X}$ is at least $2\beta$ if $v$ is a friend of $K$ and at most $\beta$ if $v$ is a stranger. 
		Our tester can simply compute the value of $X$ and output \textit{friend} if $X$ is more than $\frac 32 \beta$, and \textit{stranger} otherwise. 
	To prove the correctness, suppose that $\card{N(v) \cap K} < \Delta / \beta$, i.e. $v$ is a stranger. Then, by an application of Chernoff bound (\Cref{prop:chernoff} with $\delta = 1/2$) we
        have:
		\[
			\prob{X> \frac 32 \cdot \beta} < \exp\paren{ -\frac{ 1/4 \cdot \beta}{3 + 1/2}} < \exp\paren{-\frac{\beta}{16}} < n^{-6},
		\]
		by the choice of $\beta$ in~\Cref{eq:parameters}. The other case can be proven symmetrically. 
		Hence if $v$ is a stranger (resp. friend), our tester also outputs stranger (resp. friend) with high
        probability.
	\end{proof}
An immediate consequence of \Cref{lem:friend-stranger-tester} is that we
have a tester that can distinguish friendly almost-cliques from lonely
almost-cliques (\Cref{def:fls-cliques}).

\begin{lemma}\label{lem:lonely-friendly-tester}
    There exists an algorithm that given an almost-clique $K$, and 
    to the random neighbor samples $\IndS{v}$ for every $v \in V$, with high probability can distinguish:
    \begin{itemize}
        \item $K$ is a friendly almost-clique;
        \item $K$ is a lonely almost-clique.
    \end{itemize}
   The randomness in this lemma is only over the samples $\set{\IndS{v} \mid v \in V}$.
\end{lemma}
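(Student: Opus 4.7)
The plan is to reduce this lemma directly to \Cref{lem:friend-stranger-tester} by applying the friend/stranger tester to every vertex outside $K$. Concretely, since $\DECalg$ stores $\IndS{v}$ for every $v \in V$ and we are given the set $K$, the algorithm will iterate over all $v \in V \setminus K$, invoke the tester from \Cref{lem:friend-stranger-tester} on each, and declare $K$ \emph{friendly} as soon as any $v$ is classified as a friend; if no such $v$ is found, it declares $K$ \emph{lonely}.

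The correctness analysis is a straightforward union bound. For each individual $v \in V \setminus K$, the tester from \Cref{lem:friend-stranger-tester} fails with probability at most $n^{-6}$ (over the randomness of $\IndS{v}$), and the tests across different $v$ are independent because the samples $\set{\IndS{v} : v \in V}$ are drawn independently. Taking a union bound over the at most $n$ vertices in $V \setminus K$, with probability at least $1 - n^{-5}$ the tester correctly classifies every $v \notin K$ as friend or stranger (when it falls into one of those categories). If $K$ is friendly, then by \Cref{def:fls-cliques} there exists at least one $v \notin K$ that is a friend of $K$, and conditioned on the above event this $v$ will be correctly classified as a friend, so the algorithm outputs ``friendly''. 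If $K$ is lonely, then every $v \notin K$ which is a neighbor of $K$ is a stranger, and every $v \notin K$ with no neighbors in $K$ trivially satisfies $\card{N(v) \cap K} = 0 < \Delta/\beta$ and is also classified as a stranger by the tester; hence the algorithm outputs ``lonely''.

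There is no real obstacle here beyond correctly invoking the previous lemma: the gap between the friend and stranger thresholds built into \Cref{def:friend-stranger} was designed precisely to enable a clean one-sided tester, and the lemma only asks us to distinguish the two extreme classes (friendly vs.\ lonely), so we do not need to worry about what the algorithm outputs in the social case, where $K$ has non-stranger neighbors but no friend. In particular, we do not need a tight characterization for vertices whose intersection size with $K$ falls in the ambiguous interval $[\Delta/\beta, 2\Delta/\beta)$, since such vertices do not exist (as neighbors of $K$) in either of the two cases we must distinguish.
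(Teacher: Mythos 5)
Your proof is correct and takes essentially the same approach as the paper: apply the friend/stranger tester of \Cref{lem:friend-stranger-tester} to each vertex outside $K$, declare friendly if any vertex is classified as a friend, and lonely otherwise. The paper's version is slightly terser about the union bound, which you spell out explicitly, but the argument is the same.
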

\begin{proof}
    For each vertex $v \in V \setminus K$, run the tester from
    \Cref{lem:friend-stranger-tester}. If $K$ has even one friend, then 
    that friend is distinguished by the tester in~\Cref{lem:friend-stranger-tester} as a friend, and we
    can return that $K$ is friendly. Otherwise, if $K$ is lonely, it means that 
    every vertex $v$ we tested for $K$ will be considered stranger also with high probability and thus we can  correctly mark $K$ as lonely. 
\end{proof}

\subsection{Sparse Recovery for Almost-Cliques}\label{sec:sr-almost-clique}

Finally, we come to the most novel part of this section. 
Recall that as discussed  earlier, palette sparsification (and thus our own \PSalg) is doomed to fail for $\Delta$-coloring. To bypass this, 
we rely on the \emph{helper structures} defined in~\Cref{def:helper-friendly,def:helper-critical}, which, combined with the palette sparsification-type approach of \PSalg, 
allow us to color the graph.  

The first challenge here is 
that we obviously cannot afford to find the neighborhood of every vertex, and we do not know
\emph{during} the stream which vertices will satisfy the properties we need for these structures. 
We step around this by (crucially) using randomization to sample the ``right'' vertices. 
The second and main challenge is that for some almost-cliques (say, a critical almost-clique with only one non-edge), 
we may actually have to recover neighborhood of \emph{all} vertices in the almost-clique before finding the required helper structure; but doing 
this naively requires too much space.  Instead, we use the sparse recovery matrices of~\Cref{sec:sr}, in conjunction with the decomposition found by \DECalg, to 
recover these parts much more efficiently.

\begin{algorithm}
	\KwIn{Graph $G=(V,E)$ with known vertices $V$ and streaming edges
    $E$.}
	For every $r \in R = \cbrac{2^{i} \mid 0 \leq i \leq \ceil{\log \Delta}}$:%
			\footnote{Think of $r$ as a fixed sampling rate from a set of rates
			$R$.}
    \begin{enumerate}[label=$(\roman*)$]

            \item Sample each vertex $v \in V$ in a set $V_r$ independently with probability $\min\cbrac{1, \frac{\beta}{\eps \cdot r}}$.

            \item Construct the $(2r \times n)$ \emph{Vandermonde matrix} $\PhiV_r$
                (see \Cref{prop:sr}) and sample an $(\alpha \times n)$ \emph{random matrix} $\PhiR_r$ (see \Cref{prop:eq-test})
                over the field $\F_p$ where $p$ is a fixed prime larger than $n$ and smaller than, say, $n^2$, and $\alpha$ is the parameter in~\Cref{eq:parameters}. 

            \item For each vertex $v \in V_r$, define a vector $y(v) \in \F_p^{\,2r}$ and $z(v) \in \F_p^{\,\alpha}$ initially set to $0$. 
            For any incoming edge $\cbrac{u,v}$ in the stream, update
            \[
            y(v) \leftarrow y(v) + \PhiV_r  \cdot \mathbf{e}_u \qquad \text{and} \qquad z(v) \leftarrow z(v) + \PhiR_r \cdot  \mathbf{e}_u,
            \] 
            where $\mathbf{e}_u$ is the $n$-dimensional vector which is $1$ on coordinate $u$ and $0$ everywhere else. 
          \end{enumerate}
          	\caption{The $\SRalg$ algorithm.}\label{alg:SR}
\end{algorithm}

Let us start by bounding the space of this algorithm. 

\begin{lemma}\label{lem:sr-space-bits}
	\SRalg uses $O(n\log^4{n})$ bits of space.
\end{lemma}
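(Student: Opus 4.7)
The plan is to account for the three sources of storage in \SRalg separately: $(i)$ the bookkeeping needed for the sketching matrices $\PhiV_r$ and $\PhiR_r$, $(ii)$ the indices of the sampled vertex sets $V_r$, and $(iii)$ the sketch vectors $y(v), z(v)$ maintained for each $v \in V_r$. Summing over the $O(\log \Delta) = O(\log n)$ choices of $r \in R$ will give the claimed $O(n \log^4 n)$ bound, and the whole computation is a handful of Chernoff bounds plus an elementary calculation.

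First I would dispense with the matrices. Since $\PhiV_r$ is the fixed Vandermonde matrix over $\F_p$ (with $p < n^2$), its entries are determined by $p$, $r$, and the coordinate indices, so no explicit storage is needed beyond $O(\log n)$ bits for $p$. The random matrix $\PhiR_r$ has $\alpha \cdot n = O(n)$ entries of $O(\log p) = O(\log n)$ bits each, costing $O(n \log n)$ bits per $r$ and hence $O(n \log^2 n)$ across all of $R$. Recording the identities of the $V_r$ uses at most $O(n)$ bits per rate, totaling $O(n \log n)$.

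The dominant cost comes from the sketches. Using $\beta = \Theta(\log n)$ and $\eps = \Theta(1/\log n)$, we have $\beta/\eps = \Theta(\log^2 n)$, so $\Exp\bracket{\card{V_r}} = n \cdot \min\{1, \beta/(\eps r)\} = O(n \log^2 n / r)$, and a Chernoff bound (\Cref{prop:chernoff}), together with a union bound over the $O(\log n)$ values of $r$, shows that $\card{V_r} = O(n \log^2 n / r)$ holds simultaneously for all $r$ with high probability. For each $v \in V_r$, the vector $y(v) \in \F_p^{\,2r}$ uses $O(r \log n)$ bits and $z(v) \in \F_p^{\,\alpha}$ uses $O(\log n)$ bits. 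Thus the total sketch storage at rate $r$ is
\[
    \card{V_r} \cdot O(r \log n + \log n) = O(n \log^2 n / r) \cdot O(r \log n) = O(n \log^3 n),
\]
and summing over the $O(\log n)$ values of $r \in R$ gives $O(n \log^4 n)$ bits, which dominates the other contributions.

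The only point that could conceivably cause trouble is that the per-vertex cost $O(r \log n)$ grows with $r$ while the number of sampled vertices $\card{V_r}$ shrinks with $r$; one wants the product to stay balanced, and indeed it does because the sampling probability was chosen exactly proportional to $1/r$. So the main (mild) ``obstacle'' is really just verifying that the Chernoff-based bound on $\card{V_r}$ holds uniformly across all $r$ (so that the high-probability clause of the lemma is valid), but this is immediate from the union bound given that $\beta = 100 \log n$ makes every $\Exp\bracket{\card{V_r}}$ large relative to $\log n$ whenever it is not already $\Theta(n)$.
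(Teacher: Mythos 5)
Your proof is correct and follows essentially the same route as the paper's: bound $\card{V_r}$ via Chernoff using $\Exp\bracket{\card{V_r}} = O(\beta n / (\eps r))$, charge $O(r\log p)$ bits per sketch so the product balances to $O(n\log^3 n)$ per rate, note that $\PhiV_r$ need not be stored while $\PhiR_r$ costs only $O(n\log n)$ bits, and sum over the $O(\log n)$ rates. The only cosmetic difference is that you substitute $\beta/\eps = \Theta(\log^2 n)$ up front, whereas the paper carries the symbolic parameters through.
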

\begin{proof}
     For each $r \in R$, each $v \in V$ is sampled into $V_r$ with
	probability (at most) $\frac {\beta}{\eps r}$, which means the expected size of
	$V_r$ is (at most) $\frac {\beta n}{\eps r}$. Since each sample is independent, we
	have by Chernoff bound (\Cref{prop:chernoff}) that, 
    \begin{align*}
        \prob{ \card{V_r} > 2 \cdot \frac{\beta n}{\eps r} } &\leq
        \exp\paren{ - \frac{ \beta n}{4 \eps r} } 
        \leq \exp\paren{ - \frac{ 100 n \log n}{4 \eps \cdot 2\Delta} }
        \leq n^{25/2 \cdot \eps} \ll 1/\poly{(n)},
    \end{align*}
    by the choice of $\eps = \Theta(\log^{-1}(n))$ in~\Cref{eq:parameters}. 
    
    Combining with the union bound over $O(\log{\Delta})$ choices of $r \in R$, we have that each $V_r$ is of size at most
    $\frac{2\beta n}{\eps r}$ with high probability.
	For each $v \in V_r$, \Cref{alg:SR} stores two vectors $y(v)$ and $z(v)$ that require $O(r \cdot \log{p})$ and $O(\alpha \cdot \log{p})$ bits, respectively, 
	where $p$ is the order of the field $\F_p$. This, together with the bound on $V_r$ and since $\alpha = \Theta(1)$ by~\Cref{eq:parameters} means that 
	the total number of bits needed to store these vectors is $O(\eps^{-1} \cdot \beta n \log{p} \cdot \log{\Delta}) = O(n\log^{4}{n})$ bits (where we used the fact that both $\eps^{-1}$ and $\log{p}$ are bounded by $O(\log{n})$). 
	
	Finally, the algorithm does not need to explicitly store the Vandermonde matrix for $\PhiV_r$ (as each of its entries can be easily generated in $O(\log{p})$ space at any point) 
	and can store each random matrix $\PhiV_r$ in $O(\alpha \cdot n\log{p}) = O(n\log{n})$ bits as $\alpha = \Theta(1)$ and $\log{p} = O(\log{n})$. Thus, for all $r \in R$, we also need to store $O(n\log^2{n})$ bits to 
	store the random matrices. Overall, the total space of the algorithm is still $O(n\log^4{(n)})$, concluding the proof. 
\end{proof}

We now switch to proving the main properties of $\SRalg$ for our purpose. Before getting into details however, we state a standard observation about linear transformations (namely, $\PhiV_r$ and $\PhiR_r$ in $\SRalg$) over a stream of updates. 

\begin{observation}\label{obs:lin-sketch}
	Fix any $r \in R$ and  $v \in V_r$ in $\SRalg$. At the end of the stream, 
	\[
		y(v) = \PhiV_r \cdot \chi(N(v)) \qquad \text{and} \qquad z(v) =  \PhiR_r \cdot \chi(N(v)),
	\]
	where $\chi(N(v)) \in \set{0,1}^V$ is the characteristic vector of $N(v)$.
\end{observation}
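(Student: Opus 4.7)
The plan is to prove this by a direct unfolding of the update rule, using linearity of matrix-vector multiplication. This is a routine ``linear sketch'' observation, and there is no real obstacle—the statement essentially just records that the sketches $\PhiV_r$ and $\PhiR_r$ are applied linearly to a stream of indicator updates.

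First I would fix $r \in R$ and $v \in V_r$, and note that at the start of the stream both $y(v)$ and $z(v)$ are initialized to the zero vector. The only updates to $y(v)$ and $z(v)$ occur when an edge $\{u,v\}$ incident to $v$ is processed, at which point $\PhiV_r \cdot \mathbf{e}_u$ is added to $y(v)$ and $\PhiR_r \cdot \mathbf{e}_u$ is added to $z(v)$. Since each edge in $E$ appears exactly once in the stream, and an edge incident to $v$ contributes exactly one update keyed to its other endpoint $u \in N(v)$, at the end of the stream
\begin{align*}
y(v) \;=\; \sum_{u \in N(v)} \PhiV_r \cdot \mathbf{e}_u, \qquad z(v) \;=\; \sum_{u \in N(v)} \PhiR_r \cdot \mathbf{e}_u.
\end{align*}

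Now I would pull $\PhiV_r$ and $\PhiR_r$ outside the sums using linearity of matrix-vector multiplication, giving $y(v) = \PhiV_r \cdot \sum_{u \in N(v)} \mathbf{e}_u$ and similarly for $z(v)$. Finally, I would observe that $\sum_{u \in N(v)} \mathbf{e}_u$ is by definition the characteristic vector $\chi(N(v)) \in \{0,1\}^V$ of the neighborhood of $v$, yielding the claimed identities $y(v) = \PhiV_r \cdot \chi(N(v))$ and $z(v) = \PhiR_r \cdot \chi(N(v))$. The only conceptual point worth highlighting is that this argument crucially relies on each edge of $G$ appearing exactly once in the stream (cf.\ \Cref{lesson2}); otherwise the sums would count some $\mathbf{e}_u$ with a multiplicity greater than one and produce a different vector than $\chi(N(v))$.
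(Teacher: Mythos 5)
Your proof is correct and follows essentially the same route as the paper: both unfold the sequence of updates, use linearity of $\PhiV_r$ (resp.\ $\PhiR_r$) to pull the matrix outside the sum, and identify $\sum_{u \in N(v)} \mathbf{e}_u$ with $\chi(N(v))$. The remark that exactly-once edge arrival is what makes the final sum equal $\chi(N(v))$ rather than a weighted version of it is a nice touch that the paper leaves implicit.
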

\begin{proof}
	We only prove the equation for $y(v)$; the one for $z(v)$ follows similarly.
	Initially, we set $y(v) = 0$. Then, during the stream, whenever we see the edge
	$\cbrac{u, v}$ for each $u \in N(v)$, we update $y(v)$ by adding $\PhiV_r  \cdot \mathbf{e}_u$ to it. As $\PhiV_r$ is a linear transformation, we have, 
	\[
		y(v) = \sum_{u \in N(v)} \PhiV_r \cdot \mathbf{e}_u
		= \PhiV_r \cdot \paren{\sum_{u \in N(v)} \mathbf{e}_u}
		= \PhiV_r \cdot \chi(N(v)),
	\]
	concluding the proof. 
\end{proof}

Recall from \Cref{prop:sr} that given $\PhiV_r \cdot x$ for an $r$-sparse vector
$x\in \F_p^n$, we can recover~$x$ in polynomial time, and by~\Cref{prop:eq-test}, we can test our recovered vector to make sure with high probability that it is indeed equal to $x$. 
We use this idea combined with the fact that at the end of the stream we know a sparse-dense decomposition of the graph 
to recover one helper structure for each almost-clique (that needs one). We start with the key part that handles the critical-helper structures (\Cref{def:helper-critical}). 
There will also be a simpler part that handles friendly-helper structures almost-cliques (\Cref{def:helper-friendly}).  

\paragraph{Finding Critical-Helper Structures.} We start by recovering a critical-helper structure for any critical almost-clique as defined in~\Cref{def:helper-critical}. Let $K$ be a critical almost-clique and 
$v$ be a vertex in $K$. Define the vector $x(v) := \chi( N(v) ) - \chi( K )$. For any coordinate $u \in [n]$ in $x(v)$, 
\begin{align}
	x(v)_u = \begin{cases}
		\phantom{-}0 &\textnormal{if } u \notin N(v) \cup K \textnormal{ or } u \in N(v) \cap K\\
		\phantom{-}1 &\textnormal{if } u \in N(v) \setminus K\\
		p-1 &\textnormal{if } u \in  K \setminus N(v)
	\end{cases}, \label{eq:x(v)-vector}
\end{align}
as we do the computation over $\F_p$. Notice that  since $v$ belongs to the almost-clique $K$, by~\Cref{def:almost-clique}, size of both $N(v) \setminus K$ and $K \setminus N(v)$ is at most $10\eps\Delta$. 
Thus, $x(v)$ is already considerably sparser than $\chi(N(v))$. In the following, we are going to take this idea to the next level to recover a critical-helper structure 
for $K$ using the vectors computed by \SRalg. 

\begin{lemma}\label{lem:helper-critical}
	There exists an algorithm that given
	a critical almost-clique $K$ (\Cref{def:scl-almost-clique}), with high probability, 
	finds a critical-helper structure $(v,u,N(v))$ of $K$ (\Cref{def:helper-critical}) using the information gathered by $\SRalg$. 
\end{lemma}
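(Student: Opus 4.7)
Given $K$, for every sampled vertex $v \in V_r$ across $r \in R$, we use the known $K$ and the linearity of the sketches (\cref{obs:lin-sketch}) to compute $\PhiV_r \cdot x(v) = y(v) - \PhiV_r \cdot \chi(K)$ and $\PhiR_r \cdot x(v) = z(v) - \PhiR_r \cdot \chi(K)$, where $x(v) := \chi(N(v)) - \chi(K) \in \F_p^n$ carries $1$ on $N(v) \setminus K$, $p-1$ on $K \setminus N(v)$, and $0$ elsewhere (as in~\cref{eq:x(v)-vector}). We then invoke~\cref{prop:sr} to recover a candidate $\hat x$ from $\PhiV_r \cdot x(v)$ and test it against $\PhiR_r \cdot x(v)$ via~\cref{prop:eq-test}. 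Among all \emph{verified} recoveries, we output any triple $(u,v,N(v))$ with $\hat x_v = p-1$ (indicating $v \in K$) and some $u \in K\setminus\{v\}$ with $\hat x_u = p-1$ (indicating $u$ is a non-neighbor of $v$ in $K$), reconstructing $N(v)$ from $\hat x$ and $K$.

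\textbf{Sparsity bound and singular vertices.} For $v \in K$, the almost-clique properties together with $|K| = \Delta + 1$ and $\deg(v) \leq \Delta$ yield $|N(v) \setminus K| \leq \Delta - \deg_K(v) = t'(v)$, where $t'(v) := |K \setminus (N(v) \cup \{v\})|$, and $|K \setminus N(v)| = t'(v) + 1$, so $|x(v)| \leq 2 t'(v) + 1$. Call $v \in K$ \emph{singular} when $t'(v) \geq 1$; recovering $N(v)$ for any singular $v$ immediately furnishes the helper structure, so it suffices to find and recover one such vertex. By~\cref{obs:critical-almost-clique}, $K$ contains at least one non-edge, and both endpoints of any non-edge are singular. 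Let $h \geq 1$ be the number of non-edges in $K$ and $t \geq 2$ the number of singular vertices; since every non-edge lies inside the singular set, $h \leq \binom{t}{2}$, equivalently $t^2 \geq 2h$.

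\textbf{Sampling argument and main obstacle.} The main obstacle is that no single sampling rate $r$ simultaneously accommodates the sparsity $|x(v)|$ for all singular vertices and achieves a non-trivial sampling probability: $|x(v)|$ can be as small as $3$ (when $h=1$, e.g.\ a $(\Delta+1)$-clique minus an edge, which has only two singular vertices) and as large as $\Theta(\eps\Delta)$ (when $h$ approaches the unholey bound), while the sampling probability $\beta/(\eps r)$ decreases with $r$. The resolution comes from a Markov--Chernoff argument that couples the two regimes through $t^2 \geq 2h$: since $\sum_{v\text{ singular}} t'(v) = 2h$, Markov's inequality gives at least $t/2$ singular vertices with $t'(v) \leq 4h/t$ and hence $|x(v)| \leq 1 + 8h/t =: s^*$. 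Choosing $r^* \in R$ to be the smallest power of $2$ at least $s^*$ (so $r^* \leq 2 s^*$) makes the sampling probability for $V_{r^*}$ at least $\beta/(2\eps s^*)$, and the expected number of these $t/2$ vertices in $V_{r^*}$ becomes at least $\tfrac{t}{2} \cdot \tfrac{\beta}{2 \eps s^*} \geq \tfrac{\beta t^2}{32 \eps h} \geq \tfrac{\beta}{16 \eps} = \Omega(\log^3 n)$. Independence across vertex samples lets a Chernoff bound upgrade this expectation to ``at least one such $v$ is in $V_{r^*}$'' with high probability; for that $v$, $|x(v)| \leq r^*$, so~\cref{prop:sr} recovers $x(v)$ exactly and~\cref{prop:eq-test} certifies it. A union bound across the polynomially many candidate recoveries rules out false positives (from vertices with $|x(v)| > r$), since each passes the test with probability only $p^{-\alpha}$.
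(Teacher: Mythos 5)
Your approach is correct and reaches the same conclusion, but by a genuinely different route than the paper. The paper singles out $w^* := \argmax_{w \in K} \overline{\deg}_K(w)$ (the vertex with the most non-neighbors in $K$), sets $r$ to be the smallest rate with $r \geq 2\overline{\deg}_K(w^*)$, and then exploits the simple but clever fact that \emph{every} $w \in K \setminus N(w^*)$ automatically has $\overline{\deg}_K(w) \leq \overline{\deg}_K(w^*) \leq r/2$, so all of these $\approx r/4$ vertices have $r$-sparse vectors $x(w)$; sampling at rate $r$ hits the set $K \setminus N(w^*)$ with probability $1 - \exp(-\Omega(\beta/\eps))$. Your version replaces this $\argmax$ argument with a Markov averaging argument over the set of singular vertices, coupling $t$ and $h$ through the two inequalities $t \leq 2h$ and $t^2 \geq 2h$; this buys you the same $\Omega(\beta/\eps)$ exponent without picking a distinguished vertex. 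Both are short, and the paper's is arguably slightly simpler because it avoids bounding $h$ against $t$ in two directions.

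There are several arithmetic inaccuracies, none of which affect validity. The chain $\tfrac{t}{2} \cdot \tfrac{\beta}{2\eps s^*} \geq \tfrac{\beta t^2}{32\eps h}$ does not hold as written: $\tfrac{\beta t}{4\eps s^*} = \tfrac{\beta t^2}{4\eps(t + 8h)}$, and you need $t \leq 2h$ (which follows from every singular vertex contributing at least one unit to $\sum_v t'(v) = 2h$) to get $t + 8h \leq 10h$ and hence $\geq \tfrac{\beta t^2}{40\eps h} \geq \tfrac{\beta}{20\eps}$. You use this $t \leq 2h$ bound implicitly but never state it. Also, with $\beta = \Theta(\log n)$ and $\eps = \Theta(1/\log n)$, the quantity $\beta/\eps$ is $\Theta(\log^2 n)$, not $\Omega(\log^3 n)$---still overwhelming, so the ``high probability'' claim is unaffected. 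Finally, the appeal to a Chernoff bound is unnecessary: the probability that none of the $\geq t/2$ candidate vertices is sampled into $V_{r^*}$ is already $\leq (1 - \beta/(2\eps s^*))^{t/2} \leq \exp(-\Omega(\beta/\eps))$ by independence alone.
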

\begin{proof}
	(\Cref{fig:find-critical-helper} gives an illustration that might be helpful to refer to during the proof.) For any vertex $w \in K$, define: 
	\begin{itemize}
		\item $\nonN{K}{w} := K \setminus N(w)$: as the non-edge neighborhood of $w$ in $K$ and $\nondeg{K}{w} = \card{\nonN{K}{w}}$ as the non-edge degree of $w$. 
	\end{itemize}
	Define $w^*$ as the vertex that maximizes this non-edge degree, i.e., $w^* = \argmax_{w \in K} \,\,\nondeg{K}{w}$. By \Cref{obs:critical-almost-clique}, we have $\nondeg{K}{w^*} \geq 1$. 
	We first have the following simple claim that will be crucial in finding the neighborhood of at least one vertex in $\nonN{K}{w^*}$ using sparse recovery. 

    \begin{sidefigure}
        \centering
        \pgfdeclarelayer{edges}       
\pgfdeclarelayer{ellipses}    
\pgfsetlayers{edges,ellipses,main}

\begin{tikzpicture}
    [vertex/.style={circle, draw=blue!50, fill=blue!20, thick, inner sep=0pt,
     minimum size=2mm},
     gray/.style={vertex, draw=gray!50, fill=gray!20},
     ghost/.style={inner sep=0pt, minimum size=0mm},
     scale=0.9]
    \node [vertex] (w*) [label=right:{\footnotesize $w^*$}] at (4.5, 4.5) {};

    \node [gray] (w1) [] at (3.5, 4.5) {};
    \node [gray] (w2) [] at (3.5, 4) {};
    \node [ghost] (wi) [] at (3.5, 3.5) {\footnotesize$\vdots$};

    \node [vertex] (w) [label=right:{\footnotesize$v$}] at (3.5, 2.75) {};

    \node [ghost] (K) [label=right:{\footnotesize$K$}] at (4, 5.5) {};

    \draw (4, 3) ellipse (1.5cm and 3cm); 
    \begin{pgfonlayer}{ellipses}
        \draw [fill=gray!20] (1, 2.75) ellipse (.5cm and 1cm); 
        \draw [fill=gray!20] (4, 1.5) ellipse (1cm and .5cm); 
    \end{pgfonlayer}

    \node [ghost] (Nw1) [] at (1, 3.75) {};
    \node [ghost] (Nw2) [] at (1, 1.75) {};
    \node [ghost] (NNw1) [] at (3, 1.5) {};
    \node [ghost] (NNw2) [] at (5, 1.5) {};

    \draw (w*) edge [dashed] (w1);
    \draw (w*) edge [dashed] (w2);
    \draw (w*) edge [dashed] (w);
    \begin{pgfonlayer}{edges}
        \draw (w) edge (Nw1);
        \draw (w) edge (Nw2);
        \draw (w) edge [dashed] (NNw1);
        \draw (w) edge [dashed] (NNw2);
    \end{pgfonlayer}


    \node [ghost] (Nw) at (-1.95, 2.75)
    {\footnotesize $\card{N(v) \setminus K} \leq \nondeg{K}{w^*} \approx r$};

    \coordinate[below=.4cm of NNw1] (NNw11);
    \coordinate[below=.4cm of NNw2] (NNw22);
    \draw [decorate,decoration={brace,amplitude=10pt, mirror}]
    (NNw11) -- (NNw22) node [black, midway, yshift=-7pt] (NNw) {};

    \node [ghost] (NNwLabel) [left= of NNw]
    {\footnotesize $\card{K \setminus N(v)} \leq \nondeg{K}{w^*} \approx r$};

    \draw [->] (NNw) edge [out=270, in=270] (NNwLabel);

\end{tikzpicture}
        \caption{Finding a critical-helper: With high probability, there is a
        vertex $v \in \nonN{K}{w^*}$ which is sampled in $V_r$.}
        \label{fig:find-critical-helper}
    \end{sidefigure}

	\begin{claim}\label{clm:picked-one}
		Let $r \in R$ be the smallest integer such that $r \geq 2\cdot\nondeg{K}{w^*}$. Then, 
		for every  $w \in \nonN{K}{w^*}$, the vector $x(w) := \chi(N(w)) - \chi(K)$ is $r$-sparse. 
	\end{claim}
	\begin{subproof}
		Fix any $w \in \nonN{K}{w^*}$. By the definition of $w^*$, we have $\nondeg{K}{w} \leq \nondeg{K}{w^*} \leq r/2$. At the same time, since $K$ is a critical almost-clique and thus has size $\Delta+1$, 
		this means that the number of neighbors of $w$ outside $K$ is also at most $\nondeg{K}{w^*} \leq r/2$. By~\Cref{eq:x(v)-vector}, this means that $x(w)$ has at most $r$ non-zero entries. 
	\end{subproof}
	
	Consider the parameter $r$ of~\Cref{clm:picked-one}. We have that $\nondeg{K}{w^*} \geq r/4$ as elements of $R$ are within a factor two of each other and by the value of $r$. Given that each vertex is chosen 
	in $r$ with probability $\min\set{1,\beta/(\eps \cdot r)}$, we have, 
	\[
		\Pr\paren{V_r \cap \nonN{K}{w^*} = \emptyset} \leq \paren{1-\frac{\beta}{\eps \cdot r}}^{r/4} \leq \exp\paren{-\frac{\beta}{4\eps}} \ll 1/\poly{(n)},
	\]
	by the choice of $\beta = \Theta(\log{n})$ and $\eps = \Theta(\log^{-1}(n))$ in~\Cref{eq:parameters}. In the following, we condition on the high probability event that a vertex $v \in \nonN{K}{w^*}$ is sampled 
	in $V_r$. For now, let us \emph{assume} that we know the identity of this vertex $v$ in $V_r$. 
	
	Firstly, by~\Cref{clm:picked-one}, we have that the vector $x(v)$ is $r$-sparse. Secondly, since $v \in V_r$, $\SRalg$ has computed $y(v) = \PhiV_r \cdot \chi(N(v))$ by~\Cref{obs:lin-sketch} and since we are given $K$, 
	we can also compute $\PhiV_r \cdot \chi(K)$. Thus, by linearity, we can compute $\PhiV_r \cdot x(v)$ this way. Finally, since $x(v)$ is $r$-sparse, by~\Cref{prop:sr}, 
	we can actually recover $x(v)$ from  $\PhiV_r \cdot x(v)$. But again, since we know $\chi(K)$, this gives us $\chi(N(v))$ and in turn $N(v)$ as well. Thus, the algorithm can 
	return the critical-helper structure $(v,u,N(v))$ for $u=w^*$ which satisfies all the properties (as $(v,w^*)$ is a non-edge). 
	
	It only remains the remove the assumption on the knowledge of identity of $v$ in $V_r$ (and possibly the value of $r$ itself). For this, we simply iterate over all vertices $w \in K$
	and for each one run $\PhiV_r \cdot x(w)$ and $\PhiR_r \cdot x(w)$ as described above for all values of $r \in R$ (by using $z(v)$ in place of $y(v)$ when computing the latter). As outlined in~\Cref{sec:sr}, 
	we can now apply~\Cref{prop:eq-test} to the outcome of each sparse recovery to get that with high probability, any vector $\chi(N(w))$ that we recover is correct. Since 
	we know that the vertex $v$ will \emph{not} output 'fail', we are guaranteed that with high probability we will return a valid critical-helper structure, concluding the proof. 
\end{proof}

\paragraph{Finding Friendly-Helper Structures.} We now switch to finding a
friendly-helper structure of~\Cref{def:helper-friendly} for each almost-clique
$K$ that is unholey and \emph{not} lonely. 

\begin{lemma}\label{lem:helper-friendly}
	There exists an algorithm that given
	an unholey and not lonely almost-clique~$K$ (\Cref{def:scl-almost-clique}) and a vertex $u \notin K$ which is not a stranger to $K$, 
	with high probability, 
	finds a friendly-helper structure $(u,v,w,N(v),N(w))$ of $K$ (\Cref{def:helper-friendly}) using the information gathered by $\SRalg$. 
\end{lemma}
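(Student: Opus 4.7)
\medskip

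\noindent\textbf{Proof plan for~\Cref{lem:helper-friendly}.} The plan is to mirror the strategy used in the proof of~\Cref{lem:helper-critical}, but now exploiting the \emph{unholey} property of $K$ to bound the sparsity of the vectors $x(v) := \chi(N(v)) - \chi(K)$ for \emph{most} vertices $v \in K$ simultaneously, and using property~\ref{dec:outside} of the decomposition (namely $\card{K \setminus N(u)} \geq 10\eps\Delta$) to guarantee that suitable candidates for $w$ exist. Recall that for any $v \in K$, the vector $x(v)$ has sparsity at most $\card{N(v)\setminus K} + \card{K \setminus N(v)} \leq 10\eps\Delta + \nondeg{K}{v}$, using property~\ref{dec:neighbors}. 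Since $K$ is unholey, the total non-edge count inside $K$ is at most $10^7 \eps\Delta$, so by Markov's inequality at most $\tfrac{1}{2}\cdot \Delta/\beta$ vertices of $K$ have $\nondeg{K}{v} > C\eps\beta$ for a suitable constant $C$.

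Next I would fix $r^{*}\in R$ to be the smallest power of two that exceeds $30\,\eps\Delta$, so that $x(v)$ is $r^{*}$-sparse for all but $O(\Delta/\beta)$ ``bad'' vertices in $K$. I would then argue that a usable $v$ is sampled into $V_{r^*}$: since $u$ is not a stranger, $\card{N(u)\cap K}\ge \Delta/\beta$, and at least half of these vertices are \emph{good} in the sense that $x(v)$ is $r^{*}$-sparse. Each good vertex is placed in $V_{r^{*}}$ independently with probability $\beta/(\eps\,r^{*})=\Theta(\beta/(\eps^{2}\Delta))$, so the expected number of good vertices of $N(u)\cap K$ sampled is $\Theta(1/\eps^{2})=\Theta(\log^{2}n)$, and Chernoff gives at least one such vertex w.h.p. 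To \emph{identify} such a $v$, I would iterate over every $v \in V_{r^{*}}\cap K$, compute $\PhiV_{r^{*}}\cdot x(v) = y(v) - \PhiV_{r^{*}}\cdot \chi(K)$ from the knowledge of $K$, run the sparse recovery of~\Cref{prop:sr}, verify via~\Cref{prop:eq-test} using $z(v)$ that the recovered vector truly equals $x(v)$, and then extract $N(v)$; the verification step guarantees (w.h.p.) that we never accept a wrong recovery for any ``bad'' $v$. Keep any $v$ whose recovered $N(v)$ contains $u$.

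Third, I would find $w$ by the same mechanism. By property~\ref{dec:outside}, $\card{K \setminus N(u)}\ge 10\eps\Delta$, while $v$ has at most $\nondeg{K}{v}\le C\eps\beta \ll \eps\Delta$ non-neighbors in $K$, so the set
\[
S \;:=\; K \cap N(v) \setminus N(u)
\]
has size at least $10\eps\Delta - C\eps\beta \geq 5\eps\Delta$. A constant fraction of $S$ is good (same Markov argument as before), and each is sampled into $V_{r^{*}}$ with probability $\Theta(\beta/(\eps^{2}\Delta))$, so w.h.p.\ many good vertices of $S$ are sampled. Since $N(v)$ is already known, I can restrict the iteration to $V_{r^{*}}\cap N(v)$, run the same sparse recovery plus verification procedure to obtain $N(w)$, and accept the first candidate whose recovered neighborhood does not contain $u$. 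The tuple $(u,v,w,N(v),N(w))$ then satisfies all requirements of~\Cref{def:helper-friendly}: $v,w \in K$, $v \in N(u)$, $v \in N(w)$, and $u \notin N(w)$.

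The main obstacle is the combination of two non-trivial sparsity accountings: using the unholey bound globally to conclude that the ``bad'' vertices are too few to obscure either the $\Delta/\beta$ candidates for $v$ inside $N(u)\cap K$ or the $\Omega(\eps\Delta)$ candidates for $w$ inside $S$, and doing so at a single rate $r^{*}$ so that both rounds of sparse recovery use the same sketch. Once this is set up, the rest is the linearity observation of~\Cref{obs:lin-sketch} together with the ``safe recovery'' scheme assembled from~\Cref{prop:sr} and~\Cref{prop:eq-test}.
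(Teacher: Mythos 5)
Your plan has a concrete gap in the third step. You set the ``good/bad'' threshold $T = C\eps\beta$ so that Markov gives at most $\Delta/(2\beta)$ bad vertices in $K$; that is indeed enough when hunting for $v$ inside $N(u)\cap K$, which has $\geq \Delta/\beta$ elements. But then you claim that ``a constant fraction of $S$ is good (same Markov argument as before)'' for $S := K\cap N(v)\setminus N(u)$, whose size is only $\Theta(\eps\Delta)$. Since $\eps\beta = 10^{-6} \ll 1$ under the paper's parameters, we have $\Delta/(2\beta) \approx 5\cdot 10^{5}\cdot\eps\Delta \gg \card{S}$; the bad vertices can completely swallow $S$. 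To salvage this you would have to raise the threshold to $T = \Theta(1)$ (roughly $10^{7}$) so that the number of bad vertices becomes $O(\eps\Delta)$, but then the sparsity guarantee for good $x(w)$ is only $10\eps\Delta + \Theta(10^{7})$, and for $\Delta$ not astronomically large the constant term dominates $\eps\Delta$. That forces a different (larger) rate than $r^{*}\approx 30\eps\Delta$, breaking the ``single rate'' economy you were aiming for. So either the Markov accounting must be redone with two distinct thresholds and rates, or the approach needs a bigger overhaul.

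The paper's own proof avoids all of this by running the recovery at $r_{\max}\in[\Delta,2\Delta]$, where $\chi(N(v))$ is always $r_{\max}$-sparse. No subtraction of $\chi(K)$, no good/bad distinction, and in fact the unholey hypothesis is not used at all. The sampling probability at $r_{\max}$ is still $\Theta(\beta/(\eps\Delta))$, and that suffices: the paper first finds $w$ among the $\geq 10\eps\Delta$ candidates in $K\setminus N(u)$ (expected number sampled $\Theta(\beta)$), then finds $v$ among the $\geq \Delta/(2\beta)$ candidates in $N(w)\cap N(u)\cap K$ (expected number sampled $\Theta(1/\eps)$). Your lower rate buys a higher per-vertex sampling probability, but since both candidate pools are already large enough at the maximal rate, that optimization is unnecessary, and it is precisely the source of the gap.
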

\begin{proof}
	Define $\rmax = \max_{r \in R} r$. We have the following straightforward claim.%
\footnote{A careful reader may notice that in this claim, 
	we actually do not really need sparse recovery; we
could have simply stored all edges of vertices sampled in $V_{\rmax}$
explicitly during the stream.
However, given that we  indeed need sparse recovery for all other ranges of
$V_r$ for $r \in R$ in the previous part, we use a unified approach for $V_{\rmax}$ as well.}
	\begin{claim}\label{clm:full-nbr-prob}
	For any vertex $v \in V$, with probability at least
	$\frac{\beta}{2\eps \cdot \Delta}$, we can recover the set $N(v)$.  
\end{claim}
\begin{subproof}
	Note that $\rmax = 2^{\ceil{\log \Delta}}$ which is between
    $\Delta$ and $2\Delta$.  Recall that each $v$ is sampled into the set
    $V_{\rmax}$ with probability $\frac{\log n}{\eps \cdot \rmax}$,
	which is at least $\frac{\log n}{2 \eps \cdot \Delta}$. And then note that
	for each vertex $v$ in $V_{\rmax}$, $\SRalg$ stores $\PhiV_{\rmax} \cdot \chi(N(v))$ by~\Cref{obs:lin-sketch}. 
	By~\Cref{prop:sr}, we can recover $N(v)$ from $\PhiV_{\rmax}
        \cdot \chi(N(v))$ for every $v \in V_{\rmax}$, concluding the
        proof of the claim. 
\end{subproof}

	We  prove~\Cref{lem:helper-friendly} using this claim. Firstly, by property~\ref{dec:outside} of \Cref{def:almost-clique}, 
	there are also at least $10\eps \Delta$ vertices $w$ in $K$ that are not
	neighbors of $u$. By \Cref{clm:full-nbr-prob}, we recover $N(w)$ for any such choice of $w$ with probability at least ${\beta}/({2\eps \cdot \Delta})$.
	As such, we have, 
	\[
		\Pr\paren{\textnormal{$N(w)$ is not recovered for any $w \in K \setminus N(u)$}} \leq \paren{1-\frac{\beta}{2\eps \cdot \Delta}}^{\eps \cdot \Delta} \leq \exp\paren{-\frac{\beta}{2}} \leq n^{-50},
	\] 
	by the choice of $\beta = 100\log{n}$ in~\Cref{eq:parameters}. In the following, we further condition on the high probability event that for some $w \in K \setminus N(u)$, we 
	have recovered $N(w)$. Similar to the proof of~\Cref{lem:helper-critical}, let us \emph{assume} that we  know the identity of $w$. 
	
	Now consider $N(w) \cap N(u) \cap K$; since $\card{N(u) \cap K} \geq \Delta/\beta$ as $u$ is not a stranger to $K$ (\Cref{def:friend-stranger}), 
	and $\card{K \setminus N(w)} \leq 10\eps\Delta$ by property~\ref{dec:non-neighbors} of \Cref{def:almost-clique}, 
	we have that 
	\[
	\card{N(w) \cap N(u) \cap K} \geq \Delta/\beta - 10\eps\Delta > \Delta/2\beta
	\]
	 by the choice of parameters $\eps < 10^{-6} \cdot 1/\beta$ in~\Cref{eq:parameters}. 
	By the same argument as above, we have, 
	\[
		\Pr\paren{\textnormal{$N(v)$  not recovered for any $v \in N(w) \cap N(u) \cap K$}} \leq \paren{1-\frac{\beta}{2\eps \cdot \Delta}}^{\Delta/2\beta} \hspace{-10pt} \leq \exp\paren{-\frac{1}{2\eps}} \leq n^{-100},
	\] 
	by the choice of $\eps$ in~\Cref{eq:parameters}. We now have: $u$ is a vertex which is not a stranger to $K$, $w$ is a non-neighbor of $u$ in $K$ and we have $N(w)$, and $v$ is a neighbor of both $u$ and $w$ and we 
	have $N(v)$. Thus, we can return $(u,v,w,N(v),N(w))$ as a friendly-helper structure of~$K$. 
	
	Finally, removing the assumption on the knowledge of $v$ and $w$ is exactly as in the proof of~\Cref{lem:helper-critical}: we simply go over all choices of vertices in $K$ that we have sampled 
	in $V_{\rmax}$ and check whether any pairs of them satisfy the requirements of the structure or not---by the above argument, with high probability, we will find a pair. 
\end{proof}

\subsection{Listing the Information Collected by the Algorithm}\label{sec:info} 

For the ease of reference in the analysis, we now take stock of what all our algorithms collected about the graph from the stream. In particular, 
with high probability, we have the following information at the end of the stream: 

\begin{enumerate}[label=$\emph{\arabic*}.$]
	\item\label{p1} A list of sampled colors $L(v)$ for every vertex $v \in V$ as specified in~\Cref{alg:PS}. 
	
	\emph{Proof:} Follows  from the definition of $\PSalg$ in \Cref{alg:PS}. 
	
	\item\label{p2} The \textbf{conflict graph} $H$ consisting of every edge $(u,v)$ in the graph where $L(u) \cap L(v) \neq \emptyset$. 
	
	\emph{Proof}: Follows from the definition of $\PSalg$ in \Cref{alg:PS}. 
	
	\item\label{p3} A decomposition of $G$ into sparse vertices and almost-cliques as specified in~\Cref{prop:decomposition}. 
	
	\emph{Proof:} Follows from~\Cref{lem:decomposition} for $\DECalg$ in \Cref{alg:DEC}. 
	
	\item\label{p4} A collection $\Kfriendly$ of almost-cliques that contains \emph{all} friendly almost-cliques and \emph{no} lonely almost-clique, and for each $K \in \Kfriendly$, 
	one vertex $u \notin K$ which is \underline{not} a stranger to $K$. 
	
	\emph{Proof:} Follows from Part~\eqref{p3} and~\Cref{lem:lonely-friendly-tester} for $\DECalg$ in \Cref{alg:DEC}. 
	
	\item\label{p5} A collection $\Klonely$ of almost-cliques that contains \emph{all} lonely almost-cliques and \emph{no} friendly almost-clique. Moreover, $\Kfriendly \sqcup \Klonely$ partition \emph{all} almost-cliques, which also
	implies that every social almost-clique belongs to exactly one of these two collections. 
	
	\emph{Proof:} Follows from Parts~\eqref{p3},~\eqref{p4}, and~\Cref{lem:lonely-friendly-tester} for $\DECalg$ in \Cref{alg:DEC}. 
	
	\item\label{p6} A collection of $\Kcritical$ of critical almost-cliques and for each $K \in \Kcritical$, a \textbf{critical-helper structure} $(u,v,N(v))$ of~\Cref{def:helper-critical}. 
	
	\emph{Proof:} Follows from Part~\eqref{p3} and \Cref{lem:helper-critical} for $\SRalg$ in \Cref{alg:SR}.

	\item\label{p7} A collection of \textbf{friendly-helper structures} $\set{(u,v,w,N(v),N(w))}$ of~\Cref{def:helper-friendly}, one for each $K \in \Kfriendly$ such that $u$ is the vertex specified for $K \in \Kfriendly$ in Part~\eqref{p4}. 
	
	\emph{Proof:} Follows from Part~\eqref{p4} and \Cref{lem:helper-friendly} for $\SRalg$ in \Cref{alg:SR}. 
	
	\item\label{p8} The \textbf{recovery graph} $\HR$ consisting of all edges in the critical-helper structures of Part~\eqref{p6} and in the friendly-helper structures of Part~\eqref{p7}. 
	
	\emph{Proof:} Follows from Parts~\eqref{p6} and~\eqref{p7}. 
	
\end{enumerate}

We shall note that at this point, we covered all the process that is done by our algorithm \emph{during} the stream and what remains is to prove this information is useful, i.e., we can indeed
color the graph in the \emph{post-processing step} using this information. This is the content of the next section. 

Before moving on, we should note that, with high probability, the space complexity of $(i)$ $\PSalg$ is $O(n\log^7{n})$ bits by~\Cref{lem:ps-space-bits}, $(ii)$ $\DECalg$ is $O(n\log^4{n})$ bits by~\Cref{lem:dec-space-bits}, 
and $(iii)$ $\SRalg$ is $O(n\log^4{n})$ by~\Cref{lem:sr-space-bits}. Thus, our entire streaming algorithm takes $O(n\log^7{n})$ space. This adhere to the space complexity promised in~\Cref{res:main}.   

\begin{remark}
	As stated, the space complexity of our algorithm is bounded with high probability but not in the worst-case. This is standard to fix; simply run the algorithm as it is and whenever it attempted to use more than, say, $100$ times, 
	the space guaranteed by its expectation, terminate it and ``charge'' the failure probability to the error. 
\end{remark}

\begin{remark}[Removing Prior Knowledge of $\Delta$]
  \label{rem:no-know-delta}
  We observe that this semi-streaming algorithm does not really need to know
  $\Delta$ before the stream begins. In particular, we can run $O(\log n)$
  independent copies of the algorithm, each with a difference ``guess'' of
  $\Delta \in \set{2^k \mid 2^k \leq 2n}$.
  At the same time, we can compute $\Delta$ at the end of the stream by simply
  counting for each vertex the number of edges incident to it in $O(n \log n)$
  space.

  An overestimate of $\Delta$ does not hurt us in terms of space
  usage, but an underestimate can (for example, if we guess $\Delta = 1$, and
  the input includes a clique on $n - 1$ vertices, the algorithm stores the
  entire graph).
  Hence, if at any point (for any guess $\Delta$), if a vertex has degree
  larger than $2\Delta$, we stop that run of the algorithm.
  Now, at the end of the stream we will have:
  \begin{itemize}
    \item The actual maximum degree $\Delta$.
    \item The output of the algorithm for $\Delta'$ and $2\Delta'$ such that
      $\Delta' \leq \Delta \leq 2\Delta'$.
  \end{itemize}
  But now we can get the desired samples by ``resampling'' the outputs from
  \Cref{alg:PS,alg:DEC,alg:SR}.
  In particular, for each vertex $v \in V$, and each color $c \in L(v)$
  from the run of \Cref{alg:PS} with guess $\Delta'$, we keep $c$ with
  probability $\Delta' / \Delta$.
  Since we are only dropping colors from the palettes, this process only
  removes some edges from the conflict graph.
  The samples of \Cref{alg:DEC} are adapted in a similar manner. Finally, the
  set of sampling rates $R$ in \Cref{alg:SR} for $2\Delta'$ is a superset of
  that for a (hypothetical) run with the correct guess of $\Delta$, so we can
  just ignore the vectors corresponding to unused sample rates.%
  \footnote{Technically, there is no need to even run \Cref{alg:SR} separately
  for different guesses of $\Delta$.}

  Hence at the end of the stream we know $\Delta$ and can adapt the samples
  as required, so the coloring procedure in the next section can proceed as
  normal.
\end{remark}

\newcommand{\gap}{\ensuremath{ \textsf{gap} }}
\newcommand{\assign}{\ensuremath{ \textsf{assign} }}
\newcommand{\lose}{\ensuremath{ \textsf{lose} }}
\newcommand{\Present}{\ensuremath{ \textsf{Present} }}
\newcommand{\present}{\ensuremath{ \textsf{present} }}
\newcommand{\Blocked}{\ensuremath{ \textsf{Blocked} }}

\newcommand{\Bin}[2]{\ensuremath{ \mathcal{B}\paren{#1, #2} }}

\newcommand{\ColMalg}{\textnormal{\texttt{colorful-matching}}\xspace}
\newcommand{\ColUHCC}{\textnormal{\texttt{unholey-critical-coloring}}\xspace}
\newcommand{\ColUHFC}{\textnormal{\texttt{unholey-friendly-coloring}}\xspace}

\section{The Coloring Procedure}\label{sec:coloring}

We now describe the coloring procedure that we use to find a $\Delta$-coloring of the graph. 
This procedure is agnostic to the input graph, in the sense that we run it after
processing the stream, and it only uses the information we gathered in the previous section, listed in~\Cref{sec:info} (throughout, we condition on the high probability event that the correct information is collected from the stream).

The general framework in our coloring procedure is the following:
We will maintain a proper partial $\Delta$-coloring $\myC \colon V \to \bracket{\Delta} \cup \cbrac{\perp}$ (as defined in~\Cref{sec:prelim}). 
We then go through different \textbf{phases} in the coloring algorithm and each phase updates $\myC$ by coloring certain subsets of vertices, say, (a subset of) sparse vertices, or certain almost-cliques. 
These new colorings are typically going to be extensions of $\myC$  but in certain cases, we crucially have to go back and ``edit'' this partial coloring, i.e., come up with a new proper partial coloring 
which is no longer an extension the  current one. Eventually, we will color all the vertices of the graph and end up with a proper $\Delta$-coloring. 

In the following, we present the order of the phases of our coloring procedure and the task we expect from each one. 
Each phase shall use a different list of colors $L_1(\cdot),\cdots,L_6(\cdot)$ computed by $\PSalg$ (\Cref{alg:PS}) when updating the partial coloring. We also note that the order 
of running these phases is crucial as some of them present further guarantees for subsequent phases, and some of them need to assume certain properties of the current partial coloring which will no longer remain true
if we change the order of phases. 

On the high level, the coloring procedure is as follows (\Cref{tab:phases} give a summary of which combination of almost-cliques are handled in which phase). 

\begin{itemize}
	\item \textbf{Phase 1 -- One-Shot Coloring (\Cref{sec:one-shot}):} We use the single color sampled in $L_1(v)$ for every vertex $v \in V$ to color a large fraction of vertices. The effect of this coloring is that it ``sparsifies'' the graph for 
	sparse vertices. We note that this part is standard and appears in many other coloring results starting from, to our knowledge,~\cite{MolloyR97}; see~\cite{AssadiCK19a} for more details.   	
	
	\item \textbf{Phase 2 -- Lonely (or Social) Small Almost-Cliques (\Cref{sec:lonely}):} Recall that from Part~\eqref{p5} of~\Cref{sec:info}, we have a list of $\Klonely$ of almost-cliques that contains all lonely almost-cliques and potentially some social ones. 
	We can easily also identify which of these almost-cliques are small (\Cref{def:scl-almost-clique}) based on their size. 
	
	We will color all these small almost-cliques in $\Klonely$ by colors in $L_2(\cdot)$. This requires a novel argument that uses the facts that: $(i)$ these almost-
	cliques are ``loosely connected'' to outside (no ``high degree'' neighbor, formally friend vertices, in their neighborhood), and $(ii)$ the coloring outside only used a limited set of colors, namely, is sampled from $L_1(\cdot)$ and $L_2(\cdot)$
	so far and is thus not ``too adversarial'' (recall the discussion we had in~\Cref{lesson3} regarding necessity of such arguments). Moreover, the coloring in this phase is an extension of the last one. 
	
	\item \textbf{Phase 3 -- Sparse Vertices (\Cref{sec:sparse}):} We then conclude the coloring of all sparse vertices using the sampled lists $L_3(v)$ for every sparse vertex $v \in V$ (by Part~\eqref{p3} of~\Cref{sec:info}, we know these vertices). 
	This part is also a standard argument as a continuation of Phase 1. But to apply this standard argument, we use the fact that even though we interleaved the standard approach with Phase 2 in the middle, since that coloring 
	was only an extension of Phase 1 (meaning it did not \emph{recolor} any vertex colored in Phase 1), the argument still easily goes through. 
	
	The coloring in this phase is also an extension of the last one. However, now that all sparse vertices are colored, we go ahead and remove the color of any vertex which is not sparse and nor is colored by Phase 2 
	(these are remnants of one-shot coloring in Phase 1 and we no longer need
	them now that all sparse vertices are colored). This is just to simplify the analysis for later parts.

	It is worth mentioning that this interleaving of Phase 2 in the middle of Phase 1 and 3 is crucial for our arguments (this is the chicken-and-egg problem mentioned in~\Cref{sec:tech-alg}): the lists $L_3(\cdot)$ used in Phase 3 are much larger than the rest and thus the ``not-too-adversarial'' property of coloring of 
	outside vertices in Phase 2 will no longer be guaranteed had we changed the order of Phase 2 and 3; at the same time, changing the order of Phase 1 and 2 will also destroy the ``sparsification'' guarantee provided by Phase 1 for sparse vertices.

 \def\arraystretch{1.5}
  
 \begin{table}[t!]
        \centering
        \begin{tabular}{|l|c|c|c|c|c|c|c|}
            \cline{1-7}
          \multicolumn{1}{|c|}{\multirow{2}{*}{\textbf{Type}}} &\multicolumn{3}{c|}{\emph{Holey}} & \multicolumn{3}{c|}{\emph{Unholey}}\\
            \cline{2-7}
             & \emph{Friendly} & \emph{Social} & \emph{Lonely} & \emph{Friendly} & \emph{Social} & \emph{Lonely} \\
            \cline{1-7}
            \emph{Small} & Phase 4 & Phases 2 or 4 & Phase 2 & Phase 6 & Phases 2 or 6 & Phase 2  \\ 
              \cline{1-7}
            \emph{Critical} & Phase 4 & Phase 4 & Phase 4 & Phase 5 & Phase 5 & Phase 5  \\ 
              \cline{1-7}
            \emph{Large} & Phase 4 & Phase 4 & Phase 4 & -- & -- & --  \\ 
       \hline
        \end{tabular}
          \caption{A list of all combination of different almost-cliques together with the phase of our coloring procedure that is responsible for handling them. Note that by~\Cref{obs:critical-almost-clique}, 
          there are no unholey large almost-cliques. The sparse vertices are handled in Phase 1 and Phase 3. Moreover, Phase 1 may color some vertices of lonely (or social) small almost-cliques that we are \emph{not} allowed to recolor (we can
          recolor all the other remnants of Phase 1 after Phase 3). 
          \label{tab:phases}}

    \end{table}
    	
	\item \textbf{Phase 4 -- Holey Almost-Cliques (\Cref{sec:holey}):} The next step is to color holey almost-cliques (\Cref{def:holey-almost-clique}), i.e., the ones with $\Omega(\eps\Delta)$ non-edges inside them, using colors sampled in $L_4(\cdot)$. 
	 We note that we actually do \emph{not} know which almost-cliques are holey and which ones are not.%
\footnote{Technically, we could have designed a semi-streaming
	algorithm that also recovers this information about the decomposition (at least approximately). However, as we explain next, this is not needed.} Instead, we simply run this phase over all remaining almost-cliques and argue 
	that all the holey ones (and possibly some other ones) will get fully colored as desired. 
	
	The proof of this phase is a simple generalization of a similar proof used in the palette sparsification theorem of~\cite{AssadiCK19a}, which even though
	quite technical, does not involve much novelty from us in this work. The coloring in this phase is an extension of the last one. 
	
	\item \textbf{Phase 5 -- Unholey Critical Almost-Cliques (\Cref{sec:unholey-critical}):} By~\Cref{obs:critical-almost-clique}, all large almost-cliques are holey. Thus, the largest remaining almost-cliques at this point are unholey critical almost-cliques. 
	We know these almost-cliques in $\Kcritical$ by Part~\eqref{p6} of~\Cref{sec:info} (the holey ones are already colored and it is possible, yet unlikely, that even some of unholey ones are also colored in Phase 4). 
	We color the remainder of $\Kcritical$ now. 
	
	In the previous phases, we solely colored vertices from lists $L(\cdot)$ sampled  in \PSalg. But we already know that such an approach is just not going to work for unholey critical almost-cliques (recall the
	 example in~\Cref{fig:ps-fail} discussed in~\Cref{sec:tech-lower}). This is the first time we deviate from this approach (and thus deviate from palette sparsification-type arguments). 
	 
	We now will use the critical-helper structures (\Cref{def:helper-critical})---which our 
	streaming algorithm collected in Part~\eqref{p6} of~\Cref{sec:info}---and a new ``out of palette''  coloring argument, wherein we color one of the vertices of the almost-clique using a color not sampled for it, so that 
	\emph{two} vertices of the almost-clique are colored the \emph{same}.  We then show that this already buys us enough flexibility to color the remaining vertices using lists $L_5(\cdot)$ of vertices similar to Phase 4. 
	The coloring in this phase is also an extension of the last one.

	\item \textbf{Phase 6 -- Unholey Friendly (or Social) Small Almost-Cliques (\Cref{sec:unholey-friendly}):} It can be verified, after a moment of thought or better yet by consulting~\Cref{tab:phases}, that the only almost-cliques remained to color 
	are the ones that are unholey, small, and also not lonely. They are perhaps the ``most problematic'' ones and are handled last.%
\footnote{It is quite natural to ask if these almost-cliques are the ``hardest'' to color, why do we wait to color them after
	everything else, at which time, our hands might be too tied? There are two closely related answers: $(i)$ they may just be connected to each other (or rather the graph can only consists of these types of almost-cliques) and 
	thus we anyway have to deal with at least one of them after having colored the rest of the graph; and $(ii)$ even though they are ``hard'' to color, they are somewhat ``more robust'' also, compared to say Phase 2 almost-cliques, in that we can 
	color them even when their outside neighbors are colored adversarially by using a key recoloring step.} These almost-cliques also require the ``out of palette'' coloring argument used in Phase 5, but even this is 
	not enough for them (we already discussed this regarding~\Cref{fig:hard} in~\Cref{sec:tech-alg}). In particular, unlike Phase 4 and 5 that allowed for coloring of the almost-cliques even in the presence of adversarial coloring of outside vertices, 
	this simply cannot be true for this phase (as shown in~\Cref{fig:hard}); at the same time, unlike Phase 2 almost-cliques, we cannot hope for  a ``random'' coloring of outside vertices.  
	
	To handle these almost-cliques, we rely on our friendly-helper structures (\Cref{def:helper-friendly}) combined with a \textbf{recoloring step}: in particular, we recolor one vertex outside of the almost-clique 
	using the sampled lists $L_6(\cdot)$ and show that this recoloring, plus another out of palette coloring argument, again buys us enough flexibility to color these almost-cliques also from lists $L_6(\cdot)$ (we note that this 
	out of palette coloring argument is in fact different from the one used in Phase 5). Finally, due to the recoloring step, the coloring in this phase is no longer an extension of the last one. 
\end{itemize}

After all these phases, we have finished coloring all the vertices.
In other words, we now have $\Delta$-coloring of the entire graph as desired.
This will then conclude the proof of~\Cref{res:main}.

In the rest of this section, we go over each of these phases in details and present the algorithm and 
analysis for each one (postponing the less novel ones to~\Cref{app:missing-proofs}). We again emphasize that to find the final $\Delta$-coloring, this phases must be executed \emph{in this particular order}.

\subsection{Phase 1: One-Shot Coloring}\label{sec:one-shot}

We start with the standard \emph{one-shot coloring} algorithm used extensively in the coloring literature (to the best of our knowledge, this idea has appeared first in~\cite{MolloyR97}). 
The purpose of this algorithm is to color many \emph{pairs} of vertices in the neighborhood of sparse vertices using the \emph{same} color (recall that neighborhood of sparse vertices contains many non-edges which can potentially be colored the same). This then effectively turn the sparse vertices into 
``low degree'' ones and reduces the problem from $\Delta$-coloring to $O(\Delta)$-coloring which is much simpler. 

\begin{algorithm}
	\KwIn{The vertex set $V$, the conflict-graph $H$, and the list $L_1(v)$ for every vertex $v \in V$.} 
    \begin{enumerate}[label=$(\roman*)$]
        \item For every vertex $v \in V$:
        \begin{itemize}
            \item \emph{Activate} $v$ independently with probability $1 / \alpha$ for parameter $\alpha = \Theta(1)$ in~\Cref{eq:parameters}. 
            \item If $v$ is activated, set $x(v)$ to be the only color in
            $L_1(v)$, otherwise set $x(v) = \perp$.
        \end{itemize}
        \item For every vertex $v \in V$, set $\myC_1(v) = x(v)$ if for all $u \in N_H(v)$, $x(v) \neq x(u)$; otherwise, set $\myC_1(v) = \perp$. In words, 
       any activated vertex $v$ keeps its color $x(v)$ iff it is not used anywhere in its neighborhood.
    \end{enumerate}
\caption{The $\OneShot$ algorithm.}\label{alg:OS}
\end{algorithm}

We have the following basic observation about the correctness of $\OneShot$. 

\begin{observation}\label{obs:one-shot}
	The partial coloring $\myC_1$ computed by $\OneShot$ is a proper partial $\Delta$-coloring in $G$. 
\end{observation}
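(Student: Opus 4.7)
The plan is to verify the two parts of the definition of a proper partial $\Delta$-coloring from \Cref{sec:prelim}: that $\myC_1$ is a function $V \to [\Delta] \cup \set{\perp}$, and that no edge of $G$ is monochromatic under $\myC_1$ with a non-$\perp$ color on both endpoints.

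First, I would observe that by construction of $\OneShot$, for any $v \in V$ we have $x(v) \in L_1(v) \cup \set{\perp} \subseteq [\Delta] \cup \set{\perp}$, and $\myC_1(v) \in \set{x(v), \perp}$, so $\myC_1$ indeed takes values in $[\Delta] \cup \set{\perp}$.

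The main step is to show properness. Suppose towards contradiction that there is an edge $(u,v) \in E(G)$ with $\myC_1(u) = \myC_1(v) = c \neq \perp$. By step $(ii)$ of \Cref{alg:OS}, having $\myC_1(u) = c$ forces $x(u) = c$, and similarly $x(v) = c$. In particular $u$ and $v$ were both activated, and $c \in L_1(u) \cap L_1(v)$. Since $L_1(\cdot) \subseteq L(\cdot)$, this gives $L(u) \cap L(v) \neq \emptyset$; combined with $(u,v) \in E(G)$, the definition of the conflict graph $H$ in \Cref{alg:PS} then guarantees that $(u,v)$ is an edge of $H$, i.e., $u \in N_H(v)$. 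But then the condition in step $(ii)$ of $\OneShot$ that defines $\myC_1(v) = x(v)$ requires $x(u) \neq x(v)$ for every $u \in N_H(v)$, contradicting $x(u) = x(v) = c$.

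This is essentially a one-line argument given how the conflict graph was designed; the only thing to be careful about is the direction of the implication (we need that monochromatic edges in $G$ are detected in $H$, which is exactly what $H$ is built to capture). There is no real obstacle — the statement is a sanity check that $\PSalg$ stores the ``right'' edges so that any subsequent coloring drawn from the sampled lists $L(\cdot)$ automatically respects $G$'s edges whenever it respects $H$'s edges.
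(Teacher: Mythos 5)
Your proof is correct and takes essentially the same approach as the paper: both rely on the observation that $L_1(\cdot) \subseteq L(\cdot)$, so any would-be monochromatic edge of $G$ is already present in the conflict graph $H$, where step $(ii)$ of $\OneShot$ explicitly uncolors it. The paper merely phrases this in two steps (properness in $H$, then lifting to $G$) rather than as a single contradiction.
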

\begin{proof}
	It is immediate to verify that $\myC_1$ is a proper partial coloring in $H$ simply because we remove both colors of any monochromatic edge. 
	To see this also holds in $G$, note that for any edge $(u,v) \in G$, if $\myC_1(u) = \myC_1(v) \neq \perp$, then it means that $x(u) = x(v)$ 
	which in particular also means $L_1(u) \cap L_1(v) \neq \emptyset$. Thus, 
	the conflict graph $H$ contains the edge $(u,v)$ also, a contradiction with $\myC_1$ being a proper partial coloring of $H$. 
\end{proof}

We now get to the main property of $\OneShot$. The effect of this partial coloring is that the neighborhood of every \emph{sparse}
vertex $v \in V$ becomes {abundant} with available colors (compared to the remaining degree of $v$). 
In particular, recall the definition of $\cn{v}{\myC_1}$ as the colored
degree of a vertex $v$ and $\avail{v}{\myC_1}$ as the number of colors 
available to $v$ with respect to a partial coloring $\myC_1$ (defined in~\Cref{sec:prelim}).
Then we have the following guarantee for the $\OneShot$ algorithm:

\begin{lemma}\label{lem:os-gap}
    For every sparse vertex $v \in \Vsparse$, in the partial coloring $\myC_1$
    of $\OneShot$, 
    \[
        \avail{v}{\myC_1} > (\deg{(v)} - \cn{v}{\myC_1}) +
        \frac{\eps^2 \cdot \Delta}{2\alpha}
    \]
    with high probability, where the randomness is only over the choice of the
    lists $L_1(v)$.
\end{lemma}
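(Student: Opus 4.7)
I would begin by rewriting the desired gap in a more transparent form. Setting $t(v) := |\{\myC_1(u) : u \in N(v), \myC_1(u) \neq \perp\}|$ and $s(v) := \cn{v}{\myC_1} - t(v)$ (the number of ``repeated'' colors that $\myC_1$ places inside $N(v)$, i.e., the color-savings), one checks that $\avail{v}{\myC_1} = \Delta - t(v)$ and hence
\[
\avail{v}{\myC_1} - \paren{\deg(v) - \cn{v}{\myC_1}} \;=\; \paren{\Delta - \deg(v)} + s(v).
\]
If $\deg(v) \leq \Delta - \eps^2 \Delta/(2\alpha)$ the first term alone already exceeds the target, so I may assume $\deg(v) > \Delta - \eps^2 \Delta/(2\alpha)$ and reduce the problem to proving $s(v) = \Omega(\eps^2 \Delta)$ with high probability.

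Next I would exploit the $\eps$-sparsity of $v$ to extract many candidate savings. The non-edge graph on $N(v)$ has at least $\eps^2 \Delta^2/2$ edges and sits on at most $\Delta$ vertices, so a greedy maximal matching (each matched edge forbids at most $2(\Delta-1)$ others) yields a matching $\cF$ of non-edges in $N(v)$ with $|\cF| \geq \eps^2 \Delta / 4$. For each pair $\{u,w\} \in \cF$, define $X_{u,w}$ to be the indicator of the event that both $u,w$ are activated by $\OneShot$, draw the same color from their $L_1$-lists, and both retain it in $\myC_1$. Because $\cF$ is vertex-disjoint, each successful pair contributes at least one \emph{distinct} save, and hence $s(v) \geq Y := \sum_{\{u,w\} \in \cF} X_{u,w}$.

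A short direct computation lower-bounds each $P(X_{u,w}=1)$: activations and $L_1$-draws are independent across vertices, so $P(\text{both activated}) = 1/\alpha^2$, $P(\text{same color}\mid\text{both activated}) = 1/\Delta$, and because $u,w$ are non-adjacent we have $|N(u)\cup N(w)| \leq 2\Delta$, giving $P(\text{both retain}\mid\text{both activated with matching color}) \geq (1-1/(\alpha\Delta))^{2\Delta} \geq 1-2/\alpha$. Summing over $\cF$ produces $E[Y] = \Omega(\eps^2 \Delta)$, with hidden constants that depend only on the absolute constant $\alpha$. Since $\alpha = 10^3$ is fixed, this comfortably dominates the target $\eps^2 \Delta/(2\alpha)$.

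The main technical obstacle is the concentration of $Y$ around $E[Y]$: the $X_{u,w}$'s are \emph{not} independent, because a single third vertex $z$ can be a common neighbor of many matched endpoints and simultaneously invalidate their retention events (whenever $z$'s $L_1$-draw coincides with those pairs' shared color). I would handle this by Talagrand's inequality: $Y$ is a function of the independent activation/color choices of the $O(\Delta^2)$ vertices in the $2$-neighborhood of $v$; flipping one such choice changes $Y$ by at most a small quantity (it can add or remove a pair $\{u,w\}$ in which the flipped vertex participates, or toggle blocking for pairs whose shared color equals the flipped color); and any outcome $Y \geq k$ is certifiable by inspecting the choices of $O(k)$ vertices (the successful matched endpoints together with their relevant blocker-free neighborhoods). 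A careful accounting of per-vertex influence — this is the delicate step — yields $Y \geq E[Y]/2$ with high probability, which in turn gives the required $s(v) > \eps^2\Delta/(2\alpha)$. A final union bound over $v \in \Vsparse$ completes the proof.
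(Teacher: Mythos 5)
Your rewrite of the desired gap as $(\Delta - \deg(v)) + s(v)$ is correct and is morally the same reduction the paper makes (their random variable $\gap$ is exactly a lower bound on your $s(v)$). The problem is in the very next step: \emph{restricting to a matching of non-edges loses a factor of $\Delta$ that cannot be recovered.}

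Concretely, a matching $\cF$ in a graph on $\leq \Delta$ vertices has $\card{\cF} \leq \Delta/2$, no matter how many non-edges there are; your bound $\card{\cF} \geq \eps^2\Delta/4$ is therefore of the right magnitude (and is the best a matching can give). But each matched pair $\{u,w\}$ collides \emph{and} both retain with probability $\Theta(1/(\alpha^2\Delta))$ (the $1/\Delta$ comes from requiring $L_1(u) = L_1(w)$). Multiplying gives
\[
\Exp[Y] \;\approx\; \frac{\eps^2\Delta}{4}\cdot\frac{1}{\alpha^2\Delta} \;=\; \Theta\!\left(\frac{\eps^2}{\alpha^2}\right),
\]
which is $O(1)$; it never scales with $\Delta$ and is off from the target $\eps^2\Delta/(2\alpha)$ by a factor of $\Theta(\alpha\Delta)$. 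Your claim ``$\Exp[Y] = \Omega(\eps^2\Delta)$'' is where this slips: the $1/\Delta$ in the collision probability exactly cancels the $\Delta$ in $\card{\cF}$. Talagrand cannot help here—one cannot concentrate around a mean that is already too small.

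The paper avoids the matching precisely because it throws away $\Theta(\eps^2\Delta^2)$ non-edges and keeps only $\Theta(\eps^2\Delta)$ of them. It instead defines $\gap'$ as the number of \emph{colors} $c$ that are assigned to \emph{exactly two} vertices of $N(v)$ (which by the conflict check must form a non-edge) and retained by both. This is a sum of indicators $\gap'_{c,f}$ over \emph{all} $\Delta$ colors and \emph{all} $\Theta(\eps^2\Delta^2)$ non-edges $f$; for a fixed color these indicators are mutually exclusive, so summing over $f$ is legitimate without any matching restriction. Because each $\gap'_{c,f}$ has probability $\Theta(1/\Delta^2)$ (up to $\alpha$-constants), the expectation $\Delta \cdot \eps^2\Delta^2/2 \cdot \Theta(1/\Delta^2) = \Theta(\eps^2\Delta)$ retains the needed $\Delta$. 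This is the idea you are missing: instead of finding many vertex-disjoint non-edges, one counts \emph{colors with two retaining witnesses} and uses exactness-of-two to avoid overcounting. If you want to keep something like your cleaner $Y$ instead, you would need a ``rainbow saving'' count (one saved pair per color) rather than a matching, which is precisely what $\gap'$ is. After that, the concentration step does need more care than a single Talagrand application on $Y$: the paper splits $\gap = \assign - \lose$ because $\assign$ depends on only $\Delta$ coordinates (first tail bound) while $\lose$ depends on $\Theta(\Delta^2)$ coordinates but is $2$-Lipschitz and $3$-certifiable (second tail bound)—your one-shot Talagrand sketch would have to make a similar decomposition explicit, since the naive Lipschitz constant over the $2$-hop neighborhood is too weak.
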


The proof is postponed to \Cref{sec:app-one-shot}.

\subsection{Phase 2: Lonely (or Social) Small Almost-Cliques}\label{sec:lonely}

In this section, we will describe an algorithm that extends the partial
coloring $\myC_1$ of Phase 1 to all small almost-cliques that are in $\Klonely$ of Part~\eqref{p5} of~\Cref{sec:info} (which in particular, contains all lonely almost-cliques and no friendly almost-clique). 
We are going to work with palette graphs introduced in~\Cref{sec:palette-graph} in this phase. The algorithm is simply as follows. 

\begin{algorithm}
	\KwIn{A proper partial $\Delta$-coloring $\myC$, a small almost-clique $K \in \Klonely$, the conflict-graph $H$, and the list $L_2(v)$ for every vertex $v \in K$.}
	
    \begin{enumerate}[label=$(\roman*)$]
        \item Construct the sampled palette graph $\Gsample = (\cL,\cR,\Esample)$ of the almost-clique $K$, $\myC$, and $\SS := \set{L_2(v) \mid v \in K}$ (according to \Cref{def:sampled-palette-graph}).  
        \item Find an $\cL$-perfect matching $\MM$ in $\Gsample$ and output `fail' if it does not exists. Otherwise, update $\myC(v) = \MM(v)$ where $\MM(v)$ denotes the color corresponding to the color-node matched to the vertex-node $v$ by the matching $\MM$. 
        \end{enumerate}
        \caption{The algorithm of Phase 2 for coloring each small almost-clique in $\Klonely$. }\label{alg:phase2}
\end{algorithm}

In Phase 2, we start by setting $\myC$ to be equal to $\myC_1$ of Phase 1 and then successively run~\Cref{alg:phase2} on each small almost-clique $K \in \Klonely$ while updating $\myC$ as described by the algorithm. 
At the end, we let $\myC_2$ denote the final partial $\Delta$-coloring.

\begin{lemma}\label{lem:phase2}
    With high probability, $\myC_2$ computed by Phase 2 is a proper partial $\Delta$-coloring in $G$ that is an extension of $\myC_1$ and colors all small almost-cliques in $\Klonely$. 
  
     The randomness in this lemma is only over the randomness of $\OneShot$ (activation probabilities) and choice of the lists $L_1(v)$ and $L_2(v)$ for all $v \in V$.
\end{lemma}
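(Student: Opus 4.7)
The plan is to process the almost-cliques of $\Klonely$ in sequence, maintaining $\myC$ as the running partial coloring (initialized to $\myC_1$), and prove by induction that each invocation of~\Cref{alg:phase2} succeeds with probability $1 - n^{-\omega(1)}$. A union bound over the $O(n/\Delta)$ almost-cliques in $\Klonely$ then yields the lemma. Since \Cref{alg:phase2} only assigns colors to vertices currently in $\cL$, the final coloring $\myC_2$ is automatically an extension of $\myC_1$, and any $\cL$-perfect matching in $\Gsample$ corresponds to a proper extension because by construction each $v \in \cL$ is matched to a color in $\Avail{v}{\myC}$.

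The core task is therefore to show that, for each small $K \in \Klonely$, the sampled palette graph $\Gsample$ contains an $\cL$-perfect matching with high probability over $\SS := \set{L_2(v) \mid v \in K}$. I propose to establish this by verifying the hypotheses of~\Cref{lem:rgt} on the base palette graph $\Gbase$ — since $L_2(v)$ samples each color independently at rate $\beta/\Delta$, the induced $\Gsample$ has exactly the sampled-subgraph distribution considered by that lemma. Writing $m := \card{\cL}$, the two easier hypotheses follow from direct bookkeeping. For condition $(i)$, smallness of $K$ gives $\card{\cR} = \Delta - (\text{distinct colors used in } K) \geq \Delta - \card{K} + m \geq m$, while property~\ref{dec:size} of~\Cref{def:almost-clique} yields $\card{K} \geq (1-5\eps)\Delta$, so $\card{\cR} \leq m + 5\eps\Delta$; a Chernoff bound on the $\OneShot$ activations (with activation rate $1/\alpha$) shows $m = \Omega(\Delta)$ with high probability, and hence $\card{\cR} \leq 2m$. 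For condition $(ii)$, a color in $\cR$ is blocked for $v \in \cL$ only by an outside neighbor (colors used inside $K$ are excluded from $\cR$ by definition), and property~\ref{dec:neighbors} of~\Cref{def:almost-clique} bounds these by $10\eps\Delta$, giving $\deg_{\Gbase}(v) \geq \card{\cR} - 10\eps\Delta \geq (2/3)m$.

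The main obstacle will be the expansion condition $(iii)$: for every $A \subseteq \cL$ with $\card{A} \geq m/2$, the number of non-edges of $\Gbase$ between $A$ and $\cR$ must be at most $\card{A}(\card{\cR}-m) + m/4$. The naive per-vertex bound of $10\eps\Delta$ blocked colors gives $\Theta(\eps \Delta \cdot m)$ non-edges, which exceeds the $m/4$ slack. The key leverage is twofold. First, because $K \in \Klonely$, every outside neighbor $u$ of $K$ is a stranger, so $\card{N(u) \cap K} < \Delta/\beta$; hence a single colored outside vertex can contribute a blocked pair $(v,c)$ to at most $\Delta/\beta$ choices of $v \in A$. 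Second, the only colors already assigned to outside vertices by $\myC$ come from $L_1(\cdot)$ or from $L_2(\cdot)$ of previously-processed cliques, whose sampling rates are $1/\Delta$ and $\beta/\Delta$ respectively; combined with the $1/\alpha$ activation probability of $\OneShot$, the probability that an outside vertex is colored by a specific $c \in \cR$ is $O(\beta/\Delta)$. Multiplying these three factors by the crude count of outside neighbors yields an expected number of blocking pairs of order $O(\card{A} \cdot \eps\beta)$, which is $o(m)$; a Chernoff-type concentration, whose large-deviation budget is supplied by $\beta = 100\log n$, then drives the deviation below $m/4$ with probability exceeding $1 - 2^{-\Omega(\Delta)}$, enough to absorb the union bound over the $2^{\card{\cL}}$ subsets $A$.

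Once the three hypotheses of~\Cref{lem:rgt} are established, applying it with $\delta = n^{-10}$ — noting that the sampling rate $\beta/\Delta$ exceeds $(20/m)\paren{\log m + \log(1/\delta)}$ since $m = \Omega(\Delta)$ and $\beta = 100\log n$ — produces the desired $\cL$-perfect matching in $\Gsample$ with probability $\geq 1 - n^{-10}$. A subtle but essential point when chaining the inductive argument across $K \in \Klonely$ is that the randomness used by the matching argument for $K$ consists only of $L_2(v)$ for $v \in K$, which is independent of the $L_1$ and $L_2$ randomness determining the partial coloring of the outside of $K$ (the latter was already ``consumed'' by $\OneShot$ and by earlier cliques); this independence is what makes the expansion analysis of the previous paragraph valid conditioned on an arbitrary realization of $\myC$. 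A final union bound over the at most $O(n/\Delta)$ almost-cliques then yields the lemma.
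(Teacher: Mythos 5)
Your high-level scaffolding is sound: initialization at $\myC_1$, the reduction to showing $\Gsample$ has an $\cL$-perfect matching, the observation that $L_2(v)$ for $v\in K$ is independent of the partial coloring outside $K$, and the verification of the first two hypotheses of \Cref{lem:rgt} all match the paper's treatment (\Cref{clm:2-deg-L} proves essentially your condition~(ii)). But your plan fails at condition~(iii), and this failure is structural, not fixable: \Cref{lem:rgt} is simply the wrong tool for Phase~2.

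Condition~(iii) of \Cref{lem:rgt} demands $\sum_{v\in A}\bigl(m-\deg_{\Gbase}(v)\bigr)\leq m/4$ for every $A$ with $\card{A}\geq m/2$; for $\card{A}\approx m$ this forces the \emph{average} vertex-node degree to be within $O(1)$ of $m$. That is wildly false for a holey lonely almost-clique: take $K$ with $\card{K}=\Delta$ where every $v\in K$ has one outside neighbor, all those outside vertices distinct strangers, all colored by $\myC$ with distinct colors. Then $\deg_{\Gbase}(v)=m-1$ for every $v$, so $\sum_{v\in A}(m-\deg_{\Gbase}(v))=\card{A}\geq m/2>m/4$, and condition~(iii) fails even though $\Gbase$ is a complete bipartite graph minus a perfect matching and clearly has an $\cL$-perfect matching. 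Your expectation calculation masks this: the probability that a \emph{fixed} pair $(v,c)$ is a non-edge is $O(\eps\beta)$ as you say, but you then need to sum over $v\in A$ \emph{and} $c\in\cR$, giving $O(\card{A}\cdot\card{\cR}\cdot\eps\beta)=O(\card{A}\cdot\Delta\cdot\eps\beta)$, a factor of $\Delta$ larger than your stated $O(\card{A}\cdot\eps\beta)$; this is $\Theta(\eps\beta\Delta^2)$, far exceeding $m/4$ once $\Delta\gg\log n$. The underlying reason \Cref{lem:rgt} works in Phases~5 and~6 but not here is that those phases handle \emph{unholey} almost-cliques with $\card{K}\geq\Delta$, where each vertex-node has at most $\nondeg{K}{v}$ outside neighbors and $\sum_v\nondeg{K}{v}\leq2t=O(\eps\Delta)$; in Phase~2 the almost-cliques can be holey and small, so aggregate deficiency on the vertex side is unavoidable.

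What the paper does instead is prove a complementary degree bound on the \emph{color} side: every color-node $c\in\cR$ also has $\deg_{\Gbase}(c)\geq3\Delta/4$ (\Cref{clm:2-deg-R}), and this is precisely where the stranger hypothesis earns its keep — each colored outside vertex can block a color for at most $2\Delta/\beta$ vertex-nodes, and the proxy random variable $X_{c,v}$ (nonzero only when $c\in L_1(v)\cup L_2(v)$, which has probability $O(\beta/\Delta)$) lets one concentrate the per-color loss to $O(\Delta/100)$ via Chernoff with range parameter $b=2\Delta/\beta$. With $\Omega(\Delta)$ minimum degree on \emph{both} sides of $\Gbase$, the paper then runs a bespoke Hall-condition union bound directly on $\Gsample$ (\Cref{clm:2-matching}), splitting witness pairs $(S,T)$ by whether $\card{S}\leq2\Delta/3$ (using the $\cL$-side bound) or $\card{S}>2\Delta/3$ (using the $\cR$-side bound). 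This two-sided structure carries no analogue in \Cref{lem:rgt}, which has no color-side hypothesis at all. So the fix you need is not a sharper expectation bound but a different matching lemma; you should prove the $\cR$-side degree bound and then either reproduce the paper's direct Hall argument or state and prove a two-sided variant of \Cref{lem:rgt}. (Separately, your invocation with $\delta=n^{-10}$ also fails numerically — $\beta/\Delta=100\log n/\Delta$ needs to exceed $(20/m)(\log m+\log(1/\delta))\approx220\log n/m\approx220\log n/\Delta$, which it does not; the paper uses $\delta=n^{-3}$ where the margin just holds — but this is cosmetic next to the condition~(iii) gap.)
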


We prove~\Cref{lem:phase2} in the rest of this subsection. 
	The fact that $\myC_2$ is an extension of~$\myC_1$ follows immediately from the definition of the algorithm as for every $K \in \Klonely$, the corresponding 
	$\Gsample$ only contains vertices uncolored by $\myC_1$ and we never change the color of any colored vertex. Moreover, 
	the fact that $\myC_2$ is a proper $\Delta$-coloring follows from the definition of $\Gsample$ as described  in~\Cref{sec:palette-graph} (note that we only need edges in $H$ and not all of $G$ to construct $\Gsample$). 

The main part of the proof in this phase is to show that we actually succeed in coloring all small almost-cliques in $\Klonely$ in this phase, i.e., w.h.p.,~\Cref{alg:phase2} does not  ever return `fail'.

Fix a small almost-clique $K \in \Klonely$. Consider the base palette graph $\Gbase = (\cL,\cR,\Ebase)$ of $K$ and $\myC$ (\Cref{def:palette-graph}) where $\myC$ denotes the partial coloring passed to~\Cref{alg:phase2} when coloring $K$. 
First, note that $\card{\cL} \leq \card{\cR}$  since $\card{K} \leq \Delta$ as $K$ is small (\Cref{def:scl-almost-clique}), and we remove
at most one color-node in $\cR$ per each vertex-node in $\cL$ removed from $K$. Thus, having an $\cL$-perfect matching in $\Gbase$ and $\Gsample$ is not \emph{entirely} out of the question. 
We now establish two other  properties of $\Gbase$  that will allow us to argue that $\Gbase$  has an $\cL$-perfect matching. We will then build on 
these properties to prove the same for $\Gsample$ as well. 

\begin{claim}\label{clm:2-deg-L}
    Every vertex in $\cL$ in $\Gbase$ has degree at least $3\Delta/4$, with high probability.
 \end{claim}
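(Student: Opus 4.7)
The plan is to show that very few colors in $\cR$ can be blocked for $v$ from outside $K$, while $\cR$ itself is nearly all of $[\Delta]$ because hardly any vertex of $K$ has been colored by the time we process $K$ in Phase~2.

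First, I would make the following structural observation: for any $v \in \cL \subseteq K$, a color $c \in \cR$ fails to belong to $\Avail{v}{\myC}$ only if some neighbor of $v$ carries color $c$ under the current partial coloring $\myC$. Since $c \in \cR$ means that no vertex inside $K$ uses $c$, any such blocking neighbor must lie \emph{outside} $K$. By property~\ref{dec:neighbors} of \Cref{def:almost-clique}, $v$ has at most $10\eps\Delta$ neighbors outside $K$, so at most $10\eps\Delta$ colors of $\cR$ are blocked for $v$. Hence $\deg_{\Gbase}(v) \geq |\cR| - 10\eps\Delta$.

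Next I would lower bound $|\cR|$. Let $s$ denote the number of vertices of $K$ that are colored by $\myC$; then at most $s$ distinct colors appear inside $K$, so $|\cR| \geq \Delta - s$. Because distinct almost-cliques in the decomposition are vertex-disjoint and earlier iterations of Phase~2 only color vertices inside \emph{other} almost-cliques, the only vertices of $K$ that can already be colored are those colored by $\OneShot$ in Phase~1. In $\OneShot$, each vertex is activated independently with probability $1/\alpha$, and a vertex that is not activated is certainly not colored. Since $|K| \geq (1 - 5\eps)\Delta = \Omega(\Delta)$ and we assume $\Delta = \Omega(\log^5 n)$, \Cref{prop:chernoff} applied to the number of activated vertices in $K$ yields $s \leq 2|K|/\alpha \leq 2\Delta/\alpha$ with high probability.

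Combining these two steps,
\[
    \deg_{\Gbase}(v) \;\geq\; \Delta \;-\; \frac{2\Delta}{\alpha} \;-\; 10\eps\Delta.
\]
With $\alpha = 10^{3}$ and $\eps = 10^{-8}/\log n$ as in \Cref{eq:parameters}, the right-hand side is $(1 - o(1))\Delta$, which is comfortably above $3\Delta/4$. A union bound over all vertex-nodes $v$ in all almost-cliques $K \in \Klonely$ upgrades this to the uniform high-probability statement. No single step looks hard; the only point requiring a bit of care is to articulate why only $\OneShot$ can have colored any vertex of $K$ at this stage, so that the analysis does not have to account for colorings produced by earlier Phase~2 iterations inside $K$ itself.
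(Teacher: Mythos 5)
Your proof is correct and takes essentially the same approach as the paper's: bound the number of colored vertices of $K$ via the $\OneShot$ activation probability $1/\alpha$ and a Chernoff bound, bound the colors blocked from outside $K$ by property~\ref{dec:neighbors} of \Cref{def:almost-clique}, and subtract both from $\Delta$. You decompose the bound slightly more carefully than the paper (separating the shrinkage of $\cR$ from blocked colors within $\cR$, and explicitly noting that earlier Phase 2 iterations touch only disjoint almost-cliques), but the underlying argument is identical.
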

    \begin{subproof}
		Fix a vertex-node $v \in \cL$ (and $v \in K$). By property~\ref{dec:neighbors} of \Cref{def:almost-clique}, $v$ has
		at most $10\eps \Delta$ neighbors outside $K$, each of which can rule out at most one color for $v$. 
		As for neighbors inside $K$, recall that each vertex $u \in K$ activates in \Cref{alg:OS} with probability
	$1 / \alpha$. This implies that in expectation, at most $\Delta / \alpha$
	of them can receive a color in $\myC_1$ (and hence $\myC$). By an application
	of Chernoff bound (\Cref{prop:chernoff}), at most $2 \Delta / \alpha < \Delta / 100$ vertices
	of $K$ are colored by $\myC_1$ with high probability (for the choice of $\alpha$ in~\Cref{eq:parameters}). 
	
	Hence, in total,
	only $10\eps\Delta + \Delta / 100$ colors are ruled out for $v$ by $\myC$ with high probability. Given the value of $\eps$ in~\Cref{eq:parameters}, 
	this means $\deg_{\Gbase}(v) > 3\Delta/4$ with high probability. 
	    \end{subproof}

The following claim---albeit  simple to prove after having setup the process carefully---is the heart of the argument in this phase. Roughly speaking, this claim allows us to treat the coloring outside 
the almost-clique as ``not too adversarial''.

\begin{claim}\label{clm:2-deg-R}
     Every vertex in $\cR$ in $\Gbase$ has degree at least $3\Delta/4$, with high probability.
 \end{claim}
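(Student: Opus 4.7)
The plan is to lower-bound $\deg_{\Gbase}(c)$ by $|\cL| - Y_c$, where
\[
    Y_c := \card{\cbrac{v \in \cL : \exists\, u \in N(v) \setminus K \text{ with } \myC(u) = c}}.
\]
Indeed, a vertex-node $v \in \cL$ fails to be adjacent to $c$ in $\Gbase$ only if some \emph{outside} neighbor of $v$ carries the color $c$, because $c \in \cR$ means $c$ is unused on $K$. First I would lower-bound $|\cL|$: at this stage, the only vertices of $K$ colored by $\myC$ come from $\myC_1$ (prior Phase~2 iterations colored \emph{disjoint} almost-cliques), and an identical Chernoff argument to the one in~\Cref{clm:2-deg-L} shows that w.h.p.\ at most $\Delta/100$ vertices of $K$ receive a color from $\myC_1$. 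Combined with $|K| \geq (1-5\eps)\Delta$, this gives $|\cL| \geq 0.98\Delta$, so it suffices to establish $Y_c \leq \Delta/5$ w.h.p.\ for every $c \in \cR$.

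The structural input is that, since $K \in \Klonely$,~\Cref{lem:friend-stranger-tester,lem:lonely-friendly-tester} ensure w.h.p.\ that \emph{every} outside neighbor $u$ of $K$ satisfies $|N(u) \cap K| < 2\Delta/\beta$ (otherwise $u$ would be classified as a friend, placing $K$ in $\Kfriendly$). I would then upper-bound $Y_c \leq W_c := \sum_{u \in N(K) \setminus K,\; \myC(u) = c} |N(u) \cap K|$ by counting-with-multiplicity. Crucially, any outside vertex $u$ colored in $\myC$ at this stage has received its color either from $\myC_1$ (using $L_1(u)$) or from a previous Phase~2 iteration on a clique in $\Klonely$ (using $L_2(u)$), so $\myC(u) = c$ implies $c \in L_1(u) \cup L_2(u)$. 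This yields
\[
    W_c \leq \tilde W_c := \sum_{u \in N(K) \setminus K} \paren{\II[c \in L_1(u)] + \II[c \in L_2(u)]} \cdot |N(u) \cap K|,
\]
a sum of \emph{independent} random variables (the palettes are sampled independently across $u$ and across the lists $L_1, L_2$), each summand of range at most $4\Delta/\beta$.

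The expectation computation is now routine: by property~\ref{dec:neighbors} of~\Cref{def:almost-clique}, the number of edges leaving $K$ is at most $|K| \cdot 10\eps\Delta \leq 10\eps\Delta^2$, so $\E[\tilde W_c] \leq ((1+\beta)/\Delta) \cdot 10\eps\Delta^2 \leq 20\eps\beta\Delta$, which by~\Cref{eq:parameters} is of order $10^{-5}\Delta$ and thus well below the target $\Delta/5$. Applying Chernoff (\Cref{prop:chernoff}) with range $b = 4\Delta/\beta$ yields an exponent of order $(\Delta/5)/b = \beta/20 = 5\log n$, hence $\Pr[\tilde W_c > \Delta/5] \leq n^{-5}$, and a union bound over the $\leq \Delta$ colors $c \in \cR$ and over all almost-cliques in $\Klonely$ completes the argument. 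The main subtlety to watch for is that the Phase~2 colors on previously processed almost-cliques are produced by matchings and are therefore highly correlated across vertices; the whole point of upper-bounding via $\II[c \in L_j(u)]$ is to replace these correlated color-assignments by independent palette-sampling indicators, which is precisely what enables the Chernoff step.
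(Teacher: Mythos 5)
Your proof is correct and follows essentially the same route as the paper: you lower-bound $\deg_{\Gbase}(c)$ via $|\cL|$ minus a count of vertex-nodes blocked through outside neighbors, replace the correlated event $\myC(u)=c$ by the dominating independent palette-sampling indicators $\mathbb{1}[c \in L_1(u)]$ and $\mathbb{1}[c \in L_2(u)]$, use the no-friends property of $K \in \Klonely$ to cap each summand at $O(\Delta/\beta)$, and finish with Chernoff. The only cosmetic difference is that you sum the two indicators (range $4\Delta/\beta$) where the paper uses the indicator of the union (range $2\Delta/\beta$); this is immaterial.
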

 \begin{proof}
	To lower bound the degree of a color-node $c \in \cR$, we have to work a little harder.
    The main idea is this: for a color-node $c$ to lose its edges to $\Omega(\Delta)$ vertex-nodes
    in $\cL$, the color $c$ itself must have been used to color $\Omega(\beta)$ vertices outside $K$ by
    $\myC$; this is because of the crucial condition that $K$ is \emph{not} friendly
    and hence each vertex receiving the color $c$ outside of $K$ only rules $c$ out for less than
    $2\Delta / \beta$ vertex-nodes of $\cL$. It is generally very hard to keep track of which colors are assigned by $\myC$ so far in the neighborhood of $K$, but fortunately
    we have a loose but simple proxy for that: as $\myC$ is using only the colors in $L_1(\cdot)$ and $L_2(\cdot)$  at this point, we can simply consider which colors are sampled in these lists 
    in the neighborhood of $K$ instead. We formalize this in the following.

    Fix a color-node $c \in \cR$. For every vertex $v \notin K$, we define $m(v, K)$ as the
    number of edges from $v$ to $K$ in $G$. We define the random variable $X_{c,v}$ which is equal to $m(v,K)$ iff $c \in L_1(v) \cup L_2(v)$ and otherwise $X_{c,v} = 0$. 
    Notice that $X_{c,v}$ is a (potentially loose) upper bound on the \emph{reduction} in the degree of color-node $c \in \cR$ because of any assignment of a color the vertices outside of $K$. 
    In other words, we have that,
    \begin{align}
    	\deg_{\Gbase}{(c)} \geq \card{\cL} - \sum_{v \notin K} X_{c,v} \,. \label{eq:2-reduction}
    \end{align}
    We use the variable $X_{c,v}$, instead of the actual color assignment of $v$ by $\myC$, for two reasons: One, it is easy to compute its probability, and
    second (and more importantly) these variables for different $v$'s are \emph{independent} (while the actual color of vertices will be correlated). 
    We would like to show that random variable $X_c := \sum_{u \notin K} X_{c, u}$ is sufficiently small.

    Recall that $c \in L_1(v)$ with probability $1 / \Delta$, and it is in
    $L_2(v)$ with probability $\beta / \Delta$ by the choice of lists in $\PSalg$ (\Cref{alg:PS}). 
    Thus, it is in the union of the two lists with probability at most $2 \beta / \Delta$. Hence, 
    \begin{align*}
        \expect{ X_c } &\leq \sum_{v \notin K} m(v, K) \cdot \frac{2\beta}{\Delta} = \sum_{u \in K} \card{N(u) \setminus K} \cdot \frac{2 \beta}{\Delta} \tag{by a simple double counting argument for edges between $K$ and its neighbors} \\
        &\leq \card{K} \cdot 10\eps\Delta \cdot  \frac{2 \beta}{\Delta} \tag{by property~\ref{dec:neighbors} of almost-cliques in~\Cref{def:almost-clique}} \\
        &\leq \Delta \cdot 20\eps \cdot \beta \tag{as $\card{K} \leq \Delta$ since $K$ is small by~\Cref{def:scl-almost-clique}} \\
        &\leq \Delta/100 \tag{by the choice of $\eps,\beta$ in~\Cref{eq:parameters}}. 
    \end{align*}

   To prove a concentration bound for $X_c$, note that it is a sum of independent random variables in the range $[0,2\Delta/\beta]$, as each vertex $v \notin K$ has 
   less than $2\Delta/\beta$ neighbors in $K$ (as $K$ is not a friendly almost-clique and thus has no friend neighbors---see~\Cref{def:friend-stranger}). Thus, 
   by Chernoff bound (\Cref{prop:chernoff} for $b=2\Delta/\beta$), we have,  
    \begin{align*}
        \prob{ X_c > (1 + 20) \cdot \frac {\Delta}{100} } &\leq \exp\paren{ -\frac{20^2 \cdot \Delta / 100}{(3 + 20) \cdot 2\Delta/\beta} } = \exp\paren{ -\frac{400\log{n}}{46} } < n^{-8}, 
    \end{align*}
    by the choice of $\beta$ in~\Cref{eq:parameters}. A union bound over all choices of $v \notin K$ and $c \in [\Delta]$, combined with~\Cref{eq:2-reduction}, implies that with high probability, 
    $\deg_{\Gbase}{(c)} \geq \card{\cL} - 21\Delta/100$. 
    
    Finally, as already proven in~\Cref{clm:2-deg-L}, $\card{\cL} \geq \card{K} - \Delta/100$ with high probability as at most $\Delta/100$ vertices of $K$ are colored by $\myC_1$. 
    Given that size of $K$ is also at least $(1-5\eps)\Delta$ by property~\ref{dec:size} of almost-cliques in~\Cref{def:almost-clique}, and by the choice of $\eps$ in~\Cref{eq:parameters}, 
    we get that with high probability $\card{\cL} \geq \Delta - 2\Delta/100$. Combined with the above bound, we have 
    \[
    \deg_{\Gbase}{(c)} \geq \Delta - 2\Delta/100 - 21\Delta/100 > 3\Delta/4
    \]
    as desired, concluding the proof.  
 \end{proof}

 Given that size of $\cL$ in $\Gbase$ is at most $\Delta$, it is now easy to use~\Cref{clm:2-deg-L,clm:2-deg-R}, combined with Hall's theorem (\Cref{fact:halls-theorem}) 
 to prove that $\Gbase$ has an $\cL$-perfect matching. But, our goal is to prove that $\Gsample$,  not only $\Gbase$, has such a matching; this is the content of the next claim. 
 
 \begin{claim}\label{clm:2-matching}
    The subgraph $\Gsample$ has an $\cL$-perfect matching with high probability.
 \end{claim}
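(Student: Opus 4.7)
The plan is to apply Lemma~\ref{lem:rgt} with $L=\cL$, $R=\cR$, and $m:=\card{\cL}$, after first observing that $\Gsample$ is exactly the random subgraph of $\Gbase$ obtained by keeping each edge independently with probability $p=\beta/\Delta$. This is because, by \Cref{def:sampled-palette-graph}, the edge $(v,c)\in\Ebase$ survives iff $c\in L_2(v)$, and each color in $[\Delta]$ is included in $L_2(v)$ independently with probability $\beta/\Delta$ in $\PSalg$ (\Cref{alg:PS}).

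The three conditions of \Cref{lem:rgt} are then to be verified as follows. For (i), both $\card{\cL}$ and $\card{\cR}$ equal the corresponding total ($\card{K}$ or $\Delta$) minus the number of colored vertices of $K$ after $\myC$; since $K$ is small ($\card{K}\leq\Delta$), we get $\card{\cR}\geq m$, and $\card{\cR}\leq\Delta\leq 2m$ holds easily since $m\geq(1-5\eps)\Delta-\Delta/100$ by the almost-clique size bound and the bound on how many vertices of $K$ get colored in Phase 1. For (ii), \Cref{clm:2-deg-L} shows that each $v\in\cL$ has $\deg_{\Gbase}(v)\geq 3\Delta/4\geq (2/3)m$. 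For (iii), for any $A\subseteq\cL$ with $\card{A}\geq m/2$, I combine \Cref{clm:2-deg-R} (giving $\deg_{\Gbase}(c)\geq 3\Delta/4$, hence each $c\in\cR$ has at most $\card{\cL}-3\Delta/4$ non-neighbors in $A$) with double counting:
\[
\sum_{v\in A}\deg_{\Gbase}(v)\ =\ \sum_{c\in\cR}\card{N_{\Gbase}(c)\cap A}\ \geq\ \card{\cR}\cdot(\card{A}-(\card{\cL}-3\Delta/4)),
\]
and then use $\card{\cR}\geq m$ together with the tight bounds $\card{\cL},\card{\cR}\in[(1-5\eps)\Delta-\Delta/100,\,\Delta]$ from the decomposition to conclude $\sum_{v\in A}\deg_{\Gbase}(v)\geq \card{A}\cdot m - m/4$.

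Finally, I verify that the sampling probability is large enough: with $\beta=100\log n$ and $m=\Theta(\Delta)=\Omega(\log^5 n)$, we have $p=\beta/\Delta\geq (20/m)(\log m+\log(1/\delta))$ for $\delta=n^{-c}$ with any constant $c$, so \Cref{lem:rgt} yields an $\cL$-perfect matching in $\Gsample$ with probability $1-1/\poly{(n)}$. The hardest step will be condition (iii), which requires combining the two one-sided degree bounds of \Cref{clm:2-deg-L} and \Cref{clm:2-deg-R} and being careful with the small additive slack coming from vertices of $K$ that were pre-colored in Phase 1 and from the $10\eps\Delta$ external neighbors; the argument goes through precisely because the slacks on both sides are $o(m)$ for our choice of $\eps$ in \Cref{eq:parameters}, while the lemma tolerates a deficit of $m/4$.
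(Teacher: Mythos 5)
Your plan to apply \Cref{lem:rgt} cannot work here, and the sticking point is exactly the step you flag as ``hardest'', condition~$(iii)$. That condition demands $\sum_{v\in A}\deg_{\Gbase}(v)\geq\card{A}\cdot m-m/4$ for all $A\subseteq\cL$ with $\card{A}\geq m/2$; since $\card{A}\geq m/2$, this forces the \emph{average} degree over $A$ to be at least $m-m/(4\card{A})\geq m-\tfrac12$, i.e.\ essentially every vertex-node in $\cL$ must be adjacent to all but a constant number of color-nodes. But in Phase~2 the almost-clique $K$ can be holey, so \Cref{clm:2-deg-L} is tight at $\deg_{\Gbase}(v)\geq 3\Delta/4\approx 3m/4$: a set $A$ of $m/2$ vertex-nodes each of degree $3m/4$ has $\sum_{v\in A}\deg_{\Gbase}(v)=\tfrac{3}{8}m^{2}$, far below the required $\tfrac12 m^{2}-\tfrac14 m$. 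Your double-counting bound $\sum_{v\in A}\deg_{\Gbase}(v)\geq\card{\cR}\cdot(\card{A}-(\card{\cL}-3\Delta/4))$ only gives $\card{\cR}\cdot\card{A}-\card{\cR}\cdot\Theta(\Delta)$, whose deficit is $\Theta(m\Delta)$, a factor $\Delta$ worse than the $m/4$ tolerance in \Cref{lem:rgt}. There is no tuning of $\eps$ that fixes this: the $\Delta/4$ slack in the degree lower bound comes from the allowance for holes and for vertices uncolored by Phase~1, both of which are legitimately $\Theta(\Delta)$.

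The paper therefore does \emph{not} invoke \Cref{lem:rgt} in this phase; \Cref{lem:rgt} is used only for unholey almost-cliques in Phases~5 and~6, where condition~$(iii)$ genuinely holds because there are few non-edges. Here, the paper proves \Cref{clm:2-matching} directly via Hall's theorem using a witness-pair union bound, exploiting the \emph{two-sided} degree bound symmetrically: for a potential Hall violator $(S,T)$ with $\card{T}=\card{S}-1$, it counts the number of $\Gbase$-edges from $S$ to $\cR\setminus T$ using $\deg_{\Gbase}(v)\geq 3\Delta/4$ on the $\cL$-side when $\card{S}\leq 2\Delta/3$, and using $\deg_{\Gbase}(c)\geq 3\Delta/4$ on the $\cR$-side when $\card{S}>2\Delta/3$, then union-bounds over $O(n^{\card{S}})$ (resp.\ $O(n^{\card{\cR}-\card{T}})$) choices. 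That case split is precisely what lets both degree bounds being ``only'' $3\Delta/4$ (rather than $m-O(1)$) suffice. You should replace the appeal to \Cref{lem:rgt} by this direct Hall/union-bound argument.
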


\begin{proof}
	We  condition on the high probability events that $\Gbase$ has the properties in~\Cref{clm:2-deg-L,clm:2-deg-R}. An important observation is in order: the properties of $\Gbase$ 
	depend on the choice of $L_1(v)$ for $v \in V$ but only $L_2(v)$ for $v \notin K$ (as we only need to visit the coloring of $\myC$ with $L_2(\cdot)$ \emph{outside} of $K$). As a result, 
	even conditioned on these properties, the choice of $L_2(v)$ for $v \in K$ is independent and from its original distribution in $\PSalg$. 
	
	At this point, $\Gsample$ is a subgraph of $\Gbase$ obtained by sampling each edge independently and with probability $\beta/\Delta$ by the choice of $L_2(\cdot)$ in $\PSalg$ (\Cref{alg:PS}). 
	We use this to prove that $\Gsample$ should also have an $\cL$-perfect matching with high probability. The argument follows standard ideas in random graph theory (even though $\Gsample$ is not exactly a random graph). 
	
	By Hall's theorem (\Cref{fact:halls-theorem}), for $\Gsample$ to \emph{not} have an $\cL$-perfect matching, there should exist a set $A \subseteq \cL$ such that $\card{N_{\Gsample}(A)} < \card{A}$. 
	But for this to happen, there should exist a pair $(S,T)$ of subsets of $\cL$ and $\cR$, respectively, such that $\card{T} = \card{S}-1$ and no edge between $S$ and $\cR \setminus T$ is sampled in $\Gsample$ 
	(simply take $A=S$ and notice that $N(A) \subseteq T$ which has size less than $A$). We refer to any such pair $(S,T)$ as a \textbf{witness pair}. 
	We bound the probability that any witness pair exists in $\Gsample$. 
	
	\paragraph{Case 1: when $\card{S} \leq 2\Delta/3$.} Consider any choice of the set $T$ with $\card{T} = \card{S}-1$ from $\cR$. By~\Cref{clm:2-deg-L}, degree 
	of every vertex-node in $S$ is at least $3\Delta/4$ in $\Gbase$. This means the number of edges from $S$ to $\cR \setminus T$ is at least $\card{S} \cdot (3\Delta/4 - 2\Delta/3) = \card{S} \cdot \Delta/12$. 
	As such, 
	\[
		\Pr\paren{\text{$(S,T)$ is a witness pair}} \leq \paren{1-\frac{\beta}{\Delta}}^{\card{S} \cdot \Delta/12} \hspace{-10pt} \leq \exp\paren{-\frac{100}{12} \cdot \card{S} \cdot \log{n}} < n^{-8\card{S}},
	\]
	by the choice of $\beta$ in~\Cref{eq:parameters}. A union bound over all ${{\card{\cR}} \choose {\card{S}-1}} < n^{\card{S}-1}$ choices for $T$ then implies that for any such $S$, 
	\[
		\Pr\paren{\text{there is a set $T$ so that $(S,T)$ is a witness pair}} \leq n^{\card{S}-1} \cdot n^{-8\card{S}} < n^{-7\card{S}}. 
	\]
	Finally, a union bound all choices for the set $S$, partitioned based on their size, implies that, 
	\[
		\Pr\paren{\text{there is a witness pair $(S,T)$ with $\card{S} \leq 2\Delta/3$}} \leq \sum_{s=1}^{2\Delta/3} {{\card{\cL}}\choose{s}} \cdot n^{-7\card{s}} < n^{-5}. 
	\]
	\paragraph{Case 2: when $\card{S} > 2\Delta/3$.} Again, fix any choice of the set $T$ with $\card{T} = \card{S}-1$ from $\cR$. This time, by~\Cref{clm:2-deg-R}, 
	degree of every color-node \emph{not} in $T$ is at least $3\Delta/4$ in $\Gbase$. This means that neighborhood of each such color-node intersects with $S$ in at least $3\Delta/4 - (\Delta - 2\Delta/3) = 5\Delta/12$ (as $\card{\cL} \leq \Delta$) 
	vertex-nodes. In other words, there are at least $\card{\cR \setminus T} \cdot 5\Delta/12$ edges between $S$ and $\cR \setminus T$ in $\Gbase$. Thus, 
	\[
		\Pr\paren{\text{$(S,T)$ is a witness pair}} \leq \paren{1-\frac{\beta}{\Delta}}^{\card{\cR \setminus T} \cdot 5\Delta/12} \hspace{-10pt} \leq \exp\paren{-\frac{500}{12} \cdot \card{\cR \setminus T} \cdot \log{n}} < n^{-40\card{\cR \setminus T}},
	\]
	by the choice of $\beta$ in~\Cref{eq:parameters}. A union bound over all ${{\card{\cR}} \choose {\card{T}}} = {{\card{\cR}}\choose{\card{\cR \setminus T}}} < n^{\card{\cR \setminus T}}$ choices for $T$ then implies that for any such $S$, 
	\[
		\Pr\paren{\text{there is a set $T$ so that $(S,T)$ is a witness pair}} \leq n^{\card{\cR \setminus T}} \cdot n^{-40\card{\cR \setminus T}}  \leq n^{-39 \cdot (\card{\cR} - \card{S}+1)}, 
	\]
	where we used the fact that $\card{T} = \card{S}-1$. 	Now note that the number of choices for the set $S$ of a fixed size is
	\[
		{{\card{\cL}}\choose{\card{S}}} = {{\card{\cL}}\choose{\card{\cL}-\card{S}}} \leq n^{\card{\cL}-\card{S}} \leq n^{\card{\cR} - \card{S}}, 
	\]
	where the last inequality uses the fact that $\card{\cR} \geq \card{\cL}$. As a result, 
	\[
		\Pr\paren{\text{there is a witness pair $(S,T)$ with $\card{S} > 2\Delta/3$}} \leq \sum_{s=2\Delta/3}^{\card{\cL}} n^{\card{\cR} - \card{S}} \cdot n^{-39 \cdot (\card{\cR}-\card{S}+1)} < n^{-39}. 
	\]
	
	\smallskip
	
	Finally, by combining Case 1 and 2 above, we have that with high probability, there is no witness set $(S,T)$ in $\Gsample$. This implies that for every $A \subseteq \cL$, we have $\card{N_{\Gsample}(A)} \geq \card{A}$, 
	which, by Hall's theorem (\Cref{fact:halls-theorem}), implies that $\Gsample$ has an $\cL$-perfect matching.
\end{proof}

\Cref{lem:phase2} now follows immediately from \Cref{clm:2-deg-L,clm:2-deg-R,clm:2-matching} as described earlier. 

 \begin{remark}\label{rem:why-interleave}
 	Before moving on from this subsection, let us mention why our coloring procedure attempts to color lonely small almost-cliques (Phase 2) before the remaining sparse vertices. In~\Cref{clm:2-deg-R}, we 
	crucially used the fact that we can use the lists used to color $\myC$ so far as a proxy for approximating the event $\myC(v) = c$ with $c \in L_1(v) \cup L_2(v)$, instead. This was okay because these lists are of relatively small size to make the argument go through. However, coloring sparse vertices requires us to use the lists $L_3(\cdot)$ which are much larger and thus would break this claim entirely. 
	
	Concretely, in~\Cref{clm:2-deg-R}, 	we had $O(\eps\Delta^2)$ edges going out of the almost-clique and each was responsible for blocking a fixed color on a vertex with probability $p = O(\beta/\Delta)$ which is governed by sizes of $L_1(\cdot),L_2(\cdot)$. This meant that each color was blocked for $O(\eps \cdot \beta \cdot \Delta)$ vertices which can be made $o(\Delta)$ by taking $\eps$ sufficiently smaller than $\beta$.  Nevertheless, 
	had we also included lists $L_3(\cdot)$, then the right probability parameter $p$ would have become $O(\beta/(\eps^2 \Delta))$ which is crucial for coloring $\eps$-sparse vertices; but then, it meant 
	that the bound we got on the number of blocked vertices for a color is actually $O(\eps \Delta^2 \cdot \beta/(\eps^2 \Delta)) = O(\Delta/\eps)$ which is $>\Delta$ no matter the tuning of parameters. 
 \end{remark}

\subsection{Phase 3: Sparse Vertices}\label{sec:sparse}
In this phase, we will describe an algorithm to extend the partial coloring
$\myC_2$ to all vertices of $\Vsparse$. The key observation is that for any
extension of $\myC_1$ (i.e. $\myC_2$, and every intermediate coloring in this
phase), the gap between available colors and remaining
degree created by $\myC_1$ for each sparse vertex (see \Cref{lem:os-gap}) does
not shrink.
This is because if $\myC$ is an extension of~$\myC_1$, each additional neighbor
of some sparse vertex $v$ that $\myC$ colors, increases $\cn{v}{\myC}$ by $1$, and
decreases $\avail{v}{\myC}$ by at most $1$, keeping the gap intact.
We have the following  lemma for this phase:

\begin{lemma}\label{lem:color-sparse}
    With high probability, there exists a proper partial $\Delta$-coloring $\myC$ that is an extension of $\myC_2$ and colors 
    all remaining vertices in $\Vsparse$ using only the colors in the lists $L_3(v)$ for sparse vertices $v \in \Vsparse$ (and thus the randomness is also only over these lists).
\end{lemma}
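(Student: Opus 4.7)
The plan is to extend $\myC_2$ to the remaining sparse vertices in $\Vsparse$ using the standard ``greedy-with-sampled-palette'' argument, where the gap from \Cref{lem:os-gap} provides a comfortable cushion compared to the list sizes of $L_3(\cdot)$.

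I would process the uncolored sparse vertices in an arbitrary order $v_1, v_2, \ldots, v_k$, maintaining a current partial coloring $\myC$ initialized to $\myC_2$. At step $i$, I would pick an arbitrary color from the set $L_3(v_i) \cap \Avail{v_i}{\myC}$ (if one exists) and assign it to $v_i$. The main claim is that at each step this intersection is non-empty with probability at least $1 - n^{-10}$, so that a union bound over at most $n$ sparse vertices yields the lemma.

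The proof of the main claim has two ingredients. First, the gap from \Cref{lem:os-gap} is preserved by any extension of $\myC_1$: for a sparse vertex $v$, whenever a neighbor gets newly colored, $\deg(v) - \cn{v}{\myC}$ drops by exactly $1$ while $\avail{v}{\myC}$ drops by at most $1$, so the quantity $\avail{v}{\myC} - (\deg(v) - \cn{v}{\myC})$ never decreases. Since $\myC_2$ (and every intermediate coloring in Phase~3) extends $\myC_1$, and since $\deg(v_i) - \cn{v_i}{\myC} \geq 0$, we conclude that at the moment we process $v_i$ we have $\avail{v_i}{\myC} > \eps^2\Delta/(2\alpha)$. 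Second, I need $L_3(v_i)$ to be independent of $\Avail{v_i}{\myC}$; this is where we use that the current coloring was produced from $L_1(\cdot)$ in Phase~1, from $L_1(\cdot) \cup L_2(\cdot)$ in Phase~2, and from $\{L_3(v_j)\}_{j<i}$ in the already-executed part of Phase~3 — in particular the randomness in $L_3(v_i)$ has been untouched. Given independence, since each color is included in $L_3(v_i)$ independently with probability $p_3 = 100\alpha\log n / (\eps^2\Delta)$,
\[
\Pr\bigl(L_3(v_i) \cap \Avail{v_i}{\myC} = \emptyset\bigr) \leq (1-p_3)^{\eps^2\Delta/(2\alpha)} \leq \exp(-50 \log n) = n^{-50}.
\]

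The only subtle point — and the step I would be most careful about writing — is the independence argument, since one must condition carefully on the events from \Cref{lem:os-gap} (which hold w.h.p.\ over $L_1$) and the success of Phase~2 (which is a w.h.p.\ event over $L_1$ and $L_2$ only), and then observe that conditional on these events the family $\{L_3(v)\}_{v \in \Vsparse}$ is still distributed as in $\PSalg$. Once this is in place, the rest of the proof is a clean application of the preservation of the gap followed by the standard tail estimate above, and the union bound closes the argument.
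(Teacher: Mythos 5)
Your proposal matches the paper's proof essentially line for line: greedy processing of uncolored sparse vertices in arbitrary order, the observation that the gap from~\Cref{lem:os-gap} is preserved under any extension of $\myC_1$, and the tail bound $\paren{1-p_3}^{\eps^2\Delta/(2\alpha)} \leq n^{-50}$ followed by a union bound. The only addition is your more explicit treatment of the independence conditioning, which the paper leaves largely implicit in its framing that ``the randomness is only over these lists''—a harmless and arguably helpful clarification, not a divergence in approach.
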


Once again, since the lemma is not new, its proof is postponed to \Cref{sec:app-sparse}.

We note that $\myC$ is \emph{not} the final coloring we obtain in this phase. Instead, we are going to update $\myC$ to a proper $\Delta$-coloring $\myC_3$
that colors all vertices in $\Vsparse$ as well as all small almost-cliques in $\Klonely$ handled by Phase 2; however, we shall remove the color of every other vertex $v$, i.e., set $\myC_3(v) = \perp$ for them. Such vertices 
are solely colored by the $\OneShot$ algorithm and we no longer need their guarantees as we are done coloring sparse vertices. Thus, to summarize:
\begin{itemize}
\item $\myC_3$ is a proper partial $\Delta$-coloring of all vertices in $\Vsparse$ as well as small almost-cliques in $\Klonely$ and does not color any other vertex (we also require no further properties from $\myC_3$ and it might as well 
be considered adversarially chosen from now on). 
\end{itemize}

\subsection{Phase 4: Holey Almost-Cliques}\label{sec:holey}

In this phase, we will extend the partial coloring $\myC_3$ to all vertices in
holey almost-cliques.

\begin{lemma}\label{lem:holey}
    There exists a proper partial $\Delta$-coloring $\myC_4$ that is an
	extension of $\myC_3$ and assigns a color to every vertex $v$ in each
	holey almost-clique using a color from $L_4(v)$. The randomness in this
	lemma is only over the lists $L_4(\cdot)$ of all vertices.
\end{lemma}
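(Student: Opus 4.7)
The plan is to process each holey almost-clique $K$ independently and extend $\myC_3$ to its vertices via a two-stage coloring, mirroring the overall strategy of~\cite{AssadiCK19a} but now adapted to our partitioned lists $L_4 = (L_4^*, L_{4,1}, \ldots, L_{4,\beta})$. Stage~A constructs a \emph{colorful matching} in $K$: a collection of vertex-disjoint pairs of non-adjacent vertices in $K$, each paired with a distinct color that will be assigned to both endpoints. Stage~B then uses $L_4^*(\cdot)$ to find an $\mathcal{L}$-perfect matching in the sampled palette graph of the remaining uncolored vertices of $K$. The point of Stage~A is that each pair colored the same frees up one color in $\mathcal{R}$ without reducing $|\mathcal{L}|$ by the same amount, which is precisely the slack needed by the random-graph-theory argument of~\Cref{lem:rgt} in Stage~B.

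For Stage~A, iterate $i = 1, \ldots, \beta$. In iteration $i$, consider each non-edge $(u,v)$ inside $K$ whose endpoints are still uncolored and that is ``compatible'' with the partial extension built so far, and mark it good if there exists a color $c \in L_{4,i}(u) \cap L_{4,i}(v)$ that is still available to both and that has not been used by the colorful matching already. Since $K$ is holey it contains at least $10^7 \cdot \eps\Delta$ non-edges; since $K$ has at most $10\eps\Delta$ neighbors outside, at least $\Omega(\Delta)$ colors are available to both endpoints of any such non-edge; and since each color lands in $L_{4,i}(u) \cap L_{4,i}(v)$ with probability $q^2 = \Theta(1/(\eps \Delta^2))$, the expected number of candidate (non-edge, color) triples in each round is $\Omega(1)$. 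A standard second-moment / Chernoff computation shows that each round succeeds in adding at least one new pair to the colorful matching with constant probability, and so after $\beta = \Theta(\log n)$ independent rounds (the $L_{4,i}$ are mutually independent) we w.h.p.\ build a colorful matching $\mathcal{M}_K$ of size $\Theta(\log n)$, or even larger, which is all that is needed.

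For Stage~B, let $\myC'$ be the extension of $\myC_3$ that colors the matched pairs of $\mathcal{M}_K$ and form the base palette graph $\Gbase$ of $K$ and $\myC'$. The vertex-node side $\mathcal{L}$ consists of the remaining uncolored vertices of $K$, while $\mathcal{R}$ is the set of colors not yet used inside $K$. Using properties~\ref{dec:size}--\ref{dec:neighbors} of~\Cref{def:almost-clique} together with the fact that $\mathcal{M}_K$ saves $|\mathcal{M}_K|$ colors, one checks that $|\mathcal{L}| \leq |\mathcal{R}| \leq 2|\mathcal{L}|$, that every vertex-node has degree at least $(2/3)|\mathcal{L}|$ in $\Gbase$ (only $O(\eps\Delta)$ colors can be blocked by outside neighbors or by colored vertices of $K$), and that the total number of missing edges on any large subset $A\subseteq \mathcal{L}$ is bounded as required by~\Cref{lem:rgt}. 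Now consider the sampled palette graph built from $L_4^*(\cdot)$, which retains each edge of $\Gbase$ independently with probability $\beta/\Delta = \Omega(\log n / |\mathcal{L}|)$. Since $L_4^*$ is independent of the $L_{4,i}$'s used in Stage~A, the hypotheses of~\Cref{lem:rgt} apply verbatim, and w.h.p.\ this sampled subgraph admits an $\mathcal{L}$-perfect matching, which yields a proper extension $\myC_4$ coloring all of $K$ from $L_4(\cdot)$.

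The main obstacle, and the reason for splitting $L_4$ into $L_4^*$ and $L_{4,1}, \ldots, L_{4,\beta}$, is managing the dependencies between the two stages: the pairs chosen in Stage~A are correlated with which colors are ``burnt'' in the palette graph going into Stage~B, and a naive use of the same list would destroy the independence needed to invoke~\Cref{lem:rgt}. Beyond that, a union bound over all holey almost-cliques (there are at most $n/((1-5\eps)\Delta)$ of them by property~\ref{dec:size}) turns the per-clique high-probability guarantees into a global one, and the resulting coloring is by construction an extension of $\myC_3$, completing the proof of~\Cref{lem:holey}.
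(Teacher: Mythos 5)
Your proposal captures the same high-level two-stage structure as the paper: first build a ``colorful matching'' of non-edges inside $K$ (assigning one color to both endpoints of each matched pair) using the sub-lists $L_{4,i}$, then color the remaining vertices of $K$ by finding an $\cL$-perfect matching in the sampled palette graph built from $L_4^*$ via~\Cref{lem:rgt}. That is indeed the decomposition the paper uses (Algorithm~\ref{alg:Col-Matching}, \Cref{lem:big-matching}--\ref{lem:big-matching-whp}, and the proof of~\Cref{lem:color-holey-one}).

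However, Stage~A as you describe it is quantitatively too weak. You accumulate the colorful matching edge-by-edge across rounds --- ``each round succeeds in adding at least one new pair \dots\ with constant probability'' --- which after $\beta = \Theta(\log n)$ independent rounds yields a matching of size only $O(\log n)$. That is not enough for Stage~B. When $K$ is a \emph{large} holey almost-clique with $|K| = \Delta + 1 + k$, it has $t \geq k\Delta$ non-edges but also $k$ more vertex-nodes than color-nodes before any matching is applied; to even get $|\cL| \leq |\cR|$ you need a colorful matching of size at least $k + 1$. Since $k$ can be as large as $5\eps\Delta = \Theta(\Delta/\log n)$, which with $\Delta = \Omega(\log^5 n)$ is $\Omega(\log^4 n) \gg \log n$, a matching of size $\Theta(\log n)$ leaves $|\cL| > |\cR|$ and the argument collapses at condition~$(i)$ of~\Cref{lem:rgt}. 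The paper avoids this by proving (\Cref{lem:big-matching}) that a \emph{single} run of the colorful-matching sweep over all $\Delta$ colors already produces a matching of size $\ell = t/(10^6\eps\Delta)$ with probability at least $1/2$ --- note this scales with $t$, hence with $k$ --- and then uses the $\beta$ independent lists $L_{4,i}$ only to amplify this constant success probability to a high-probability one (keeping the best matching found across the runs), not to accumulate one edge per run. Your probability computation per (non-edge, color) is in the right spirit, but you need the per-run concentration over the $\Delta/2$ ``heavy'' colors to get a matching of the right order before amplifying.
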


As discussed in the overview of the coloring algorithm, we will iterate over
all almost-cliques in our decomposition, and attempt to color them assuming
that they are holey. The main tool to show that this succeeds on
holey~almost-cliques is the following lemma, which shows that any coloring outside a
holey~almost-clique can be extended to it while using only the lists
$L_4(\cdot)$ on vertices inside it.

\begin{lemma}\label{lem:color-holey-one}
	For a holey almost-clique $K$, and any partial $\Delta$-coloring $\myC$ outside
	$K$, there exists, with high probability, a coloring $\myC'$ which extends
	$\myC$ to $K$ such that $\myC'(v) \in L_4(v)$ for all $v \in K$.
\end{lemma}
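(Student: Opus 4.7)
The plan is to adapt the two-stage colorful-matching argument developed for the palette sparsification theorem of~\cite{AssadiCK19a}, exploiting the decomposition $L_4(v) = L_4^*(v) \cup \bigcup_{i \in [\beta]} L_{4,i}(v)$. Intuitively, the independent lists $\{L_{4,i}\}_{i \in [\beta]}$ will be used to produce a \emph{colorful matching} $M$: a set of disjoint non-edges of $K$ paired with distinct colors $\{c_{(u,v)}\}_{(u,v) \in M}$, each satisfying $c_{(u,v)} \in L_{4,i}(u) \cap L_{4,i}(v)$ for some $i$ and not blocked by $\myC$ at any outside neighbor of $u$ or $v$. Assigning $u,v \mapsto c_{(u,v)}$ on every matched pair uses $|M|$ colors on $2|M|$ vertices, saving $|M|$ colors and creating slack in the palette. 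The list $L_4^*$ is then used to complete the coloring via a palette-graph matching almost identical to Phase~2.

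\emph{Stage~1 (colorful matching).} I iterate over rounds $i=1,\dots,\beta$ and greedily extend the matching. For a non-edge $(u,v)$ and a color $c$ unblocked by $\myC$ at $u$ and $v$, the probability that $c \in L_{4,i}(u) \cap L_{4,i}(v)$ is exactly $q^2 = \Theta(\log n/\Delta^2)$ by the independent sampling of $L_{4,i}$. Conditioned on the prior matching of size $O(\beta)$, the holey hypothesis (and, for large almost-cliques, the automatic stronger bound from \Cref{obs:critical-almost-clique}) guarantees at least $10^7 \eps\Delta$ non-touched non-edges, and property~\ref{dec:neighbors} of \Cref{def:almost-clique} together with $|K|\leq (1+5\eps)\Delta$ leaves at least $(1-o(1))\Delta$ unblocked colors per endpoint. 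The expected number of matchable (non-edge, color) triples per round is therefore $\Omega(1)$ in the small-clique regime and grows linearly with $|K|-\Delta$ for larger cliques. A Chernoff concentration over the $\beta$ independent rounds (using $\Delta=\Omega(\log^5 n)$ to boost the per-round constant yield into a poly$(n)$-small failure probability for the total yield) extracts a matching of size $|M| \geq \max\{5\eps\Delta+1,\ \Omega(\beta)\}$ with high probability. This size is precisely what is needed so that $|M| \geq |K|-\Delta + \Omega(\beta)$ uniformly across small, critical, and large $K$.

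\emph{Stage~2 (completing with $L_4^*$).} Let $\myC''$ denote the extension of $\myC$ by the colorful matching, let $K' \subseteq K$ be the still-uncolored vertices, and construct the sampled palette graph $\Gsample$ of $(K', \myC'', \{L_4^*(v):v\in K'\})$ as in \Cref{def:sampled-palette-graph}. By construction $|\cR| = \Delta - |M| \geq (|K|-2|M|) + \Omega(\beta) = |\cL| + \Omega(\beta)$, giving real slack on the color side. Since $L_4^*$ is independent of the $L_{4,i}$'s consumed in Stage~1 and of $\myC$ restricted outside $K$, the direct analogues of \Cref{clm:2-deg-L,clm:2-deg-R} go through and yield degree lower bounds of $3\Delta/4$ on both sides of the base palette graph $\Gbase$ (the $|K|$-not-friendly condition is replaced here by the simpler bound $|N(v)\cap K| \leq \Delta$ for $v \notin K$, which still makes the second-moment bound on the number of blocked color/vertex pairs go through with room to spare). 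An $\cL$-perfect matching in $\Gsample$ then exists with high probability by the witness-pair argument of \Cref{clm:2-matching} (or directly by \Cref{lem:rgt}), and assigning colors according to this matching yields the required extension $\myC'$ with $\myC'(v) \in L_4(v)$ for every $v \in K$.

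The main obstacle will be Stage~1: producing a colorful matching of the correct (clique-size-dependent) length with $1-1/\poly(n)$ probability. The per-round success probability is only a small constant for small almost-cliques, so the concentration must be done jointly over all $\beta$ independent rounds, and the chosen matching size must simultaneously (i) cover the deficit $\max\{0,\ |K|-\Delta\}$ for large cliques and (ii) guarantee the $\Omega(\beta)$ palette slack required by Stage~2. Once Stage~1 is in place, Stage~2 is a fairly mechanical transplant of the Phase~2 analysis into a setting with more slack and fully independent randomness.
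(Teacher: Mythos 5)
Your two-stage framework (colorful matching via the $L_{4,i}$'s, completion via $L_4^*$) is the right shape and matches the paper's, but there are two genuine gaps.

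\textbf{Stage 1 aims for the wrong matching size.} You want $|M| \geq \max\{5\eps\Delta + 1, \Omega(\beta)\}$ and justify the $\Omega(\beta)$ term by claiming Stage 2 needs that much palette slack. It does not: the paper's completion step rests on \Cref{lem:rgt}, which only requires $|\cL| \leq |\cR|$ (plus a min-degree bound on $\cL$ and a sum-of-degrees bound), and its proof of \Cref{lem:color-holey-one} verifies $|\cR| - |\cL| \geq 9$ for critical holey cliques --- a \emph{constant} slack is enough. Moreover, your $\Omega(\beta)$ target can be outright infeasible: for a holey clique at the threshold $t = 10^7\eps\Delta$, the maximum non-edge matching has size on the order of $t/(20\eps\Delta) = 5\times 10^5$ --- a constant independent of $n$ --- while $\beta = 100\log n$ grows. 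Your accumulation-over-rounds strategy also ignores degradation: once $\Theta(t/(\eps\Delta))$ matched non-edges have been removed (taking out up to $20\eps\Delta$ adjacent non-edges each), the supply of available non-edges dries up, so ``$\Omega(1)$ expected yield per round'' cannot be sustained for $\beta$ rounds. The paper instead runs $\ColMalg$ \emph{fresh} $\beta$ times on independent lists, shows each single run yields $\ell = t/(10^6\eps\Delta)$ with constant probability, and takes the best of the $\beta$ runs; the bound $\ell \cdot 20\eps\Delta < t/10$ is exactly what keeps a single run's degradation in check, and $\beta$ fresh trials gives the high-probability amplification without ever requiring $|M| \gtrsim \beta$.

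\textbf{Stage 2's ``transplant of Phase 2'' fails for friendly holey cliques.} You propose to carry over the analogues of \Cref{clm:2-deg-L,clm:2-deg-R}, waving away the non-friendliness hypothesis by the bound $|N(v)\cap K| \leq \Delta$. But \Cref{clm:2-deg-R} works precisely \emph{because} $K$ has no friends: the random variable $X_c = \sum_{v\notin K} X_{c,v}$ is a sum of independent terms in range $[0, 2\Delta/\beta]$, and Chernoff with $b = 2\Delta/\beta$ gives an $\exp(-\Theta(\log n))$ tail. For a friendly holey clique some outside vertex has $|N(v) \cap K|$ as large as $\Delta$, forcing $b = \Delta$ and collapsing the same Chernoff exponent to $\Theta(1)$; the claim is then useless. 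The paper avoids this entirely: \Cref{lem:rgt} puts \emph{no} degree lower bound on the color side $\cR$ at all. Its condition $(iii)$ (a lower bound on $\sum_{v\in A}\deg_{\Gbase}(v)$ over large $A \subseteq \cL$) is established by the purely structural inequality $\deg_{\Gbase}(v) \geq |\cR| - (\Delta+1) + |K| - \nondeg{K}{v}$ and summing, which needs only that the non-edge count in $K$ is small relative to the matching recovered in Stage 1. That substitution --- replacing a degree bound on $\cR$ with a sum-of-degrees bound over $\cL$ --- is the key idea you are missing, and without it your Stage 2 does not close.
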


The proof is almost verbatim from~\cite{AssadiCK19a}, except that we have to go through 
every step carefully to make sure it works for $\Delta$-coloring---hence we provide it in \Cref{sec:app-holey} for completeness.

\Cref{lem:holey} now follows immediately from~\Cref{lem:color-holey-one} by going over all uncolored almost-cliques at this point one by one, and apply this lemma with $\myC$ being the current coloring, and $\myC'$ being the one
we can update this coloring to. Thus, at the end, we obtain the desired $\myC_4$.

\subsection{Phase 5: Unholey Critical Almost-Cliques}
\label{sec:unholey-critical}

In this phase, we will color unholey critical almost-cliques. In particular,
we have a set $\Kcritical$ of almost-cliques, and for each $K \in \Kcritical$,
a critical-helper structure $(u, v, N(v))$ (\Cref{def:helper-critical}). The
main lemma of this section is:

\begin{lemma}\label{lem:phase5}
    With high probability, there exists a proper partial $\Delta$-coloring
    $\myC_5$ that is an extension of $\myC_4$ and satisfies the following
    properties: $(i)$ it colors vertices of all almost-cliques in $\Kcritical$; 
     and $(ii)$ for any vertex $v$, if $\myC_5(v) \notin L(v)$,
			then $N_{\Hplus}(v) = N_{G}(v)$; that is, we can only color~$v$ with a
            color not from $L(v)$  if we know its entire neighborhood
			(via the critical helper structure).

   The randomness in this lemma is only over the lists $L_5(v)$ for vertices
    $v$ in $\Kcritical$. 
\end{lemma}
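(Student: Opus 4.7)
The plan is to process each almost-clique $K \in \Kcritical$ iteratively (skipping those already fully colored by Phase 4), extending the current partial coloring using its critical-helper structure $(u, v, N(v))$. For each $K$, the strategy has three steps: $(i)$ pick a color $c \in L_5(u)$ available to both $u$ and $v$ under the current partial coloring; $(ii)$ assign $\myC(u) = \myC(v) = c$; $(iii)$ extend the coloring to $K \setminus \{u, v\}$ by finding an $\cL$-perfect matching in the sampled palette graph over lists $\{L_5(w) : w \in K \setminus \{u, v\}\}$. A union bound over the $O(n/\Delta)$ almost-cliques in $\Kcritical$ then yields the lemma with high probability.

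For step $(i)$, both $u$ and $v$ lie in $K$, so by property~\ref{dec:neighbors} of \Cref{def:almost-clique}, each has at most $10\eps\Delta$ neighbors outside $K$, blocking at most $20\eps\Delta$ colors in total. Since $L_5$ is sampled independently from the randomness used in Phases 1 through 4 and in \DECalg, the list $L_5(u)$ is a random subset of $[\Delta]$ with inclusion rate $\beta/\Delta$ independent of the blocked set; the expected number of colors in $L_5(u)$ avoiding both blocking sets is at least $(1 - 20\eps)\beta = \Omega(\log n)$, and a Chernoff bound (with a union bound over $\Kcritical$) ensures such $c$ exists w.h.p. Step $(ii)$ is then immediate: $u, v$ are non-neighbors in $K$ by~\Cref{def:helper-critical}, so giving them both color $c$ is locally proper in $K$, and the choice of $c$ avoids conflicts with the already-colored outside neighbors of either vertex. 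For the lemma's condition (ii), the assignment is justified because $c \in L_5(u) \subseteq L(u)$ for $u$, while for $v$ the critical-helper structure places $N_G(v)$ inside $\Hplus$, so we can safely color $v$ outside $L(v)$.

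The technical heart is step $(iii)$. After step $(ii)$, the base palette graph $\Gbase$ for the remaining vertices has $|\cL| = |\cR| = \Delta - 1$; by an argument analogous to~\Cref{clm:2-deg-L}, every vertex-node $w \in \cL$ has $\deg_{\Gbase}(w) \geq (1 - 11\eps)(\Delta-1)$ w.h.p., because the only color used in $K$ so far is $c \notin \cR$ and each $w$ has at most $10\eps\Delta$ colored outside neighbors. The sampled palette graph $\Gsample$ is then obtained by sampling each edge with probability $\beta/\Delta$ via $L_5(\cdot)$, and I need to show that $\Gsample$ contains an $\cL$-perfect matching; by Hall's theorem (\Cref{fact:halls-theorem}), this reduces to ruling out any ``bad'' $A \subseteq \cL$ with $|N_{\Gsample}(A)| < |A|$. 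For $|A| \leq 2(\Delta-1)/3$, the vertex-node degree bound forces $\Omega(|A| \cdot \Delta)$ many $\Gbase$-edges from $A$ to any complement of size $|A|-1$, and a Chernoff bound gives failure probability at most $n^{-\Omega(|A|)}$, which dominates the $\binom{m}{|A|}\binom{m}{|A|-1}$ choices for such bad pairs, as in~\Cref{clm:2-matching}. The main obstacle is the complementary regime where $|A|$ is very close to $|\cL|$: in this range $K$ may be friendly or social, so some color-nodes may have very low $\Gbase$-degree (one outside friend of $K$ carrying a color $c'$ blocks $c'$ for many vertex-nodes at once), and the Phase 2 argument does not apply. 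I will handle this by using that $K$ is unholey: the total $\Gbase$ non-edge count is $O(\eps \Delta^2)$, so a Markov-style argument limits the number of color-nodes of $\Gbase$-degree below, say, $\Delta/100$ to $O(\eps \Delta)$; for every remaining color-node $c'$ its $\Gbase$-degree inside $A$ is $\Omega(\Delta)$, making $\Pr[c' \notin N_{\Gsample}(A)] \leq n^{-\Omega(1)}$, so with high probability $|\cR \setminus N_{\Gsample}(A)| \leq |\cL \setminus A|$, verifying Hall's condition. Collecting, the resulting $\myC$ extends $\myC_4$ by construction, colors all of $\Kcritical$, and maintains condition (ii) since only $v$ is ever colored from outside $L(v)$ and its entire neighborhood lies in $\Hplus$.
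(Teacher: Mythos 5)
Your steps $(i)$ and $(ii)$ match the paper's Algorithm~\ref{alg:Col-UCAC}: pick $c\in L_5(u)$ avoiding $\myC(N(u)\cup N(v))$ (using independence of $L_5(u)$ from prior phases' randomness), set $\myC(u)=\myC(v)=c$, and note condition $(ii)$ of the lemma holds because only $v$ may go out of palette and $N_G(v)=N_{\Hplus}(v)$ by the critical-helper. That part is sound (the paper also addresses the mechanics of checking this with $\Hplus$-neighborhoods as a proxy, which you gloss over, but this is a minor algorithmic point).

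Step $(iii)$ has a genuine gap. You correctly identify that Phase~2's color-node degree bound (\Cref{clm:2-deg-R}) is unavailable here since $K$ may be friendly, but your replacement---a Markov bound saying at most $O(\eps\Delta)$ color-nodes of $\Gbase$ have degree below $\Delta/100$---does not close the large-$\card A$ case. First, the bound you state, ``total $\Gbase$ non-edge count is $O(\eps\Delta^2)$,'' is what one gets from $\card{N(w)\setminus K}\leq 10\eps\Delta$, which holds for any almost-clique and does not use unholiness at all. Second, and more importantly, with $O(\eps\Delta)$ color-nodes of potentially tiny $\Gbase$-degree left untreated, Hall's condition fails in the regime $\card{\cL\setminus A}<O(\eps\Delta)$: e.g.\ for $A=\cL$ a color-node of $\Gbase$-degree $O(1)$ fails to appear in $\Gsample$ with constant probability, and your argument says nothing about it. Also, the threshold $\Delta/100$ is independent of $\card{\cL\setminus A}$, so a ``remaining'' color-node of degree exactly $\Delta/100$ can have \emph{zero} $\Gbase$-edges into $A$ once $\card{\cL\setminus A}>\Delta/100$---the claim that its $\Gbase$-degree inside $A$ is $\Omega(\Delta)$ is simply false in that range.

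The missing observation is exactly what the paper exploits in \Cref{clm:palette-phase5}: because $K$ is \emph{critical} (size $\Delta+1$), each $w\in K$ satisfies $\card{N(w)\setminus K}\leq \Delta - (\card K - 1 - \nondeg{K}{w}) = \nondeg{K}{w}$, so the total non-edges of $\Gbase$ are at most $\sum_{w}\nondeg{K}{w}\leq 2t$; unholiness then gives $2t\leq 2\cdot 10^7\eps\Delta$. Consequently \emph{every} color-node of $\Gbase$ has degree at least $\card{\cL}-2t > 3\Delta/4$, and no Markov step is needed: the Phase~2 Case~2 witness-pair argument applies verbatim. Alternatively one can verify the three conditions of \Cref{lem:rgt}, as the paper does, where condition $(iii)$ relies on exactly this bound $\sum_{v\in A}\nondeg{K}{v}\leq 2t\leq m/4$. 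So your direct-Hall route is salvageable, but not as written; it requires replacing the $O(\eps\Delta^2)$ estimate with the $O(\eps\Delta)$ one, which needs both criticality and unholiness.
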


This is the first phase in which we use our out-of-palette-coloring idea -- we do not require that $\myC_5(v) \in L(v)$ always holds in this lemma. 
In particular, for the vertex $v$ in the critical-helper $(u, v, N(v))$, we are going to use a color out of its sampled palette. 
Since we know the entire neighborhood of $v$,
we at least have enough information to avoid an improper coloring.

The proof of the lemma is algorithmic. We start with a brief overview.
The plan (as always) is to iterate over the almost-cliques of $\Kcritical$ in
arbitrary order, and extend the coloring $\myC_4$ to eventually color all of
them. For a particular almost-clique $K \in \Kcritical$, we will use the
critical-helper structure $(u, v, N(v))$ to assign the \emph{same} color to both $u$ and $v$.
The reason we can do this is that we know all the neighbors of $v$, and hence
can pick a color in the list $L_5(u)$ that can be assigned to both of them -- existence of such a color in the first place is because both $u$ and $v$ belong 
to an almost-clique and thus have at most $O(\eps\Delta)$ edges to outside; thus, as long as we have sampled a color out of these many, which will happen with high probability, 
we can find such a color. Having done that, the rest of $K$ can be colored by palette sparsification:
the imbalance we create by giving two vertices the same color is just enough
for it to succeed with high probability.

\begin{algorithm}
	\KwIn{critical almost-cliques $\Kcritical$, a critical-helper structure}
	for each $K \in \Kcritical$, and the partial coloring $\myC_4$.
	\begin{enumerate}[label=$(\roman*)$]
		\item Initialize $\myC \gets \myC_4$.  For each $K \in \Kset$:
            \begin{itemize}
				\item Let $(u, v, N(v))$ be the critical-helper structure for
                    $K$ of Part~\eqref{p6} of~\Cref{sec:info}. 
				\item Find a color $c \in L_5(u)$ that is not used
					in $N_{\Hplus}(u) \cup N_{\Hplus}(v)$ by $\myC$, and set
					\[
					\myC(u) \gets c \quad \text{and} \quad \myC(v) \gets c.
					\]
                \item Extend $\myC$ to color $K \setminus \cbrac{u, v}$ with
                    the color of each vertex $w$ chosen from $L_5(w)$, by
					constructing the sampled palette graph of $K$ with respect
					to $\myC$ and sampled lists $\SS := \set{L_5(w) \mid w \in K}$ 
                    exactly as in~\Cref{alg:phase2}
					(see \Cref{clm:palette-phase5} for details).
            \end{itemize}
        \item Return $\myC_5 \gets \myC$ as the output coloring. 
	\end{enumerate}
        	\caption{The $\ColUHCC$ algorithm.}\label{alg:Col-UCAC}
\end{algorithm}

This algorithm claims to color all of the almost-cliques in $\Kcritical$, but
it is not obvious at all that each of its steps is possible. We will prove
this in the following series of claims.

\begin{claim}
    With high probability,
	there exists a color $c \in L_5(u)$ such that $c$ is not used by $\myC$
	in $N_G(u) \cup N_G(v)$. Further, there is an algorithm that can find
	this color if it exists.
\end{claim}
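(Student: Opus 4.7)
The plan is to establish the claim in two parts -- first show existence of such a color with high probability via a counting-and-Chernoff argument, and then describe how the algorithm actually finds it using only the data structures collected from the stream.

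For existence, I will first observe that when $\ColUHCC$ begins processing $K$, no vertex inside $K$ has been colored: Phase~4 colors each holey almost-clique in an all-or-nothing fashion and leaves $K$ untouched (else $K$ would not be among the remaining members of $\Kcritical$), while previous iterations of Phase~5 colored vertices of \emph{other} critical almost-cliques that are disjoint from $K$. Consequently, every color blocked for the pair $(u,v)$ under $\myC$ comes from $N_G(u) \setminus K$ or $N_G(v) \setminus K$. By property~\ref{dec:neighbors} of~\Cref{def:almost-clique}, each of these sets has size at most $10\eps\Delta$, giving at most $20\eps\Delta$ blocked colors.

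Next, I would exploit the crucial independence: no earlier step of the coloring procedure has touched $L_5(u)$, and the $L_5$-lists of different vertices are sampled independently in~\Cref{alg:PS}. Thus the coloring $\myC$ (and hence the set of blocked colors) is independent of~$L_5(u)$. Since each color in $[\Delta]$ lies in $L_5(u)$ independently with probability $\beta/\Delta$, the number of available (non-blocked) colors in $L_5(u)$ is binomially distributed with expectation at least $(1 - 20\eps)\beta \geq \beta/2$; a standard Chernoff bound (\Cref{prop:chernoff}) together with $\beta = 100 \log n$ then yields that at least one available color appears in $L_5(u)$ with high probability.

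For the algorithmic part, I plan to iterate over the $O(\log n)$ colors in $L_5(u)$ and, for each candidate $c$, test whether $c$ is used by $\myC$ on any vertex of $N_G(u) \cup N_G(v)$. Testing $N_G(v)$ is immediate because the critical-helper structure stores $N(v)$ explicitly. The main obstacle is testing $N_G(u)$, since we never store $u$'s full neighborhood. The key observation -- and the point at which the design of $\Hplus$ and the conflict graph $H$ pays off -- is the inductive invariant~$(ii)$ of~\Cref{lem:phase5}, which guarantees that any out-of-palette colored vertex $w$ satisfies $N_{\Hplus}(w) = N_G(w)$. Thus any colored neighbor $w$ of $u$ is either in-palette (in which case $c = \myC(w) \in L(u) \cap L(w)$ forces $(u,w) \in H$) or out-of-palette (in which case $(u,w) \in \Hplus$ directly). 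Therefore scanning $N_H(u) \cup N_{\Hplus}(u)$ together with $N(v)$ captures every colored vertex whose color could block a candidate $c$, completing the algorithm.
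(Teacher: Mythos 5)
Your proof is correct and follows essentially the same approach as the paper: bound the number of colors blocked for $u$ and $v$ by $20\eps\Delta$ using property~\ref{dec:neighbors} of~\Cref{def:almost-clique}, then show $L_5(u)$ hits an available color with high probability, then recover that color algorithmically via $N(v)$ (from the helper), the conflict graph $H$, and the invariant~$(ii)$ of~\Cref{lem:phase5} for out-of-palette colored neighbors of $u$. Your version is slightly more explicit than the paper's in two respects -- you justify that no vertex inside $K$ is yet colored when Phase~5 processes it (which the paper uses implicitly), and you spell out that both $N_H(u)$ and $N_{\Hplus}(u)$ must be scanned (where the paper's shorthand $N_{\Hplus}(u)$ elides the conflict-graph contribution) -- while the paper's direct $(1-\beta/\Delta)^{\Delta/2}$ bound is a touch cleaner than your Chernoff route, but these are stylistic rather than substantive differences.
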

\begin{proof}
    Recall that $u$ and $v$ have at most $10\eps \Delta$ neighbors each outside
    $K$, and hence only at most $20\eps \Delta$ neighbors in total which are
	colored by $\myC$ (see \Cref{fig:use-critical-helper}).
	\begin{sidefigure}[h!]
	\centering
	\pgfdeclarelayer{edges}       
\pgfdeclarelayer{ellipses}    
\pgfsetlayers{edges,ellipses,main}

\newcommand{\colorgridbr}{\tikz{
\draw[xstep=0.3cm,ystep=0.1cm] (0,0)  grid (0.3,0.2); 
\filldraw[fill=blue!50] (0,0) rectangle (0.3,0.1);
\filldraw[fill=red!50] (0,0.1) rectangle (0.3,0.2);
}}

\begin{tikzpicture}
    [vertex/.style={circle, draw=blue!50, fill=blue!20, thick, inner sep=0pt,
     minimum size=2mm},
     red/.style={vertex, draw=red!50, fill=red!20},
     ghost/.style={inner sep=0pt, minimum size=0mm},
     scale=1]
    \node [vertex] (u) [label=above left:{\footnotesize$u$}] at (4, 4.5) {};
	\node [vertex] (v) [label=above left:{\footnotesize$v$}] at (4, 1.5) {};

    \draw (u) edge [dashed] (v);

    \node [ghost] (K) [label=right:{\footnotesize $K$}] at (4, 1) {};

    \draw (4, 3) ellipse (1.5cm and 3cm); 
    \begin{pgfonlayer}{ellipses}
        \draw [fill=gray!20] (1.5, 4.5) ellipse (0.5cm and 1cm); 
        \draw [fill=gray!20] (1.5, 1.5) ellipse (0.5cm and 1cm); 
    \end{pgfonlayer}


	\node [ghost] (v1) [] at (1.5, 2.5) {};
	\node [ghost] (v2) [] at (1.5, 0.5) {};

	\node [ghost] (u1) [] at (1.5, 5.5) {};
	\node [ghost] (u2) [] at (1.5, 3.5) {};

    \begin{pgfonlayer}{edges}
        \draw (v) edge (v1);
        \draw (v) edge (v2);
        \draw (u) edge (u1);
        \draw (u) edge (u2);
    \end{pgfonlayer}

	\node [ghost, font=\footnotesize, align=left] (Nv) [] at (-1, 1.5)
	{$\leq 10\eps\Delta$ colored\\neighbors};

	\node [ghost, font=\footnotesize, align=left] (Nu) [] at (-1, 4.5)
	{$\leq 10\eps\Delta$ colored\\neighbors};


	\node (L5u) [right = 3pt of u]{\colorgridbr};
\end{tikzpicture}
	\caption{$L_5(u)$ has a color not used by $\myC$ over $N(u) \cup N(v)$.}
	\label{fig:use-critical-helper}
\end{sidefigure}
	
	Let $\Avail{u, v}{\myC}$ be the set of colors in
	$\bracket{\Delta}$ that are not used in the neighborhood of $u$ or $v$ by
	the coloring $\myC$. Then
	\[\card{\Avail{u, v}{\myC}} \geq \Delta - 20\eps \Delta \geq \Delta / 2.\]
	We would like to show that
	$\prob{ L_5(u) \cap \Avail{u, v}{\myC} = \emptyset}$ is small.
    Since each color from $\Avail{u, v}{\myC}$ is sampled into $L_5(u)$
    independently with probability $\beta / \Delta$, the probability that none
    of them are in $L_5(u)$ is at most:
    \[
        \paren{1 - \frac{\beta}{\Delta}}^{\Delta / 2}
        \leq \exp\paren{ -\frac{\beta \cdot \Delta}{2\Delta} }
        = n^{-50}.
    \]
    Hence a ``good'' color exists with high probability.

	To find this color, we note that the critical-helper structure contains
    $N(v)$ and while we do not necessarily know all the neighbors of $u$ in
    $G$, we do know the ones that:
    \begin{itemize}
        \item Have a color from $L_5(u)$ in their own palette.
        \item Were assigned a color \emph{outside} their palette -- since for such
            vertices we know all of their neighbors by the invariant maintained in~\Cref{lem:phase5}. 
    \end{itemize}
    This means that we can iterate over the colors in $[{\Delta}]$, and
    check for each one whether it is used by
    $N_{\Hplus}(u) \cup N_{\Hplus}(v)$, which serves as a proxy for checking
    $N_G(u) \cup N_G(v)$, and we are done.
\end{proof}

We now have to perform a somewhat daunting task---to extend the coloring $\myC$ to the rest of the almost-clique $K$.
It turns out that the small imbalance we create in the previous line
(by coloring $2$ vertices with $1$ color) is enough for palette sparsification
to come to our rescue.

\begin{claim}\label{clm:palette-phase5}
    With high probability,
    $K \setminus \cbrac{u, v}$ can be colored as an extension of $\myC$, while
    coloring each vertex $w \in K \setminus \cbrac{u, v}$ with a color from
    $L_5(w)$.
\end{claim}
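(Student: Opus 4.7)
The plan is to reduce the claim to the matching-view technology already developed in Phase~2 by constructing the sampled palette graph of $K \setminus \{u,v\}$ with respect to the current coloring $\myC$ (which now assigns color $c$ to both $u$ and $v$) and the sampled lists $\SS := \{L_5(w) : w \in K \setminus \{u,v\}\}$. Writing $\cL := K\setminus\{u,v\}$ and $\cR := [\Delta] \setminus \myC(K)$, we obtain a bipartite graph with $|\cL| = \Delta - 1$ (since $|K| = \Delta+1$) and $|\cR| \geq \Delta - 1$ (since only color $c$ has been used inside $K$). Crucially, giving the single color $c$ to both $u$ and $v$ creates exactly the one-color slack that makes an $\cL$-perfect matching conceivable. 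Once such a matching is produced in $\Gsample$, we color each $w \in \cL$ by its matched color-node, which by construction lies in $L_5(w)$ and extends $\myC$ properly.

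First, I would establish a minimum-degree bound on $\Gbase$ analogous to \Cref{clm:2-deg-L}. For any $w \in \cL$, its colored neighbors are either $u$ or $v$ (both contributing color $c$, which lies outside $\cR$ and therefore blocks nothing) or one of the at most $10\eps\Delta$ neighbors of $w$ outside $K$ guaranteed by property~\ref{dec:neighbors} of~\Cref{def:almost-clique}. Hence $\deg_{\Gbase}(w) \geq (\Delta-1) - 10\eps\Delta \geq (1-O(\eps))(\Delta-1)$. Second, unlike Phase~2, I cannot directly bound $\min_{c'\in\cR}\deg_{\Gbase}(c')$ since unholey critical almost-cliques may be friendly. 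Instead, I would bound the \emph{total} number of non-edges between $\cL$ and $\cR$ in $\Gbase$: each such non-edge $(w,c')$ is witnessed by a colored vertex in $N(w)\setminus K$ that received color $c'$, so the total number is at most $\sum_{w\in\cL} |N(w)\setminus K| \leq |\cL| \cdot 10\eps\Delta = O(\eps\Delta^2)$.

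Next, noting that $\Gsample$ retains each edge of $\Gbase$ independently with probability $\beta/\Delta$ by the sampling of $L_5(\cdot)$ in \PSalg, I would apply Hall's theorem (\Cref{fact:halls-theorem}) through a witness-pair analysis in the style of \Cref{clm:2-matching}. Namely, if $\Gsample$ lacks an $\cL$-perfect matching, then some pair $(S,T)$ with $S\subseteq\cL$, $T\subseteq\cR$, $|T|=|S|-1$ exists such that no edge of $\Gsample$ crosses $S$ to $\cR\setminus T$. I would bound the probability of this event by splitting on $|S|$: for $|S| \leq 2(\Delta-1)/3$, the min-degree bound on $\cL$ forces $\Omega(|S| \cdot \Delta)$ edges from $S$ to $\cR\setminus T$ in $\Gbase$, and the sampling rate $\beta/\Delta$ makes the probability that none are sampled at most $\exp(-\Omega(|S|\log n))$, easily absorbing a union bound over the $n^{|S|-1}$ choices of $T$. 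For $|S| > 2(\Delta-1)/3$, I would use the total non-edge bound: the number of $\Gbase$-edges from $S$ to $\cR\setminus T$ is at least $|S|\cdot|\cR\setminus T| - O(\eps\Delta^2)$, which, since $\eps = \Theta(1/\log n)$, is $\Omega(|\cR\setminus T| \cdot \Delta)$, and an analogous Chernoff and union-bound calculation rules out all such witness pairs.

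The main obstacle will be the large-$|S|$ case, where $|\cR\setminus T|$ may shrink to $O(1)$ and the entire ``budget'' we have is precisely the single-color slack created by the out-of-palette trick. At the extreme $|S|=\Delta-1$, we have $|\cR\setminus T| \geq 1$ and Hall demands only one sampled edge from $\cL$ into the singleton $\cR\setminus T$; since each vertex of $\cL$ contributes an independent sample with probability $\beta/\Delta$ and there are $\Omega(\Delta)$ incident edges in $\Gbase$ by our min-degree bound, the expected count is $\Omega(\log n)$, giving concentration comfortable enough to withstand the union bound over $O(n)$ choices of the singleton. Interpolating between the two cases with the parameter balance $\eps\beta = o(1)$ from~\Cref{eq:parameters} seals Hall's condition with high probability, from which the $\cL$-perfect matching — and hence the required extension of $\myC$ — follows.
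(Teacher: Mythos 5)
Your overall plan is a legitimate alternative to the paper's route, which invokes \Cref{lem:rgt} after verifying its hypotheses rather than redoing a Hall witness-pair analysis. You also correctly identify the crucial obstruction: unlike Phase~2, you cannot bound the degree of color-nodes via the ``no friends'' argument of \Cref{clm:2-deg-R}, because an unholey critical almost-clique may well be friendly. Your idea to substitute a \emph{total} count of $\Gbase$-non-edges for the per-color-node degree bound is the right move in spirit. However, the bound you derive is wrong by a factor of $\Delta$, and this is fatal precisely where you predict the difficulty lies.

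You bound the number of $\Gbase$-non-edges by $\sum_{w\in\cL}\card{N(w)\setminus K}\leq\card{\cL}\cdot 10\eps\Delta=O(\eps\Delta^2)$, using only property~\ref{dec:neighbors} of \Cref{def:almost-clique}. But $K$ is \emph{critical}, so $\card{K}=\Delta+1$; combined with $\deg_G(w)\leq\Delta$, every $w\in K$ has exactly $\card{K}-1-\nondeg{K}{w}=\Delta-\nondeg{K}{w}$ neighbors inside $K$ and therefore at most $\nondeg{K}{w}$ neighbors outside $K$. Summing, $\sum_{w\in\cL}\card{N(w)\setminus K}\leq\sum_{w\in K}\nondeg{K}{w}=2t$, where $t$ is the number of non-edges in $K$. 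Since $K$ is unholey, $t\leq 10^7\eps\Delta$, so the genuine bound is $O(\eps\Delta)$, not $O(\eps\Delta^2)$. With your $O(\eps\Delta^2)$ bound, when $\card{\cR\setminus T}=1$ you get ``at least $\card{S}\cdot 1 - O(\eps\Delta^2)$'' edges crossing, which is negative: $\eps\Delta^2\gg\Delta$ whenever $\Delta\gg\log n$, which the paper assumes ($\Delta=\Omega(\log^5 n)$). So the large-$\card{S}$ case, which you correctly flag as the crux, is not actually sealed by your interpolation argument.

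There is also a secondary slip in your extremal $\card{S}=\Delta-1$ case: you argue that the singleton color-node has $\Omega(\Delta)$ incident edges in $\Gbase$ ``by our min-degree bound,'' but that min-degree bound is on $\cL$-nodes, not on color-nodes, so it does not directly tell you the degree of $c'\in\cR\setminus T$. With the corrected total-non-edge bound one could say $\deg_{\Gbase}(c')\geq\card{\cL}-2t\geq 3\Delta/4$, so this piece is fixable once the main bound is tightened. The paper's proof makes exactly this observation when verifying condition~$(iii)$ of \Cref{lem:rgt}: from $\deg_G(v)\leq\Delta$ and $\card{K}=\Delta+1$ it deduces $\deg_{\Gbase}(v)\geq\card{\cR}-(\Delta+1)+\card{K}-\nondeg{K}{v}$, whence $\sum_{v\in A}\deg_{\Gbase}(v)\geq\card{A}\cdot m-2t\geq\card{A}\cdot m-m/4$. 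That inequality---available only because $K$ is both critical and unholey---is precisely the ingredient missing from your argument.
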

\begin{proof}
    Consider the base and the sampled palette graphs $\Gbase$ and $\Gsample$
	(\Cref{def:palette-graph,def:sampled-palette-graph})
	corresponding to the partial coloring $\myC$, with:
	\begin{itemize}
		\item $\cL := K \setminus \cbrac{u, w}$,
		\item $\cR := \bracket{\Delta} \setminus \cbrac{c}$, where $c$ is
			the color assigned to $u$ and $v$, and
		\item $S(v) := L_5(v)$ for all $v \in \cL$.
	\end{itemize}
	Recall that an $\cL$-perfect matching in $\Gsample$ implies a
	coloring of remaining vertices in $K$ from their $L_5(\cdot)$ lists which is further an extension of $\myC$. 
	We are going to use~\Cref{lem:rgt} to obtain that  $\Gsample$ has a perfect matching. 
	
	For this, we need to establish the required properties of $\Gbase$. 
	Proceeding in the same order as the lemma (recall that $m$ denotes  $\card{\cL}$ in this lemma):
	\begin{enumerate}[label=$(\roman*)$]
        \item $m \leq \card{\cR} \leq 2m$: 
			In this instance, $\card{L} = \card{R}$, so both inequalities follow
			trivially.
        \item The minimum degree of any vertex in $\cL$ is at least $2m / 3$:
            Note that any vertex $v \in \cL$ is in $K$, and hence has only
            at most $10\eps\Delta$ edges going out of $K$ (in $G$, the input
            graph). Hence there are at most $10\eps\Delta$ colors that are
            blocked for $v$, and
			$\deg_{\Gbase}(v) \geq \card{\cR} - 10\eps\Delta \geq (2/3)\cdot m$.

		\item For any subset $A \subseteq \cL$ such that $\card{A} \geq m / 2$,
			\[
				\sum_{v\in A} \deg_{\Gbase}(v) \geq \card{A} \cdot m - m/4.
			\]
			Let $t$ denote the number of non-edges in $K$, and recall that
			since $K$ is unholey, $t \leq 10^7 \cdot \eps \Delta < \Delta / 10$.
            Note that any vertex $v \in K$ has at most
            $\Delta - (\card{K} - 1 - \nondeg{K}{v})$ neighbors (in $G$) outside
            $K$, because $\deg_G(v) \leq \Delta$, and $v$ has exactly
            $\card{K} - 1 - \nondeg{K}{v}$ neighbors inside $K$. But since each
            non-edge between $v \in \cL$ and a color in $\cR$ corresponds to an
            edge from $v$ to outside $K$, we have:
            \[
                \deg_{\Gbase}(v) \geq \card{R} - (\Delta + 1) + \card{K} -
                \nondeg{K}{v}.
            \]
			By summing this inequality for all $v \in A$, we get:
			\begin{align*}
				\sum_{v\in A} \deg_{\Gbase}(v) &\geq
				\sum_{v\in A}
				\card{R} - (\Delta + 1) + \card{K} - \nondeg{K}{v} \\
				&=
				\sum_{v\in A}
				\underbrace{\Delta - 1}_{\card{R}} - (\Delta + 1)
				+ \underbrace{\card{\cL} + 2}_{\card{K}} - \nondeg{K}{v} \\
				&=
				\sum_{v\in A}
				\card{\cL} - \nondeg{K}{v}
				=
				\card{A} \cdot m - \underbrace{\sum_{v \in A} \nondeg{K}{v}}_{\leq 2t}  \\
				&\geq
				\card{A}\cdot m - m / 4.
			\end{align*}
    \end{enumerate}
    Each edge of $\Gbase$ is sampled into $\Gsample$ independently with
    probability
    \[
        \beta / \Delta \geq 99 \log n / m \geq
        20/m \cdot (\log n + 3 \cdot \log n).
    \]
    By \Cref{lem:rgt} $\Gsample$ has a perfect matching with probability 
    at least $(1 - n^{-3})$.
\end{proof}

This concludes the proof of~\Cref{lem:phase5}. 

\subsection{Phase 6: Unholey Friendly (or Social) Small Almost-Cliques}
\label{sec:unholey-friendly}
In this (final!) phase, we describe an algorithm that takes the partial coloring $\myC_5$ and from it compute a proper $\Delta$-coloring of the entire graph by coloring the remaining vertices. 
This step however is the one that includes our \emph{recoloring} ideas and thus the coloring we obtain is no longer an extension of $\myC_5$.

We are left with a
set $\Kfriendly$ of unholey small almost-cliques, that may be friendly or
social. For each $K \in \Kfriendly$, we have a friendly-helper
$(u, v, w, N(v), N(w))$ (\Cref{def:helper-friendly}).
Note in particular that we will actually edit the color assigned to some
vertices, so the coloring obtained by this algorithm is not necessarily an
extension of $\myC_5$. We show the following lemma:

\begin{lemma}\label{lem:phase6}
    Given the coloring $C_5$ from the previous section,
    with high probability there exists a proper coloring $\myC_6$ that colors
    the entire graph such that:
    For any vertex $v$, if $\myC_6(v) \notin  L(v)$,
    $N_G(v) = N_{\Hplus}(v)$.
    That is, we can color $v$ with a
    color not from $L(v)$ only if we know its entire neighborhood
    (via the friendly-helper structure).

    The randomness in this lemma is over the lists $L_{6, i}(v)$ for \underline{all}
    vertices $v \in V$ and all $i \in [2\beta]$ (even including vertices that
	were colored before this phase). 
\end{lemma}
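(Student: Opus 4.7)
The plan is to iterate over the almost-cliques in $\Kfriendly$, using each friendly-helper structure $(u, v, w, N(v), N(w))$ to recolor $u$ to match the color of $w$, thereby shifting coloring-pressure off of the tight almost-clique $K$. I would first initialize $\myC = \myC_5$ and uncolor every helper vertex $u$. Then for each $K$ with helper $(u, v, w, N(v), N(w))$ (processed in arbitrary order), I would: $(a)$ choose a color $c$ in some sampled list $L_{6,i}(u)$, for an index $i \in [2\beta]$ assigned freshly to the pair $(u,K)$, such that $c$ is unblocked for both $u$ and $w$; $(b)$ assign $\myC(u) = \myC(w) = c$, using out-of-palette coloring for $w$ justified by our knowledge of $N(w)$ via the helper; and $(c)$ extend $\myC$ to $K \setminus \{w\}$ by finding an $\cL$-perfect matching in the sampled palette graph with $\cL = K \setminus \{w\}$, $\cR = [\Delta] \setminus \{c\}$, and lists $\{L_{6,j}(x) \mid x \in K \setminus \{w\}\}$ for a freshly chosen index $j$. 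The feasibility of step $(a)$ follows from a counting plus concentration argument: $u$ has at most $\Delta - \Delta/\beta$ colored neighbors (since $\geq \Delta/\beta$ of its neighbors lie in the uncolored $K$), and $w$ has at most $10\eps\Delta$ colored outside-neighbors, so at least $\Delta/(2\beta)$ colors are jointly available; each is in $L_{6,i}(u)$ with probability $\beta^2/\Delta$, yielding failure probability at most $\exp(-\beta/2) \leq n^{-50}$. The supply of fresh indices is sufficient since a vertex $u$ is a non-stranger to at most $\beta$ almost-cliques (each requires $\geq \Delta/\beta$ edges, and $\deg(u) \leq \Delta$).

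The main obstacle is step $(c)$, verifying that the sampled palette graph contains an $\cL$-perfect matching via~\Cref{lem:rgt}. Unlike Phase 5, where $\card{K} = \Delta + 1$ provides one extra unit of slack in the outside-degree bound, here $\card{K}$ can be as large as $\Delta$, so a generic $x \in \cL$ satisfies only $\deg_{\Gbase}(x) \geq \card{\cL} - 1 - \nondeg{K}{x}$, which is too tight for a direct application of condition $(iii)$ of~\Cref{lem:rgt}. The crucial save comes from the vertex $v$: since $u \in N(v) \setminus K$ was recolored to $c \notin \cR$, the blocker $u$ contributes no $\cR$-blocked color to $v$, so $\deg_{\Gbase}(v) \geq \card{\cL} - \nondeg{K}{v}$, one more than generic. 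Combined with the unholeyness bound $\sum_{x \in K} \nondeg{K}{x} \leq 2 \cdot 10^7 \eps \Delta \ll m/4$, isolating $v$ in the sum-of-degrees calculation should suffice. If this single unit of saving proves too thin for~\Cref{lem:rgt} in the tight case $\card{K} = \Delta$, I would additionally invoke out-of-palette coloring for $v$ (using $N(v)$) to assign $v$ a color $c'$ chosen among those heavily used by outside-neighbors of $K$, thereby shrinking both $\cL$ and $\cR$ to size $\card{K} - 2$ while simultaneously removing the ``most blocked'' color from $\cR$, which relaxes the conditions to match the Phase 5 analysis.

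Finally, I would verify that $\myC_6$ remains a proper coloring and preserves the invariant $N_G(z) = N_{\HR}(z)$ for every vertex $z$ with $\myC_6(z) \notin L(z)$. The only new out-of-palette colorings introduced in this phase are $w$ (and possibly $v$), both contained in the friendly-helper and both with fully-known neighborhoods recovered by $\SRalg$. The recoloring of $u$ itself remains \emph{in}-palette since $c \in L_{6,i}(u) \subseteq L(u)$, so it cannot break the invariant for any previously out-of-palette-colored vertex $z$: any potential conflict would require $z \in N(u)$ with $\myC(z) = c$, but the selection rule in step $(a)$ explicitly rules this out by excluding colors used on $u$'s colored neighbors (which we can detect either via $H$-edges, since $c \in L(u)$, or via $z$'s known neighborhood if $z$ itself was colored out-of-palette). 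This completes the proof of~\Cref{lem:phase6}.
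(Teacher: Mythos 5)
Your high-level plan (iterate over $\Kfriendly$, use each friendly-helper $(u,v,w,N(v),N(w))$ to recolor $u$ and $w$ with a common color $c \in L_{6,i_u}(u)$, then color the rest of $K$ via the sampled palette graph and \Cref{lem:rgt}) matches the paper, and your color-selection, fresh-index bookkeeping, and properness/invariant arguments are all sound. The gap is in step $(c)$, and you half-anticipate it yourself but neither of your two analyses closes it.

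Keeping $v$ in $\cL = K\setminus\{w\}$ gives $m = \card{K}-1$ and $\card{\cR}=\Delta-1$, which are equal when $\card{K}=\Delta$. A generic $x\in\cL$ then has $\deg_{\Gbase}(x)\geq \card{\cR}-\Delta+\card{K}-1-\nondeg{K}{x}=m-1-\nondeg{K}{x}$, so the sum in condition $(iii)$ of \Cref{lem:rgt} is bounded below only by $\card{A}m-\card{A}-2t$, short of the required $\card{A}m - m/4$ by roughly $\card{A}$. The ``crucial save'' you identify for $v$ is worth exactly one unit in the degree sum; it cannot cancel a deficit of $\Omega(m)$. Nor can you get away with the larger observation that every $x\in N(u)\cap K$ gets the same one-unit saving, since $\card{N(u)\cap K}$ can be as small as $\Delta/\beta \ll m/2$, so an adversarial $A$ can avoid all of them.

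Your fallback---color $v$ first out-of-palette with a ``heavily blocked'' $c'$ and set $\cL=K\setminus\{v,w\}$, $\cR=[\Delta]\setminus\{c,c'\}$---actually does not recover the Phase 5 arithmetic. Now $m=\card{K}-2$ and $\card{\cR}=\Delta-2$, so the generic lower bound is again $\deg_{\Gbase}(x)\geq\card{\cR}-\Delta+\card{K}-1-\nondeg{K}{x}=m-1-\nondeg{K}{x}$, still one short. You would need $c'$ to already be blocked for essentially every $x\in\cL$ for the removal of $c'$ to be free, and you give no argument that such a $c'$ exists while also being available for $v$; a priori those two requirements pull in opposite directions.

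The paper's resolution is to remove $v$ from $\cL$ but \emph{not} to consume a second color from $\cR$: take $\cL=K\setminus\{v,w\}$ (so $m=\card{K}-2\leq\Delta-2$) and $\cR=[\Delta]\setminus\{c\}$ (size $\Delta-1$). Then $\card{\cR}-\Delta+\card{K}-1 = m$, so $\deg_{\Gbase}(x)\geq m-\nondeg{K}{x}$ and condition $(iii)$ follows from unholeyness exactly as in Phase 5. The vertex $v$ is colored \emph{last}, after the matching, via a purely deterministic argument: its two neighbors $u$ and $w$ carry the same color $c$, so at most $\Delta-1$ colors are blocked at $v$, and since $N(v)=N_{\Hplus}(v)$ is known from the helper we can pick an available color directly. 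This ordering---palette-match $K\setminus\{v,w\}$ first, $v$ last---is the missing ingredient.
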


Once again, we will prove the lemma with an algorithm. The idea is the same,
to extend the coloring $C_5$ to each almost-clique in $\Kfriendly$ one-by-one,
with a key difference being that we will go back and edit the color of one
vertex per almost-clique we color. 

For an almost-clique $K$, we will
use its friendly-helper structure $(u, v, w, N(v), N(w))$ to assign the
same color to $u$ and $w$, and then invoke palette sparsification to color
the rest of $K$.

\begin{algorithm}
	\KwIn{The set of almost-cliques $\Kfriendly$, a friendly-helper
    structure for each $K \in \Kfriendly$, and the partial coloring $C_5$.}
	\begin{enumerate}[label=$(\roman*)$]
        \item Initialize $\myC = \myC_5$, and an index $i_u \gets 0$ for each
			vertex $u \in V(G)$ (this keeps track of how many times we recolored $u$).
        \item For each $K \in \Kfriendly$:
            \begin{itemize}
                \item 
					Let $(u, v, w, N(v), N(w))$ be the friendly-helper
                    structure of $K$.
                \item Find a color $c \in L_{6, i_u}(u)$ such that
					$c \notin \myC( N_{\Hplus}(u) \cup N_{\Hplus}(w) )$ and set 
					\[
					\myC(u) \gets c \quad \text{and} \quad \myC(w) \gets c.
					\]
			Update $i_u \gets i_u + 1$ (we emphasize that $u$ is not part of $K$ but rather a neighbor to it). 		
                \item Extend $\myC$ to color $K \setminus \cbrac{v, w}$ with
                    each remaining vertex $x$ getting a color from $L_{6,2\beta}(x)$
                    exactly as in~\Cref{alg:Col-UCAC} (see \Cref{clm:palette-phase6} for details).
				\item Extend $\myC$ to color $v$ by finding a color that does not appear in $N_{H+}(v)$. 
            \end{itemize}
        \item Return $\myC_6 \gets \myC$ as the output.
	\end{enumerate}
        	\caption{The $\ColUHFC$ algorithm.}\label{alg:Col-UFSAC}
\end{algorithm}

As before,
while the algorithm claims to color all the almost-cliques in $\Kfriendly$,
it is far from obvious that each step it performs is possible. We show that
this is indeed the case.

\begin{claim}
    With high probability,
    there exists a color in $L_{6, i_u}(u)$ that does not appear in
	$\myC$ over $N(u) \cup N(w)$. Further, there is an algorithm that can
	find this color if it exists.
\end{claim}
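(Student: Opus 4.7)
The plan is to bound the size of the set of colors that $\myC$ forbids on $N(u) \cup N(w)$, and then exploit the independence of the fresh list $L_{6,i_u}(u)$ (which is not used anywhere before this iteration) to argue that it hits an available color with high probability.

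First I would observe that when the algorithm processes $K \in \Kfriendly$, every vertex of $K$ is still uncolored, since we extend $\myC$ one almost-clique at a time and the almost-cliques of the decomposition are disjoint. Since $w \in K$, any colored neighbor of $w$ must lie in $N(w) \setminus K$, and property~\ref{dec:neighbors} of \Cref{def:almost-clique} gives $\card{N(w) \setminus K} \leq 10\eps\Delta$. Since $u \notin K$ is \emph{not} a stranger to $K$ (by the definition of friendly-helper), $\card{N(u) \cap K} \geq \Delta/\beta$, so the colored neighbors of $u$ number at most $\card{N(u) \setminus K} \leq \Delta - \Delta/\beta$. Letting $\AC{}{} := [\Delta] \setminus \myC(N(u) \cup N(w))$ be the set of available colors, we obtain
\[
\card{\AC{}{}} \geq \Delta - (\Delta - \Delta/\beta) - 10\eps\Delta \geq \Delta/(2\beta),
\]
by the choice of $\eps, \beta$ in \Cref{eq:parameters}.

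Next I would show that $i_u \leq \beta$ throughout the algorithm, so that the list $L_{6,i_u}(u)$ is always well defined and, crucially, is freshly sampled and independent of all randomness used so far. Indeed, whenever $u$ is chosen as the outside friend of some $K'$, the friendly-helper condition forces $u$ to be non-stranger to $K'$, contributing at least $\Delta/\beta$ edges from $u$ into $K'$; as distinct almost-cliques are disjoint and $\deg_G(u) \leq \Delta$, $u$ can play this role for at most $\beta$ almost-cliques. Since $\AC{}{}$ is determined entirely by the previously chosen colors (which depend only on lists $L_{6,j}(\cdot)$ with $j < i_u$ and lists of other vertices), conditional on its value each color is still independently in $L_{6,i_u}(u)$ with probability $\beta^2/\Delta$, so
\[
\Pr\paren{L_{6,i_u}(u) \cap \AC{}{} = \emptyset} \leq \paren{1 - \frac{\beta^2}{\Delta}}^{\Delta/(2\beta)} \leq \exp\paren{-\beta/2} \leq n^{-50}.
\]
A union bound over all $K \in \Kfriendly$ then gives the existence claim with high probability.

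For the algorithmic aspect, I would iterate over $c \in L_{6,i_u}(u)$ and, for each $c$, verify that $c \neq \myC(x)$ for every $x \in N(w) \cup N_H(u) \cup N_{\Hplus}(u)$. The helper structure supplies $N(w)$ in full, so it only remains to argue that every $x \in N(u)$ with $\myC(x) = c$ appears in $N_H(u) \cup N_{\Hplus}(u)$: if $c \in L(x)$, then $c \in L(u) \cap L(x)$ forces $(u,x) \in H$ by the definition of the conflict graph; and if $c \notin L(x)$, the invariant maintained in~\Cref{lem:phase5} (which we extend here) gives $N_G(x) = N_{\Hplus}(x)$, so $u \in N_{\Hplus}(x)$ and hence $x \in N_{\Hplus}(u)$. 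Either case places $x$ in our test set, so any conflict will be found.

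The main obstacle is the independence bookkeeping: we must ensure that when we invoke the randomness of $L_{6,i_u}(u)$, the set $\AC{}{}$ that we hope to intersect is independent of that list. The indexing scheme $i_u$ is exactly what allows this, provided $i_u$ never exceeds $2\beta$; the structural bound $i_u \leq \beta$ above closes this loop, and once independence is in hand the probability computation is routine.
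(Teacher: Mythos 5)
Your proof is correct and follows essentially the same route as the paper's: lower-bound the available-color set to $\Delta/(2\beta)$ via the non-stranger condition and the bounded boundary of the almost-clique, invoke the independence of the freshly indexed list $L_{6,i_u}(u)$ from the current coloring, and observe that $N(w)$ together with $N_H(u) \cup N_{\HR}(u)$ is a sound proxy for $N_G(u) \cup N_G(w)$; the bound $i_u \leq \beta$ via disjointness of almost-cliques matches the paper. The only places you are more explicit than the paper---spelling out that all of $K$ is still uncolored when processed, and writing $N_H(u)$ alongside $N_{\HR}(u)$ where the paper folds both into $N_{\HR}(u)$---are refinements, not deviations.
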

\begin{proof}
    The crucial observation is that since $u$ is not a stranger to $K$, it
    has at least $\Delta / \beta$ uncolored neighbors in the coloring
    $\myC$. Further, since $w \in K$ -- an almost-clique -- it has at most
    $10\eps\Delta$ neighbors outside $K$, and hence only at most
    $10\eps\Delta$ neighbors that receive colors in $\myC$. Hence there
    are at least $\Delta / \beta - 10\eps\Delta \geq \Delta / 2\beta$
    colors that are not used by $\myC$ in $N(u) \cup N(w)$
	(see \Cref{fig:use-friendly-helper}).

\begin{sidefigure}[h!]
	\centering
	\pgfdeclarelayer{edges}       
\pgfdeclarelayer{ellipses}    
\pgfsetlayers{edges,ellipses,main}

\newcommand{\colorgridbr}{\tikz{
\draw[xstep=0.3cm,ystep=0.1cm] (0,0)  grid (0.3,0.2); 
\filldraw[fill=blue!50] (0,0) rectangle (0.3,0.1);
\filldraw[fill=red!50] (0,0.1) rectangle (0.3,0.2);
}}

\begin{tikzpicture}
    [vertex/.style={circle, draw=blue!50, fill=blue!20, thick, inner sep=0pt,
     minimum size=2mm},
     red/.style={vertex, draw=red!50, fill=red!20},
     ghost/.style={inner sep=0pt, minimum size=0mm},
     scale=1]
    \node [vertex] (w) [label=below:$w$] at (4, 3) {};

    \node [vertex] (u) [label=below:{\footnotesize $u$}] at (1.5, 4) {};
	\node (L6u) [left = 3pt of u]{\colorgridbr};

    \draw (u) edge [dashed] (w);

    \node [ghost] (K) [label=right:{\footnotesize $K$}] at (4, 1.5) {};

    \begin{pgfonlayer}{ellipses}
        \draw (4, 3) ellipse (1.5cm and 3cm); 
        \draw [fill=gray!20] (1.5, 1.75) ellipse (0.5cm and 1cm); 
    \end{pgfonlayer}

    \node [ghost] (w1) [] at (1.5, 2.75) {};
    \node [ghost] (w2) [] at (1.5, .75) {};

    \node [ghost] (v1) [] at (1.5, 5) {};
    \node [ghost] (v2) [] at (1.5, 3) {};

    \node [ghost] (u1) [] at (4, 5.5) {};
    \node [ghost] (u2) [] at (4, 3.5) {};

    \draw (u) edge [out = 0, in = 180] (u1);
    \draw (u) edge [out = 0, in = 180] (u2);

    \begin{pgfonlayer}{edges}
        \draw [fill=white] (4, 4.5) ellipse (.5cm and 1cm); 
        \draw (w) edge (w1);
        \draw (w) edge (w2);
    \end{pgfonlayer}




	\node [ghost, align=left, font=\footnotesize] (Nw) at (-0.75, 1.75)
	{$\leq 10\eps\Delta$ colored\\ neighbors};

    \coordinate[right=.4cm of u1] (u11);
    \coordinate[right=.4cm of u2] (u22);
    \draw [decorate,decoration={brace,amplitude=10pt}]
    (u11) -- (u22) node [black, midway, xshift=7pt] (Nu) {};

	\node [ghost, align=left, font=\footnotesize] (Nulabel) [right=1cm of Nu]
    {$\geq \frac{\Delta}{\beta}$ uncolored\\ neighbors};

    \draw [->] (Nu) edge (Nulabel);


\end{tikzpicture}
	\caption{$L_{6, i_u}(u)$ has a color not used by
		$\myC$ over $N(u) \cup N(w)$.}
	\label{fig:use-friendly-helper}
\end{sidefigure}

	Let $\Avail{u, w}{\myC}$ denote this set of colors.
    We are interested in showing that the probability
    $ \prob{ L_{6, i_u}(u) \cap \Avail{u, w}{\myC} = \emptyset} $ is small.
	We start with the technical note that this is the first time (any part of)
	the coloring algorithm is looking at $L_{6, i_u}(u)$, and hence this
	particular list is independent of $\myC$ entirely.
	Recall that each color from $\Avail{u, w}{\myC}$ is in $L_{6, i_u}(u)$
	independently with
	probability $\beta^2 / \Delta$.
    Hence the probability that none of them is in $L_{6, i_u}(u)$ is at most:
    \[
        \paren{1 - \frac{\beta^2}{\Delta}}^{\Delta / 2\beta}
        \leq \exp\paren{ -\frac{\beta^2 \cdot \Delta}{2\beta \cdot \Delta} }
        = n^{-50}.
    \]
	So there is a color $c$ that satisfies our requirements with high
	probability. To find $c$ we can iterate over the colors of $L_{6, i_u}(u)$
	and check whether $c$ is used by $\myC$ in
	$N_{\Hplus}(u) \cup N_{\Hplus}(w)$. As before, this is a proxy
    for checking that $c$ is used in $N_G(u) \cup N_G(w)$, and it works because
    $c \in L(u)$, $N_{\Hplus}(w) = N_{G}(w)$, and if any neighbor of $u$ picked
    a color out of its palette, it will be in $N_{\Hplus}(u)$.

	Note that we increment $i_v$ in the final line to get a ``fresh'' list for
	the next time we have to color $v$---this happens at most $\beta$ times
	for a vertex $v$ (since it must be a non-stranger to some almost clique $K$
	to be recolored), and $L_6$ has $2\beta$ lists for each $v$, so there are
	always enough lists to go around.
\end{proof}

Next, we have the palette  sparsification analogue of this phase.

\begin{claim}\label{clm:palette-phase6}
    With high probability,
    $K \setminus \cbrac{v, w}$ can be colored as an extension of $\myC$,
    where the color of each vertex $x \in K\setminus \cbrac{v, w}$ is from
    $L_{6,2\beta}(x)$.
\end{claim}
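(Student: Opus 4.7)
The proof would closely mirror that of~\Cref{clm:palette-phase5} from Phase 5: I plan to set up the base and sampled palette graphs $\Gbase$ and $\Gsample$ and invoke~\Cref{lem:rgt} to produce an $\cL$-perfect matching, which translates into a valid extension of $\myC$ coloring $K \setminus \set{v,w}$ with colors drawn from $L_{6,2\beta}(\cdot)$. Concretely, I would set $\cL := K \setminus \set{v, w}$, $\cR := [\Delta] \setminus \set{c}$ (where $c$ is the color just assigned to $u$ and $w$), and $S(x) := L_{6,2\beta}(x)$ for each $x \in \cL$. Writing $m := \card{\cL} = \card{K} - 2$, the size bounds $m \leq \card{\cR} \leq 2m$ of~\Cref{lem:rgt} are immediate from $K$ being a small almost-clique with $(1-5\eps)\Delta \leq \card{K} \leq \Delta$ and the assumption that $\Delta = \Omega(\log^5 n)$.

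Most of the work would go into verifying the degree conditions $(ii)$ and $(iii)$ of~\Cref{lem:rgt} for $\Gbase$. For $(ii)$, any $x \in \cL \subseteq K$ has at most $10\eps\Delta$ neighbors outside $K$ by property~\ref{dec:neighbors} of~\Cref{def:almost-clique}, and its only colored neighbor inside $K$ at this moment is $w$, whose color $c$ is already excluded from $\cR$; hence $\deg_{\Gbase}(x) \geq \card{\cR} - 10\eps\Delta \geq 2m/3$. For $(iii)$, I would use the sharper bound obtained from $\deg_G(x) \leq \Delta$: since $x$ has exactly $\card{K} - 1 - \nondeg{K}{x}$ neighbors inside $K$, its number of neighbors outside $K$ is at most $\Delta - \card{K} + 1 + \nondeg{K}{x}$, so $\deg_{\Gbase}(x) \geq m - \nondeg{K}{x}$. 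Summing over $A \subseteq \cL$ and using that $K$ is unholey, namely $\sum_{x \in K} \nondeg{K}{x} \leq 2 \cdot 10^7 \eps\Delta < m/4$ by the choice of $\eps$ in~\Cref{eq:parameters}, would give $\sum_{x \in A} \deg_{\Gbase}(x) \geq \card{A} \cdot m - m/4$ as required.

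To move from $\Gbase$ to $\Gsample$, note that each edge of $\Gbase$ is retained in $\Gsample$ independently with probability $\beta^2/\Delta$ (the sampling rate of $L_{6,2\beta}$), which comfortably exceeds the $\frac{20}{m}\paren{\log m + \log(1/\delta)}$ threshold of~\Cref{lem:rgt} for $\delta = 1/\poly{(n)}$; thus an $\cL$-perfect matching exists in $\Gsample$ with high probability, and reading off its colors yields the desired extension of $\myC$ to $K \setminus \set{v,w}$. The main subtlety I expect to address carefully is independence: the lists $L_{6,2\beta}(x)$ for $x \in K$ must be independent of $\myC$, and hence of $\Gbase$. This should hold by design of~\Cref{alg:Col-UFSAC}, whose recoloring step only ever consults lists $L_{6,i_u}(u)$ with $i_u < 2\beta$ and for a vertex $u$ lying \emph{outside} the almost-clique being processed internally; thus $L_{6,2\beta}(x)$ for $x \in K$ remains fresh at this point and can be treated as independent randomness. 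I do not foresee any real obstacle beyond this bookkeeping.
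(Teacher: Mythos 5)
Your proof is correct and follows essentially the same route as the paper's: the same choice of $\cL$, $\cR$, and $S(\cdot)$, the same verification of the three hypotheses of~\Cref{lem:rgt} (the paper simply references~\Cref{clm:palette-phase5} for condition $(iii)$ where you rederive it, but the computation is identical), and the same invocation of~\Cref{lem:rgt}. The independence observation you flag at the end is a sound and worthwhile bit of bookkeeping; the paper addresses it at the level of~\Cref{lem:phase6}'s randomness statement rather than inside this claim, but your reasoning matches.
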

\begin{proof}
    Consider the base and the sampled palette graphs $\Gbase$ and $\Gsample$
	(Definitions~\ref{def:palette-graph} and~\ref{def:sampled-palette-graph})
	corresponding to the partial coloring $\myC$, with:
	\begin{itemize}
		\item $\cL = K \setminus \cbrac{v, w}$,
		\item $\cR = \bracket{\Delta} \setminus \cbrac{c}$, where $c$ is
			the color assigned to $u$ and $w$, and
		\item $S(x) = L_{6,2\beta}(x)$ for all $x \in \cL$.
	\end{itemize}
	Recall that an $\cL$-perfect matching in $\Gsample$ gives a coloring which
	extends $C$ to $\cL$ using only colors from the list $L_{6, 2\beta}$.
	Hence our focus shifts to showing that $\Gbase$ has the properties required
	by \Cref{lem:rgt} to obtain that $\Gsample$ has a perfect matching.
	Proceeding in the same order as the lemma:

	\begin{enumerate}[label=$(\roman*)$]
        \item $m \leq \card{\cR} \leq 2m$: The first inequality follows from
            the fact that $\card{\cR} = \Delta - 1$, and $m \leq \Delta - 2$
			(since $K$ is a \emph{small} almost-clique, and we removed two
			vertices from it).
			For the second one, note that
			$2m = 2\card{K} - 4 \geq 3/2 \cdot \Delta$.
        \item The minimum degree of any vertex in $\cL$ is at least $2m / 3$.
            Note that any vertex $v \in \cL$ is in $K$, and hence has only
            at most $10\eps\Delta$ edges going out of $K$ (in $G$, the input
            graph). Hence there are at most $10\eps\Delta$ colors that are
            blocked for $v$, and
            $\deg_{\Gbase}(v) \geq \card{\cR} - 10\eps\Delta \geq 2m/3$
            (the last inequality is from combining parts (i) and (ii) above).

		\item For any subset $S \subset \cL$ such that $\card{S} \geq m / 2$,
			\[
				\sum_{v\in S} \deg_{\Gbase}(v) \geq \paren{\card{S} \cdot m} -
				m/4.
			\]
			Since the only facts we use are that $K$ is unholey, and points
			$(i)$ and $(ii)$ above, the proof is exactly the same as in
			\Cref{clm:palette-phase5}.
    \end{enumerate}
	And once again, by invoking \Cref{lem:rgt} we are done.
\end{proof}

To finish up, note that $v$ has two neighbors ($u$ and $w$) with the same
color, and we know its entire neighborhood. Hence we can find a color to
assign to it, and we are done.

This concludes the proof of~\Cref{lem:phase6}. As at this point, all vertices of the graph are colored, we obtain a proper $\Delta$-coloring of $G$. This in turn concludes the proof of~\Cref{res:main}. 

\subsection*{Acknowledgements} 

Sepehr Assadi would like to thank Soheil Behnezhad, Amit Chakrabarti, Prantar Ghosh, and Merav Parter for helpful conversations. 
We thank the organizers of DIMACS REU in Summer 2020, in particular Lazaros Gallos, for initiating this collaboration and all their help and encouragements along the way.   
We are also thankful to the anonymous reviewers of STOC 2022 and TheoretiCS for helpful comments and suggestions on the presentation of the paper.

\printbibliography

\newcommand{\vecx}{\ensuremath{ \mathbf{x} }}
\newcommand{\vecy}{\ensuremath{ \mathbf{y} }}
\newcommand{\A}{\ensuremath{\mathcal{A}}}

\newcommand{\stcomp}[1]{\ensuremath{ \overline{#1} }}

\appendix
\part*{Appendix}

\section{Proofs of the Impossibility Results}\label{sec:lower}
We present the formal proofs of our impossibility results, alluded to~\Cref{sec:tech-lower}, in this appendix. 
The first is in the query model, and shows that it is essentially impossible to do better
than the trivial algorithm that learns the entire graph. 
The second is in the streaming setting, and shows that it is essentially impossible to do
better than to store the entire graph in a slightly non-standard model where edges of the input can appear more than once in the stream. 

\paragraph{Notation:} We will use a boldface $\vecx$ to denote a vector
(or a bit string), and $x_i$ to index it.

\subsection{Sublinear-Time Algorithms}\label{sec:sub-time}
In this section, we show that there is no algorithm in the general query model
that solves $\Delta$-coloring in $o(n\Delta)$ queries, via a reduction from
the $\AndORone_{t, m}$ problem.

We assume that the vertices of the graph $G = (V, E)$ are \emph{known}, as is
the maximum degree $\Delta$. The general query model supports the following
queries on $G$:
\begin{itemize}
    \item Degree queries: Given a vertex $v \in V$, output $\deg(v)$.
    \item Neighbor queries: Given a vertex $v$, and an index
        $i \in \bracket{\Delta}$, output the $i$-th neighbor of $v$, or~$\perp$ if $i > \deg(v)$.
    \item Pair queries: Given two vertices $u$ and $v \in V$, output whether
        $\cbrac{u, v}$ is an edge in $G$ or not.
\end{itemize}

The $\OR$ problem on $N$ bits is: Given query access to a bit-string
$\vecx \in \cbrac{0, 1}^N$, determine whether there exists an $i$ such that
$x_i = 1$. By query access, we mean that the algorithm can ask for any
$i \in \cbrac{N}$ whether $x_i$ is $0$ or $1$. It is well known that
the randomized query complexity $R(\OR_N)$ is $\Omega(N)$
(see~\cite{BuhrmanW02}).
We will reduce the following promise version of the problem to
$\Delta$-coloring (which is also known to have randomized query complexity
$\Omega(N)$):

\begin{problem}[$\ORone_N$]
    Given query access to a string $\vecx \in \cbrac{0, 1}^N$ such that the
    Hamming weight of
    $\vecx$ is at most $1$, determine whether there is an index $i$ such that
    $x_i = 1$.
\end{problem}

Let $N = \binom{n}{2}$.
The reduction from $\ORone_N$ to $\Delta$-coloring is the following:
For a bit-string $\vecx \in \cbrac{0, 1}^N$, we will define a graph $G$ on the
vertex set $U = \cbrac{u_1, \ldots , u_n} \cup V = \cbrac{v_1, \ldots , v_n}$.
We index $\vecx$ as $x_{i, j}$ for $1 \leq i < j \leq n$, and add edges to $G$
as follows:
\begin{itemize}
    \item If $x_{i, j} = 0$, add the internal edges $\cbrac{u_i, u_j}$ and
        $\cbrac{v_i, v_j}$.
    \item If $x_{i, j} = 1$, add the crossing edges $\cbrac{u_i, v_j}$ and
        $\cbrac{v_i, u_j}$.
\end{itemize}

Note that the graph $G$ is $(n - 1)$-regular (so $\Delta = n - 1$).
Now, if $\vecx = 0^N$, then the sets $U$ and~$V$ have no edges between them, and
$G$ is just two copies of $K_n$, and hence not $\Delta$-colorable. If, on the
other hand, $x_{i, j} = 1$ for one pair $(i, j)$ then $G$ is two copies of
$K_n$ minus an edge, connected by a pair of cross edges, and by Brooks' Theorem
$G$ is $\Delta$-colorable. To finish the reduction, we need to show that we
can simulate each of the queries of the general query model on $G$ with at most
a single query on $\vecx$:
\begin{itemize}
    \item Degree queries: We always return $n - 1$, without looking at any
        bit of $\vecx$.
    \item Neighbor queries: To get the $j$-th neighbor of $u_i$ (or $v_i$),
        we need to look at the bit:
        \[
            \begin{cases}
                x_{j, i} &\text{if } j < i\\
                x_{i, j + 1} &\text{if } i \leq j \leq n - 1 \\
            \end{cases}
        \]
    \item Pair queries: Assume without loss of generality that the query is
        for the pair $(u_i, v_j)$ such that $i < j$. Then we can answer it by
        just looking at the bit $x_{i, j}$.
\end{itemize}
And hence, a $o(n^2)$ query algorithm for $\Delta$-coloring a graph on $2n$
vertices implies a $o(N)$ query algorithm for the $\ORone$ problem on
$N = \binom{n}{2}$ bits. Note that we can pad an arbitrary instance of
$\ORone_N$ to an instance we can reduce with at most a constant blowup in size,
since $(n + 1)^2 \leq 2n^2$ for all $n$ large enough.

\begin{lemma}\label{lem:lb-query}
    The randomized query complexity of $\Delta$-coloring a graph on $n$
    vertices is $\Omega(n^2)$ for some choice of $\Delta = \Theta(n)$. 
\end{lemma}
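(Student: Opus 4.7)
The plan is to simply compile the reduction already laid out in the text into a formal argument, adding the missing verification steps. The graph $G$ on $2n$ vertices constructed from $\vec{x} \in \{0,1\}^N$ with $N = \binom{n}{2}$ is $(n-1)$-regular by inspection (each $u_i$ has exactly one edge for each pair $(i,j)$, going either to $u_j$ internally or to $v_j$ across; same for $v_i$), so $\Delta = n-1$. I would set up the instance size so that $2n$ vertices correspond to $\Theta(n)$ in the lemma statement, and note that we can pad arbitrary $\ORone_N$ instances by constant factors.

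Next I would verify the two cases of the promise. When $\vec{x} = 0^N$, no crossing edges are added, so $G$ is the disjoint union of two copies of $K_n$; since $\chi(K_n) = n > n-1 = \Delta$, the graph is not $\Delta$-colorable. When there is a unique $(i^*, j^*)$ with $x_{i^*, j^*} = 1$, the graph $G$ is obtained from two copies of $K_n$ minus one edge each, joined by the two cross edges $\{u_{i^*}, v_{j^*}\}$ and $\{v_{i^*}, u_{j^*}\}$. I would argue that $G$ is connected (each copy of $K_n$ minus one edge is still connected for $n \geq 3$, and the cross edges link the two copies), not a clique (it has $2n$ vertices but max degree $n-1$), and not an odd cycle (for $n \geq 4$ the degree is $\geq 3$). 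Brooks' theorem then guarantees a $\Delta$-coloring exists.

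For the query simulation, I would just list the three cases as in the text and observe that each general-query-model query on $G$ is answered by reading at most one bit of $\vec{x}$: degree queries need no bits; a neighbor query on $u_i$ or $v_i$ for index $j$ reads the single bit $x_{i,j+1}$ or $x_{j,i}$ depending on the relative order of indices; and a pair query on any $\{u_i, v_j\}$ or internal pair reads exactly the bit $x_{\min(i,j),\max(i,j)}$. Thus any randomized algorithm solving $\Delta$-coloring with success probability at least $2/3$ using $q(n)$ queries on $G$ yields a randomized algorithm for $\ORone_N$ with the same success probability using at most $q(n)$ queries on $\vec{x}$, simply by distinguishing the two colorability regimes above.

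Finally, I would invoke the $\Omega(N) = \Omega(n^2)$ randomized query complexity lower bound for $\ORone_N$ (which follows from the classical $\Omega(N)$ bound for $\OR_N$ via the standard reduction, as cited to \cite{BuhrmanW02}), which forces $q(n) = \Omega(n^2)$. Since $\Delta = n-1 = \Theta(n)$, this is $\Omega(n \Delta)$, matching the lemma. The main thing to be careful about is simply making the reduction a proper Karp/Turing reduction preserving randomized complexity, and ensuring the $\ORone$ promise (Hamming weight at most $1$) is what the colorability dichotomy distinguishes — which is exactly why the promise version is used rather than general $\OR$. There is no real obstacle beyond writing down these verifications cleanly.
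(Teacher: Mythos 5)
Your proposal is correct and follows the paper's reduction from $\ORone_N$ essentially verbatim: same graph construction on $2n$ vertices from a $\binom{n}{2}$-bit string, same one-bit-per-query simulation, and same appeal to Brooks' theorem for the weight-one case and to $\chi(K_n) = n$ for the weight-zero case. The extra checks you add (connectivity of $K_n$ minus an edge, ruling out clique/odd-cycle exceptions before invoking Brooks) are just filling in steps the paper leaves implicit, not a different argument.
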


We can take this idea further: Suppose instead that we have $\Theta(n / \Delta)$
instances of $\ORone$ on $\binom{\Delta + 1}{2}$ bits. In particular, define
the $\AndORone$ problem:

\begin{problem}[$\AndORone_{t, m}$]
    Given query access to a set of $t$ strings
    $\vecx_1, \ldots , \vecx_t \in \cbrac{0, 1}^{m}$, compute:
    \[
        \bigwedge_{i = 1}^t \ORone_{m}(\vecx_i).
    \]
\end{problem}

Then it is well known that the randomized query complexity of
$\AndORone_{t, m}$ is $\Omega(tm)$ (see~\cite{BunT15},~\cite{Sherstov13}).
We will take an instance of $\AndORone$ with $m = \binom{\Delta + 1}{2}$ and
$t \in \IN$---setting $n = t \cdot 2m$---and reduce it to
$\Delta$-coloring a graph on $2n$ vertices. In particular, the graph will just
be the (disjoint) union of the graphs formed by reducing each of the
$\ORone_m$ instances in the fashion described above. Note that the graph will
be $\Delta$-regular.
It is immediate that a $o(n\Delta)$-query algorithm for $\Delta$-coloring
implies a $o(t\cdot \Delta^2) = o(tm)$-query algorithm for $\AndORone$, 
and we have a contradiction.

\begin{lemma}
    The randomized query complexity of $\Delta$-coloring a graph on $n$
    vertices is $\Omega(n\Delta)$ for all choices of $100 \leq \Delta < n/100$. 
\end{lemma}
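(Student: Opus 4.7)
The plan is to formalize the reduction sketched in the paragraph preceding the statement. The input is an instance of $\AndORone_{t,m}$ with $m = \binom{\Delta+1}{2}$ and $t$ chosen so that $2t(\Delta+1) = n$ (this is why the hypothesis $100 \leq \Delta < n/100$ is needed: it guarantees $t \geq \Omega(n/\Delta)$ is a positive integer after the appropriate rounding, and that the instance of $\AndORone$ is nontrivial). For each string $\vecx_i \in \{0,1\}^m$, I reuse the gadget from the $\ORone_N$ reduction above: a graph $G_i$ on $2(\Delta+1)$ vertices with vertex sets $U_i = \{u_1^{(i)},\ldots,u_{\Delta+1}^{(i)}\}$ and $V_i = \{v_1^{(i)},\ldots,v_{\Delta+1}^{(i)}\}$, where the bit $x_{i;j,k}$ determines whether the pair $(j,k)$ contributes internal edges $\{u_j^{(i)},u_k^{(i)}\},\{v_j^{(i)},v_k^{(i)}\}$ or crossing edges $\{u_j^{(i)},v_k^{(i)}\},\{v_j^{(i)},u_k^{(i)}\}$. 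The final graph $G$ is the disjoint union $\bigsqcup_{i=1}^t G_i$, and by construction $G$ is $\Delta$-regular on (at most) $n$ vertices.

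Next I argue correctness of the reduction. By the analysis in the $\ORone_N$ case, each component $G_i$ is $\Delta$-colorable iff $\ORone_m(\vecx_i) = 1$: if $\vecx_i = 0^m$, $G_i$ is two disjoint copies of $K_{\Delta+1}$ and hence requires $\Delta+1$ colors, while otherwise it is $\Delta$-colorable by Brooks' theorem (the promise that $\vecx_i$ has Hamming weight at most one means exactly one crossing pair of edges is added, which suffices). Since the chromatic number of a disjoint union is the maximum chromatic number over components, $G$ is $\Delta$-colorable iff every $\ORone_m(\vecx_i)$ evaluates to $1$, which is precisely $\AndORone_{t,m}(\vecx_1,\ldots,\vecx_t)$. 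Therefore any algorithm that $\Delta$-colors $G$ with constant probability also decides $\AndORone_{t,m}$ with constant probability (by checking whether it returned a coloring or a ``not $\Delta$-colorable'' flag).

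I then simulate queries: exactly as in the $\ORone$ reduction, a degree query is answered with $\Delta$ without reading $\vecx$, a neighbor query on a vertex in $G_i$ requires a single bit lookup in $\vecx_i$, and a pair query similarly reads one bit of one $\vecx_i$ (with pairs across different components answered ``no'' for free). Hence any algorithm making $q$ queries to $G$ induces an algorithm making at most $q$ queries to the $\AndORone$ input.

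Combining these ingredients with the $\Omega(tm)$ randomized query lower bound for $\AndORone_{t,m}$ cited in the text (\cite{BunT15,Sherstov13}), any $\Delta$-coloring algorithm with constant error must use at least $\Omega(tm) = \Omega\!\left(\tfrac{n}{\Delta} \cdot \Delta^2\right) = \Omega(n\Delta)$ queries. The only step that requires some care is the integrality/padding: given arbitrary $n$ and $\Delta$ in the allowed range, I pick $t = \lfloor n/(2(\Delta+1))\rfloor$, reduce the $\AndORone_{t,m}$ instance into a graph on $n' \leq n$ vertices, and pad with isolated vertices if necessary (isolated vertices do not affect $\Delta$ and are trivially colored); the constants absorb the $\Theta(1)$ loss. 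I expect this step and verifying that Brooks' theorem genuinely applies to the single-crossing-pair gadget to be the only places where the argument is not purely mechanical.
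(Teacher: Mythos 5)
Your proof is correct and follows essentially the same route as the paper: pack $t = \Theta(n/\Delta)$ disjoint copies of the $\ORone_m$ gadget (each on $2(\Delta+1)$ vertices, with $m = \binom{\Delta+1}{2}$), reduce from $\AndORone_{t,m}$, simulate each query by at most one bit lookup, and invoke the $\Omega(tm)$ lower bound for $\AndORone$. You handle the integrality and padding step slightly more carefully than the paper's one-line aside (and in fact the paper's footnote "setting $n = t \cdot 2m$" appears to be a typo---with $m = \binom{\Delta+1}{2}$ that would give a graph on $\Theta(t\Delta^2)$ vertices, whereas the disjoint union of the gadgets has $2t(\Delta+1)$ vertices as you correctly write); this is a cosmetic improvement, not a different argument.
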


\subsection{Streaming Algorithms on Repeated-Edge Streams}\label{sec:or-streams}
In this section, we will show that any $O(1)$ pass algorithm for
$\Delta$-coloring on graph streams with repeated edges needs $\Omega(n\Delta)$
space to color a graph on $n$ vertices and maximum degree $\Delta$. In
particular, we will reduce the $\tribes$ problem from communication complexity
to the $\clique_{\Delta + 1}$ problem. We start by defining the
$\tribes_{m, n}$ problem.

\begin{problem}[$\tribes_{m, n}$]
    In the $\tribes_{m, n}$ problem, Alice and Bob receive $m$ vectors
    $\vecx_1, \ldots , \vecx_m \in \cbrac{0, 1}^n$ and
    $\vecy_1, \ldots , \vecy_m \in \cbrac{0, 1}^n$ respectively.
    They want to compute the function:
    \[
        \bigwedge_{k = 1}^m \disjoint_n(\vecx_k, \vecy_k).
    \]
    Where $\disjoint_n$ is the standard disjoint function that is true iff
    its inputs differ in all of their bits.
\end{problem}

In words, the problem is just to solve $m$ instances of $\disjoint_n$, and
return true only if all of them are disjoint.
By a result of~\cite{JayramKS03}, the randomized communication complexity of
$\tribes_{m, n}$ is $\Omega(mn)$.
We will show a low-communication protocol
for $\tribes_{m, n}$ assuming a small-space algorithm for
$\clique_{\Delta + 1}$, which is defined as follows:

\begin{problem}[$\clique_{\Delta + 1}$]
    Given a graph $G$ as a stream, determine whether or not $G$ contains a
    $(\Delta + 1)$-clique.
\end{problem}

First, we will do a warm-up reduction, from $\disjoint_N$ to $\clique$, which
simply asks if the input graph is a clique.
Suppose that $N = \binom{n}{2}$ for some $n$,%
\footnote{This is easily achieved by blowing up the universe by at most a
constant factor.}
and we have an instance $\vecx, \vecy \in \cbrac{0, 1}^N$ of $\disjoint_N$.
Then we can relabel the indices to $(u, v)$ such that $1 \leq u < v \leq n$.
Let $A$ be the set $\cbrac{ (i, j) \mid x_{i, j} = 1 }$, and $B$ be the
same for $\vecy$.
Define a graph $G$ on the vertex set $V = \cbrac{v_1, \ldots , v_n}$ as
follows:
Add the edge $\cbrac{v_i, v_j}$ to $G$ iff $\cbrac{i, j}$ is in
$\stcomp{A} \cup \stcomp{B}$. Note that $A$ and $B$ are disjoint if and only
if $\stcomp{A} \cup \stcomp{B}$ is the entire universe, which is equivalent
to $G$ being a clique.

Assume that we have a constant-pass $o(n^2)$ space algorithm for
$\Delta$-coloring (and hence for $\clique$). Then here is a low-communication
protocol for $\disjoint_N$:
Alice will form ``half'' the stream by taking the
edge set $\stcomp{A}$, and run the streaming algorithm for $\clique$ on it,
and communicate the state of the algorithm (using $o(n^2)$ bits) to Bob. Bob
will then continue running the algorithm on his ``half'' of the stream
(the edge set $\stcomp{B}$), and hence finish one pass of the algorithm over
the edges of $G$. If
there are additional passes required, Bob will communicate the state of the
algorithm to Alice (again, using $o(n^2)$ bits) and the process will repeat.
After a constant number of passes, the streaming algorithm will decide whether
$G$ is a clique, and hence if $A$ and $B$ are disjoint, having used
$o(n^2) \cdot O(1)$ communication.
And hence we have shown the following lemma:

\begin{lemma}
    There is no constant pass $o(n^2)$ space streaming algorithm for $\clique$.
\end{lemma}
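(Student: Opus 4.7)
The plan is to reduce the set disjointness problem $\disjoint_N$ in communication complexity to $\clique$, essentially following the construction that was just introduced as a warm-up above the lemma statement. This reduction will convert any hypothetical constant-pass semi-streaming algorithm for $\clique$ into a low-communication protocol for $\disjoint_N$, contradicting the classical $\Omega(N)$ lower bound on the randomized communication complexity of disjointness.

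First, I would set $N = \binom{n}{2}$ (padding by at most a constant factor if needed) and identify indices of $\vecx, \vecy \in \cbrac{0,1}^N$ with unordered pairs $\cbrac{i,j}$ with $1 \le i < j \le n$. Given Alice's input $\vecx$, define $A := \cbrac{\cbrac{i,j} \mid x_{i,j}=1}$, and symmetrically define $B$ from Bob's input $\vecy$. Construct the graph $G$ on vertex set $\cbrac{v_1,\ldots,v_n}$ by placing the edge $\cbrac{v_i,v_j}$ in $G$ iff $\cbrac{i,j} \in \stcomp{A} \cup \stcomp{B}$. The key structural observation is that $G$ is a clique iff $\stcomp{A} \cup \stcomp{B}$ covers all of $\binom{[n]}{2}$, which in turn is equivalent to $A \cap B = \emptyset$.

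The next step is to simulate the streaming algorithm using a repeated-edge stream split naturally between the two players: Alice feeds the edges corresponding to $\stcomp{A}$ into the streaming algorithm and, after processing her portion, sends the memory state (of size $o(n^2)$ bits by assumption) to Bob. Bob continues the same pass on the edges of $\stcomp{B}$. Crucially, the two halves of the stream may list the same edge more than once (since an edge $\cbrac{v_i,v_j}$ can appear in both $\stcomp{A}$ and $\stcomp{B}$)---which is precisely why the reduction fits the repeated-edge streaming model. After each pass the current player ships the $o(n^2)$-bit state to the other; over $O(1)$ passes the total communication remains $o(n^2)$, and at the end the players learn whether $G$ is a clique and hence whether $A \cap B = \emptyset$.

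Combining the above with the $\Omega(N) = \Omega(n^2)$ randomized communication complexity lower bound for $\disjoint_N$ yields the desired contradiction, so no constant-pass streaming algorithm for $\clique$ can use $o(n^2)$ space. I do not expect any real obstacles here: the structural equivalence between ``$G$ is a clique'' and ``$A$ and $B$ disjoint'' is immediate from the construction, and the only subtlety is keeping track of the repeated-edge aspect, which is baked into the model and is exactly what allows Alice and Bob to operate on their own halves $\stcomp{A}$ and $\stcomp{B}$ independently without having to coordinate which side should ``own'' each edge.
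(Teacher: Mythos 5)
Your proposal is correct and follows exactly the same route as the paper: the reduction from $\disjoint_N$ with $N=\binom{n}{2}$ to $\clique$ via the edge set $\stcomp{A}\cup\stcomp{B}$, with Alice and Bob simulating the streaming algorithm by exchanging the $o(n^2)$-bit state at pass boundaries. You also correctly flag that the repeated-edge model is what lets each player stream their own complement set independently, which is precisely the point the paper makes.
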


We can use the same idea when starting with an instance of $\tribes_{m, k}$
where $m = \binom{\Delta + 1}{2}$, and $k \in \IN$%
\footnote{And setting $n = k \cdot (\Delta + 1)$.}
to rule out a $o(n\Delta)$~space algorithm for $\clique_{\Delta + 1}$
(and hence $\Delta$-coloring).
In particular, define the $k$-partite graph $G$ on the vertex sets
$V_1, \ldots, V_k$, such that each $G(V_i, E_i)$ is the graph that is a clique
iff $\vecx_i$ and $\vecy_i$ are disjoint.

Then supposing we have a constant~pass $o(n\Delta)$~space algorithm for
$\clique_{\Delta + 1}$, we can use exactly the same low communication protocol
as before: Alice and Bob each construct half of the graph stream, and to
complete a pass exchange the entire state of the algorithm twice. This implies
a $o(n \Delta) = o(k \Delta^2) = o(mk)$ communication protocol for
$\tribes_{m, k}$, and we have the contradiction we desire.

\begin{lemma}
    There is no constant pass $o(n\Delta)$ space streaming algorithm for
    $\Delta$-coloring.
\end{lemma}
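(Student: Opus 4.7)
The plan is to transfer the $\Omega(n\Delta)$ space lower bound just established for $\clique_{\Delta+1}$ directly to $\Delta$-coloring, by showing that on the very same hard input family used in the preceding reduction, $\Delta$-colorability coincides with absence of a $(\Delta+1)$-clique. Recall that the hard instance is a disjoint union $G$ of $k = \Theta(n/\Delta)$ gadgets of $\Delta+1$ vertices each, where gadget $i$ equals $K_{\Delta+1}$ iff $\disjoint(\vecx_i, \vecy_i) = 1$ and otherwise is missing at least one edge. In either case $G$ has maximum degree at most $\Delta$, so it is a valid input for $\Delta$-coloring.

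The key observation is a short Brooks'-theorem argument. If no gadget equals $K_{\Delta+1}$, then every connected component of every gadget has at most $\Delta+1$ vertices, maximum degree at most $\Delta$, and is itself not a copy of $K_{\Delta+1}$. Since we may assume $\Delta \geq 3$ (indeed we have $\Delta = \Omega(\log^5{n})$ by the working assumption in \Cref{eq:parameters}), no such component can be an odd cycle either, so Brooks' theorem properly $\Delta$-colors each component, and taking the union over all gadgets yields a proper $\Delta$-coloring of the whole $G$. Conversely, if even one gadget equals $K_{\Delta+1}$ then $G$ contains a $(\Delta+1)$-clique and admits no $\Delta$-coloring. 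Hence any algorithm that, on such inputs, either outputs a $\Delta$-coloring or declares $G$ not $\Delta$-colorable, also solves $\clique_{\Delta+1}$.

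Plugging this into the Alice--Bob protocol of the preceding lemma (Alice streams the edges derived from her halves $\stcomp{A_i}$ of the gadgets, Bob streams the edges derived from $\stcomp{B_i}$, and they exchange the working memory of the algorithm $O(1)$ times per pass), a hypothetical constant-pass $o(n\Delta)$-space $\Delta$-coloring algorithm would immediately yield an $O(1)$-round, $o(n\Delta) = o(mk)$-communication protocol for $\tribes_{m,k}$ with $m = \binom{\Delta+1}{2}$ and $k = \Theta(n/\Delta)$, contradicting the $\Omega(mk)$ communication lower bound of~\cite{JayramKS03}. The only step that required even minor care was the Brooks'-theorem application---specifically ruling out the odd-cycle exception per component---but this is immediate from $\Delta \geq 3$, so the ``main obstacle'' is really just being careful that the component-wise Brooks' argument is clean; the communication-complexity backbone of the argument is inherited verbatim from the preceding lemma.
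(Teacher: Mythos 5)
Your proof is correct and follows essentially the same structure as the paper's: reuse the disjoint-union-of-gadgets instance from the $\tribes \to \clique_{\Delta+1}$ reduction, observe that on these instances $\Delta$-colorability coincides with absence of a $(\Delta+1)$-clique, and then run the same Alice--Bob state-passing simulation. The extra value you add is making the Brooks'-theorem step explicit---the paper writes ``(and hence $\Delta$-coloring)'' and leaves the equivalence between testing $\clique_{\Delta+1}$ and testing $\Delta$-colorability on these instances implicit; spelling out the per-component Brooks argument and the odd-cycle exception is a genuine clarification.

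Two small remarks. First, the justification ``we may assume $\Delta \geq 3$'' should not appeal to the algorithmic working assumption $\Delta = \Omega(\log^5 n)$ in \Cref{eq:parameters}; that assumption is imposed on the \emph{upper-bound} side, not the lower bound. You are free to choose $\Delta$ in the hard instance, so just note that the construction has $\Delta+1 \geq 4$, which already rules out odd cycles. Second, your phrasing ``either outputs a $\Delta$-coloring or declares $G$ not $\Delta$-colorable'' covers the decision and search-with-failure settings, but the strongest (promise) version---where the algorithm may output arbitrary garbage on non-$\Delta$-colorable inputs---requires an extra verification step. The paper handles this in its ``technical remark'' by having Alice and Bob each scan their half of the stream for a monochromatic edge in the claimed coloring; you should include this step if you want the lower bound to apply to the promise search problem as well, which is the problem the paper's positive result (\Cref{res:main}) actually solves.
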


The crucial observation is that in the stream created in the communication
protocol, the edges of $G$ can \emph{repeat}. In particular, if the bit
$(i, j)$ is zero in both $\vecx$ and $\vecy$,
both Alice and Bob add the edge $\cbrac{v_i, v_j}$ to
their halves of the stream. Suppose that our algorithm for $\Delta$-coloring
(and hence $\clique$) is only required to work on graph streams with no
repeated edges. Then the low communication protocol breaks down completely, and
hence it is actually possible to solve the $\Delta$-coloring problem in
$o(n\Delta)$ space.

\paragraph{A technical remark:}
Note that both of the lower-bounds we prove are for the problem which tests
whether a graph is $\Delta$-colorable, but they also apply to the promise
version of the problem where the graph is guaranteed to be $\Delta$-colorable,
and the task is to output a coloring. In particular, suppose we have a
$\Delta$-coloring algorithm $\A$ on a graph stream, then we can use it for
the reduction from $\disjoint_N$ as follows:
\begin{itemize}
    \item Run $\A$ on the input stream for the graph $G$ with edge set
        $\stcomp{A} \cup \stcomp{B}$, communicating between Alice and Bob as
        before.
    \item If $\A$ fails to output a coloring, then the sets $A$ and $B$ are
        disjoint (w.h.p.).
    \item If $\A$ outputs a coloring, then we need to \emph{test} it.
        In particular, if the coloring is improper, there exists an edge in
        $G$ that is monochromatic. Alice and Bob can independently test their
        halves of the edges for such an edge, and if they don't find one,
        the coloring is valid, and hence $A$ and $B$ are disjoint.
\end{itemize}
A similar argument shows that the query lower-bound applies to the promise
version too.

\section{Missing Proofs from Section~\ref{sec:coloring}}\label{app:missing-proofs}

In this appendix, we show all the results whose proofs we skipped in
\Cref{sec:coloring}. The common factor between these results is that they are
all small modifications of previous work and are presented here only
for completeness. 

\paragraph{Preliminaries:} We use a standard form of Talagrand's inequality~\cite{Talagrand95} as
specified in~\cite{MolloyR13}.
A function $f(x_1,\ldots,x_n)$ is called \textbf{$c$-Lipschitz} iff changing
any  $x_i$ can affect the value of $f$ by at most $c$.
Additionally, $f$ is called \textbf{$r$-certifiable} iff whenever
$f(x_1,\ldots,x_n) \geq s$, there exist at most $r \cdot s$ variables
$x_{i_1},\ldots,x_{i_{r \cdot s}}$ so that knowing the values of these
variables certifies $f \geq s$.

\begin{proposition}[Talagrand's inequality; cf.~\cite{MolloyR13}]\label{prop:talagrand}
  Let $X_1,\ldots,X_m$ be $m$ independent random variables and
  $f(X_1,\ldots,X_m)$ be a $c$-Lipschitz function; then for any $t \geq 1$,
  \begin{align*}
    \Pr\paren{\card{f - \expect{f}} > t } \leq 2 \exp\paren{-\frac{t^2}{2c^2 \cdot m}}.
  \end{align*}
  Moreover, if $f$ is additionally $r$-certifiable, then for any $b \geq 1$,
  \begin{align*}
    \Pr\paren{\card{f - \expect{f}}
    > b + 30c \sqrt{r \cdot \expect{f}}} \leq 4 \exp\paren{-\frac{b^2}{8c^2 r \expect{f}}}.
  \end{align*}
\end{proposition}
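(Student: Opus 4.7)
The plan is to prove the two parts of Talagrand's inequality separately, as they require qualitatively different tools even though both concern concentration of Lipschitz functions of independent random variables.

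For the first inequality, I would use the method of bounded martingale differences. Define the Doob martingale $Z_i = \expect{f(X_1,\ldots,X_m) \mid X_1,\ldots,X_i}$ with respect to the filtration generated by the sequential reveal of the variables, so that $Z_0 = \expect{f}$ and $Z_m = f(X_1,\ldots,X_m)$. Because $f$ is $c$-Lipschitz and the $X_i$ are independent, a standard coupling argument (resampling $X_i$ and taking expectations over the remaining variables) shows that the martingale differences satisfy $|Z_i - Z_{i-1}| \leq c$ almost surely. Applying Azuma--Hoeffding to this bounded-difference martingale yields the desired sub-Gaussian tail bound with variance proxy $c^2 m$. This is essentially McDiarmid's inequality, and the whole argument takes only a few lines once the martingale is set up.

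The second inequality is the genuinely Talagrand-style result, and the main obstacle is that Azuma--Hoeffding is too weak to exploit the certification property — it only "sees" the Lipschitz constant. The tool is Talagrand's \emph{convex distance inequality}: for a product probability space and any measurable $A$, the convex distance $d_T(A,x) := \sup_{\alpha \in \IR^m_{\geq 0}, \,\|\alpha\|_2 \leq 1} \inf_{y \in A} \sum_{i} \alpha_i \cdot \mathbf{1}\{x_i \neq y_i\}$ satisfies $\expect{\exp(d_T(A,X)^2/4)} \leq 1/\Pr(X \in A)$. My plan is to apply this with $A = \set{x : f(x) \leq s}$ for a median-level $s$ of $f$. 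The role of the $r$-certificate is to translate convex distance into a lower bound on the gap $f(y) - s$: if $f(y) \geq t$, then any certificate of size at most $r \cdot t$ witnesses this, and any $x \in A$ must disagree with $y$ on at least one coordinate of every such certificate. A Cauchy--Schwarz-style argument on the certificate coordinates (using that the certificate has size $r \cdot t$ and $f$ is $c$-Lipschitz) gives $d_T(A, y) \geq (f(y) - s)/(c \sqrt{r \cdot f(y)})$. Substituting into Talagrand's inequality and choosing $s$ to be a median, then converting the deviation from the median to the deviation from the mean (absorbing the error into the $b$ term and the $30 c \sqrt{r \cdot \expect{f}}$ offset), produces the stated bound.

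The hard part will be step two: carefully juggling the certification property, the convex distance, and the Lipschitz constant to get the precise constants in the exponent, and then handling the median-versus-mean conversion cleanly. Since this is a very classical inequality and the paper only cites it (following the presentation in~\cite{MolloyR13}), I would not attempt to optimize the constants or re-derive Talagrand's convex distance inequality from scratch; instead I would invoke it as a black box and focus the write-up on the certificate-to-distance translation, which is the only step where the $r$-certifiability hypothesis actually enters.
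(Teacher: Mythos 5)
The paper does not prove this proposition; it is stated as a known black-box result with a pointer to Molloy and Reed~\cite{MolloyR13}, so there is no ``paper's own proof'' to compare against. Your sketch is, however, correct and is essentially the argument given in that reference (and in Talagrand's original work). For the first bound, the Doob martingale $Z_i = \expect{f \mid X_1,\ldots,X_i}$ with bounded differences $\card{Z_i - Z_{i-1}} \leq c$ plus Azuma--Hoeffding is exactly McDiarmid's inequality, and in fact yields $2\exp(-2t^2/(c^2 m))$, which is stronger than the constant stated in the proposition, so there is slack. For the second bound, your plan is the standard one: apply Talagrand's convex distance inequality to the sublevel set $A = \set{x : f(x) \leq s}$ at a median $s$ of $f$; use the $r$-certificate $I$ (of size at most $r\, f(y)$) together with $c$-Lipschitzness to show that any $x \in A$ must disagree with $y$ on at least $(f(y)-s)/c$ coordinates of $I$, which after normalizing $\alpha$ on $I$ gives $d_T(A,y) \geq (f(y)-s)/(c\sqrt{r f(y)})$; and then convert the median-based tail bound to a mean-based one, which is exactly where the additive $30c\sqrt{r\,\expect{f}}$ offset and the factor $4$ in front of the exponential come from. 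There is no gap in the outline; you have simply reconstructed the proof that the paper delegates to the cited reference, and you are right that the only step where $r$-certifiability actually enters is the certificate-to-distance translation.
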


\subsection{From Section~\ref{sec:one-shot}}\label{sec:app-one-shot}

\begin{lemma}[Re-statement of \Cref{lem:os-gap}]
    For every sparse vertex $v \in \Vsparse$, in the partial coloring $\myC_1$
    of $\OneShot$, 
    \[
        \avail{v}{\myC_1} > (\deg{(v)} - \cn{v}{\myC_1}) +
        \frac{\eps^2 \cdot \Delta}{2\alpha}
    \]
    with high probability, where the randomness is only over the choice of the
    lists $L_1(v)$.
\end{lemma}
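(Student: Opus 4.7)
The plan is to lower-bound the ``gap'' $\avail{v}{\myC_1} - (\deg(v) - \cn{v}{\myC_1})$ by counting \emph{color collisions} in $N(v)$. First I would observe that this gap equals the number of colors $c \in [\Delta]$ that appear at least twice in $N(v)$ under $\myC_1$ (i.e., the excess of colored neighbors over distinct colors used on them), so it suffices to exhibit many collisions. For this, let $P(v)$ be the set of non-edges between neighbors of $v$; since $v$ is $\eps$-sparse we have $|P(v)| \geq \eps^2 \Delta^2 / 2$. Define the random variable
\[
X_v := \card{\set{\set{u_1,u_2} \in P(v) \mid \myC_1(u_1) = \myC_1(u_2) \neq \perp}},
\]
so that $\avail{v}{\myC_1} - (\deg(v) - \cn{v}{\myC_1}) \geq X_v$.

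The first main step is to lower-bound $\Exp[X_v]$. For a fixed non-edge $\set{u_1,u_2} \in P(v)$, the probability that both $u_1$ and $u_2$ are activated and sampled the same color is $1/(\alpha^2 \Delta)$. Conditioned on this, both survive in $\myC_1$ provided no activated neighbor of $u_1$ or $u_2$ sampled this same color; since $u_1,u_2$ are non-adjacent, their own matching colors do not conflict, and for every \emph{other} neighbor the conflict probability is at most $1/(\alpha\Delta)$, so by a union bound the ``both survive'' event has probability at least $(1 - 1/(\alpha\Delta))^{2\Delta} \geq e^{-3/\alpha} = \Omega(1)$ for the chosen $\alpha$. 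Multiplying out and summing over $P(v)$ gives $\Exp[X_v] = \Omega(\eps^2 \Delta / \alpha^2) \geq \eps^2 \Delta / \alpha$ (after adjusting constants; note $\alpha=10^3$ provides ample slack).

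The second main step is concentration. Here I would apply Talagrand's inequality (\Cref{prop:talagrand}) to $X_v$ viewed as a function of the independent variables $\set{(\text{activation bit},\, L_1(u)) \mid u \in V}$. Changing the pair for a single vertex $u$ affects $X_v$ by at most $\deg(v) \leq \Delta$ collisions incident to $u$; we also need a certificate: whenever $X_v \geq s$, the at most $2s$ relevant vertices $u$ (two per counted pair) certify the outcome. Thus $X_v$ is $\Delta$-Lipschitz and $2$-certifiable, and applying Talagrand with $b = \Exp[X_v]/2$ gives
\[
\Pr\paren{\card{X_v - \Exp[X_v]} > \Exp[X_v]/2 + 30\Delta\sqrt{2\Exp[X_v]}} \leq \exp\paren{-\Omega(\Exp[X_v]/\Delta^2)}.
\]
Since we assumed $\Delta = \Omega(\log^5 n)$, we have $\Exp[X_v]/\Delta^2 = \Omega(\eps^2/(\alpha \Delta)) = \Omega(\log^{-2}(n)/\Delta) = \omega(\log n / \Delta^2)$; more importantly, the additive ``Talagrand error'' $30\Delta\sqrt{2\Exp[X_v]}$ is $O(\Delta \sqrt{\Delta/\alpha})$ which, for $\Delta \gg \log^5 n$, is dominated by $\Exp[X_v]/2 = \Omega(\eps^2\Delta/\alpha)$ since $\eps^2 = \Theta(\log^{-2} n)$. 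A final union bound over the $n$ sparse vertices yields $X_v \geq \Exp[X_v]/4 \geq \eps^2\Delta/(2\alpha)$ simultaneously for all $v \in \Vsparse$ w.h.p.

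The main obstacle I anticipate is calibrating the concentration step: the Lipschitz constant $\Delta$ is large relative to $\sqrt{\Exp[X_v]}$, so the naive Talagrand bound is weak unless one uses the $r$-certifiability refinement and leans on the assumption $\Delta = \Omega(\log^5 n)$ to absorb the additive term. An alternative, if the parameters turn out tight, would be to cluster pairs by one endpoint and apply Talagrand vertex-by-vertex, but the certificate-based form above should suffice for the stated $\eps^2\Delta/(2\alpha)$ gap.
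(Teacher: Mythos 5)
Your plan differs from the paper's at two points and both are genuine gaps.

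\paragraph{The initial reduction is wrong.}
The quantity you want to lower bound,
$Q := \avail{v}{\myC_1} - (\deg(v) - \cn{v}{\myC_1})$, equals
$(\Delta - \deg(v)) + \sum_{c}(j_c - 1)_+$, where $j_c$ is the number of
colored neighbors of $v$ that receive color $c$ under $\myC_1$. Your $X_v$
equals $\sum_c \binom{j_c}{2}$ (all monochromatic pairs among $N(v)$ are
non-edges because $\myC_1$ is proper). For $j_c = 2$ these two sums agree
termwise, but for $j_c \geq 3$ we have $\binom{j_c}{2} > j_c - 1$, so
$X_v$ can strictly exceed $Q$ once $\deg(v) = \Delta$ and some color repeats
three or more times. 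The inequality $Q \geq X_v$ you rely on therefore does
not hold in general. The paper sidesteps this by counting \emph{colors} rather
than pairs (the variable $\gap$), and by lower bounding $\gap$ via $\gap'$
which only counts colors assigned to \emph{exactly} two neighbors; either of
these quantities really is a lower bound on $Q$.

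\paragraph{The direct Talagrand application cannot be made to work.}
Your Lipschitz and certifiability constants for $X_v$ are both far off.
Changing a single variable $x(w)$ for a two-hop neighbor $w$ of $v$
can simultaneously free up to $\Theta(\Delta)$ vertices of
$N(v)\cap N(w)$ that were previously killed by $w$, and those vertices
can flip up to $\Theta(\Delta^2)$ pairs of $X_v$; so $X_v$
is $\Theta(\Delta^2)$-Lipschitz, not $\Delta$-Lipschitz. Certifying that a
pair $\{u_1,u_2\}$ is monochromatic and \emph{retained} requires knowing
$x(\cdot)$ on all of $N(u_1) \cup N(u_2)$, so the certifiability parameter is
$\Theta(\Delta)$, not $2$. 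Even if one grants your optimistic
$c = \Delta$, $r = 2$, the additive term in Talagrand,
$30c\sqrt{r\,\expect{X_v}} \approx \Delta \cdot \eps\sqrt{\Delta/\alpha}$, has
ratio $\Theta(\sqrt{\Delta\alpha}/\eps)$ to $\expect{X_v}/2 \approx \eps^2\Delta/\alpha$,
which is $\gg 1$ for the paper's parameters ($\eps = \Theta(1/\log n)$,
$\Delta = \Omega(\log^5 n)$). So the concentration bound you write down does
not dominate, regardless of how large $\Delta$ is.

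This is precisely what forces the paper's decomposition
$\gap = \assign - \lose$: $\assign$ depends only on the $\Delta$ variables
$\set{x(u) \mid u \in N(v)}$ and is $1$-Lipschitz, so the first form of
Talagrand already gives concentration; $\lose$ counts \emph{colors} (hence is
bounded by $\Delta/2$), is $2$-Lipschitz because changing $x(w)$ can only
affect the lose-status of the old and new colors of $w$, and is $3$-certifiable
because witnessing that a color is lost needs only three vertex-values. Both
pieces have the small Lipschitz-times-certifiability product that makes
Talagrand bite. Any single pair-counting function over the two-hop
neighborhood, like your $X_v$, fails on both axes at once, so you really do
need some version of this decomposition rather than a direct application.
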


\begin{proof}
    We consider the random variable $\gap$ that counts the
    colors assigned (by $x$) to \emph{at least} two neighbors of $v$, and retained
    (in $\myC_1$) by \emph{all} of them. Note that if $\gap$ is large, we
    are in good shape: since each color it counts increases the number of
    colored neighbors of $v$ by at least $2$, while decreasing the number of
    colors available to $v$ by $1$. And hence our aim is to show that $\gap$ is
	large with high probability.

    We do this in a roundabout manner. First, we lower bound the expectation
    of $\gap$ with that of the random variable $\gap'$, which counts the number
    of colors assigned to \emph{exactly} two vertices in $N(v)$, and retained
    by both of them. The main reason for this is that $\expect{\gap'}$ is
    easy to calculate, and large enough for our purposes. In particular,
    let $F$ be the set of non-edges between the neighbors of $v$, that is
    \[F = \cbrac*{ \cbrac{u, w} \subset N(v) \mid \cbrac{u, w} \notin E(G) }.\]
    Then since $v$ is sparse, $\card{F} \geq \eps^2 \cdot \Delta^2 / 2$
	(by \Cref{def:sparse}).

    For a color $c \in \bracket{\Delta}$ and a non-edge
	$f = \cbrac{u, w} \in F$, let $\gap'_{c, f}$ indicate the event:
    \begin{itemize}
        \item $x(u) = x(w) = c$,
        \item no other vertex in $N(v) \cup \cbrac{v}$ receives the color $c$
            from $x$, and
        \item no vertex in $N(u) \cup N(w)$ receives $c$.
    \end{itemize}
    By definition, $\gap' = \sum_{c, f} \gap'_{c, f}$. We have:
    \[
        \prob{\gap'_{c, f}} \geq \frac {1}{\Delta^2} \cdot
            \paren{ 1 - \frac{1}{\Delta}}^{3\Delta}
            \geq \frac{1}{\Delta^2} \cdot
            \exp\paren{ - \frac{2}{\Delta} \cdot 3\Delta}
            = \frac{1}{e^6 \cdot \Delta^2}.
    \]
    Where the first inequality follows because
    $\card{N(u) \cup N(w) \cup N(v)} \leq 3\Delta$, and the second because
    \[
        \exp\paren{-\frac{2}{\Delta}}
        \leq 1 - \frac{2}{\Delta} + \frac{2}{\Delta^2}
        \leq 1 - \frac{1}{\Delta}.
    \]
    By linearity of expectation,
    \[
        \expect{\gap} \geq \expect{\gap'} \geq
        \underbrace{%
            \vphantom{\frac{\eps^2 \cdot \Delta^2}{2}}
            \Delta}_{\text{choose } c} \cdot
        \underbrace{\frac{\eps^2 \cdot \Delta^2}{2}}_{\text{choose } f} \cdot
        \frac{1}{e^6 \cdot \Delta^2}
        = \frac {\eps^2 \cdot \Delta}{2e^6}.
    \]

    Next, we want to show that $\gap$ is concentrated around its expectation,
    and for once, Chernoff does not suffice. Define the random variable
    $\assign$ which counts the number of colors \emph{assigned} (by $x$) to
    at least two vertices in $N(v)$, and the random variable $\lose$ which
    counts the number of colors assigned to at least two vertices in $N(v)$,
    and lost by \emph{any} of them. Then clearly
    $\gap = \assign - \lose$. We will show that $\assign$ and $\lose$ are both
	concentrated around their means, and this implies that $\gap$ is too.

	First, note that $\assign$ depends only on the assignment $x(w)$ for all
	$w$ in the neighborhood of $v$, that is:
	\[ \assign \colon \prod_{w \in N(v)} \bracket{\Delta} \to \IN. \]
	Further, it is $1$-Lipschitz -- changing $x(w)$ from $c$ to $c'$
	can:
	\begin{itemize}
		\item Make it so $c$ occurs only once (instead of twice) in $N(v)$,
			decreasing $\assign$ by $1$.
		\item Make it so $c'$ occurs twice (instead of once) in $N(v)$,
			increasing $\assign$ by $1$.
	\end{itemize}
	And hence the net change to $\assign$ from changing $x(w)$ is at most $1$
	in absolute value. Then by the first part of Talagrand's Inequality
	(\Cref{prop:talagrand}, with $m = \Delta$, $c = 1$, and
	$t = \frac{\eps^2 \cdot \Delta}{20e^6}$)
	we have:
	\[
		\prob{ \card{\assign - \expect{\assign}} \geq
		\frac{\eps^2 \cdot \Delta}{20e^6}}
		\leq 2\cdot
		\exp\paren{ -\frac{2\eps^4 \cdot \Delta^2}{800e^{12}\cdot \Delta} }.
	\]
	Which for $\Delta = \Omega(\log^5 n)$ (the $\Omega$ hides a monstrous
	constant), is $1 / \poly(n)$. Note that our choice of $t$ is $1/10$-th of
	the lower bound on the expected value of $\gap$.

	The same argument does not work for $\lose$---the random variable depends
	on the $2$-hop neighborhood of $v$, which has size roughly $\Delta^2$, and
	hence the bound we get above is too weak.
	However, note that in addition to being
	$2$-Lipschitz, $\lose$ is also $3$-certifiable. More concretely, let $W$
	denote the $2$-hop neighborhood of $v$, then:
	\begin{itemize}
		\item Changing $x(w)$ for some $w$ in $W$ can change the contribution
			of at most $2$ colors to $\lose$: the old and the new color
			assigned by $x$ to $w$. Hence $\lose$ is $2$-Lipschitz.
		\item For any $s$, to get $\lose \geq s$, we need to set $x(\cdot)$ for
			three vertices (two neighbors of $v$, and one of their common
			neighbors) to $i$, for $i \in \bracket{s}$.
	\end{itemize}

	Then by second part of Talagrand's Inequality (\Cref{prop:talagrand},
	with $c = 2$, $r = 3$, $b = \frac{\eps^2 \cdot \Delta}{20e^6}$):
	\[
		\prob{\card{ \lose - \expect{\lose} } >
		\frac{\eps^2 \cdot \Delta}{20e^6} + 60\sqrt{3\cdot \expect{\lose}}}
		\leq
		4\cdot \exp\paren{ - \paren{\frac{\eps^2 \cdot \Delta}{20e^6}}^2
		\cdot \frac{1}{48\cdot \expect{\lose}}}.
	\]
	Crudely, we upper bound $\lose$ (and hence $\expect{\lose}$) by
	$\Delta / 2$, to give us:
	\[
		\prob{\card{ \lose - \expect{\lose} } >
		\frac{\eps^2 \cdot \Delta}{20e^6} + 60\sqrt{3/2 \cdot \Delta}}
		\leq
		4\cdot \exp\paren{ - \frac{\eps^4}{9600e^{12}} \Delta }.
	\]
	Which with $\Delta = \Omega(\log^5 n)$ (and $\Omega$ doing an even braver
	job) is $1 / \poly(n)$.

	Taking the union of the bad events (i.e. either $\assign$ or $\lose$
	deviates too much from its mean), and applying the triangle inequality, we
	have that with high probability:
	\[
		\gap > \expect{\gap} - \frac{\eps^2\cdot \Delta}{10e^6} - O(\sqrt{\Delta}).
	\]
	And hence combining with the lower bound on $\expect{\gap}$ we found
	earlier, $\gap > \frac{\eps^2\cdot \Delta}{2000}$ with high probability.
\end{proof}

\subsection{From Section~\ref{sec:sparse}}\label{sec:app-sparse}

\begin{lemma}[Re-statement of \Cref{lem:color-sparse}]
    With high probability, there exists a proper partial $\Delta$-coloring $\myC$ that is an extension of $\myC_2$ and colors 
    all remaining vertices in $\Vsparse$ using only the colors in the lists $L_3(v)$ for sparse vertices $v \in \Vsparse$ (and thus the randomness is also only over these lists).
\end{lemma}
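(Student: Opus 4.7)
The plan is to extend $\myC_2$ greedily. I fix an arbitrary ordering $v_1, v_2, \ldots, v_t$ of the still-uncolored sparse vertices and process them one at a time, assigning to $v_i$ a color from $L_3(v_i)$ that is not blocked by the current partial coloring. Let $\myC^{(i-1)}$ denote the extension of $\myC_2$ obtained just before $v_i$ is processed; the goal is to show that with high probability, for every $i$, the intersection $L_3(v_i) \cap \Avail{v_i}{\myC^{(i-1)}}$ is nonempty, so we can always pick a valid color.

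First I would observe that the ``gap'' guaranteed by~\Cref{lem:os-gap} survives under any extension of $\myC_1$. Concretely, if $\myC'$ is any proper partial coloring extending $\myC_1$, then every additional colored neighbor of $v$ increases $\cn{v}{\cdot}$ by exactly $1$ and decreases $\avail{v}{\cdot}$ by at most $1$, so
\begin{align*}
\avail{v}{\myC'} - (\deg(v) - \cn{v}{\myC'}) \;\geq\; \avail{v}{\myC_1} - (\deg(v) - \cn{v}{\myC_1}) \;>\; \frac{\eps^2 \Delta}{2\alpha},
\end{align*}
for every sparse vertex $v$ with high probability. Applied to $\myC^{(i-1)}$ (which is an extension of $\myC_2$, hence of $\myC_1$), this gives $\avail{v_i}{\myC^{(i-1)}} \geq \frac{\eps^2 \Delta}{2\alpha}$ since the uncolored neighbor count is nonnegative.

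Next I would exploit the independence of $L_3(v_i)$ from everything that precedes it. The set $\Avail{v_i}{\myC^{(i-1)}}$ is determined by $\myC_2$ and the earlier choices from $L_3(v_1), \ldots, L_3(v_{i-1})$, all of which are independent of $L_3(v_i)$ because $L_3$ is only used in Phase 3 and the $L_3(\cdot)$ lists across different vertices are independent samples. Each color of $\Avail{v_i}{\myC^{(i-1)}}$ lands in $L_3(v_i)$ independently with probability $\frac{100\alpha \log n}{\eps^2 \Delta}$, so the expected intersection size is at least
\begin{align*}
\frac{\eps^2 \Delta}{2\alpha} \cdot \frac{100\alpha \log n}{\eps^2 \Delta} \;=\; 50 \log n,
\end{align*}
and a Chernoff bound (\Cref{prop:chernoff}) yields
\begin{align*}
\Pr\paren{L_3(v_i) \cap \Avail{v_i}{\myC^{(i-1)}} = \emptyset} \leq \exp(-\Omega(\log n)) \leq n^{-10}.
\end{align*}
Conditioning on this event for $v_i$, we can choose any color from the intersection to extend $\myC^{(i-1)}$ to $\myC^{(i)}$.

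Finally, I would union-bound over all $v_i \in \Vsparse$ to conclude that the entire greedy procedure succeeds with high probability, yielding the desired $\myC$. The only subtlety (and it is minor) is making sure the conditioning above is handled cleanly: because $L_3(v_i)$ is independent of the $\sigma$-algebra generated by $\myC_2$ and $\{L_3(v_j) : j < i\}$, we can in fact derive the concentration bound \emph{unconditionally} after fixing those earlier choices, which makes the union bound go through without circularity. There is no real obstacle here; the statement is essentially a textbook greedy-list-coloring argument riding on the gap lemma, and all the heavy lifting has already been done in~\Cref{sec:one-shot}.
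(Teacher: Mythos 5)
Your proposal is correct and matches the paper's proof in Appendix~\ref{sec:app-sparse} essentially step for step: greedy extension over $\Vsparse$, the monotone ``gap'' from \Cref{lem:os-gap} that is preserved under any extension of $\myC_1$, and then a union bound over the event that $L_3(v)$ misses $\Avail{v}{\myC}$ when the latter still has size at least $\eps^2\Delta/(2\alpha)$. If anything, your handling of the filtration is slightly more careful than the paper's \Cref{clm:L3-has-color}, whose statement reads ``for any partial coloring $\myC$ that extends $\myC_1$'' but is really only valid for extensions chosen independently of $L_3(v)$ — exactly the observation you make explicit when you note that $\myC^{(i-1)}$ is measurable with respect to $\myC_2$ and the previously consumed lists $L_3(v_1),\ldots,L_3(v_{i-1})$.
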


\begin{proof}
	Let $\myC$ initially be  the coloring $\myC_2$. 
    We will color the vertices of $\Vsparse$ greedily by updating $\myC$. That is, we iterate over
    $\Vsparse$ in arbitrary order, and if a vertex $v$ is uncolored in
    $\myC$, we pick a color in $L_3(v)$ which does not
    conflict with any of its neighbors in $H$, and set $\myC(v)$ to it.
    Using the randomness of $L_3(v)$, and conditioning on the high probability
    event of \Cref{lem:os-gap}:

    \begin{claim}\label{clm:L3-has-color}
        With high probability,
        for each vertex $v \in \Vsparse$, and any partial coloring $\myC$ that
        extends $\myC_1$, there exists a color $c \in L_3(v)$ that is not used
        in $N_H(v)$ by $\myC$.
    \end{claim}
    \begin{subproof}
        We would like to show that $\Avail{v}{\myC}$ and $L_3(v)$ have a
        nonzero intersection with high probability.
        Since $v$ is sparse, by \Cref{lem:os-gap} we have:
        \[
            \avail{v}{\myC} > (\deg{(v)} - \cn{v}{\myC_1}) +
            \frac{\eps^2 \cdot \Delta}{2\alpha}
        \]
        Which implies that even if the entire neighborhood of $v$ is colored
        by $\myC$, there are still $\frac{\eps^2}{2\alpha}\cdot \Delta$ colors
        available for $v$ to use.

        Since each color from $\Avail{v}{\myC}$ is sampled into $L_3(v)$
        independently with probability
        $\frac{100 \alpha\cdot \log n}{\eps^2 \cdot \Delta}$, 
        \[
        	\prob{{\Avail{v}{\myC} \cap L_3(v) = \emptyset}} \leq \paren{1-\frac{100 \alpha\cdot \log n}{\eps^2 \cdot \Delta}}^{  \frac{\eps^2 \cdot \Delta}{2\alpha}} \leq 
	\exp\paren{-\frac{100 \alpha\cdot \log n}{\eps^2 \cdot \Delta} \cdot \frac{\eps^2 \cdot \Delta}{2\alpha}} < n^{-50},
        \]
        concluding the proof. 
    \end{subproof}
    And hence the greedy algorithm can color $\Vsparse$. Note that this
    algorithm can be implemented efficiently using the information we gathered
    in \Cref{sec:alg}. In particular, since $\myC$ only assigns colors
    from $L(v)$ to $v$, it is enough to check for conflicts in $H$ while
    coloring from $L_3(v)$. 
\end{proof}

\subsection{From Section~\ref{sec:holey}}\label{sec:app-holey}

\begin{lemma}[Re-statement of \Cref{lem:color-holey-one}]
	For a holey almost-clique $K$, and any partial $\Delta$-coloring $\myC$ outside
	$K$, there exists, with high probability, a coloring $\myC'$ which extends
	$\myC$ to $K$ such that $\myC'(v) \in L_4(v)$ for all
	$v \in K$.
\end{lemma}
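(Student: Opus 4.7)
The plan is to find the coloring $\myC'$ in two stages, mirroring the approach of~\cite{AssadiCK19a}. In the first stage, I build a \emph{colorful matching} $\MM^*$ consisting of pairs of non-adjacent vertices in $K$ that will receive the \emph{same} color, using the lists $L_{4,i}(v)$ for $i \in [\beta]$; in the second stage, I extend this partial assignment to color the remaining vertices of $K$ from their $L_4^*(v)$ lists by invoking the palette-graph framework of~\Cref{sec:palette-graph} together with~\Cref{lem:rgt}. The purpose of the colorful matching is to create slack between the uncolored vertex-nodes and the available color-nodes: each matched pair ``saves'' one color. This slack is essential both because $K$ may be large with $\card{K} > \Delta+1$ (which would otherwise doom even the base palette graph to fail Hall's condition), and because~\Cref{lem:rgt} needs $\card{\cR}$ to strictly exceed $\card{\cL}$ by an $\Omega(\eps\Delta)$ margin.

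In Stage 1, I build $\MM^*$ of size $t^* := 10\eps\Delta$ by iterating $i$ from $1$ to $\beta$ and, in round $i$, attempting to add a new pair $\cbrac{u,v}$ of non-adjacent unmatched vertices of $K$ together with a color $c \in L_{4,i}(u) \cap L_{4,i}(v)$ that is available to both $u$ and $v$ under $\myC$ and the partial colors already committed by $\MM^*$. Since $K$ is holey, it contains at least $10^7 \eps \Delta$ non-edges, while matched vertices destroy only $O(t^* \cdot \eps\Delta) = O(\eps^2 \Delta^2)$ of them via property~\ref{dec:non-neighbors} of~\Cref{def:almost-clique}, leaving $\Omega(\eps\Delta)$ candidate non-edges between unmatched vertices for small enough $\eps$. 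For each such non-edge $\cbrac{u,v}$, at least $\Delta - O(\eps\Delta) \geq \Delta/2$ colors are simultaneously available to both $u$ and $v$, and each such color lies in $L_{4,i}(u) \cap L_{4,i}(v)$ independently with probability $q^2 = \Theta(1/(\eps \Delta^2))$. A Talagrand-type concentration then shows that a good pair exists with constant probability in each round, and since we have $\beta = \Theta(\log n)$ rounds with only $t^* = \Theta(\eps \Delta) = O(\log n)$ successes needed, a Chernoff bound ensures $|\MM^*|$ reaches $t^*$ w.h.p.

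In Stage 2, I set up the base palette graph $\Gbase = (\cL, \cR, \Ebase)$ where $\cL$ consists of the $\card{K} - 2|\MM^*|$ still-uncolored vertices of $K$ and $\cR$ consists of the $\Delta - |\MM^*|$ colors not yet used on $K$. The slack is $\card{\cR} - \card{\cL} = |\MM^*| - (\card{K} - \Delta) \geq t^* - 5\eps\Delta = \Omega(\eps\Delta)$, using the size bound from property~\ref{dec:size} of~\Cref{def:almost-clique}. I then verify the three hypotheses of~\Cref{lem:rgt}: $m := \card{\cL} \leq \card{\cR} \leq 2m$ follows from the above with the lower bound on $\card{K}$; the minimum degree bound $\deg_{\Gbase}(v) \geq (2/3)m$ for $v \in \cL$ follows since each such $v$ has at most $10\eps\Delta + |\MM^*| = O(\eps\Delta)$ colors ruled out; and the subset-sum condition follows by the same non-edge counting argument as in~\Cref{clm:palette-phase5} together with holeyness of $K$. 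Since each edge of $\Gbase$ is kept in the sampled palette graph via $L_4^*(v)$ independently with probability $\beta/\Delta = \Theta(\log n / \Delta)$,~\Cref{lem:rgt} yields an $\cL$-perfect matching w.h.p., which I read off as the desired extension $\myC'$.

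The main obstacle is the concentration argument underlying Stage 1: showing that in each round a good pair is addable with constant probability is delicate because the indicator events ``non-edge $e$ is addable'' across different non-edges $e$ share randomness through the common list $L_{4,i}$, so they are not independent. This is precisely where the proof of~\cite{AssadiCK19a} does its heavy lifting via a Talagrand's-inequality analysis on an appropriately Lipschitz and certifiable statistic counting addable non-edges. The transfer to the $\Delta$-coloring setting here is essentially verbatim, with only the two cosmetic modifications that the palette has size $\Delta$ instead of $\Delta+1$, and that Stage 1 must drive $|\MM^*|$ up to $t^* \geq \card{K} - \Delta$ rather than merely to a positive quantity.
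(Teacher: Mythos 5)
Your overall plan is the same as the paper's: build a colorful matching of non-edges in $K$ to create slack, then finish via the palette-graph framework and \Cref{lem:rgt}. But your Stage 1 has a quantitative error that breaks the argument.

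You target a matching of fixed size $t^* = 10\eps\Delta$, built one pair per round over the $\beta = \Theta(\log n)$ rounds indexed by $L_{4,i}$. Two problems. First, $\eps\Delta$ is \emph{not} $O(\log n)$: with $\eps = 10^{-8}/\log n$ and $\Delta = \Omega(\log^5 n)$ you have $\eps\Delta = \Omega(\log^4 n)$, and $\eps\Delta$ grows to $\Theta(\eps n)$ for $\Delta$ near $n$. So one success per round for $\beta$ rounds yields only $O(\log n)$ pairs, far short of $10\eps\Delta$. Second, the fixed target $10\eps\Delta$ is infeasible when $t$ is near the holey threshold: each matched pair can destroy up to $20\eps\Delta$ non-edges (by property~\ref{dec:non-neighbors}), so after roughly $t/(20\eps\Delta) \approx 5\cdot 10^5$ pairs the supply of non-edges between unmatched vertices may be exhausted, and $10\eps\Delta \gg 5\cdot 10^5$ once $\Delta \gg 1/\eps$. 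Your estimate ``matched vertices destroy only $O(\eps^2\Delta^2)$ of them, leaving $\Omega(\eps\Delta)$ candidate non-edges'' implicitly assumes $t \gg \eps^2\Delta^2$, but the holey guarantee is only $t \geq 10^7\eps\Delta$, and $\eps^2\Delta^2 \gg \eps\Delta$ once $\Delta > 1/\eps$.

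The paper instead sets the target $\ell = t/(10^6\eps\Delta)$, scaling with the actual number of non-edges: this is $\geq 10$ (by holeyness), it never exceeds what the non-edge supply can sustain, and it automatically exceeds $|K|-\Delta-1$ because a large $K$ with $|K|=\Delta+1+k$ forces $t \geq k\Delta$, hence $\ell \geq k/(10^6\eps) \gg k$. Moreover, in a \emph{single} pass the paper's \ColMalg iterates over all $\Delta$ colors and succeeds on $\Theta(\Delta \cdot t/(\eps\Delta^2)) = \Theta(t/(\eps\Delta))$ of them in expectation (\Cref{lem:big-matching}, via the heavy-color and inclusion--exclusion argument, not via Talagrand), achieving the full target with constant probability; the $\beta$ independent lists $L_{4,i}$ are then used solely to amplify this constant probability to $1 - 1/\poly(n)$, not to assemble the matching one edge at a time. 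You will also find that \Cref{lem:rgt} only requires $\card{\cR} - \card{\cL} = \Omega(1)$, not $\Omega(\eps\Delta)$; this is why a $t$-adaptive matching size of $\geq 10$ suffices when $K$ is near-critical, and your invocation of an $\Omega(\eps\Delta)$ margin is precisely what pushes the target into infeasible territory.
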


The main thing we want to exploit is that $K$ has a lot of non-edges, since
we can assign the same color to both endpoints of a non-edge.
One easy way to do this for many non-edges simultaneously is to generate a
``matching'' of non-edges, which we do via the following algorithm:

\begin{algorithm}
    \KwIn{The input graph $G$, an almost-clique $K$,
    the partial coloring $\myC$, and the list $L_{4, i}(v)$ for each $v$.}
	\begin{enumerate}[label=$(\roman*)$]
		\item Initialize $\myC' \gets \myC$.
        \item Let $F = \cbrac{f_1, \ldots , f_t}$ be the set of non-edges in
            $K$.
        \item For each color $c \in \bracket{\Delta}$:
            If there is a non-edge $f = \cbrac{u, v} \in F$ such that
            $c \in L_{4, i}(u) \cap L_{4, i}(v)$, and $c$ is not used in the
            neighborhood of $u$ or $v$, then:
            \begin{itemize}
                \item Assign $\myC'(u) \gets c$ and $\myC'(v) \gets c$.
                \item Add $f$ to $M$.
                \item Remove all non-edges incident on $f$ from $F$.
            \end{itemize}
	\end{enumerate}
	\caption{The $\ColMalg$ algorithm.}\label{alg:Col-Matching}
\end{algorithm}
It follows immediately that $M$ is a matching; we will show that for holey
cliques, $M$ is \emph{large} with constant probability.

\begin{lemma}\label{lem:big-matching}
    If $K$ is a holey clique, \Cref{alg:Col-Matching} produces a matching of
	size $\ell~=~\frac{t}{10^6 \eps\Delta}$ with probability at least $1/2$.
\end{lemma}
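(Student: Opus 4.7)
The plan is to lower-bound $\E[|M|]$ via a first-moment calculation and then apply concentration.

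For each non-edge $f = \{u, v\} \in F$ and each color $c \in [\Delta]$, I would define the \emph{compatibility event} $\mathcal{E}_{f, c}$: both $u$ and $v$ sample $c$ into $L_{4, i}$, and moreover $c$ is not used in $(N(u) \cup N(v)) \setminus K$ by $\myC$. By property~\ref{dec:neighbors} of \Cref{def:almost-clique}, $u$ and $v$ each have at most $10\eps\Delta$ neighbors outside $K$, so for each $f$ the number of outside-blocked colors is at most $20\eps\Delta \leq \Delta/2$. For the remaining $\geq \Delta/2$ colors, $\Pr[\mathcal{E}_{f, c}] = q^2 = 1/(10^4 \eps\Delta^2)$, where $q = 1/(100\sqrt{\eps}\Delta)$ is the $L_{4, i}$ sampling rate.

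Next, I would introduce $Z := |\{c \in [\Delta] : \exists f \in F \text{ with } \mathcal{E}_{f, c}\}|$, the number of colors with at least one compatible non-edge. Crucially, the events $\mathcal{E}_{\cdot, c}$ depend only on the samples $\{L_{4, i}(v)\}_{v \in K}$ restricted to color $c$, and these samples are independent across colors, so $Z$ is a sum of $\Delta$ independent Bernoulli random variables. A first-moment estimate using the holey condition $t \geq 10^7 \eps\Delta$ gives $\E[Z] \gtrsim \min(\Delta,\, t q^2 \Delta / 2) = t/(2 \cdot 10^4 \eps\Delta) = 50\,\ell$ (in the regime $\Delta = \Omega(\log^5 n)$ of the paper, $t q^2 \Delta / 2$ is the binding term). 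A standard Chernoff bound on this sum of independent indicators then gives $Z \geq 25\,\ell$ with probability much greater than $1/2$.

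The remaining step is to relate $Z$ to $|M|$. Trivially $|M| \leq Z$; conversely, $|M| \geq Z - W$, where $W$ counts colors $c$ whose \emph{only} compatible non-edges have already been removed from $F$ by earlier matches. Each match in $M$ removes at most $20\eps\Delta$ non-edges from $F$ (by property~\ref{dec:non-neighbors} of \Cref{def:almost-clique}), each of which has expected $\leq q^2 \Delta / 2$ compatible colors. Hence $\E[W] \leq |M| \cdot 20\eps\Delta \cdot q^2 \Delta / 2 = O(|M|/10^4)$, which is negligible. Rearranging, $\E[|M|] \geq (1 - o(1))\E[Z] \geq 24\,\ell$, and the concentration for $Z$ transfers to $|M|$, yielding $|M| \geq \ell$ with probability at least $1/2$.

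The main obstacle is the careful analysis of killings in the greedy process---bounding the cascading effect of matched non-edges on the pool of future candidates---which the paper notes follows essentially verbatim from the corresponding argument in the palette sparsification theorem of~\cite{AssadiCK19a}. A secondary subtlety is that compatibility events $\mathcal{E}_{f, c}$ and $\mathcal{E}_{f', c}$ for non-edges $f, f'$ sharing a vertex (within a single color) are not independent; however, the degree bound $10\eps\Delta$ on the non-edge graph of $K$ ensures these correlations only affect the constants in the first-moment estimate and do not disrupt the concentration of $Z$ across different colors.
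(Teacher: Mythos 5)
Your approach is correct in spirit and hits the same two essential pillars as the paper's proof: (a) the observation that the events \emph{``color $c$ has a compatible non-edge''} are \emph{independent across colors}, since they depend on disjoint slices of the sampling randomness, and (b) a first-moment computation per color that must be protected by an inclusion-exclusion step against the correlations among non-edges sharing a vertex. You also get the numerology right: $\E[Z]\geq\Omega(tq^2\Delta)\geq\Omega(\ell)$, and the holey condition $t\geq 10^7\eps\Delta$ ensures $\E[Z]=\Omega(\log n)$ so Chernoff applies.

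The genuine difference is how you handle the dynamics of the greedy removal process. The paper (\Cref{clm:heavy-vertices}) defines a color to be \emph{heavy} in terms of $\present(c)$, the number of non-edges still alive when $c$ is processed, and argues \emph{deterministically} that as long as $|M|<\ell$ at most $\ell\cdot 20\eps\Delta<t/10$ non-edges have been deleted, so at least $\Delta/2$ colors remain heavy; each heavy color then succeeds with fresh independent probability $\geq\theta$, giving a clean Binomial coupling. You instead define $Z$ on the \emph{initial} $F$ and subtract a correction $W$ for wasted colors. That works, but the line ``$\E[W]\leq |M|\cdot 20\eps\Delta\cdot q^2\Delta/2$'' conceals two issues you should spell out: $|M|$ on the right is a random variable, and the event ``non-edge $f$ is removed'' is correlated with the total randomness that also determines the number of colors compatible with $f$. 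One can disentangle it by conditioning on the time $c_f$ at which $f$ is removed (which depends only on the randomness of colors $\leq c_f$) and observing that compatible colors $c>c_f$ use fresh randomness, but that is precisely the stopping-time bookkeeping the paper's heavy-color framing takes care of for free. Similarly, ``concentration for $Z$ transfers to $|M|$'' needs $W$ to be small with constant probability, not just in expectation, so you need Markov on $W$ on top of Chernoff on $Z$. These are patchable, but they are the parts of the argument where the paper's organization is noticeably cleaner; the payoff of your decomposition is that $Z$ is an honest sum of independent indicators, which makes the Chernoff step one line rather than a coupling argument.
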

\begin{proof}
    We define:
    \begin{itemize}
        \item $\Present(c)$ as the set of non-edges present in $F$ when
            we encounter color $c$ in line (ii) of \Cref{alg:Col-Matching}.
            Let $\present(c)$ denote $\card{\Present(c)}$.
        \item A color $c$ is \emph{successful} if we add assign it to a
            non-edge during \Cref{alg:Col-Matching}.
    \end{itemize}
    Note that the number of successful colors is exactly the size of $M$,
    and hence we would like to show that many colors succeed.
    To do so, we show that $\present(\cdot)$ is large for many colors.
    In particular, we say that a color $c$ is \emph{heavy} if it has
    $\present(c) \geq t/2$, and have the following claim:

    \begin{claim}\label{clm:heavy-vertices}
        There are at least $\Delta / 2$ heavy colors.
    \end{claim}
    \begin{proof}
        For a non-edge $f = \cbrac{u, v}$, define $\Blocked(f)$ to be the set
		of colors used by $\myC$ to color the neighbors of $u$ or $v$ outside
        $K$. By property~\cref{dec:neighbors} of \Cref{def:almost-clique},
        the number of neighbors of $u$ or $v$ (and hence the number of colors
        used by $\myC$ to color them) is at most $20\eps\Delta$. As a
        result, at the beginning of the algorithm:
        \[
            \sum_{f \in F} \card{ \Blocked(f) } \leq t \cdot 20\eps\Delta.
        \]
        This means that on average, each color occurs in $\Blocked(f)$ for at
        most
        \[
            t\cdot 20\eps\Delta \cdot 1/\Delta = 20\eps t
        \]
        non-edges $f$.
        By Markov's Inequality, there are at most $\Delta / 2$ colors $c$ which
        occur in $\Blocked(f)$ for more than $40 \eps t$ non-edges $f$.
        Hence there are at least $\Delta / 2$ colors $c$ which are
		\emph{not blocked}
        for at least $t - 40\eps t \geq 9/10 \cdot t$ non-edges in $F$---at the beginning of the algorithm.

        How many of these non-edges remain in $\Present(c)$ when we look at
        $c$? Upon adding the non-edge $\cbrac{u, v}$ to $M$, we remove all
        non-edges incident on $u$ or $v$ from $F$. Since each $u \in K$ has
        at most $10\eps\Delta$ non-neighbors in $K$
        (property~\ref{dec:non-neighbors} of \Cref{def:almost-clique}), each
        non-edge added to $M$ removes at most $20\eps\Delta$ edges from
        $\Present(c)$. Because the algorithm has already succeeded if
		$\card{M}$ becomes larger than $\ell$, the number of edges removed from
		$\Present(c)$ is at most $\ell \cdot 20\eps\Delta$ in total, which is
		$< t/10$.
        Hence there are at least $\Delta / 2$ colors $c$ with
        $\present(c) \geq 8/10 \cdot t \geq t/2$.
    \end{proof}

    Regrouping, we have shown that there are many colors in $\bracket{\Delta}$
	that can be assigned to many non-edges in $F$. Next, we would like to show
	that for each of these heavy colors $c$, the probability that $c$ is in
	sampled by both endpoints of a non-edge in $\Present(c)$ is high.

    \begin{claim}\label{clm:success}
        For a heavy color $c$, $\prob{c \text{ is successful}} \geq 10^{-5}
        \cdot t / \eps\Delta^2$.
    \end{claim}
    \begin{subproof}
        The color $c$ is successful if there is least one non-edge
        $\cbrac{u, v} \in \Present(c)$ such that
        $c \in L_{4, i}(u) \cap L_{4, i}(v)$.
        Using the inclusion exclusion principle, and cutting out terms
        of ``order'' more than $2$:
        \begin{align*}
            \prob{ c \text{ is successful}} &\geq
            \sum_{\cbrac{u, v} \in \Present(c)}
            \prob{c \in L_{4, i}(u) \cap L_{4, i}(v)}\\ &-
			\sum_{f, g\, \in\, \Present(c)}
			\prob{c \in \bigcap_{w \in f \cup g} L_{4, i}(w)}
			.
        \end{align*}

        The first term is easy to compute exactly: $c$ belongs to both lists
        with probability $q^2$, so
        \[
            \sum_{\cbrac{u, v} \in \Present(c)}
            \prob{c \in L_{4, i}(u) \cap L_{4, i}(v)} = \present(c) \cdot q^2.
        \]
        For the second term, we have to consider two cases:
        \begin{itemize}
            \item $\card{f \cup g} = 3$: The probability that $c$
                belongs to $L_{4, i}(w)$ for 3 vertices $w$ is $q^3$.
				After picking $f = \cbrac{u, v}$ from $\Present(c)$, there are
                at most $20\eps\Delta$ choices for the third vertex, and hence
                the total contribution of terms of this type is at most
                \[ \present(c) \cdot 20\eps\Delta \cdot q^3. \]
            \item $\card{f \cup g} = 4$: The probability that $c$
                belongs to the list of $4$ vertices is $q^4$. And there are
                at most $\present(c)^2$ ways to pick a $2$-set of non-edges.
                So the total contribution of these terms is at most
                \[ \present(c)^2 \cdot q^4. \]
        \end{itemize}
		Adding everything up, we get:
		\begin{align*}
			\prob{c \text{ is successful}} &\geq
			\present(c) \cdot q^2 - \present(c) \cdot 10\eps\Delta \cdot q^3
			- \present(c)^2 \cdot q^4\\
			&\geq 9/10\cdot \present(c) \cdot q^2 - \present(c)^2 \cdot q^4
			\intertext{\raggedleft
				(since $q = \frac{1}{100\sqrt{\eps}{\Delta}}$ and thus
				$10\eps\Delta \cdot q < 1/10$ for $\eps < 1$)}
			&\geq \present(c) \cdot q^2 \cdot
			\paren{ 9/10 - 20\eps\Delta^2 \cdot q^2 }\\
			\intertext{\raggedleft (since
				$\present(c) \leq \card{F} \leq 20\eps\Delta^2$ by
				property~\ref{dec:non-neighbors} of \Cref{def:almost-clique})}
			&\geq 8/10 \cdot \present(c) \cdot q^2
			\text{\ \ \ \ \ \ \ \ \ \ \ \ \ \ \ \ \ \ \ \ \ \   (since
				$q = \frac{1}{100\sqrt{\eps}\Delta}$,
				$20\eps\Delta^2\cdot q^2 = 1/200$)}\\
			&\geq \frac{t}{10^5 \cdot \eps\Delta^2},
		\end{align*}
	\nopagebreak	concluding the proof.  
    \end{subproof}

	Finally, we are ready to prove the lemma itself. Let
	$\theta = 10^{-5} \cdot t/\eps\Delta^2$ (the RHS of \Cref{clm:success});
	note that $\theta < 1$ because $t \leq \card{F} \leq 20\eps\Delta^2$.
	Let $Z$ be a random variable with the binomial distribution
	$\Bin{\Delta / 2}{\theta}$. Note that we can couple each Bernoulli trial
	used to determine $Z$ with a heavy color succeeding -- since there are
	at least $\Delta / 2$ of them, they succeed with probability at least
	$\theta$, and two different colors are independent of each other.
	Hence $Z$ is a lower bound for~$\card{M}$. We follow our usual formula:
	\[ \expect{Z} = \Delta / 2 \cdot \theta = 10^{-5} \cdot t / 2\eps\Delta.\]
	And with an application of Chernoff Bound (\Cref{prop:chernoff}, with
	$\delta = 1/2$):
	\[
		\prob{Z < 1/2 \cdot \expect{Z}} \leq
		2\exp\paren{-\frac{(1/2)^2\cdot t}{10^5 \cdot 2\eps\Delta \cdot (2 + 1/2)}}
		\leq
		2\exp\paren{-\frac{10^7 \cdot \eps\Delta}{10^6 \cdot 2\eps\Delta}}
		\leq 2e^{-5} < 1/2.
	\]
	And hence with probability at least $1/2$,
	\[ \card{M} \geq Z \geq 10^{-5} \cdot t/8\eps\Delta >
	10^{-6} \cdot t / \eps\Delta = \ell.
	\]
This concludes the proof. 
\end{proof}

Now, since we run \Cref{alg:Col-Matching} for $\beta$ independent sets lists
$\cbrac{ L_{4, i}(v) \mid v \in V}$, we get a non-edge matching $M$ of size
$\ell = t / \eps\Delta$ from one of the runs with high probability. We keep
the coloring assigned to this largest non-edge edge matching, and hence have
the following lemma:

\begin{lemma}\label{lem:big-matching-whp}
	Suppose $K$ is a holey almost-clique, with $t \geq 10^7\cdot \eps\Delta$
	non-edges inside it.
	Then for any coloring $C$ outside $K$, with high probability there is an
	extension $C'$ of $C$ which:
	\begin{itemize}
		\item Colors $2 \cdot \frac{t}{10^6 \cdot \eps\Delta}$ vertices of
			$K$.
		\item Uses only $\frac {t}{10^6 \cdot \eps\Delta}$ colors inside $K$,
			and further for each $v \in K$ that it colors, uses a color from
			$L_4(v)$.
	\end{itemize}
\end{lemma}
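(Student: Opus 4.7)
\textbf{Proof Proposal for \Cref{lem:big-matching-whp}.}

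The plan is to boost the constant success probability of \Cref{lem:big-matching} into a high-probability guarantee by exploiting the fact that $L_4(v)$ contains $\beta = 100\log n$ independent copies $L_{4,1}(v),\ldots,L_{4,\beta}(v)$, each sampled at rate $q = \frac{1}{100\sqrt{\eps}\Delta}$. For each $i\in[\beta]$, I would run \Cref{alg:Col-Matching} using the lists $\{L_{4,i}(v)\mid v\in K\}$ to produce a candidate non-edge matching $M_i$ together with a coloring of its endpoints; since the $L_{4,i}$'s are sampled independently across $i$, these $\beta$ runs are mutually independent. By \Cref{lem:big-matching}, each $M_i$ has size at least $\ell := \frac{t}{10^6\eps\Delta}$ with probability at least $1/2$, so the probability that \emph{all} of the $\beta$ matchings fall below~$\ell$ is at most $(1/2)^{\beta} = n^{-100}$. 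With high probability, then, at least one index $i^*$ yields $|M_{i^*}|\geq \ell$, and I take $C'$ to be the extension of $C$ produced by that run.

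To verify the three claimed properties: $(i)$ $C'$ is a proper extension of $C$ because \Cref{alg:Col-Matching} explicitly checks that the color $c$ it assigns to the endpoints of a non-edge $\{u,v\}$ is not used in $N(u)\cup N(v)$ under the current coloring (which starts as $C$), and once assigned, subsequent iterations respect these assignments; $(ii)$ since the algorithm colors exactly the two endpoints of each non-edge added to $M_{i^*}$, the number of colored vertices in $K$ is $2\ell = \frac{2t}{10^6\eps\Delta}$; $(iii)$ each non-edge uses a single color for its two endpoints, so the total number of distinct colors used inside $K$ is $\ell = \frac{t}{10^6\eps\Delta}$, and each such color $c$ was by construction in $L_{4,i^*}(u)\cap L_{4,i^*}(v)\subseteq L_4(u)\cap L_4(v)$.

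The main (and essentially only) obstacle here is ensuring the runs are genuinely independent, which is why the lists $L_{4,i}$ were constructed as $\beta$ disjoint samples in \Cref{alg:PS} rather than as a single larger list; given this independence, the rest is immediate amplification. Everything else---correctness of each individual run, the matching structure, the count of vertices colored, and the containment in $L_4(v)$---is inherited directly from \Cref{alg:Col-Matching} and \Cref{lem:big-matching}, so no further probabilistic analysis is needed beyond the union bound above.
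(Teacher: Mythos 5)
Your proposal is correct and matches the paper's argument essentially verbatim: the paper also runs \Cref{alg:Col-Matching} once per independent list $L_{4,i}(\cdot)$, invokes \Cref{lem:big-matching} for the $1/2$ success probability of each run, and uses independence across the $\beta$ runs to conclude a $1-2^{-\beta}=1-1/\poly(n)$ success probability of at least one large matching. The verification of the three claimed properties from the structure of \Cref{alg:Col-Matching} is likewise the paper's (implicit) reasoning, so no gap.
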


Let us pause for a moment, and consider why we did all this work.
\Cref{lem:big-matching-whp} tells us that \Cref{alg:Col-Matching} colors 
\emph{some} number of vertices in $K$, using only half that many unique colors.
As in the case of sparse vertices, this creates enough of a gap between
available colors and the remaining degree of each vertex in $K$ such that
an available color is sampled in $L_4(v)$ with high probability. This 
is exactly what we need to prove \Cref{lem:color-holey-one}.

\begin{proof}[Proof of Lemma~\ref{lem:color-holey-one}] 
	Let $C'$ be the partial coloring obtained from \Cref{lem:big-matching-whp}.
	Then we define the base and sampled palette graphs $\Gbase$ and $\Gsample$
	(Definitions~\ref{def:palette-graph}~and~\ref{def:sampled-palette-graph})
	with:
	\begin{itemize}
		\item $\cL$ as the set of vertices of $K$ \emph{not} colored by $\myC'$.
		\item $\cR$ as the set of colors \emph{not} used by $\myC'$ in $K$.
		\item $S(v) = L_4^*(v)$ for each $v \in \cL$.
	\end{itemize}
	We will show that $\Gbase$ satsifies the conditions in \Cref{lem:rgt}, and
	hence $\Gsample$ has an $\cL$-perfect matching with high probability, and
	hence $\myC'$ can be extended to $(L_4 \cup L_4^*)$-color all of $K$. In
	the same order as the lemma:

\begin{enumerate}[label=$(\roman*)$]
	\item For $m := \cL$, $m \leq \card{\cR} \leq 2m$:
		To get the first inequality, we rewrite $\card{\cL}$ and $\card{\cR}$
		in terms of $K$ and the vertices and colors removed by $\myC'$:
		\[
			\card{\cL} = \card{K} - \frac{2t}{10^6\cdot \eps\Delta},
			\card{\cR} = \Delta - \frac{t}{10^6 \cdot \eps\Delta}.
		\]
		Note that if $\card{K} \geq \Delta + 1 + k$, the number of non-edges
		inside $K$ (that is, $t$) is at least $k \cdot \Delta$. Then the number
		of vertices removed far outstrips the number of colors, in particular:
		\[
			\card{\cL} = \Delta + 1 + k - \frac{2k}{10^6\cdot \eps} \leq
			\Delta - k,
			\card{\cR} = \Delta - \frac{k}{10^6 \cdot \eps}.
		\]
		Which makes $\card{\cL} \leq \card{\cR}$ for $k \geq 1$, because
		$\frac{1}{10^6\cdot \eps} > 2$. On the other hand, if
		$K = \Delta + 1$, since $t \geq 10^7 \cdot \eps \Delta$,
		\[
			\card{\cR} - \card{\cL} =
			\frac{10^7 \cdot \eps\Delta}{10^6 \cdot \eps\Delta} - 1
			\geq 9.
		\]

		To get the second inequality, first note that $m \geq 2/3 \cdot \Delta$.
		This is because the maximum number of non-edges in $K$ is
		$(2\Delta) \cdot (10\eps\Delta) = 20\eps\Delta^2$ (by
		property~\ref{dec:non-neighbors} of \Cref{def:almost-clique}), and hence the
		number of vertices in $\cL$ is at least
		\[
			(1 - 5\eps)\Delta - \frac{40\eps\Delta^2}{10^6\cdot \eps\Delta}
			\geq 2/3 \cdot \Delta.
		\]
		Then since $\card{R} \leq \Delta$ we are done.
	\item Each vertex $v \in \cL$ has $\deg_{\Gbase}(v) \geq 2/3 \cdot m$:
		Each vertex $v$ in $\cL$ may have up to $10\eps\Delta$ edges out of $K$
		(in $G$), hence blocking $10\eps\Delta$ colors in $\cR$ for $v$. Since
		all the remaining colors are available to $v$,
		\[
		\deg_{\Gbase}(v) \geq \card{R} - 10\eps\Delta \geq m - 15\eps \cdot m
		\geq 2/3 \cdot m.
		\]
		Where the second inequality follows from part (i).
	\item For every set $A \subset \cL$ of size $\card{A} \geq m / 2$, we
		have $\sum_{v \in A} \deg_{\Gbase}(v) \geq \card{A}\cdot m - m/4$;
        recall that for any $v \in \cL$:
        \[
            \deg_{\Gbase}(v) \geq \card{R} - (\Delta + 1) + \card{K} -
            \nondeg{K}{v}.
        \]
		Let $T := \frac{t}{10^6 \cdot \eps\Delta} \geq 10$.
		Summing the inequality above over $v \in A$, we get:
		\begin{align*}
			\sum_{v\in A} \deg_{\Gbase}(v) &\geq
			\sum_{v\in A}
			\paren{
			\card{R} - (\Delta + 1) + \card{K} - \nondeg{K}{v}} \\
			&=
			\sum_{v\in A}
			(
			\underbrace{\Delta - T}_{\card{R}} - (\Delta + 1)
			+ \underbrace{\card{\cL} + 2T}_{\card{K}} - \nondeg{K}{v}) \\
			&=
			\sum_{v\in A} \paren{\card{L} - \nondeg{K}{v} + T - 1}\\
			&=
			\card{A} \cdot (m + T - 1) -
			\underbrace{\sum_{v\in A}\nondeg{K}{v}}_{\leq 2t}
			> \card{A}\cdot m.
		\end{align*}
		Where the last inequality follows from
		\[
			(T - 1) \cdot \card{A} \geq T/2 \cdot \card{A} \geq
			\frac{tm}{10^6 \cdot 4\eps\Delta}
			\geq \frac{t\Delta}{10^6 \cdot 6\eps\Delta} \geq 5t.
		\]
\end{enumerate}
And now by the promised application of \Cref{lem:rgt}, $\Gsample$ has an
$\cL$-perfect matching, and we are done.
\end{proof}

\end{document}